\documentclass[a4paper,reqno,11pt]{article}
\usepackage[a4paper,margin=3cm, marginparwidth=2cm]{geometry}
\usepackage[T1]{fontenc}
\addtolength{\hoffset}{-0.5cm}
\addtolength{\textwidth}{1cm}

\usepackage{amsfonts,amsthm,amssymb,amsmath, mathdots, bbm,mathabx, bm, cite}
\usepackage{amssymb, amscd}
\usepackage{authblk}

\usepackage[makeroom]{cancel}
\usepackage[pdftex]{color,graphicx}
\usepackage[all,cmtip]{xy}
\usepackage{enumitem}
\usepackage[pagebackref, colorlinks = true, linkcolor = black, urlcolor  = black, citecolor = red]{hyperref}
\numberwithin{equation}{section}

\setcounter{tocdepth}{3}

\newcommand{\un}{\mathbbm{1}}

\newtheorem{theorem}{Theorem}[section]

\newtheorem{proposition}[theorem]{Proposition}
\newtheorem{lemma}[theorem]{Lemma}

\newtheorem{remark}[theorem]{Remark}
\newtheorem{conjecture}[theorem]{Conjecture}

\newcommand{\be}{\begin{equation}}
\newcommand{\ee}{\end{equation}}

\newcommand{\bE}{{\mathbb{E}}}

\newcommand{\bfS}{{\mathbf{S}}}

\newcommand{\cP}{{\mathcal{P}}}

\newcommand{\Tr}{{\mathrm{Tr}}}

\newcommand{\bsig}{ {\bm{\sigma}}}
\newcommand{\btau}{{\bm{\tau}}}
\newcommand{\bnu}{{\bm{\nu}}}

\newcommand{\brho}{ {\bm{\rho}} }

\numberwithin{equation}{section}

\setcounter{tocdepth}{2}

\definecolor{armygreen}{rgb}{0.29, 0.33, 0.13}

\renewcommand\leq\leqslant

\renewcommand\geq\geqslant

\title{Free cumulants and freeness for unitarily invariant random tensors}
\author[1]{Beno\^it Collins\footnote{collins@math.kyoto-u.ac.jp}}
\author[2]{Razvan Gurau\footnote{gurau@thphys.uni-heidelberg.de}}
\author[2,3]{Luca Lionni\footnote{luca.lionni@ens-lyon.fr}}

\affil[1]{Mathematics Department, Kyoto University, Kyoto, Japan.}
\affil[2]{Heidelberg University, Institut für Theoretische Physik, Philosophenweg 19, 69120 Heidelberg, Germany.}
\affil[3]{ENS de Lyon, Laboratoire de Physique, CNRS, 46 allée d’Italie, 69364 Lyon Cedex 07, France.}
\date{}

\begin{document}

\maketitle
\begin{abstract}
We address the question of the asymptotic description of random tensors that are local-unitary invariant, that is, invariant by conjugation by tensor products of independent unitary matrices. We consider both the mixed case of a tensor with $D$ inputs and $D$ outputs, and the case where there is a factorization between the inputs and outputs, called pure, which includes the random tensor models extensively studied in the physics literature. 

The finite size and asymptotic moments are defined using correlations of certain invariant polynomials encoded by $D$-tuples of permutations, up to relabeling equivalence. Finite size free cumulants associated to the expectations of these invariants are defined through invertible finite size moment-cumulants formulas. 

Two important cases are considered asymptotically: pure random tensors that scale like a complex Gaussian, and mixed random tensors that scale like a Wishart tensor. In both cases, we derive a notion of tensorial free cumulants associated to first order invariants, through moment-cumulant formulas involving summations over non-crossing permutations. The pure and mixed cases involve the same combinatorics, but differ by the invariants that define the distribution at first order. In both cases, the tensorial free-cumulants of a sum of two independent tensors are shown to be additive. A preliminary discussion of higher orders is provided. 

Tensor freeness is then defined as the vanishing of mixed first order tensorial free cumulants. The equivalent formulation at the level of asymptotic moments is derived in the pure and mixed cases, and we provide an algebraic construction of tensorial probability spaces, which generalize non-commutative probability spaces: random tensors converge in distribution to elements of these spaces, and tensor freeness of random variables corresponds to tensor freeness of the subspaces they generate.
\end{abstract}
\newpage

\tableofcontents

\section{Introduction}

In this paper, tensors $A$ are arrays of complex numbers of the form $A_{i^1 \ldots i^D ;\; j^1 \ldots j^D}$, where all indices take value in $\{1, \ldots, N\}$ for some $N\ge 1$. Tensors for which the components factor as $T_{i^1 \ldots i^D}\bar T_{j^1 \ldots j^D}$ are called \emph{pure}, in which case $A=T\otimes \bar T$, and tensors for which they do not are called \emph{mixed}. We study in this paper random tensors whose probability measure is \emph{local-unitary invariant}, or \textsf{LU}-\emph{invariant}, that is, invariant upon conjugation by a tensor product of $D$ unitary matrices: $A$ and $(U_1 \otimes \cdots \otimes U_D)\cdot A \cdot (U_1^\dagger \otimes \cdots \otimes U_D^\dagger)$ have the same distribution for any $U_1, \ldots, U_D \in U(N)$, where $U(N)$ is the set of $N\times N$ unitary matrices. For $D=1$, the case of unitarily invariant random matrices is recovered \cite{NicaSpeicher, Collins03, CMSS}.  

\

Examples of such probability distributions include the pure complex Gaussian \cite{1Nexpansion1, 1Nexpansion2, 1Nexpansion3, critical} as well as perturbed Gaussian models, for which the probability density function of the Gaussian is altered by the addition of an invariant potential, see e.g. \cite{Gurau-universality, uncoloring, bonzom-SD, Gurau-book, enhanced-1, Walsh-maps, multicritical, bonzom-review, Lionni-thesis, Gurau:2012ix, Gurau:2013pca}. Regarding  random tensors, see also \cite{Tensors1, Tensors12,Tensors2,Tensors3,Tensors4,Tensors5,Tensors6,Tensors7,Tensors8,Tensors9,Tensors10,Tensors11}. Such models were initially studied because they provide generating functions for random triangulations, relevant in random geometry and quantum gravity \cite{critical, Gurau-book, Lionni-thesis}. Example of mixed distributions include Wishart tensors formed by partially tracing pure tensors \cite{IonReview, DNL}, tensor product of $D$ random matrices \cite{CGL, CGL2}, or a GUE random matrix with subdivided indices. 

\

Another important motivation for studying \textsf{LU}-\emph{invariant} distributions comes from quantum entanglement. Consider a $D$-partite quantum system. Its state space is a Hilbert space with a tensor product structure $\mathcal{H} =\mathcal H _1 \otimes \cdots \otimes \mathcal H _D$, and we assume that for all $c\in \{1,\dots D\}$,  $\mathcal{H}_c \cong  \mathbb{C}^N$ for some $N\ge 1$. Seen as density matrices, \emph{pure} quantum state are projectors $\lvert \psi \rangle\langle \psi\rvert$, and fixing an orthonormal basis $\{\lvert i_c\rvert\}_{1\le i_c\le N}$ in each factor $\mathcal H _c$,   $\lvert \psi \rangle\in \mathcal{H}$ decomposes as:
\[
\lvert \psi \rangle =\sum_{i_1, \ldots, i_D=1}^N T_{i^1 \ldots i^D} \lvert i_1\rangle\otimes \cdots \otimes \lvert i_D\rangle \;.
\]
Component-wise, a pure density matrix $\lvert \psi \rangle\langle \psi\rvert$ corresponds to a pure tensor $T\otimes \bar T$, with a normalization condition. A \emph{mixed} density matrix $\rho$ is a positive semi definite operator:
\[
\rho  =\sum_{\textrm{all indices}}^N A_{i^1 \ldots i^D;\;j^1\ldots j^D}\ \lvert i_1\rangle\otimes \cdots \otimes \lvert i_D\rangle\otimes \langle j_1\rvert\otimes \cdots \otimes \langle j_D\rvert \;,
\]
with a normalization condition. Thus, mixed density matrices correspond to certain normalized mixed tensors. This explains the terminology. 

Local-unitary invariance plays an important role in quantum information: such transformations - which correspond to local changes of orthonormal basis in each $\mathcal H _c$ - do not affect the  entanglement  between the $D$ subsystems. In fact, local-unitary invariance is sometimes introduced as a definition of equivalent entanglement \cite{LU-1, LU-2, LU-3, LU-4}, although depending on the context, a more operational definition of equivalent entanglement is often preferred. In any case, local-unitary invariance always plays an important role in multipartite entanglement: entanglement measures are  \textsf{LU}-invariant, two-ways LOCC is equivalent to  \textsf{LU}-equivalence, and so on. In order to study the entanglement properties  of classes of quantum states that are equivalently entangled, one may introduce distributions on the tensor coefficients of the states and, since \textsf{LU}-transformations do not affect entanglement, it is necessary to require that these distributions are \textsf{LU}-invariant. Regarding random quantum states, see e.g.~\cite{IonReview, DNL, Tensors6, Tensors9, random-state1, random-state2, random-state3,  random-state4, random-state5, random-state6, random-state7, random-state8, MingoPopa1, MingoPopa2, funwithreplicas, random-state-52}.

\

In this paper, we address the question of the appropriate notion of moments for characterizing \textsf{LU}-invariant tensor distributions. Results in invariance theory justify our choice to define the moments for such \emph{finite size} distributions  as the expectations of the \emph{invariant homogeneous polynomials} \cite{poly-LU-inQI,Heroetal, Vrana1, Vrana2, TuckNort, BGR1, BGR2, BGR3}, which we call \emph{trace-invariants}. Our purpose is however to ultimately study the distributions \emph{asymptotically}, when the size of the index set (or the dimension) $N$ goes to infinity, while the number of indices (or subsystems) $D$ stays constant. Generalizing the approach of \cite{CMSS}, we discuss which quantities play the role of \emph{asymptotic moments}, a choice which generally differs for mixed and pure random tensors. A notion of \emph{order} is introduced, related to how the finite quantities must be rescaled in order to obtain the asymptotic moments. \emph{First order asymptotic moments} are those which require the strongest rescaling. For classical random matrices, they are given by the expectations of powers of the matrix, while for the tensors considered, a subset of invariants called \emph{melonic} is singled out \cite{1Nexpansion1, 1Nexpansion2, 1Nexpansion3, critical, Gurau-universality, uncoloring, bonzom-SD, Gurau-book, enhanced-1, Walsh-maps, multicritical, bonzom-review, Lionni-thesis, Gurau:2012ix, Gurau:2013pca}. 

\

For non-commutative random variables, \emph{freeness}, or \emph{free independence}, is a notion of independence which is weaker than the usual independence \cite{Voiculescu, Biane, NicaSpeicher, Speicher94, CMSS, MS04}. Usual independence can be formulated as the vanishing of classical cumulants. For random matrices, \emph{asymptotic freeness} in the limit $N\rightarrow \infty$  can be characterized by the vanishing of \emph{free cumulants} (or first order free cumulants) \cite{Speicher94}, a sequence of numbers which represent the same information  as the sequence of  first order asymptotic moments. Such free cumulants are defined as sums of products of first order asymptotic moments. The summations involve non-crossing partitions, and  by M\"oebius inversion one can write the first order asymptotic moments in terms of the free cumulants \cite{Biane, NicaSpeicher, Speicher94}.

Independent unitarily invariant random matrices are known to be asymptotically free, which implies that their free cumulants are additive. This gives access to the spectrum of the sum of two independent unitarily invariant random matrices whose spectra are known (asymptotic random Horn problem). 
Freeness plays a role in the study of random quantum states, see \cite{IonReview, DNL, MingoPopa1, MingoPopa2, random-state-52}. 

\

The pure complex Gaussian tensor ensemble $T, \bar T$ is obtained by requesting the coefficients of $T$ to be independent and identically distributed complex Gaussian variables. Appropriately normalized, the vector $\lvert \psi\rangle\; \propto\; \sum_{i_1, \ldots, i_D=1}^N T_{i^1 \ldots i^D} \lvert i_1\rangle\otimes \cdots \otimes \lvert i_D\rangle$ can be written as $\lvert \psi\rangle=U\lvert 0 \rangle$ where $\lvert 0 \rangle$ is a fixed pure state and $U$ is a Haar distributed random unitary matrix: the pure complex Gaussian tensor corresponds in this interpretation to taking a uniform distribution on pure quantum states. We call Wishart tensor the mixed tensor obtained by summing one of the indices of $T$ with the corresponding index of $\bar T$ of the same position (color) $c$ . This corresponds to partially tracing the associated density matrix over one of the subsystems $\mathcal{H}_c$, obtaining the density matrix induced on the remaining subsystems. 

\emph{The pure random tensors distributions whose asymptotic behavior we study here are assumed to scale at large $N$ like the pure complex Gaussian, in the sense that their asymptotic moments are obtained with the same rescaling at large $N$. However, no other assumption is made on the  asymptotic moments. The mixed tensors considered are those which scale like the Wishart tensor.} 

\ 

With these assumptions, using Weingarten calculus \cite{Weingarten, ColSni, Collins03}, we derive a notion of finite  size  free cumulants by studying the expansion of the generating function of the classical cumulants (related to the tensor HCIZ integral, see \cite{CGL, CGL2}) on the trace-invariants which play the role of moments for \textsf{LU}-invariant random tensors. Taking the asymptotics of these relations, we obtain combinations of  first order asymptotic moments which \emph{define  first order free cumulants for any random tensor} satisfying our scaling assumptions. By construction, these \emph{tensorial free cumulants are additive for independent random tensors}. These formulas involve the same combinatorial restriction in the pure and the mixed case considered, but the asymptotic distributions differ by the classes of invariants that arise at first order. 
These relations can be \emph{inverted} in the lattice product of non-crossing partitions: \emph{knowing the first order free cumulants is equivalent to knowing the  asymptotic distribution at first order.} 

Having derived the tensorial free cumulants, asymptotic tensor freeness can then be defined (at first order) as the vanishing of (first order) free cumulants involving free tensors. We then study the formulation of asymptotic tensor freeness in terms of asymptotic moments, both in the pure and mixed cases which requires  centering certain subgraphs  of the first-order trace-invariants by subtracting their asymptotic expectations. Due to their tree structure, multiplications of tensors corresponding to these subgraphs are seen to be elements of generalizations of unital algebras introduced in the last section.  Tensorial probability spaces are defined as a pair consisting in such a space, together with a trace. By construction, the pure and mixed random tensors we consider converge in distribution to elements of tensorial probability spaces, and tensor freeness can be formulated for elements of tensorial probability spaces, or for the spaces they generate. 

\

This paper is just a first step in a vast program of study of freeness for random tensors. Further work is needed to pursue the generalization of free probability to random tensors, as well as its applications to quantum information theory, and so on. See the related works \cite{Tensors11, Tensors12} addressing the question of free cumulants for real symmetric tensors (and freeness, for \cite{Tensors12}).

\section*{Acknowledgments}

R. G. is and L. L. has been supported by the European Research Council (ERC) under the European Union’s Horizon 2020 research and innovation program (grant agreement No 818066) and by the Deutsche Forschungs-gemeinschaft (DFG) under Germany’s Excellence Strategy EXC– 2181/1 – 390900948 (the Heidelberg STRUCTURES Cluster of Excellence). B. C. is supported by JSPS Grant-in-Aid Scientific Research
(B) no. 21H00987, and Challenging Research (Exploratory) no. 20K20882 and 23K17299. L. L. receives support from CNRS grant FEI 2024 - AAP TREMPLIN-INP.

\section{Notations and prerequisites}
\label{sec:notations-prerequ}

\paragraph{Partitions and permutations.} 
We will denote by $S_n$ be the group of permutations of $n$ elements, and $S_n^D$ the set of $D$-tuples of permutations $(\sigma_1, \ldots, \sigma_D)$, $\sigma_c\in S_n$. 
The cyclic permutation $(1, \cdots n)$ will be denoted by $\gamma_n$, and $\mathrm{id} $ denotes the identity permutation.
For $\sigma\in S_n$,   $\#\sigma$ denotes the number of disjoint cycles of $\sigma$  and $\lvert \sigma \rvert$  the minimal number of transpositions required to obtain $\sigma$ and $ \#\sigma + \lvert \sigma \rvert = n$.

\

We let $\cP(n)$ be the set of all partitions $\pi$ of $n$ elements. The notation $\#\pi$ is used for the number of blocks of $\pi\in S_n$, $B\in \pi$ denotes the blocks, and $|B|$ the cardinal of  $B$. The refinement partial order is denoted by $\le$:  $\pi'\le \pi$ if all the blocks of $\pi'$ are subsets of blocks of $\pi$. Furthermore,  $\vee$ denotes the joining of partitions: $\pi\vee\pi'$ is the finest partition which is coarser than both $\pi$ and $\pi'$. $1_n$ and $0_n$ respectively denote the one-block and the $n$ blocks partitions of $\{1,\dots, n\}$. The partition induced by the cycles of the permutation $\nu$ is denoted by $\Pi(\nu)$, hence $ \# \Pi(\nu)= \#\nu$.   
If $\pi \in \mathcal{P}(n)$ is such that $\pi \ge \Pi(\sigma)$ and if $B\in \pi$, $\sigma_{\lvert_{B}}$ refers to the permutation induced by $\sigma$ on $B$. The number of partitions of $n=\sum_{i\ge 1} id_i $ elements with $d_i$ parts of size $i$ is 
$ \frac{n!}{\prod_{i\ge 1} d_i! (i!)^{d_i} } $.

\

In particular we will encounter \emph{bipartite partitions} of bipartite sets $\{1,\dots n,\bar 1,\dots \bar n\}$. They are partitions of  such sets for which each block has the same number of elements in $\{1, \dots n\}$ and in $\{\bar 1,\dots \bar n \}$. We denote $\mathcal{P}(n ,\bar n)$ the set of such partitions. Due to this last condition, a bipartite partition $\Pi$ can also be seen as a partition $\pi\in \mathcal{P}(n)$ having the same number of blocks as $\Pi$, together with a permutation $\eta\in S_n$ which to each $s$  associates an element $ \overline{\eta(s)}$, up to relabelings of the elements in the preimages of the blocks of $\pi$. 
In detail, defining $\eta \sim_{\pi}\eta '$ if 
$\eta = \eta'\nu $ for some permutation $\nu$ with $\Pi(\nu) \le \pi$ and denoting by ${S_n} {/ {\sim_\pi}}$ the set of equivalence classes of permutations under the relation $ \sim_\pi$, we have $\Pi\Leftrightarrow(\pi, [\eta])$, where the blocks are mapped as:
\be
\label{eq:correspondance-bipartite-part-vs-perm}
 G \in \Pi \ \Leftrightarrow\ 
 G = B \cup \bar B \;, \;\; 
 \text{with}  \;\; B \in \pi, \; \bar B =
  \bigl\{ \overline{\eta(s)} \,,\; s\in B \bigr\} \;,
\ee
independently on the choice of representative $\eta$ of the class $[\eta]\in {S_n} {/ {\sim_\pi}}$.
As the number of partitions of $n=\sum_{i\ge 1} id_i $ elements with $d_i$ parts of size $i$ is 
$ \frac{n!}{\prod_{i\ge 1} d_i! (i!)^{d_i} } $ and the cardinal of 
${S_n} {/ {\sim_\pi}}$ is $\frac{n!}{\prod_{i\ge 1} (i!)^{d_i} } $, the number of bipartite partitions of a bipartite set with $n+n$ elements having $d_i$ parts with $i+i$ elements is
$ \frac{n!}{\prod_{i\ge 1} d_i! (i!)^{d_i} }  \, \frac{n!}{\prod_{i\ge 1} (i!)^{d_i} } $.
We denote by $1_{n,\bar n}$ the  bipartite partition with a single block $\{1,\dots n,\bar 1,\dots \bar n\}$.

\

We denote $D$-tuples of permutations by bold face greek symbols, $ (\sigma_1, \ldots, \sigma_D) =\bsig \in S_n^D$ and  we let $\Pi(\bsig)=\Pi(\sigma_1)\vee \ldots \vee \Pi(\sigma_D)$. We call the blocks of $\Pi(\bsig)$  the \emph{connected components} of $\bsig$ for reasons that will become clear below, and we denote  $K_\mathrm{m}(\bsig)=\#\Pi(\bsig)$ and respectively $K_\mathrm{m}(\bsig, \btau)=\#( \Pi(\bsig)\vee\Pi(\btau))$ if $\bsig, \btau\in S_n^D$. We say that $\bsig$ is \emph{connected} if $K_\mathrm{m}(\bsig)=1$. 

\

Given a $D$-tuple of permutations $(\sigma_1, \ldots, \sigma_D) = \bsig \in S_n^D$, we sometimes distinguish their domain and co-domain and consider the permutations as maps from $\{1, \ldots,  n\}$ to $\{\bar 1, \ldots, \bar n\}$, 
$s\mapsto \overline{\sigma_c(s)}$. 
In this case, for every color $c$, $\sigma_c$ yields a bipartite partition of the set  $\{1, \dots n, \bar 1 \ldots \bar n\}$ into $n$ pairs:
\be
\Pi_\mathrm{p}(\sigma_c) = 
\Bigl\{ \bigl\{ s, \overline{\sigma_c(s)} \bigr\} \, {\Big |}\,  1\le s \le n \Bigr\} \;,
\ee
corresponding with the point of view \eqref{eq:correspondance-bipartite-part-vs-perm} to the partition $0_n$ of $\{1,\dots n\}$ and the permutation $\sigma_c$.
We denote by $\Pi_\mathrm{p}(\bsig)$ the join of this partitions, which is also bipartite:
$ \Pi_\mathrm{p}(\bsig) = \bigvee_{ 1\le c\le D}
 \Pi_\mathrm{p}(\sigma_c)$.
We refer to the parts of  $\Pi_\mathrm{p}(\bsig)$
as the \emph{pure connected components of $\bsig$} and we denote by $K_\mathrm{p}(\bsig)$ their number. We say that $\bsig$  is \emph{purely connected} if $K_\mathrm{p}(\bsig)=1$.  The parts of $\Pi(\bsig)$ and those of $\Pi_\mathrm{p}(\bsig , \mathrm{id})$, where $(\bsig , \mathrm{id})=(\sigma_1, \ldots, \sigma_D, \mathrm{id})$ are in  one-to-one correspondence: one passes from the parts of $\Pi_\mathrm{p}(\bsig , \mathrm{id})$ to the ones of $\Pi(\bsig) $ by identifying $s=\bar s$.

\

The notation $\lambda\vdash n$  denotes that $\lambda$ is an integer partition of the integer $n$, that is, a multiplet of integers $\lambda_1\ge \ldots \ge\lambda_p>0$ such that $\sum_{i=1}^p\lambda_i =n$.  The $\lambda_i$ are the parts of $\lambda$, and we denote $\lambda=(\lambda_1, \ldots, \lambda_p)$, $p=\#\lambda$  the number of parts of $\lambda$, and $d_i(\lambda)$ is the number of parts of $\lambda$ equal to $i$, so that $n = \sum_{i=1}^n i d_i(\lambda) $ and $p=\#\lambda = \sum_{i=1}^n  d_i(\lambda) $.  If $\sigma \in S_n$, $\Lambda(\sigma)$  is the partition of $n$ given by the number of elements of the disjoint cycles  of $\sigma$.

\paragraph{Distance between permutations.}
The map $d:S_n^2 \to \mathbb{R}_+$, 
$d(\sigma, \tau) = \lvert \sigma \tau^{-1}\rvert$ defines a distance between permutations (see \cite{Biane} and e.g.~\cite{NicaSpeicher},  Lecture 23). For $\bsig, \btau\in S_n^D$ and $\eta\in S_n$, we let  $d(\bsig, \btau)=\sum_c d(\sigma_c, \tau_c)$ and $d(\bsig, \eta)=\sum_c d(\sigma_c, \eta)$.  Considering some permutations $\alpha_1, \ldots, \alpha_n$, since $d$ is a distance:
 \be
\label{eq:triangular-dist}
  \lvert \alpha_1 \alpha_{2}^{-1}\rvert + \cdots + \lvert \alpha_{n-1} \alpha_{n}^{-1}\rvert   \ge  \lvert \alpha_1 \alpha_{n}^{-1}\rvert , 
 \ee
 with equality  if and only if these permutations lie on a geodesic $\alpha_1 \rightarrow \alpha_2 \ldots \rightarrow  \alpha_n$.  
A permutation $\tau\in S_n$ satisfying $\lvert \tau\rvert + \lvert \tau\gamma_n^{-1}\rvert = n-1$ is said to be \emph{geodesic} or \emph{non-crossing} on $\gamma_n$.
More generally, we will use the notation $\tau \preceq \sigma$ for two permutations saturating the triangular inequality $\lvert \tau\rvert + \lvert \tau\sigma^{-1}\rvert = \lvert \sigma\rvert$, that is such that $\tau$ lies on a geodesic from the identity to $\sigma$. 

\

Fixing an ordering  of $n$ elements,  a non-crossing partition $\pi$ has no four elements $p_1<q_1<p_2<q_2 \in C$ such that $p_1, p_2\in B$ and $q_1, q_2\in B'$ for $B\neq B'$ two  blocks of $\pi$. As a consequence of \eqref{eq:Euler-charact-bipartite-map1} below,
$\tau \preceq \gamma_n$ if and only if
the partition $\Pi(\tau)$ is non-crossing on $1<\ldots < n$, and the cyclic ordering of the elements of each cycle of $\tau$ agrees with $\gamma_n$ in the sense that they are cyclically increasing.
Therefore, the sets $NC(n)$ of non-crossing partitions on $n$ ordered elements and $S_{\mathrm{NC}}(\gamma_n)=\{\tau\preceq\gamma_n\}$ are isomorphic posets (see Prop.~23.23 in \cite{NicaSpeicher}).  Changing $\gamma_n$ to another cycle with $n$ elements amounts to changing the ordering of the $n$ elements. 

\

From  \eqref{eq:Euler-charact-bipartite-map1} it also follows that the condition $\tau \preceq \sigma$ is equivalent to $\Pi(\tau)\le \Pi(\sigma)$, and for each cycle $B\in \Pi(\sigma)$, $\tau_{\lvert_B}$ is non-crossing on the cycle $\sigma_{\lvert_B}$.
If $\bsig, \btau \in S_n^D$, the notation $\btau \preceq \bsig$ indicates that for all $1\le c \le D$, $\tau_c\preceq \sigma_c$ and if $\eta\in S_n$, the notation $\btau\eta\preceq\bsig\eta$ means that for all $1\le c \le D$, $\tau_c\eta\preceq \sigma_c\eta$. 

\

The number of $\tau\in S_n$ such that $\tau \preceq \gamma_n$ is the \emph{Catalan number} $C_n$: 
\be
\label{eq:Catalan}
C_n=  \frac 1 {n+1}\binom{2n}{n} \;. 
\ee
This is also the number of non-crossing pairings (partitions into blocks of two elements) of $2n$ elements, and there exists an explicit bijection between non-crossing pairings of $2n$ elements and non-crossing partitions of $n$ elements.\footnote{Listing the $n$ elements along a cycle and adding two marks after each element, a non-crossing partition of the $n$ elements is bijectively mapped on a planar chord diagram among the $2n$ marks.} The M\"{o}bius function on the lattice of non-crossing partitions is:
\be
\label{eq:Moebius-on-NC}
\mathsf{M}(\pi) = \prod_{p\ge 1 }  \left[  (-1)^{p-1}C_{p-1} \right]^{d_p(\pi)} \; ,
\ee
where $d_p(\pi)$ is the number of blocks of $\pi$ with $p$ elements and we let $\mathsf{M}(\nu) = \mathsf{M}\bigl(\Pi(\nu)\bigr)$.

\paragraph{Genus.} 
If $\sigma, \tau \in S_n$, the Euler characteristics of $(\sigma, \tau)$ is:
\be
\label{eq:Euler-charact-bipartite-map}
\#\sigma+ \#\tau +\#(\sigma\tau^{-1}) -n =2 K(\sigma, \tau) - 2g(\sigma, \tau),
\ee
where $g(\sigma, \tau)\ge0$ is the \emph{genus} of $(\sigma, \tau)$. Such a pair of permutations is called a ``bipartite map''. Up to simultaneous conjugation of $\sigma$ and $\tau$, bipartite maps bijectively encode isomorphism classes of embeddings of bipartite graphs on orientable surfaces of genus $g$, where white and black vertices are respectively associated to the cycles of $\sigma$ and $\tau$ and the edges are labeled from 1 to $n$.  
One may express  \eqref{eq:Euler-charact-bipartite-map} as:
\be\label{eq:Euler-charact-bipartite-map1}
\lvert \tau\rvert + \lvert \tau\sigma^{-1}\rvert - \lvert \sigma\rvert = 2g(\sigma, \tau) + 2\bigl(\#\sigma - K(\sigma, \tau)\bigr), 
\ee
and since $\#\sigma\ge K(\sigma, \tau)$ with equality if and only if $\Pi(\tau)\le \Pi(\sigma)$, we see that a non-crossing permutation $\tau \preceq \gamma_n$ corresponds to a planar bipartite map with one white vertex while $\tau \preceq \sigma$ corresponds to a planar bipartite map $(\sigma, \tau)$ with only one  white vertex per connected component. 

\begin{figure}[!h]
\centering
\includegraphics[scale=1.3]{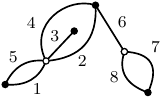}\hspace{2.5cm}\includegraphics[scale=1.3]{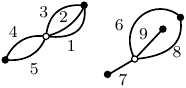}
\caption{
Left: The labeled planar bipartite map encoded by $\sigma=(12345)(876)$ and $\tau=(15)(3)(624)(78)$. Right: Two permutations satisfying $\tau\preceq \sigma$: $\sigma=(12345)(6789)$ and $\tau=(123)(45)(68)(7)(9)$. For each cycle of $\sigma$, the restriction of $\tau$ induces a non-crossing permutation}
\label{fig:planarmaps}
\end{figure}

\paragraph{Classical moment-cumulant formula.} The classical cumulants
of some random variables $x_1, \ldots, x_p$ are defined in terms of the expectations as:
\be
\label{eq:classical-cumulant}
k_n(x_1, \ldots, x_n) = \sum_{\pi\in \mathcal{P}(n)} \lambda_\pi \prod_{G\in \pi} \bE\Bigl[\prod_{i\in G}x_i\Bigr] \,,
\ee
with $\lambda_\pi= (-1)^{\#\pi-1} (\#\pi-1)! $ the M\"oebius function on the lattice of partitions.
If all $x_i$ are equal to $x$, we use the notation $k_n(x)$. 
By M\"{o}bius inversion in the lattice of partitions, we have: 
\be
\label{eq:classical-moment-cumulant}
\bE[x_1\cdots x_n]   = \sum_{\pi\in \mathcal{P}(n)} \prod_{G\in \pi}  k_{\lvert G \rvert}\Bigl(\{x_i\}, i\in G\Bigr) \,.
\ee

A particular case we need to consider is the case of a complex variable $f_s,\bar f_s$, with the additional assumption that only the expectations and cumulants with the same number of $f$s and $\bar f$s are non zero. In that case, the sums above are restricted to bipartite partitions of bipartite sets:\footnote{In terms of partitions of $n$ elements and equivalence classes of permutations this is:
\[
\begin{split}
 \bE[f_1\dots f_{n} \bar f_{\bar 1}\dots \bar f_{\bar n}] 
 = & \sum_{\pi \in \mathcal{P}(n) } \sum_{[\eta] \in S_n /{\sim_\pi} } \prod_{B \in \pi} k_{|B|}\bigl(\{ f_i , \bar f_{\overline{\eta(i)}}\}_{i\in B}  \bigr)  \; ,\crcr
k_{n }[f_1,\dots f_{n}, \bar f_{\bar 1},\dots \bar f_{\bar n}] 
 = & \sum_{\pi \in \mathcal{P}(n) }
 \sum_{ [\eta ] \in S_n/{\sim_\pi} } \lambda_{\pi}\prod_{B\in \pi} \bE\Bigl[ \prod_{i\in B} f_i \; \bar f_{\overline{\eta(i)} } \Bigr] \; .
\end{split}
\]}
\be
\begin{split}
 \bE[f_1\dots f_{n} \bar f_{\bar 1}\dots \bar f_{\bar n}] 
 = & \sum_{\Pi \in \mathcal{P}(n,\bar n) } \prod_{G  = B\cup \bar B\in \Pi} k_{|B|}(\{ f_i \}_{i\in B} , \{\bar f_{\bar j }\}_{\bar j\in \bar B}  )  \; ,\crcr
k_{n }[f_1,\dots f_{n}, \bar f_{\bar 1},\dots \bar f_{\bar n}] 
 = & \sum_{\Pi \in \mathcal{P}(n,\bar n) } \lambda_{\Pi}\prod_{G= B\cup \bar B\in \Pi} \bE\Bigl[ \prod_{i\in B} f_i  \prod_{\bar j\in \bar B} \bar f_{\bar j  } \Bigr] \; .
\end{split}
\ee
The exact same holds for bipartite distributions of a couple of random variables. Below we will often use the notation $T,\bar T$ to designate such bipartite couples of variables: while sometimes we will specify $\bar T$ to be the complex conjugate of $T$, unless otherwise specified, the formulas apply for both cases. 

\paragraph{Weingarten functions.}The Weingarten functions appear when integrating over unitary matrices \cite{Collins03, ColSni}. One has for $n\le N$: 
\be
\label{eq:WeinDef}
\int dU \;   U_{i_1a_1}\cdots U_{i_na_n}\overline{U_{j_1b_1}\cdots U_{j_nb_n}}=
\sum_{\sigma,\tau\in S_n}
 \left(  \prod_{s=1}^n \delta_{i_s , j_{\sigma(s)}}  \right)
  \left( \prod_{s=1}^n \delta_{a_s , b_{\tau(s)}} \right)  \, W^{(N)} (\sigma\tau^{-1})  \; ,
\ee
where $dU$ is the normalized Haar measure on the group $U(N)$ of $N\times N$ unitary matrices, and the $W^{(N)}$ are the Weingarten functions  \cite{ColSni}:
\begin{equation}
\label{eq:expansion-wg-1}
W^{(N)}(\nu) = N^{-n}\sum_{k\ge 0}\;\sum_{\substack{{\rho_1,\ldots , \rho_k\,\in\, S_n\setminus{\mathrm{id}},}\\{ 
 \rho_1\cdots \rho_k =\nu}}}\;(-1)^k\,N^{-\sum_{i=1}^k\lvert\rho_i\rvert } \; ,
\end{equation}
which at large $N$ behave as:
\be
W^{(N)} ( \nu) = \mathsf{M}(\nu) \,  N^{- n - \lvert\nu\rvert} (1+O(N^{-2})) \; .
\ee

We note that the Weingarten functions are class functions $W(\nu) = W(\eta\nu \eta^{-1})$ for any $\eta \in S_n$.

\newpage

\section{Unitarily invariant random matrices}

In this section we follow \cite{CMSS}. A \emph{unitarily invariant random matrix} $M=\{M_{i,j}\}_{1\le i,j\le N}$ is such that for any $U\in U(N)$, $M$ and $UMU^\dagger$ have the same distribution.

\subsection{Moments of unitarily invariant random matrices}
\label{sec:moments-unitarily-invariant-random-matrices}

\paragraph{Trace-invariants of matrices.} The trace-invariants of a matrix $M=\{M_{i,j}\}_{1\le i,j\le N}$ are the products of traces of powers of that matrix:
$
\Tr\bigl(M^{\lambda_1}\bigr) \cdots \Tr\bigl(M^{\lambda_p}\bigr)$, $\lambda_1\ge \ldots \ge\lambda_p>0. 
$
We denote this product in a more compact manner as $\Tr_\lambda(M)$, where $\lambda_1\ge \ldots \ge\lambda_p>0$ are the parts of the integer partition $\lambda\vdash n$. If $\sigma\in S_n$ and the cycle-type of $\sigma$ is $\Lambda(\sigma)=(\lambda_1, \ldots, \lambda_p)$, then we let $\Tr_\sigma(M) = \Tr_{\Lambda(\sigma)}(M)$.   

The trace-invariants are unitarily invariant homogeneous polynomials in the matrix entries $M_{i,j}$ and $\Tr_\lambda(M)$, $\lambda\vdash n\le N$ form a basis in the ring of unitarily invariant polynomials of $M$, and by extension of sufficiently regular unitarily invariant functions\footnote{For normal matrices, by diagonalization this is equivalent to the fact that products of power sums of degree $n\le N$ of the eigenvalues form a basis of the ring of symmetric polynomials in the eigenvalues.}.

\paragraph{Finite size moments.}Due to the unitary invariance, the appropriate moments for a $N\times N$ unitarily invariant random matrix $M$ are the expectations of the trace-invariants:
\be
\label{eq:expectation-trace-invariants-matrix}
\mathbb{E}\bigl[\Tr_\lambda(M)\bigr], \quad \lambda\vdash n\le N. 
\ee
This follows from the remark that if $f$ is a polynomial (or by extension, a sufficiently regular function, see Sec.~\ref{sec:basis}) in the entries of $M$, then there exists a unique set of coefficients $\{c_\lambda\}$ such that:\footnote{This follows from 
$\mathbb{E}\bigl[ f(M) \bigr] = \int dU \; \mathbb{E}\bigl[ f( UMU^\dagger ) \bigr] 
 = \mathbb{E}\Bigl[  \int dU \; f( UMU^\dagger ) \Bigr]$,
with $dU$ the Haar measure. If the polynomial is of degree at most $N$, computing the integral using the Weingarten formula \eqref{eq:WeinDef} yields the desired decomposition. If the polynomial is of degree larger than $N$, one may apply the Cayley-Hamilton theorem.}
\be
\mathbb{E}\bigl[f(M)\bigr]= \sum_{n=1}^N \sum_{\lambda\vdash n} c_\lambda\; \mathbb E\bigl[\Tr_\lambda(M)\bigr] \; .
\ee

\subsection{Asymptotic moments}
\label{sec:asymptotic-moments-unitarily-invariant-random-matrices}

\paragraph{Asymptotic characterization of the distribution.} In principle one needs the expectations $\mathbb E\bigl[\Tr_\lambda(M)\bigr]$ to describe unitarily invariant formal series. On the other hand, the dominant contributions in the $N\rightarrow \infty$  limit
to these expectations usually do not contain more information on the asymptotic distribution than the subset of expectations $\mathbb E[\Tr(M^n)]$, $n\ge 1$. This is due to the fact that the random matrix ensembles classically studied, like the GUE\footnote{Defined by the probability measure $  e^{-\frac{N}{2} \Tr(M^2)} \; \prod_{a,b} dM_{ab}$ for $M$ Hermitian.}, Ginibre or Wishart ensembles, share the property that the expectations of trace-invariants factorize asymptotically:
\be
\mathbb E\Bigl[ \prod_{i=1}^p\Tr\bigl(M^{\lambda_i}\bigr) \Bigr] \sim_{N\rightarrow \infty}  \prod_{i=1}^p \mathbb E\Bigl[\Tr\bigl(M^{\lambda_i}\bigr) \Bigr].\ee
In fact, for $\lambda$ having more than one part, one has to dig quite far in the $1/N$ expansion of $\mathbb{E}\bigl[\Tr_\lambda(M)\bigr]$ to recover information on the asymptotic correlations between the $\Tr(M^\lambda_i)$, which is captured by the dominant contribution to the classical cumulants $k_p(\Tr M^{\lambda_1}, \ldots, \Tr M^{\lambda_p})$.

More precisely, for classical random matrix ensembles such as the ones mentionned above, the dominant contribution to the classical cumulants takes the form (see e.g.~\cite{MS04, CMSS}): 
\be
\label{eq:asympt-class-cum-matrix}
\lim_{N\rightarrow \infty}\, \frac{1}{N^{2-p}}\,k_p\bigl(\Tr(M^{\lambda_1}), \ldots, \Tr(M^{\lambda_p})\bigr) =  \varphi_\lambda(m), 
\ee
that is, the classical cumulants scale asymptotically as $N^{2-p}$, and we denoted the asymptotic coefficient as:
\be
\label{eq:rescaled-cumulants-matrix}
\varphi_\lambda(m) = \varphi_{\lambda_1, \ldots, \lambda_p}(m),
\ee 
where $\lambda = (\lambda_1, \ldots ,\lambda_p)\vdash n$. If $\lambda$ has only one part we use the notation  $\varphi_n(m)= \varphi_\lambda(m)$. 

Recalling that 
$\Lambda(\sigma)$ denotes the cycle-type of the permutation $\sigma$, we define for 
$\pi\ge \Pi(\sigma)$ the multiplicative extension:
\be
\label{eq:mult-expansion-asympt-moments}
\varphi_{\pi, \sigma}(m) = \prod_{B\in \pi} \varphi_{\Lambda( \sigma_{\lvert_{B}}) }(m) \; ,
\ee
and we stress that with this notation, if the cycle type of $\sigma$ is $\Lambda(\sigma) =\lambda=(\lambda_1,\dots\lambda_p)$, then:
\be
\label{eq:extcumulants}
 \varphi_{\Pi(\sigma),\sigma}(m) = \prod_{B\in \Pi(\sigma)} \varphi_{\Lambda( \sigma_{\lvert_{B}}) }(m) =\prod_{i=1}^p \varphi_{\lambda_i}(m) \neq
 \varphi_{\lambda_1,\dots , \lambda_p}(m)  = \varphi_{\Lambda(\sigma)}(m) = 
 \varphi_{1_n,\sigma} (m)\;.
\ee
Below we will also use the notation $\varphi_{\sigma}=  \varphi_{1_n,\sigma} $.

In the case of several matrices $M_1, \ldots, M_n$, we generalize the notation in the obvious manner.
For instance, for $\tau\in S_n$, $\pi\in \mathcal{P}(n)$
with $\pi\ge \Pi(\tau)$ we have the following scaling: 
\be
\label{eq:Phi-higher-order-0}
\lim_{N\to \infty} \frac{1}{N^{2\#\pi - \#\tau}  }
\prod_{ G\in \pi} k_{\#(\tau \lvert_{_G})} \biggl(\Bigl\{\Tr\Bigl(\prod _{s\in \tau_i}M_s\Bigr)\Bigr\}_{\tau_i \textrm{ cycle of }\tau \lvert_{_G}}\biggr) = \varphi_{\pi, \tau} (m_1, \ldots, m_n) \;,
\ee
where the matrix product inside the trace follows the cyclic ordering of the elements in the $\tau_i $.

\paragraph{Order of dominance.} We define the \emph{order of dominance} of a classical cumulant as $2$ minus its dominant scaling in $N$. If \eqref{eq:asympt-class-cum-matrix} is satisfied, the order of dominance of  $k_p\bigl(\Tr(M^{\lambda_1}), \ldots, \Tr(M^{\lambda_p})\bigr) $ is $p$. By definition the invariants with the largest scaling are of order $1$, while the order of the other invariants indicates how much more they are suppressed in scaling than the dominant invariants. We have that:
\begin{itemize}
 \item[-] the asymptotic distribution is described at first order by $\{\varphi_n(m )\}_{n\ge 1} $, the  limits of the rescaled expectations $\frac{1}{N}\mathbb E[\Tr(M^n)]$. This first order information fixes for instance the asymptotic spectrum of $M$, if $M$ is a normal matrix;
 \item[-]  the fluctuations of order $p$ (the correlations between $p$ eigenvalues) are encoded in  the order $p$ invariants $\varphi_{\lambda_1, \ldots, \lambda_p}(m)$.
\end{itemize}

The factorization of the expectations is seen as follows.
The expectations of products of appropriately normalized traces admit $N\rightarrow \infty$ limits:
\be
\label{eq:asympt-mom-matrix}
\lim_{N\rightarrow \infty} \frac 1 N\, \bE \bigl[\Tr (M^n)\bigr] = \varphi_{n} (m) \qquad \mathrm{and}\qquad \lim_{N\rightarrow \infty} \frac 1 {N^p}\, \bE \bigl[\Tr_{\lambda} (M)\bigr]  = \prod_{i}\varphi_{\lambda_i} (m) \; ,
\ee
and going further in the $1/N$ expansion, we find that the classical cumulant contributes to the corresponding normalized expectation at a lower $1/N$ order:
\be
 \frac 1 {N^p}\, \bE \bigl[\Tr_{\lambda} (M)\bigr]  = \prod_{i}\varphi_{\lambda_i} (m) + \dots
  +  \frac{1}{N^{p}} k_p \bigl(\Tr(M^{\lambda_1}), \ldots, \Tr(M^{\lambda_p})\bigr),
  \; .
\ee
where all the terms apart from the first one are $O(1/N)$, and the rightmost term is the most suppressed (of order $\frac{1}{N^{2(p-1)}}$). 
\paragraph{The Wishart ensemble.} 
A particular example we will be interested in below is that of a Wishart random matrix.
Let $X$ be a $N\times N'$ random matrix with independent and identically distributed complex Gaussian entries (a Ginibre matrix) satisfying\footnote{Equivalently with probability measure $e^{-N\Tr(XX^\dagger)} \prod_{ab}dX_{ab} d\overline{X_{ab}}$.}  $\bE \left[ X_{i_1, i_2} \bar X_{j_1, j_2} \right]  =  \delta_{i_1, j_1} \delta_{i_2, j_2} /N $, and assume that $N' \sim t N$ asymptotically for some $t\in(0,\infty)$.  The moments of the Wishart random matrix $W = XX^\dagger$ are (see e.g.~\cite{DNL}):
\be
\label{eq:map-exp-wishart}
\bE \bigl[\Tr \, W^n \bigr] = \sum_{\tau\in S_n}N^{\#(\gamma_n\tau^{-1})-n} {N'}^{\,\#\tau } = \sum_{\tau\in S_n}N^{\#(\gamma_n\tau^{-1}) + \#\tau -n } {t}^{\,\#\tau }  \;,
\ee
where we recall that $\gamma_n$ is the cycle $(12 \ldots n)$. From Sec.~\ref{sec:notations-prerequ},  we have that $\#(\gamma_n\tau^{-1}) + \#\tau \le 1+n$, with equality if and only if $\tau \preceq \gamma_n$, hence $\frac{1}{N}\bE \bigl[\Tr \, W^n \bigr] \sim \varphi_n(w_t)$,
where:
\be
\label{eq:asymptotic-moments-wishart-D1}
\varphi_n(w_t)  = \sum_{\tau \preceq \gamma_n} t^{\#\tau} \;,
\ee
 is the $n$th moment of the Mar\v{c}enko-Pastur law of parameter $t$. For the square Wishart, $t=1$, one gets the Catalan number:
 \be
 \label{eq:asymptotic-moments-square-wishart-D1}
 \varphi_n(w) =C_n,
 \ee 
where $w=w_1$. The asymptotic behavior of the cumulants $k_p(\Tr W^{\lambda_1}, \ldots, \Tr W^{\lambda_p})$ is the sum \eqref{eq:map-exp-wishart} with a connectivity condition (see for instance \cite{MS04}):
\be
 k_p\Bigl(\Tr(W^{\lambda_1}), \ldots, \Tr(W^{\lambda_p})\Bigr) = \sum_{\substack{{\tau\in S_n}\\{\Pi(\tau) \vee \Pi(\gamma_\lambda)=1_n}}}N^{\#(\gamma_\lambda\tau^{-1}) + \#\tau - n } {t}^{\,\#\tau} \;,
 \ee
where $\lambda=(\lambda_1, \ldots, \lambda_p)$ and $\gamma_\lambda$ is a permutation of cycle-type $\lambda$. Applying \eqref{eq:Euler-charact-bipartite-map} to the map $(\tau, \gamma_\lambda)$, which is connected, yields $\#(\gamma_\lambda\tau^{-1}) + \#\tau-n= 2-p-2g(\tau, \gamma_\lambda) \le 2 - p$, hence we reproduce the scaling advertised in \eqref{eq:asympt-class-cum-matrix} with:
\be
\varphi_{\lambda_1, \ldots, \lambda_p}(w_t) = \sum_{\substack{{\tau\in S_n }\\[+0.2ex]{\Pi(\tau) \vee \Pi(\gamma_\lambda)=1_n }\\[+0.2ex]{  g(\tau, \gamma_\lambda)=0}}} {t}^{\,\# \tau } \;.
\ee

\subsection{Free cumulants and first order freeness}
\label{subsub:free-cumulants}

\paragraph{Free cumulants.} The free cumulants are the central tools of free probability. For a $N\times N$ unitarily invariant random matrix $M$, they are defined  through the relations:  
\be
\label{eq:cum-mom}
\kappa_n(m) = \sum_{\tau \preceq \gamma_n} \varphi _{\,\Pi(\tau), \tau}(m)\;  \mathsf{M}(\gamma_n\tau^{-1}) \; , 
\ee
where we recall that if $\tau$ has cycle-type $(\lambda_1, \ldots, \lambda_p)$ then $\varphi _{\,\Pi(\tau), \tau}(m)=\prod_i \varphi _{\lambda_i}(m)$, and $\mathsf{M}$ is the M\"{o}bius  function on the lattice of non-crossing partitions  \eqref{eq:Moebius-on-NC}. By M\"{o}bius inversion  \cite{NicaSpeicher}, we have: 
\be
\label{eq:mom-cum}
\varphi_n(m) = \sum_{\tau \preceq \gamma_n} \kappa_{\,\Pi(\tau), \tau}(m) \;,
\ee
where, for $\tau$ having cycle-type $(\lambda_1, \ldots, \lambda_p)$, we let $\kappa _{\,\Pi(\tau), \tau}(m)=\prod_{i=1}^p \kappa_{\lambda_i}(m)$. 
These are the so-called \emph{free moment-cumulant formulas}. They generalize in the obvious manner to $n$ distinct matrices $M_1, \ldots, M_n$, yielding $\kappa_n (m_1, \ldots, m_n)$.

The information encoded in the free cumulants $\{\kappa_n(m)\}_{n\ge 1}$ is equivalent to that encoded in the first order asymptotic moments $\{\varphi_n(m)\}_{n\ge 1}$. 
For a Wishart random matrix for instance, comparing \eqref{eq:asymptotic-moments-wishart-D1} and \eqref{eq:asymptotic-moments-square-wishart-D1} to \eqref{eq:mom-cum} we have:
\be
\label{eq:cumulants-wishart-D1}
\kappa_n(w_t)=t,\qquad \mathrm{and}\qquad \kappa_n(w)=1 \;,
\ee
while for the GUE we have $\kappa_n(m)=\delta_{n,2}$.

\paragraph{Freeness.} 
Just like two random variables $x,y$ are independent if and only if their mixed (classical) cumulants vanish, some \emph{non-commutative} random variables $a,b,\ldots$ 
are said to be \emph{free}, if the free cumulants involving two different variables from this set at least vanish, that is, $\kappa_n(a_1, \ldots, a_n) =0$ if $a_i\neq a_j$ for some $i,j$. 
From the free moment-cumulant formulas, one can equivalently formulate freeness by the vanishing of centered mixed moments, see Sec.~\ref{sub:matrix-freeness} for more details. The original definition in the moments formulation is due to Voiculescu \cite{Voiculescu}, and the equivalent definition in terms of free cumulants to Speicher  \cite{Speicher94}.
Two random matrices $A,B$ converging to free non-commutative variables are said to be \emph{asymptotically free}. 
As an easy consequence, since the free cumulants are multilinear, just like classical cumulants are additive for independent random variables, 
 two free random variables $a,b$ satisfy:
\be
\label{eq:additiv-cum}
\kappa_n(a + b) =  \kappa_n(a) + \kappa_n( b). 
\ee

Voiculescu proved \cite{Voiculescu} that   two independent random matrices $A,B$ which almost surely have asymptotic spectra, such that the distribution of $B$ is invariant under conjugation by unitary matrices are asymptotically free. As a consequence, their free cumulants satisfy \eqref{eq:additiv-cum}.

\paragraph{Higher order free cumulants.}

Higher order free cumulants play the same role as $\{\kappa_n(m)\}_{n\ge 1}$ for $\{\varphi_n(m)\}_{n\ge 1}$, but for higher order asymptotic moments $\{\varphi_\lambda(m)\}_{\lambda\vdash n}$. Higher order free moment-cumulant formulas \cite{CMSS} define the higher order free cumulants
 $\kappa_\lambda(m) = \kappa_{\lambda_1, \ldots, \lambda_p}(m)$ for $\lambda=(\lambda_1, \ldots, \lambda_p)\vdash n$, and fixing some $q\ge 2$, the sets of cumulants 
$\{\kappa_{\lambda_1, \ldots, \lambda_p}(m)\}_{\lambda_1\ge \ldots\ge \lambda_p>0} $ for $p\le q$ and corresponding moments $\{\varphi_{\lambda_1, \ldots, \lambda_p}(m)\}_{\lambda_1\ge \ldots\ge \lambda_p>0}$   for $p\le q$  encode equivalent data on the asymptotic distribution of $M$.
The higher order free moment-cumulant formulas  are more complicated than the first order ones, see \cite{CMSS,LL-higher}.

\subsection{Moment-cumulant relations at finite $N$}

We have so far discussed the asymptotic moments and free cumulants. The free moment-cumulant formulas of first and higher orders are obtained as the $N\rightarrow \infty$ limits of finite-size moment-cumulant relations.

Let $\tau\in S_n$,  $\pi\ge \Pi(\tau)$, and for $G\in \pi$, denote by $p_G$ the number of cycles of $\tau_{\lvert_{G}}$ and $\lambda_1^G\ge \ldots \ge \lambda_{p_G}^G$ the ordered list of the number of elements of these cycles. Following
\cite{CMSS}, we define the finite $N$ multiplicative extension of the classical cumulants:
\be
\label{eq:Phi-higher-order}
\Phi_{\pi, \tau} [M]=  \prod_{ G\in \pi} k_{p{_G}}\left(\Tr(M^{\lambda_1^G}), \ldots, \Tr(M^{\lambda_{p_G}^G})\right) \; .
\ee
With this notation, the classical moment-cumulant formulas read:
\be
\bE\left[\Tr_{\tau} (M)\right] =
\sum_{\substack{ {\pi \in  \mathcal{P}(n) }\\ {\pi \ge \Pi(\tau)} } } \Phi_{\pi,\tau} [M] \;, \qquad \Phi_{1_n,\tau}[M] = 
\sum _{\substack{ {\pi \in  \mathcal{P}(n) }\\ {\pi \ge \Pi(\tau)} } } 
\lambda_\pi  \prod_{B\in \pi} \bE\left[\Tr_{\tau|_B} (M)\right] \;,
\ee
with $\lambda_{\pi}$ the M\"oebius function on the lattice of partitions. Naturally, $\Phi_{1_n,\tau}[M] = \Phi_{\tau}[M] $ is just the classical cumulant corresponding to the integer partition $\Lambda(\tau)$.
For $n$ distinct matrices $M_1, \ldots, M_n$, denoting $\vec M=(M_1, \ldots, M_n)$, we extend the notation to: 
\be
\label{eq:Phi-higher-order-0}
\Phi_ {\pi, \tau} [\vec M]=  \prod_{ G\in \pi} k_{p_G}\left(\left\{\Tr\left(\vec \prod _{j\in \tau_i}M_j\right)\right\}_{\tau_i \textrm{ cycle of }\tau \lvert_{_G}}\right) \;.
\ee

The finite $N$ precursors of the free cumulants are then defined  for $\sigma\in S_n$ and $\pi\ge \Pi(\sigma)$ 
(see  \cite{CMSS},  (24) and (14)) via:
\be
\label{eq:Cum-higher-order-fin-N-gen}
\mathcal{K}_{\pi, \sigma} [\vec M]=  \sum_{\tau\in S_n} \sum_{ \substack{ {\pi'\in \mathcal{P}(n)}\\{\pi\ge \pi' \ge \Pi(\tau)}} } \Phi_{\pi', \tau} [\vec M] \sum_{\substack{{\pi''\in \mathcal{P}(n)}\\{\pi\ge \pi''  \ge \Pi(\sigma)\vee\pi'}}} \lambda_{\pi'', \pi} \prod_{G\in \pi''} W^{(N)}\left(\sigma_{\lvert_{_G}}\tau_{\lvert_{_G}}^{-1}\right) \;,
\ee
where the $W^{(N)}$ are the Weingarten functions 
and $ \lambda_{\pi'',\pi}$ for $\pi''\le \pi$ is again the M\"oebius function $\lambda_{\pi'',\pi} = \prod_{B\in \pi} (|\pi''|_{B}| -1 )!(-1)^{|\pi''|_{B}| -1}$. The precursors of the free cumulants can also  be written directly in terms of expectations of traces. For instance, for $\mathcal{K}_{1_n, \sigma}[M] = \mathcal{K}_{\sigma}[M]$,  we have:
\be
\mathcal{K}_{\sigma}[M] = \sum_{\tau \in S_{n}}  \sum_{ \substack{ {\pi'\in\mathcal{P}(n) } \\ {\pi'\ge \Pi(\tau)} } }  \Phi _{\pi', \tau} [M] \sum_{ \substack{ {\pi'' \in \mathcal{P}(n) } \\ {\pi'' \ge \Pi( \sigma)\vee \pi' } } } \lambda_{\pi''} \prod_{G\in \pi''}  W^{(N)} 
\Bigl(\sigma_{\lvert_{_G}}\tau_{\lvert_{_G}}^{-1}\Bigr) \;,
\ee
and exchanging the sums we get:
\be\label{eq:freecumumome}
\begin{split}
  \mathcal{K}_{\sigma}[M] & = \sum_{\tau \in S_{n}} \sum_{ \substack{  {\pi'' \in \mathcal{P}(n) }\\{\pi'' \ge \Pi( \sigma)\vee \Pi(\tau) } } } \lambda_{\pi''} \prod_{G\in \pi''}  W^{(N)} 
\Bigl(\sigma_{\lvert_{_G}}\tau_{\lvert_{_G}}^{-1} \Bigr) 
 \sum_{ \substack{ {\pi' \in \mathcal{P}(n)} \\{\pi''\ge \pi'\ge \Pi(\tau) } }}  \Phi _{\pi', \tau} [M] \crcr
 & =  \sum_{\tau \in S_{n}} \sum_{ \substack{  {\pi'' \in \mathcal{P}(n) }\\{\pi'' \ge \Pi( \sigma)\vee \Pi(\tau) } } } \lambda_{\pi''} \prod_{G\in \pi''}  W^{(N)} 
\Bigl(\sigma_{\lvert_{_G}}\tau_{\lvert_{_G}}^{-1}\Bigr)\; \bE\left[\Tr_{\tau|_G} (M)\right]
\; .
\end{split}
\ee

The relations between classical and free cumulants in  \eqref{eq:Cum-higher-order-fin-N-gen} can be inverted as \cite{CMSS}:
\be
\label{eq:Mom-higher-order-fin-N-gen}
\Phi_{\pi, \sigma} [\vec M]=  \sum_{\substack{{\tau\in S_n, \,\pi'\in \mathcal{P}(n)}\\[+0.2ex]{\pi\ge \pi' \ge \Pi(\tau)}\\{\pi'\vee \Pi(\sigma\tau^{-1}) = \pi}}} \mathcal{K}_{\pi', \tau} [\vec M] \, \cdot \, N^{\#(\sigma\tau^{-1})} \;.
\ee
The finite $N$ classical cumulant $\Phi_{\pi, \sigma} [\vec M]$ is of order $N^{2\#\pi - \#\sigma }$ and the precursor $\mathcal{K}_{\pi, \sigma} [\vec M]$ of the free cumulant $\kappa_{\pi, \sigma} (\vec m)$ is of order $N^{2\#\pi -\#\sigma-n} $. Their  rescaled limits converge to the asymptotic versions we defined previously:
 \be
\label{eq:rescaled-higher-moments-and-cumulants}
 \varphi_{ \pi, \sigma} (\vec m) = \lim_{N\rightarrow \infty } N^{\#\sigma - 2\#\pi  }  \Phi_{\pi, \sigma} [\vec M] \; , 
 \qquad  \kappa_{\pi, \sigma} (\vec m) = \lim_{N\rightarrow \infty } N^{\#\sigma + n- 2\#\pi }  \mathcal{K}_{\pi, \sigma} [\vec M]. 
\ee
The free moment-cumulant formulas of arbitrary order are recovered \cite{CMSS, LL-higher} by taking the $N\rightarrow \infty$ limit of the finite $N$ relations  \eqref{eq:Cum-higher-order-fin-N-gen} and  \eqref{eq:Mom-higher-order-fin-N-gen}.

\newpage

\section{Local-unitary invariant random tensors}
\label{sec:char-lu-inv-tensor-distributions}

We will consider throughout this paper two classes of random tensors ensembles:
\begin{itemize}
\item \textbf{Mixed ensembles.}
The first case consists in a complex random tensor of the form $A=\{A_{i^1 \ldots i^D ; j^1 \ldots j^D}\}$ with $1\le i^c, j^c \le N$. The distribution is local-unitary invariant if for any $U_1, \ldots, U_D\in U(N)$, the random tensors $A$ and $(U_1 \otimes \cdots \otimes U_D)A (U_1^\dagger \otimes \cdots \otimes U_D^\dagger)$ have the same distribution. Mixed random tensors are adapted for describing random operators, or random mixed
quantum states, on a $D$-partite Hilbert space $\mathcal H _1 \otimes \cdots \otimes \mathcal H _D.$ Note that $A$ is not a priori assumed to be Hermitian.
\item \textbf{Pure ensembles.}
The second case is that of a pair of tensors 
$T_{i^1\dots i^D}$ and $\bar T_{j^1 \dots j^D}$. Despite the notation, just as $A$ is not a priori assumed to be Hermitian,  we generally assume that $T_{i^1 \ldots i^D}$ and $\bar T_{i^1 \ldots i^D}$ are independent, and explicitly say so otherwise\footnote{Note that even in the latter situation, these variables behave as independent ones, for instance when deriving saddle point or Schwinger-Dyson equations  involving  differentiation with respect to $T_{i^1 \ldots i^D}$ and $\bar T_{i^1 \ldots i^D}$.}. The distribution is local-unitary invariant if $T,\bar T$ and 
$ (U_1 \otimes \cdots \otimes U_D)T , 
\bar T (U_1^\dagger \otimes \cdots \otimes U_D^\dagger)$ have the same distributions.
Pure random tensors  are adapted for describing  random pure quantum states on a $D$-partite Hilbert space. The case  where $T$ is a complex Gaussian  (Ginibre) tensor and the components of $\bar T$ are the conjugate of those of $T$ is the ensemble that has been studied the most in the literature.  
\end{itemize}

Both in the mixed and in the pure case, we call the indices $i$ on which the unitary matrices $U$ act \emph{output indices}, and the indices $j$ on which the Hermitian conjugates $U^\dagger$ act the \emph{input indices}. We sometimes use the shorthand notation $\vec i$ to denote the $D$-tuple of indices $i^1\dots i^D$.

It is sometimes useful to regard the pure case as a factorized version of the mixed one $A=T\otimes \bar T$, or component wise $A_{i^1 \ldots i^D ;\; j^1 \ldots j^D} = T_{i^1 \ldots i^D} \; \bar T_{ j^1 \ldots j^D}$. Below we will often first discuss the mixed case and then adapt the various notions to the pure one.

\subsection{Trace-invariants and moments of invariant distributions of finite size}
\label{sec:trace-inv}

We discuss a family of invariant polynomials which generalize the traces of powers of matrices to tensors. 

\subsubsection{Trace-invariants}
\label{subsub:trace-invariants}

Similarly to invariant matrix functions, a function $f$ of the tensor $A$ is said to be local-unitary invariant (\textsf{LU}-invariant), if for any $U_1, \ldots, U_D \in U(N)$, 
\be
\label{eq:LU-invariance-function}
f(A) = f\Bigl((U_1 \otimes \cdots \otimes U_D)A (U_1^\dagger \otimes \cdots \otimes U_D^\dagger)\Bigr).
\ee

We call \emph{trace-invariants} the homogeneous local-unitary invariant polynomials in the tensor components \cite{Gurau-book} such that all the
 output indices $i^c$ are identified and summed (contracted) with input indices $j^c$ respecting the color $c=1, \dots D$.

\paragraph{Graphical representation.} The trace-invariants can be represented as edge colored graphs.  
In the mixed case we represent a tensor $A$ by a pair consisting in a white and a black vertex
connected by a thick edge to which we assign a color $D+1$. The input (resp.~output) indices of color $c$, $j^c$ (resp.~$i^c$) are represented by half-edges of color $c$ connected to the black (resp.~white) vertex. This is depicted in Fig.~\ref{fig:graph-repr-mixed-and-pure}, on the left. In the pure case we represent a tensor $T$ as a white vertex with output half-edges $i^c$, and a tensor $\bar T$ as a black vertex with input half-edges $j^c$, as depicted in Fig.~\ref{fig:graph-repr-mixed-and-pure}, on the right.

\begin{figure}[!h]
\centering
\includegraphics[scale=1.3]{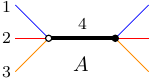}\hspace{2.5cm}\includegraphics[scale=1.3]{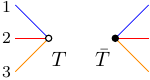}
\caption{
Left: a mixed tensor $A_{i^1\ldots i^D ; j^1 \ldots j^D}$ with input indices $j^c$  (resp.~output $i^c$) represented as half-edges of color $c$ attached to the black (resp.~white) vertex. Right: the pure case with $T_{i^1\dots i^D}$ and $\bar T_{j^1 \dots j^D}$.}
\label{fig:graph-repr-mixed-and-pure}
\end{figure}

The pairing of a half-edge of color $c$ on a white vertex with a half-edge of color $c$ on a black one to form an edge of color $c$ represents the identification and summation of the two corresponding indices $i^c$ and $j^c$. 

As illustrated in Fig.~\ref{fig:graph-repr-trace-invariants}, the resulting graphs are \emph{bipartite $(D+1)$-edge-colored graphs} in the mixed case (simply $(D+1)$-colored graphs below), 
while in the pure case they are \emph{$D$-colored}, as there are no thick edges.

\begin{figure}[!h]
\centering
\includegraphics[scale=1]{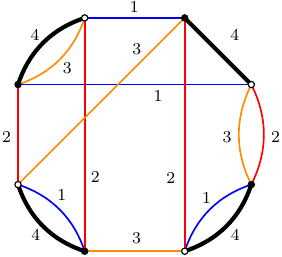}\hspace{2cm}\includegraphics[scale=1]{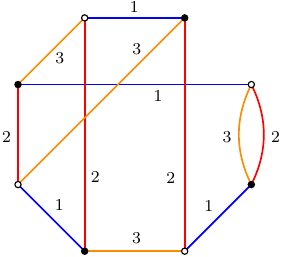}
\caption{Left: a trace-invariant for a mixed tensor $A$ with $D=3$ inputs and 3 outputs. Right: a similar invariant for the pure case with two tensors, $T$ with $D=3$ outputs and $\bar T$ with $3$ inputs.}
\label{fig:graph-repr-trace-invariants}
\end{figure}

The trace-invariants are in one-to-one correspondence with the non-isomorphic non-labeled bipartite edge colored graphs with $n$ white and $n$ black vertices. Due to the presence of the extra edges, the invariance under relabeling in the mixed case differs from the one of the pure case.

\paragraph{Encoding via permutations.} 
Trace-invariants are encoded by permutations. 
In the mixed case we label $1$ to $n$ the copies of the tensor $A$, and we consider the $D$ permutations $\bsig = (\sigma_1, \ldots, \sigma_D)$, $\sigma_c \in S_n$ obtained by setting $\sigma_c(s)=s'$ if the output index $i^c $ of the tensor labeled $s$ is identified and summed with the input index $j^c$ of the tensor labeled $s'$. Denoting $\Tr_{\bsig}(A)$ the labeled trace-invariants, we have:
\be
\label{def:trace-invariants}
\Tr_{\bsig}(A)=\sum_{\rm{all\ indices}} 
 \left( \prod_{s=1}^n A_{i^1_s\ldots i^D_s; j^1_s\ldots j^D_s } \right) 
\prod_{   c =1}^D \left( \prod_{s=1}^n \delta_{i^c_s,j^c_{\sigma_c(s)}}  \right) \; .
\ee
In the pure case we label $1$ to $n$ the tensors 
$T$ and $\bar 1$ to $\bar n$ the tensors $\bar T$, and we set $\sigma_c(s)=s'$ if the output index $i^c$ on the white vertex $s$ is connected with the input index $j^c$ on the black vertex $\overline{s'}$: 
\be
\label{def:trace-invariantsp}
\Tr_{\bsig}(T,\bar T)=\sum_{\rm{all\ indices}} 
 \left( \prod_{s=1}^n T_{i^1_s\ldots i^D_s}  \bar T_{ j^1_{\bar s} \ldots j^D_{\bar s} } \right)  
\prod_{   c =1}^D \left( \prod_{s=1}^n \delta_{i^c_s,j^c_{\overline{ \sigma_c(s) } }}  \right) \; .
\ee
The summation convention is represented in Fig.~\ref{fig:graph-convention}. 
\begin{figure}[!h]
\centering
\includegraphics[scale=1.1]{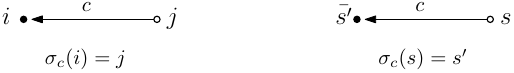}
\caption{Convention adopted  for encoding index summations with permutations, mixed on the left and pure on the right.}
\label{fig:graph-convention}
\end{figure}
Remark that if $A=T\otimes \bar T$, then 
$\Tr_{\bsig}(A)=\Tr_{\bsig}(T,\bar T)$.

The connected components of the $(D+1)$-edge colored graph obtained in the mixed case correspond to
the blocks of the partition $\Pi(\bsig) = \Pi(\sigma_1) \vee \dots \vee \Pi(\sigma_D)$, and we denoted their number by $K_\mathrm{m}(\bsig)=\#\Pi(\bsig)$. This follows by observing that the cycles of the permutation $\sigma_c$ correspond to the bi-colored cycles of edges with colors $(c,D+1)$ in the graph. We call these the \emph{mixed connected components of} $\bsig$.

The connected components of the $D$-edge colored graph obtained in the pure case (also called \emph{pure connected components})
correspond to the blocks of the bipartite partition $\Pi_\mathrm{p}(\bsig) = \bigvee_{1\le c\le D} \Pi_\mathrm{p}(\sigma_c)$ discussed in Sec.~\ref{sec:notations-prerequ}, whose number we denoted by $K_\mathrm{p}(\bsig)$. If $K_\mathrm{p}(\bsig)=1$, we say that $\bsig$ is  purely connected.

\paragraph{Relabeling.} 
We let for $\eta, \nu\in S_n$:
\be
\label{eq:left-and-right-composition}
\eta \bsig \nu =(\eta \sigma_1 \nu , \ldots, \eta \sigma_D \nu  ).
\ee

The trace-invariants do not depend on the labels of the tensors. The mixed and pure cases behave slightly differently:
\begin{itemize}
 \item {\it The mixed case.} In the mixed case a relabeling of the $n$ tensors $A$ corresponds to a simultaneous conjugation of all the $\sigma_c$ by the same permutation $\eta\in S_n$, that is:
\be
\label{eq:invariance-traceinvariant-mixed}
\Tr_{\bsig}(A) = \Tr_{\eta\, \bsig \eta^{-1}}(A). 
\ee

In this case a non-labeled trace-invariant is an equivalence class (orbit in $S^D_n /{\sim_\mathrm{m}}$) of labeled bipartite $(D+1)$-edge colored graphs with the equivalence relation:
\be
\label{eq:eq-classes-mixed}
\bsig \sim_\mathrm{m} \bsig' \quad \Leftrightarrow \quad \exists \eta\in S_n \mid \bsig=\eta \bsig' \eta^{-1} \; . 
\ee

\item {\it The pure case.} In the pure case the white and black vertices can be relabeled independently, that is for any $\eta, \nu\in S_n$:
\be
\label{eq:invariance-traceinvariant-pure}
\Tr_{\bsig}(T, \bar T) = \Tr_{\eta \bsig\nu}(T, \bar T) \; . 
\ee 
In this case a non-labeled trace-invariant is an equivalence class (orbit in $S_n^D /{\sim_\mathrm{p}}$) of labeled bipartite $D$-edge colored graphs (with no thick edges) with the equivalence relation:
\be
\label{eq:eq-classes-pure}
\bsig \sim_\mathrm{p} \bsig' \quad \Leftrightarrow \quad \exists \eta, \nu\in S_n \mid \bsig=\eta \bsig' \nu . 
\ee

\end{itemize}

\begin{remark} It is obvious that $\bsig \sim_\mathrm{m} \bsig'$ implies $\bsig \sim_\mathrm{p} \bsig'$, but the converse is not true in general. What holds however is the following:
\be
\bsig \sim_\mathrm{p}\bsig' \ \ \mathrm{in} \ \ S_n^D\quad  \Leftrightarrow\quad  \tilde \bsig \sim_\mathrm{m} \tilde \bsig' \ \ \mathrm{in}\ \  S_n^{D-1} \;, 
\ee
where $\tilde \bsig =(\sigma_1 \sigma_D^{-1}, \ldots, \sigma_{D-1}\sigma_D^{-1})\in S_n^{D-1}$, that is one can treat a pure invariant like a mixed one with one less color by declaring 
the edges of color $D$ of the pure invariant to be thick.
\end{remark}

\paragraph{Matrices.} The $D=1$ mixed case reproduces the matrix trace-invariants discussed in Sec.~\ref{sec:moments-unitarily-invariant-random-matrices}. Indeed, for $D=1$ a mixed trace-invariant is a collection of cycles with thick edges of color 2 and thin edges of color 1. Labeling the matrices $M$ corresponds to labeling the thick edges and  \eqref{def:trace-invariants} becomes $\Tr_\sigma(M)$. Relabeling the matrices corresponds to conjugating $\sigma$ by some permutation and a non-labeled trace-invariant corresponds to the conjugacy class of $S_n/{\sim_\mathrm{m}}$ of permutations of fixed cycle type.

The $D=2$ pure case  is also matricial: one has matrices $X$ and $\bar X$, typically the conjugate of $X$. A trace-invariant corresponds to a collection of alternating cycles with white and black vertices and thin edges of color 1 and 2. The trace-invariants are functions of $XX^\dagger$.

\subsubsection{Theory of invariants}
\label{sec:basis}

In this section, we show that asymptotically at large $N$ the trace-invariants generate the set local-unitary invariant polynomials and are linearly independent. The results we derive are weaker than the results for unitarily invariant matrices, and it should be possible to improve them. While similar results have already been obtained in the literature \cite{Heroetal, Vrana1, Vrana2, BGR1, BGR2, BGR3, Deksen, TuckNort}, due to differences in vocabulary and methods we find it useful to re-derive them here.

\paragraph{Distance between orbits.} In the \emph{mixed case}, a distance $d_\mathrm{m}([\bsig]_{\mathrm{m}}, [ \btau]_{\mathrm{m}})$  between two equivalence classes $[\bsig]_{\mathrm{m}},[\btau]_{\mathrm{m}} \in S_n^D /{\sim_\mathrm{m}}$ is defined by:
\be
\label{eq:def-distance-orbits-mixte}
d_\mathrm{m}([\bsig]_{\mathrm{m}}, [ \btau]_{\mathrm{m}}) = \min_{\eta_1, \eta_2 \in S_n}\sum_{c=1}^D \lvert\eta_1  \sigma_c \eta_1^{-1} \eta_2 \tau_c^{-1}\eta_2^{-1} \rvert  = \min_{\eta \in S_n}\sum_{c=1}^D \lvert \sigma_c \eta \tau_c^{-1}\eta^{-1}\rvert  \ge 0 \; ,
\ee
and it is well defined as it is independent on the particular representatives $\bsig$ and $\btau$ of the classes $[\bsig]_{\mathrm{m}}$ and $[\btau]_{\mathrm{m}}$.

In the \emph{pure case}, a distance $d_\mathrm{p}([\bsig]_{\mathrm{p}}, [\btau]_{\mathrm{p}})$  between two equivalence classes $[\bsig]_{\mathrm{p}},[ \btau]_{\mathrm{p}} \in S_n^D /{\sim_\mathrm{p}}$ is defined by:
\be
\label{eq:def-distance-orbits-pure}
d_\mathrm{p}([\bsig]_{\mathrm{p}}, [ \btau]_{\mathrm{p}} ) 
= \min_{\eta, \nu \in S_n}\sum_{c=1}^D \lvert \sigma_c \eta \tau_c^{-1}\nu\rvert  \ge 0 \;, 
\ee
which is again a class function. The two functions are non-negative, symmetric and respect the triangle inequality. For instance, for any $\alpha$ and any $\gamma$ we have 
$|\sigma \eta \tau^{-1}\eta^{-1}| 
\le |\sigma \gamma \alpha^{-1}\gamma^{-1}| + |\gamma \alpha \gamma^{-1} \eta\tau^{-1}\eta^{-1}|$ and we take the minimum first over $\gamma$ and then over $\eta$. Furthermore, if $\min_{\eta\in S_n}\sum_c \lvert \sigma_c \eta \tau_c^{-1}\eta^{-1}\rvert$ vanishes, then there exists $\eta$ such that for all $c$, $\lvert \sigma_c \eta \tau_c^{-1}\eta^{-1}\rvert=0$, that is,  $\bsig=\eta \btau \eta^{-1}$, and similarly in the pure case. Therefore:

\begin{lemma} 
\label{lem:distance-between-orbits}The functions $d_\mathrm{m}( [ \bsig ]_{\mathrm{m}}, [\btau]_{\mathrm{m}})\in \mathbb{N}$ and $d_\mathrm{p}([\bsig]_{\mathrm{p}}, [\btau]_{\mathrm{p}})\in \mathbb{N}$ are distance functions between the equivalence classes, in particular:
\[
d_\mathrm{m}([\bsig]_{\mathrm{m}}, [\btau]_{\mathrm{m}})  = 0 \quad \Leftrightarrow \quad  \exists \eta,\  \bsig = \eta \btau \eta^{-1}  \;,\qquad \text{that is} \quad \bsig\sim_{\mathrm{m}} \btau \; , 
\]
and analogously for $d_\mathrm{p}$ and $S_n^D /{\sim_\mathrm{p}}$. 
\end{lemma}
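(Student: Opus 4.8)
The plan is to verify the four metric axioms in turn, treating $d_{\mathrm m}$ and $d_{\mathrm p}$ in parallel and relying throughout on the following elementary properties of the word length $|\cdot|$ on $S_n$: it is a class function, $|\eta g\eta^{-1}|=|g|$ (conjugation preserves cycle type); it is inversion-invariant, $|g|=|g^{-1}|$; and $d(\sigma,\tau)=|\sigma\tau^{-1}|$ is itself a metric on $S_n$, so that the triangle inequality \eqref{eq:triangular-dist} is available. Non-negativity and integrality are then immediate: each summand is a non-negative integer, the sum over $c$ is a non-negative integer, and the minimum over the finite set $S_n$ (resp.\ $S_n\times S_n$) of such integers again lies in $\bN$.

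For well-definedness I would first record that the two expressions for $d_{\mathrm m}$ in \eqref{eq:def-distance-orbits-mixte} coincide: conjugating the $c$-th summand by $\eta_1^{-1}$ and setting $\eta=\eta_1^{-1}\eta_2$ turns $|\eta_1\sigma_c\eta_1^{-1}\eta_2\tau_c^{-1}\eta_2^{-1}|$ into $|\sigma_c\eta\tau_c^{-1}\eta^{-1}|$, and as $(\eta_1,\eta_2)$ ranges over $S_n^2$ the product $\eta$ ranges over all of $S_n$. In the double-minimized form, replacing $\bsig$ by $\zeta\bsig\zeta^{-1}$ is absorbed into $\eta_1$ and replacing $\btau$ by $\zeta'\btau\zeta'^{-1}$ into $\eta_2$, so the value depends only on the classes. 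In the pure case, conjugating the summand of \eqref{eq:def-distance-orbits-pure} by $\eta_0^{-1}$ shows that replacing $\bsig$ by $\eta_0\bsig\nu_0$ amounts to the substitution $(\eta,\nu)\mapsto(\nu_0\eta,\nu\eta_0)$, a bijection of $S_n^2$; the analogous manipulation handles $\btau$, so $d_{\mathrm p}$ is likewise class-independent. Symmetry then follows from $|g|=|g^{-1}|$ together with a conjugation: for the mixed distance $|\sigma_c\eta\tau_c^{-1}\eta^{-1}|=|\tau_c\eta^{-1}\sigma_c^{-1}\eta|=|\tau_c\eta'\sigma_c^{-1}\eta'^{-1}|$ with $\eta'=\eta^{-1}$, and similarly one rewrites $|\sigma_c\eta\tau_c^{-1}\nu|=|\tau_c\eta'\sigma_c^{-1}\nu'|$ in the pure case.

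The triangle inequality is the step requiring the most care, and I expect it to be the main obstacle, because of the intertwined conjugations in the mixed case and the independent left/right actions in the pure one. For $d_{\mathrm m}$, inserting an intermediate class $[\brho]_{\mathrm m}$ and applying \eqref{eq:triangular-dist} to $\sigma_c$, $\gamma\rho_c\gamma^{-1}$, $\eta\tau_c\eta^{-1}$ gives, for every $\gamma,\eta$,
\[
|\sigma_c\eta\tau_c^{-1}\eta^{-1}|\le|\sigma_c\gamma\rho_c^{-1}\gamma^{-1}|+|\rho_c\mu\tau_c^{-1}\mu^{-1}|,\qquad \mu=\gamma^{-1}\eta,
\]
after conjugating the second length by $\gamma^{-1}$. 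Summing over $c$, choosing $\gamma$ to realize $d_{\mathrm m}([\bsig]_{\mathrm m},[\brho]_{\mathrm m})$ and $\mu$ to realize $d_{\mathrm m}([\brho]_{\mathrm m},[\btau]_{\mathrm m})$, and then setting $\eta=\gamma\mu$ produces a single $\eta$ for which $\sum_c|\sigma_c\eta\tau_c^{-1}\eta^{-1}|$ is bounded by the sum of the two distances; minimizing the left-hand side over $\eta$ yields the claim. For $d_{\mathrm p}$ I would instead factor $\sigma_c\eta\tau_c^{-1}\nu=(\sigma_c\eta_1\rho_c^{-1}\nu_1)(\nu_1^{-1}\rho_c\eta_1^{-1}\eta\tau_c^{-1}\nu)$, apply subadditivity $|gh|\le|g|+|h|$, and conjugate the second factor by $\nu_1$ to bring it to the form $|\rho_c(\eta_1^{-1}\eta)\tau_c^{-1}(\nu\nu_1^{-1})|$; the same selection of minimizers for the two pure distances, combined through $\eta=\eta_1\mu^{\ast}$, $\nu=\omega^{\ast}\nu_1$, closes the argument.

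Finally, the separation property stated in the Lemma follows because the distances are minima of sums of non-negative integers: if $d_{\mathrm m}([\bsig]_{\mathrm m},[\btau]_{\mathrm m})=0$ there is an $\eta$ with $\sum_c|\sigma_c\eta\tau_c^{-1}\eta^{-1}|=0$, which forces every term to vanish, hence $\sigma_c=\eta\tau_c\eta^{-1}$ for all $c$, i.e.\ $\bsig=\eta\btau\eta^{-1}$ and $\bsig\sim_{\mathrm m}\btau$; the converse is immediate upon plugging this $\eta$ back in. The identical reasoning with the pair $(\eta,\nu)$ in place of the single conjugator gives $\bsig\sim_{\mathrm p}\btau$ in the pure case, completing the proof.
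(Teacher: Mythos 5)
Your proof is correct and follows essentially the same route as the paper, which establishes the triangle inequality via the same insertion $\lvert\sigma\eta\tau^{-1}\eta^{-1}\rvert\le\lvert\sigma\gamma\alpha^{-1}\gamma^{-1}\rvert+\lvert\gamma\alpha\gamma^{-1}\eta\tau^{-1}\eta^{-1}\rvert$ followed by successive minimization, and deduces separation from the vanishing of each summand. Your write-up merely makes explicit the substitutions ($\mu=\gamma^{-1}\eta$, the factorization in the pure case, and the well-definedness on classes) that the paper leaves as a brief sketch.
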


\paragraph{A spanning family.} Let $P$ be a
local-unitary invariant polynomial of degree $n$ in the tensor entries:
\[
 P(A) = \sum_{q=0}^n \sum_{i,j} P_{ i_1^1\dots i_1^D; j_1^1 \dots j_1^D  |  \dots  | i_q^1\dots i_q^D; j_q^1 \dots j_q^D  } \prod_{v=1}^q   A_{i_v^1\dots i_v^D; j_v^1 \dots j_v^D} \;,
\]
where the coefficients are complex numbers. Since $P(A)$ is local-unitary invariant: 
\[
P(A) = \int d U_1 \cdots d U_D \; P\Bigl((U_1 \otimes \cdots \otimes U_D)A (U_1^\dagger \otimes \cdots \otimes U_D^\dagger)\Bigr) \;,
\]
with $dU_c$ the Haar measure on $N\times N$ unitary matrices and using the Weingarten formula \eqref{eq:WeinDef} we get:
\[
 P(A) = 
 \sum_{q=0}^n \sum_{\btau\in S^D_q } \Tr_{\btau}(A) 
  \sum_{\bsig\in S^D_q  } \prod_{c=1}^D
  W^{(N)} (\sigma_c\tau_c^{-1})
 \sum_{i,j} P_{ i_1^1\dots i_1^D; j_1^1 \dots j_1^D  |  \dots  | i_q^1\dots i_q^D; j_q^1 \dots j_q^D  } \prod_{v=1}^q \prod_{c=1}^D  \delta_{i_v^cj^c_{\sigma^c(v)}} \; . 
\]

Proceeding similarly in the pure case we establish the following lemma.
\begin{lemma}
\label{lem:generating}
The trace-invariants of a tensor $A\in M_N(\mathbb{C})^{\otimes D}$ are a spanning family for the local-unitary invariant polynomials of degree $n\le N$. More precisely,  for any \textsf{LU}-invariant polynomial $P$ of degree $n\le N$, there exists a set of complex coefficients $\{ C_{[\btau]_{\mathrm{m}} } \}_{[ \btau]_{\mathrm{m}} \in S_q^D/{\sim_\mathrm{m}} , q\le n}$ such that:
\[
 P(A) = \sum_{q=1}^n\  \sum_{ [\btau]_{\mathrm{m}} \in S_q^D /{\sim_\mathrm{m}}  } C_{[\btau]_{\mathrm{m}}}  \Tr_{[\btau]_{\mathrm{m}}} (A) \; ,
\]
where $\Tr_{[\btau]_{\mathrm{m}} } (A) = \Tr_{\btau } (A) $ for any representative $\btau \in[\btau]_{\mathrm{m}} $.

In the pure case, for any \textsf{LU}-invariant polynomial $P(T,\bar T)$ of degree $n\le N$ in each of $T$ and $\bar T$, there exists a set of complex coefficients $\{ C_{[\btau]_{\mathrm{p}} } \}_{[ \btau]_{\mathrm{p}} \in S_q^D/{\sim_\mathrm{p}} , q\le n}$ such that:
\[
 P(T,\bar  T) = \sum_{q=1}^n\  \sum_{ [\btau]_{\mathrm{p}} \in S_q^D /{\sim_\mathrm{p}}  } C_{[\btau]_{\mathrm{p}}}  \Tr_{[\btau]_{\mathrm{p}}} (T,\bar  T) \; ,
\]
where $\Tr_{[\btau]_{\mathrm{p}} } (T,\bar  T) = \Tr_{\btau } (T,\bar T) $ for any representative $\btau \in[\btau]_{\mathrm{p}} $.
\end{lemma}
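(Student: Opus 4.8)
The plan is to exploit \textsf{LU}-invariance to replace $P$ by its own average over the Haar measure on $U(N)^D$, and then to collapse that average color by color using the Weingarten formula \eqref{eq:WeinDef}, which applies verbatim precisely because $n\le N$. In the mixed case this is exactly the computation displayed just before the lemma: after substituting $A\mapsto (U_1\otimes\cdots\otimes U_D)A(U_1^\dagger\otimes\cdots\otimes U_D^\dagger)$ and integrating, each color $c$ contributes a pair of permutations $\sigma_c,\tau_c$, a weight $W^{(N)}(\sigma_c\tau_c^{-1})$, and two families of Kronecker deltas. The deltas carrying $\bsig$ contract the external indices against the coefficients $P_{\cdots}$ and thus produce a scalar, while the deltas carrying $\btau$ identify the output index of color $c$ on the copy labeled $s$ with the input index of color $c$ on the copy $\tau_c(s)$, which by \eqref{def:trace-invariants} is exactly the contraction pattern defining $\Tr_{\btau}(A)$. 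Hence $P(A)$ is already exhibited as a finite linear combination $\sum_{q\le n}\sum_{\btau\in S_q^D} c_{\btau}\,\Tr_{\btau}(A)$ with explicit scalar coefficients $c_{\btau}$.

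It then remains to repackage this labeled sum into the claimed orbit sum. Since $\Tr_{\btau}(A)=\Tr_{\eta\btau\eta^{-1}}(A)$ by \eqref{eq:invariance-traceinvariant-mixed}, every representative of a class $[\btau]_\mathrm{m}\in S_q^D/{\sim_\mathrm{m}}$ carries the same trace-invariant, so I would set $C_{[\btau]_\mathrm{m}}=\sum_{\btau'\in[\btau]_\mathrm{m}}c_{\btau'}$ and regroup; no overcounting occurs because these classes partition $S_q^D$. This yields the stated expansion in the mixed case.

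For the pure case I would run the identical argument with the transformation $T\mapsto (U_1\otimes\cdots\otimes U_D)T$ and $\bar T\mapsto \bar T(U_1^\dagger\otimes\cdots\otimes U_D^\dagger)$. Now each $U_c$ acts only on the $T$-factors and each $U_c^\dagger$ only on the $\bar T$-factors, so upon integrating the degree-$(p,q)$ part of $P$ the Weingarten formula forces $p=q$ (an integral with unequal numbers of $U_c$ and $\overline{U_c}$ vanishes); the unbalanced components of $P$ drop out automatically. For the surviving balanced parts the $\btau$-deltas pair the output index of the $T$ labeled $s$ with the input index of the $\bar T$ labeled $\overline{\tau_c(s)}$, reproducing $\Tr_{\btau}(T,\bar T)$ as in \eqref{def:trace-invariantsp}. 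Because the white ($T$) and black ($\bar T$) labels are now summed independently, the relevant invariance is $\Tr_{\btau}(T,\bar T)=\Tr_{\eta\btau\nu}(T,\bar T)$ from \eqref{eq:invariance-traceinvariant-pure}, and I would group the labeled combination into classes $[\btau]_\mathrm{p}\in S_q^D/{\sim_\mathrm{p}}$ exactly as above.

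Once the Weingarten formula is in hand the argument is essentially bookkeeping, so there is no deep obstacle; the points demanding care are (i) retaining the hypothesis $n\le N$, which is exactly the range in which \eqref{eq:WeinDef} is stated (for larger degrees the trace-invariants still span but cease to be independent, and one would first have to reduce via a Cayley--Hamilton-type relation, which is why the statement is restricted), and (ii) correctly reading off the equivalence relation from the way the unitaries act — simultaneous conjugation giving $\sim_\mathrm{m}$ versus independent left/right relabeling giving $\sim_\mathrm{p}$ — rather than conflating the two cases.
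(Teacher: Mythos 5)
Your proposal is correct and follows essentially the same route as the paper: average $P$ over the Haar measure using \textsf{LU}-invariance, apply the Weingarten formula \eqref{eq:WeinDef} so that the $\bsig$-deltas contract the coefficients of $P$ into scalars while the $\btau$-deltas reproduce $\Tr_{\btau}$, and then regroup the labeled sum into orbits via \eqref{eq:invariance-traceinvariant-mixed} and \eqref{eq:invariance-traceinvariant-pure}. The extra observations you make (vanishing of unbalanced components in the pure case, the role of $n\le N$, and the harmless regrouping into equivalence classes) are all consistent with the paper's argument.
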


\begin{remark} For a matrix, the statement is easily extended to polynomials of any degree using the Cayley-Hamilton Theorem. For tensors, the statement can be extended to polynomials whose degree is exponential in the system size, by application of a result by Deksen \cite{Deksen}, see also\cite{TuckNort}, where the bound is explicitly computed for the local-unitary case\footnote{L.L. thanks Michael Walter for mentioning this fact. }.
\end{remark}

\paragraph{Linear independence.} In what concerns the linear independence of the trace-invariants, the following holds.
\begin{theorem}
\label{thm:independence}
For any $n_{\rm max}$, there exists $N_{n_{\rm max} }$ such that for any $N\ge N_{n_{\rm max} } $ the families of functionals below are linearly independent on $\mathbb{C}$:
\begin{itemize}
\item\emph{Mixed case:}
\[
\Tr_{[\bsig]_{\mathrm{m}}} : M_N(\mathbb{C})^{\otimes D}\to \mathbb{C}, \qquad [ \bsig ]_{\mathrm{m}} \in S_n^D /{\sim_\mathrm{m}} \; ,\; \; n\le n_{\rm max} \;.
\]
\item\emph{Pure case:}
\[
\Tr_{[\bsig]_{\mathrm{p}}} : (\mathbb C^N)^{\otimes D}\times  (\mathbb C^N)^{\otimes D} \mapsto \mathbb{C} 
\;, \qquad [\bsig]_{\mathrm{p}} \in S_n^D /{\sim_\mathrm{p}} \; ,\;\; n\le n_{\rm max} \;.
\]
\end{itemize}
\end{theorem}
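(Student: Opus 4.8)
The plan is to realize, for each fixed degree $n$, a genuine positive-definite sesquilinear pairing on the span of the trace-invariants, and to show that under this pairing the Gram matrix of the $\Tr_{[\bsig]}$, after rescaling by $N^{-Dn}$, is a small perturbation of a diagonal matrix with strictly positive entries. Its determinant is then nonzero for $N$ large, which is equivalent to linear independence of the functionals. Since a polynomial identity among homogeneous polynomials must hold degree by degree, and since there are only finitely many classes $[\bsig]_{\mathrm m}\in S_n^D/{\sim_{\mathrm m}}$ with $n\le n_{\rm max}$, it suffices to treat each degree block separately and set $N_{n_{\rm max}}$ to be the largest of the finitely many thresholds so obtained.

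Concretely, in the mixed case I would take $A$ to be a Ginibre tensor with i.i.d.\ complex Gaussian entries normalized so that $\bE[A_{\vec i;\vec j}\,\overline{A_{\vec i';\vec j'}}]=\prod_c \delta_{i^c,i'^c}\delta_{j^c,j'^c}$, and consider $\langle \Tr_{\bsig},\Tr_{\btau}\rangle := \bE\bigl[\Tr_{\bsig}(A)\,\overline{\Tr_{\btau}(A)}\bigr]$. This pairing is block-diagonal in $n$, since by Wick's theorem a nonzero contribution requires equally many $A$ and $\overline A$ factors, forcing $\bsig,\btau$ to have the same degree. Expanding both invariants through \eqref{def:trace-invariants} and applying Wick's theorem, the Gaussian pairing of the $A$'s of $\Tr_{\bsig}$ with the $\overline A$'s of $\overline{\Tr_{\btau}}$ is indexed by a permutation $\alpha\in S_n$, and eliminating the conjugate indices through the Wick constraints shows that the free index cycles of color $c$ number $\#(\sigma_c\,\alpha^{-1}\tau_c^{-1}\alpha)$, so that the term contributes $N^{\sum_c \#(\sigma_c\alpha^{-1}\tau_c^{-1}\alpha)}$. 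Using $\#\rho=n-\lvert\rho\rvert$ and writing $\eta=\alpha^{-1}$, the exponent equals $Dn-\sum_c \lvert \sigma_c\,\eta\,\tau_c^{-1}\eta^{-1}\rvert$.

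The crucial observation is that, by \eqref{eq:def-distance-orbits-mixte}, the minimum over $\alpha$ of $\sum_c \lvert\sigma_c\eta\tau_c^{-1}\eta^{-1}\rvert$ is exactly $d_\mathrm{m}([\bsig]_{\mathrm m},[\btau]_{\mathrm m})$. Hence the leading power $N^{Dn}$ is attained precisely when this distance vanishes, that is, by Lemma~\ref{lem:distance-between-orbits}, precisely when $[\bsig]_{\mathrm m}=[\btau]_{\mathrm m}$. The diagonal entries are then $\sim c_{[\bsig]}\,N^{Dn}$, with $c_{[\bsig]}=\#\{\alpha:\bsig=\alpha^{-1}\bsig\alpha\}\ge 1$ the automorphism count, whereas off-diagonal entries are $O(N^{Dn-1})$ because $d_\mathrm{m}\ge 1$ for distinct classes. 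Thus $N^{-Dn}\langle\cdot,\cdot\rangle$ converges, within each degree block, to $\mathrm{diag}(c_{[\bsig]})$, an invertible limit, so the determinant is nonzero for $N$ large by continuity; being a Gram matrix, its invertibility is equivalent to the asserted linear independence. The pure case is identical with $T$ and $\bar T$ taken as two independent Ginibre tensors with $\bE[T_{\vec i}\,\overline{T_{\vec j}}]=\prod_c\delta_{i^c,j^c}$ (and likewise for the conjugate copies): expanding \eqref{def:trace-invariantsp}, the white and black vertices are now paired by two independent permutations $\beta,\gamma\in S_n$, the exponent becomes $Dn-\sum_c\lvert\sigma_c\,\beta^{-1}\tau_c^{-1}\gamma\rvert$, and minimizing over $\beta,\gamma$ reproduces $d_\mathrm{p}([\bsig]_{\mathrm p},[\btau]_{\mathrm p})$ of \eqref{eq:def-distance-orbits-pure}, so Lemma~\ref{lem:distance-between-orbits} again forces diagonal dominance.

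The main obstacle, and the technical heart of the argument, is precisely the loop-counting that identifies the exponent of $N$ with $Dn$ minus the relevant orbit distance: one must verify carefully that after eliminating the conjugate indices the number of free index cycles of color $c$ equals $\#(\sigma_c\alpha^{-1}\tau_c^{-1}\alpha)$ in the mixed case and $\#(\sigma_c\beta^{-1}\tau_c^{-1}\gamma)$ in the pure case, and then recognize the ensuing minimization as the distances of \eqref{eq:def-distance-orbits-mixte}--\eqref{eq:def-distance-orbits-pure}. Once this identification is secured, everything else---block-diagonality in the degree, strict positivity of the diagonal limit, and finiteness of the number of classes---is routine, and the threshold $N_{n_{\rm max}}$ arises simply from requiring the finitely many rescaled Gram blocks of degree $n\le n_{\rm max}$ to lie within their invertibility neighbourhoods.
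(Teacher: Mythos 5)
Your proposal is correct and follows essentially the same route as the paper's proof in Appendix~\ref{app:proofs-of-section-Invariance-itself}: reduce linear independence to invertibility of the Gram matrix of the trace-invariants under the Ginibre expectation pairing, compute it by Wick calculus to identify the exponent of $N$ with $Dn$ minus the orbit distance of \eqref{eq:def-distance-orbits-mixte}--\eqref{eq:def-distance-orbits-pure}, and conclude by diagonal dominance (the paper packages this last step as Lemma~\ref{lem:positive-mat}, you use continuity of the determinant of a fixed-size matrix; both are fine). Your use of two independent Gaussian families in the pure case, yielding two independent Wick permutations and hence $d_\mathrm{p}$, is exactly the paper's device with $T_1,T_2$.
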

\begin{proof}
The theorem is proved in Appendix~\ref{app:proofs-of-section-Invariance-itself}. 
\end{proof}

Note that, in order to include all the trace-invariants of degree up to $ n_{\rm max}$, one needs to choose a $ N_{n_{\rm max} }$ larger than the number of all such invariants, which scales super exponentially with $n_{\rm max}$.

\paragraph{Scalar products between orbits.} In the course of the proof of  Thm.~\ref{thm:independence} in the appendix we establish the following two instructive propositions. 

\begin{proposition}[Mixed case]
\label{prop:asympt-scalar-product-mixed}
We consider $A$ a Ginibre random tensor, that is, the $N^{2D}$ components $A_{i^1 \ldots i^D; j^1 \ldots j^D}$ are independent complex Gaussian variables\footnote{The joint probability measure writes $ e^{- N^ D\Tr(A A^\dagger)} dAd\bar A$.} with covariance
$\mathbb{E}[\bar A_{\vec{i} ; \vec j}A_{\vec k ; \vec l}] =\prod_{c=1}^D \delta_{i^c, k^c}\delta_{j^c, l^c}/ N^D 
$. For $[\bsig]_{\mathrm{m}},[\btau]_{\mathrm{m}} \in S_n^D /{\sim_\mathrm{m}}$, we define the matrix:
\be
\bigl\langle [\bsig]_{\mathrm{m}}, [ \btau]_{\mathrm{m}} \bigr\rangle_\mathrm{m} = \mathbb E_M\bigl[\Tr_{[\bsig]_{\mathrm{m}}} (\bar A)  \Tr_{[\btau]_{\mathrm{m}}} (A)  \bigr] = C_{[\bsig]_{\mathrm{m}},  
   [\btau]_{\mathrm{m}} } 
N^{- d_\mathrm{m}([\bsig]_{\mathrm{m}},  
   [\btau]_{\mathrm{m}} )}(1+O(N^{-1})) \; , 
\ee
where $d_\mathrm{m}([\bsig]_{\mathrm{m}},  
[\btau]_{\mathrm{m}} ) =  \min_{\eta \in S_n}\sum_{c=1}^D \lvert \sigma_c \eta \tau_c^{-1}\eta^{-1}\rvert $ is the distance between orbits in the mixed case; $C_{[\bsig]_{\mathrm{m}},  
[\btau]_{\mathrm{m}} } >0$ is the number of permutations $\eta\in S_n$ for which 
$\sum_{c=1}^D \lvert \sigma_c \eta \tau_c^{-1}\eta^{-1}\rvert = d_\mathrm{m}([\bsig]_{\mathrm{m}},  
[\btau]_{\mathrm{m}} ) $
and $C_{[\bsig]_{\mathrm{m}}  } = C_{[\bsig]_{\mathrm{m}} ,[\bsig]_{\mathrm{m}}  }$ is the cardinal of the centralizer of $\bsig$.   For $N$ large enough, the form $\bigl\langle [\bsig]_{\mathrm{m}}, [ \btau]_{\mathrm{m}} \bigr\rangle_\mathrm{m}$  is symmetric positive-definite and in the limit $N\to \infty$ the orbits are asymptotically orthogonal:
\be
\lim_{N\rightarrow \infty }\bigl\langle [\bsig]_{\mathrm{m}}, [ \btau]_{\mathrm{m}} \bigr\rangle_\mathrm{m}  =C_{[\bsig]_{\mathrm{m}}  }   \;\delta_{[\bsig]_{\mathrm{m}}  , [\btau]_{\mathrm{m}}  } \; .
\ee
\end{proposition}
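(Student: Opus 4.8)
The plan is to evaluate the Gaussian expectation $\langle [\bsig]_{\mathrm{m}}, [\btau]_{\mathrm{m}}\rangle_\mathrm{m} = \mathbb E\bigl[\Tr_{\bsig}(\bar A)\,\Tr_{\btau}(A)\bigr]$ explicitly by Wick's theorem, read off its leading behaviour in $N$, and then deduce each of the four assertions. A useful first observation is that the Kronecker deltas in \eqref{def:trace-invariants} are real, so $\Tr_{\bsig}(\bar A)=\overline{\Tr_{\bsig}(A)}$ and therefore $\langle [\bsig]_{\mathrm{m}}, [\btau]_{\mathrm{m}}\rangle_\mathrm{m}=\mathbb E\bigl[\overline{\Tr_{\bsig}(A)}\,\Tr_{\btau}(A)\bigr]$ is the Gram matrix of the functions $\Tr_\bullet(A)$ for the positive sesquilinear form $\langle f,g\rangle=\mathbb E[\bar f g]$. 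In particular it is Hermitian positive-semidefinite for \emph{every} $N$, and since (by the computation below) all its entries are sums of integer powers of $N$, hence real, it is real symmetric. This settles symmetry immediately and reduces the positive-definiteness claim to linear independence of the trace-invariants.

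The core step is the Wick expansion. Writing $\Tr_{\bsig}(\bar A)\,\Tr_{\btau}(A)$ as a product of $n$ factors $\bar A$ and $n$ factors $A$ whose indices are contracted according to $\bsig$ and $\btau$, Wick's theorem expands the Gaussian average as a sum over pairings of the $\bar A$'s with the $A$'s, which I identify with permutations $\eta\in S_n$. Each of the $n$ covariances contributes $N^{-D}$ and forces the output (resp.\ input) indices of the two paired tensors to agree colour by colour. For a fixed colour $c$, following the chain of identifications imposed successively by $\bsig$, $\eta$, $\btau$ and $\eta^{-1}$ around the diagram shows that the closed index loops of colour $c$ are exactly the cycles of $\eta^{-1}\tau_c^{-1}\eta\sigma_c$, each contributing a free summation factor $N$. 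Collecting powers and using $\#\rho-n=-|\rho|$ together with the cyclic/conjugation invariance of $|\cdot|$, I obtain
\[
\langle [\bsig]_{\mathrm{m}}, [\btau]_{\mathrm{m}}\rangle_\mathrm{m}=\sum_{\eta\in S_n} N^{-\sum_{c=1}^D |\sigma_c\,\eta\,\tau_c^{-1}\eta^{-1}|}\;.
\]
Reindexing $\eta\mapsto \eta_1^{-1}\eta\,\eta_2$ absorbs a simultaneous change of representatives $\bsig\mapsto\eta_1\bsig\eta_1^{-1}$, $\btau\mapsto\eta_2\btau\eta_2^{-1}$, confirming that the sum depends only on the orbits, consistently with $d_\mathrm{m}$ and the multiplicities $C$ being class functions.

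From this closed form the asymptotics are immediate. The exponents $-\sum_c|\sigma_c\eta\tau_c^{-1}\eta^{-1}|$ are non-positive integers whose maximum over $\eta$ equals $-d_\mathrm{m}([\bsig]_{\mathrm{m}},[\btau]_{\mathrm{m}})$ by \eqref{eq:def-distance-orbits-mixte}; the number of maximizers is exactly $C_{[\bsig]_{\mathrm{m}},[\btau]_{\mathrm{m}}}$, and every other term is smaller by at least one integer power of $N$, giving the expansion $C_{[\bsig]_{\mathrm{m}},[\btau]_{\mathrm{m}}}\,N^{-d_\mathrm{m}}(1+O(N^{-1}))$. Setting $\btau=\bsig$, the equation $\sum_c|\sigma_c\eta\sigma_c^{-1}\eta^{-1}|=0$ is equivalent to $\eta$ commuting with every $\sigma_c$, so $C_{[\bsig]_{\mathrm{m}}}$ is the cardinal of the centralizer of $\bsig$, which contains the identity and is thus $\geq 1$. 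For distinct orbits Lemma~\ref{lem:distance-between-orbits} gives $d_\mathrm{m}>0$, so the form is $O(N^{-1})$, which yields the asymptotic orthogonality $\lim_{N\to\infty}\langle [\bsig]_{\mathrm{m}}, [\btau]_{\mathrm{m}}\rangle_\mathrm{m}=C_{[\bsig]_{\mathrm{m}}}\delta_{[\bsig]_{\mathrm{m}},[\btau]_{\mathrm{m}}}$. Positive-definiteness for large $N$ then follows by combining the Gram structure of the first step with this limit: the matrix is Hermitian positive-semidefinite for all $N$ and converges entrywise to $\mathrm{diag}\bigl(C_{[\bsig]_{\mathrm{m}}}\bigr)$ with strictly positive diagonal, so its smallest eigenvalue converges to $\min_{[\bsig]_{\mathrm{m}}}C_{[\bsig]_{\mathrm{m}}}>0$ and is positive once $N$ is large enough.

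The one genuinely delicate point is the bookkeeping in the Wick expansion: correctly identifying the colour-$c$ loops with the cycles of $\eta^{-1}\tau_c^{-1}\eta\sigma_c$ (rather than its inverse or another conjugate) and matching the resulting exponent with the orbit distance \eqref{eq:def-distance-orbits-mixte}, including the representative-independence check. Once the sum formula is established, the leading-order extraction, the centralizer count and the continuity argument for positive-definiteness are routine.
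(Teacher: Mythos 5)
Your proof is correct and follows essentially the same route as the paper: Wick's theorem gives $\sum_{\eta\in S_n} N^{-\sum_c\lvert\sigma_c\eta\tau_c^{-1}\eta^{-1}\rvert}$, the leading term is read off from the orbit distance and its multiplicity, and the diagonal multiplicity is identified with the centralizer. The only (minor) divergence is the last step: the paper concludes positive-definiteness via Lemma~\ref{lem:positive-mat} (a diagonal-dominance/Neumann-series argument), whereas you combine the Gram-matrix positive-semidefiniteness with entrywise convergence of the finite matrix to $\mathrm{diag}(C_{[\bsig]_{\mathrm{m}}})$ and continuity of eigenvalues — both are valid.
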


\begin{proposition}[Pure case]
\label{eq:asympt-scalar-product-pure}
Let $T_1, T_2\in (\mathbb C^N)^{\otimes D}$ be two independent Ginibre random tensors with  $\mathbb{E}[\bar T_{\vec{i}}\ T_{\vec k }]=\prod_{c=1}^D \delta_{i^c, k^c}/ N^{D/2}$. For $[\bsig]_{\mathrm{p}},[ \btau]_{\mathrm{p}} \in  S_n^D /{\sim_\mathrm{p}}$ the matrix: 
\be
\begin{split}
\bigl\langle [\bsig]_{\mathrm{p} },[ \btau]_{\mathrm{p}}  \rangle_\mathrm{p} :=& \mathbb E_{T_1, T_2}
\bigl[
\Tr_{ [\bsig]_{\mathrm{p} } } (\bar T_2, \bar T_1) 
\Tr_{ [\btau]_{\mathrm{p} } } (T_1, T_2) 
\bigr] \crcr
= & D_{[\bsig]_{\mathrm{p} },[ \btau]_{\mathrm{p}}   } 
N^{- d_\mathrm{p}([\bsig]_{\mathrm{p} },[ \btau]_{\mathrm{p}}   )}(1+O(N^{-1})) \;,
\end{split}
\ee
where $ d_\mathrm{p}([\bsig]_{\mathrm{p} },[ \btau]_{\mathrm{p}} =\min_{\eta, \nu \in S_n}\sum_{c=1}^D \lvert \sigma_c \eta \tau_c^{-1}\nu\rvert   $ is the distance function in the pure case;
$ D_{[\bsig]_{\mathrm{p} },[ \btau]_{\mathrm{p}}   }  $ is the number of couples of permutations $\eta, \nu\in S_n$ 
for which $ d_\mathrm{p}([\bsig]_{\mathrm{p} },[ \btau]_{\mathrm{p}} = \sum_{c=1}^D \lvert \sigma_c \eta \tau_c^{-1}\nu\rvert  $
and we denote $D_{[\bsig]_{\mathrm{p} } } := D_{[\bsig]_{\mathrm{p} },[ \bsig]_{\mathrm{p}}   }  $.  For $N$ large enough, the form $ \bigl\langle [\bsig]_{\mathrm{p} },[ \btau]_{\mathrm{p}}  \rangle_\mathrm{p}  $  is symmetric positive-definite,  and in the limit $N\to \infty$ the orbits are asymptotically orthogonal:
\be
\lim_{N\rightarrow \infty } \bigl\langle [\bsig]_{\mathrm{p} },[ \btau]_{\mathrm{p}}  \rangle_\mathrm{p}  = D_{[\bsig]_{\mathrm{p} } } \;\delta_{[\bsig]_{\mathrm{p} },[ \btau]_{\mathrm{p}} } \; .
\ee
\end{proposition}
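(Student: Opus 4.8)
The plan is to compute the Gaussian expectation directly by Wick's theorem and read off the leading power of $N$, exactly as in the mixed case of Proposition~\ref{prop:asympt-scalar-product-mixed}; the only structural difference is that here the two independent families $\{T_1,\bar T_1\}$ and $\{T_2,\bar T_2\}$ are contracted separately, so that the single simultaneous conjugation by $\eta\in S_n$ of the mixed case is replaced by two independent relabelings $\eta,\nu\in S_n$. Concretely, in $\Tr_{\btau}(T_1,T_2)$ the $n$ copies of $T_1$ sit on the white vertices and the $n$ copies of $T_2$ on the black vertices, while in $\Tr_{\bsig}(\bar T_2,\bar T_1)$ the copies of $\bar T_2$ and $\bar T_1$ sit on the white and black vertices respectively. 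Since $T_1,T_2$ are independent Ginibre tensors, the only nonvanishing Wick pairings contract each $T_1$ with a $\bar T_1$ and each $T_2$ with a $\bar T_2$; such pairings are parametrized by a pair $(\eta,\nu)\in S_n\times S_n$.

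First I would fix such a pair $(\eta,\nu)$ and count, color by color, the number of free index sums. Using the covariance $\mathbb{E}[\bar T_{\vec i}T_{\vec k}]=\prod_c\delta_{i^c,k^c}/N^{D/2}$ together with the contraction rules \eqref{def:trace-invariantsp} encoded by $\bsig$ and $\btau$, the color-$c$ indices close into cycles governed by a single permutation $w_c$ built from $\sigma_c,\tau_c,\eta,\nu$, and the number of such cycles is $\#w_c$. Collecting the $2n$ factors $N^{-D/2}$ from the covariances and the $N^{\#w_c}$ from the index sums, the contribution of $(\eta,\nu)$ is $N^{-Dn+\sum_c\#w_c}$. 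Using $\#w_c=n-\lvert w_c\rvert$ and the invariance of $\lvert\cdot\rvert$ under inversion and conjugation, this exponent simplifies, up to the orientation bookkeeping discussed below, to $-\sum_c\lvert\sigma_c\eta\tau_c^{-1}\nu\rvert$, so that
\[
\bigl\langle[\bsig]_{\mathrm{p}},[\btau]_{\mathrm{p}}\bigr\rangle_{\mathrm{p}}=\sum_{\eta,\nu\in S_n}N^{-\sum_c\lvert\sigma_c\eta\tau_c^{-1}\nu\rvert}.
\]
The exponents are exactly the quantities minimized in the definition \eqref{eq:def-distance-orbits-pure} of $d_{\mathrm{p}}$, so the dominant pairs are those realizing the minimum $d_{\mathrm{p}}([\bsig]_{\mathrm{p}},[\btau]_{\mathrm{p}})$, of which there are $D_{[\bsig]_{\mathrm{p}},[\btau]_{\mathrm{p}}}$ by definition; all remaining pairs carry a strictly larger integer exponent, which yields the stated $(1+O(N^{-1}))$.

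Taking $N\to\infty$ then leaves only the terms with vanishing exponent, which by Lemma~\ref{lem:distance-between-orbits} occur precisely when the two orbits coincide, giving the asymptotic orthogonality with diagonal value $D_{[\bsig]_{\mathrm{p}}}$. For the positive-definiteness at finite $N$ I would first note that the independence exchange $T_1\leftrightarrow T_2$ combined with complex conjugation makes the form Hermitian (in fact real symmetric); since for each fixed $n\le n_{\rm max}$ the index set of orbits is finite and the matrix converges entrywise to the positive diagonal matrix with entries $D_{[\bsig]_{\mathrm{p}}}>0$, continuity of the eigenvalues of Hermitian matrices forces positive-definiteness for all $N$ large enough. (Alternatively, rewriting $\Tr_{\bsig}(\bar T_2,\bar T_1)=\overline{\Tr_{\bsig^{-1}}(T_1,T_2)}$ exhibits the form as a genuine Gram matrix, hence positive semidefinite at every $N$.)

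The main obstacle is the color-by-color loop count and, above all, keeping the orientation conventions consistent: one must track carefully which tensor carries output versus input indices on each vertex and how complex conjugation and the swap $(\bar T_2,\bar T_1)$ interact with the encoding $\bsig\mapsto\sigma_c$, so that the minimized word is genuinely $\sigma_c\eta\tau_c^{-1}\nu$ and the leading coefficient is $d_{\mathrm{p}}([\bsig]_{\mathrm{p}},[\btau]_{\mathrm{p}})$ rather than a distance to an inverted orbit $[\bsig^{-1}]_{\mathrm{p}}$. The remaining points — independence of the chosen representatives (a class-function check) and the $O(N^{-1})$ control — are routine once this bookkeeping is fixed.
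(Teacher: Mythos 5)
Your proposal follows essentially the same route as the paper: a direct Wick computation in which the independence of $T_1$ and $T_2$ splits the pairing into two independent permutations $(\eta,\nu)$, giving $\bigl\langle[\bsig]_{\mathrm{p}},[\btau]_{\mathrm{p}}\bigr\rangle_{\mathrm{p}}=\sum_{\eta,\nu}N^{-\sum_c\lvert\sigma_c\eta\tau_c^{-1}\nu\rvert}$, after which the dominant-term count, the asymptotic orthogonality via Lemma~\ref{lem:distance-between-orbits}, and positive-definiteness from asymptotic diagonal dominance (the paper packages the last step as Lemma~\ref{lem:positive-mat}, you invoke eigenvalue continuity on the fixed finite orbit set, which is equivalent) all match. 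The only shaky point is your parenthetical Gram-matrix alternative: since $\Tr_{\bsig}(\bar T_2,\bar T_1)=\overline{\Tr_{\bsig^{-1}}(T_1,T_2)}$ reindexes the rows by $[\bsig]_{\mathrm{p}}\mapsto[\bsig^{-1}]_{\mathrm{p}}$, which need not fix orbits for $D\ge 3$, the form is a row-permuted Gram matrix rather than a genuine one, so positive semidefiniteness at every $N$ does not follow as stated — but this aside is not needed for your main argument.
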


We have the following remark.

\begin{remark}In \cite{BGR2} a similar scalar product is considered in the pure case, but for only one Ginibre tensor $T$, that is when $T_1 = T$ and $T_2 = \bar T$ the complex conjugate, instead of the two independent ones $T_1, T_2$. In order to eliminate the self contractions on $\Tr_{\bsig}(T, \bar T)$, the authors introduce a ``normal-ordering'' $\mathcal{N}$ of the trace-invariants:
\[
\bigl\langle [\bsig]_\mathrm{p} , [\btau]_\mathrm{p}  \bigr\rangle^{\mathcal{N}}_\mathrm{p} = \mathbb E \bigl[ \overline{\mathcal{N}(  \Tr_{ \bsig} (T, \bar T) ) } \mathcal{N}( \Tr_{\btau} (\bar T, T) )  \bigr] \;. 
\]
The use of two independent tensors $T_1, T_2$ has the effect of precisely eliminating these self contractions without the need for normal-ordering. This small point is essentially the only difference between the complex tensor case and the bipartite distribution of two random tensors.\footnote{Regarding analogous scalar products in the context of real tensors, we also refer to \cite{ONproduct} for  invariance corresponding to tensor products of independent orthogonal matrices,  and to \cite{Tensors11} for invariance corresponding to tensor products of a single orthogonal matrix. } 
\end{remark}

\paragraph{Non-polynomial invariants.}
As $N$ grows, Lemma~\ref{lem:generating} and Thm.~\ref{thm:independence} imply that 
the trace-invariants are both spanning and linearly independent, hence a basis in the space of invariant polynomials. In the limit $N\rightarrow \infty$, the statement formally extends to sufficiently nice functions.
Loosely speaking, the functions we are interested in are series in the components of the tensor: 
\[
f(A) = \sum_{q= 0}^{+ \infty} \; \sum_{i,j} P_{ i_1^1\dots i_1^D; j_1^1 \dots j_1^D  |  \dots  | i_q^1\dots i_q^D; j_q^1 \dots j_q^D  } \prod_{v=1}^q   A_{i_v^1\dots i_v^D; j_v^1 \dots j_v^D} \;,
\]
which, rescaling $A$ to $zA$ with $z\in \mathbb{C}$, admit a formal expansion of the form: 
\[
f(z;A)=\sum_{k\ge 0} \frac 1 {N^k} f_k(z;A), \qquad f_k(z;A)=\sum_{n\ge0} c_{k,n,N}(A) z^n \;, 
\]
where for every $k$ and $A$, $f_k(z;A)$ is a convergent series in some neighborhood of $z=0$ and each $c_{k,n,N}(A)$ is a homogeneous \textsf{LU}-invariant polynomial of degree $n$. Note that the order $k$ is just the dominant order in $1/N$ of $c_{k,n,N}(A)$, but $c_{k,n,N}(A)$ can still contain some terms in $O(1/N)$ which wash out in the limit $N\rightarrow \infty$. In this case there exists $N_n$ such that for $N\ge N_n$, $c_{k,n,N}(A)$ admits a unique expansion in trace-invariants 
$ c_{k,n,N}(A)=\sum_{ [\btau]_{\mathrm{m}} \in S_q^D/{\sim_\mathrm{m}} } C_{k, [\btau]_{\mathrm{m}}, N}  \Tr_{[\btau]_{\mathrm{m}}}  (A)$.
Taking a sequence of tensors $A^{(N)}\in M_N(\mathbb{C})^{\otimes D}$ indexed by the size $N$, for every $n\ge 0$, we assume that $c_{k,n,N}(A^{(N)})$ converges to some finite limit:
\[
 c_{k,n}(a) = \sum_{ [\btau]_{\mathrm{m}} \in S_q^D/{\sim_\mathrm{m}} } \lim_{N\to \infty} C_{k, [\btau]_{\mathrm{m}}, N}  \Tr_{[\btau]_{\mathrm{m}}}  (A^{(N)}) \; ,
\]
where $a$ captures some information about $A^{(N)} $ at infinite $N$, and for each $k$, the series $f_k(z;a) = \sum_{n\ge 0} c_{k,n}(a) z^n$ is convergent in some neighborhood of $z=0$. The full expansion $f(z;A)$ is usually divergent for $z\neq 0$, but for any $k\ge 0$, one has a unique convergent expansion of $f_k(z,a)$ on asymptotic trace-invariants.

\subsubsection{Finite size moments}
\label{sec:tensor-moments}

The appropriate invariant moments for a local-unitary invariant mixed random tensor $A$ for finite $N$ are the expectations of  trace-invariants \cite{1Nexpansion1, 1Nexpansion2, 1Nexpansion3,critical,  Gurau-universality, uncoloring, bonzom-SD, Gurau-book, enhanced-1, Walsh-maps, multicritical, bonzom-review, Lionni-thesis}:
\be
\label{eq:expectation-trace-invariants-tensor}
\mathbb{E}\bigl[\Tr_{[\bsig]_m}(A)\bigr], \quad [\bsig] \in S_n^D/{\sim_{\mathrm{m}}} \;,
\ee
where we do not require any connectivity condition on $[\bsig]_\mathrm{m}$. For a pure tensor this is replaced by  $\mathbb{E}\bigl[\Tr_{[\bsig]_{\mathrm{p}}}(T, \bar T)\bigr]$ with $[\bsig]_{\mathrm{p}} \in S_n^D /{\sim_\mathrm{p}}$. Only a finite number of these moments are needed to fully describe the distribution in the sense below.  From Lemma~\ref{lem:generating} and Thm.~\ref{thm:independence}, one can compute the expectation of any polynomial  $P(A)$  in the components of $A$ which is of degree $n\le N$ as:
\be
\label{eq:exp-poly-finite-moments-tensor}
\mathbb{E}\bigl[ P (A)\bigr]= \sum_{q=1}^n \sum_{ [ \bsig ]_{\mathrm{m}} \in S_q^D /{\sim_\mathrm{m}}} c_{[ \bsig ]_{\mathrm{m} } }\; \mathbb E\bigl[\Tr_{[ \bsig ]_{\mathrm{m} } } (A) \bigr] \;, 
\ee
and the coefficients of this expansion are unique for $N$ large enough, and similarly in the pure case.   Indeed, this derives from applying Lemma~\ref{lem:generating} to the \textsf{LU}-invariant function:
$
\tilde P(A)= \mathbb \int dU_1 \cdots dU_D f(UAU^\dagger)$  where $U=U_1\otimes \cdots \otimes U_D, 
$
together with the fact that $A$ being \textsf{LU}-invariant, $\mathbb{E}_A[\tilde P( A)] = \mathbb{E}[P( A)]. $

This extends to the computation of $\mathbb{E}[ f (A)]$ for non-necessarily invariant but sufficiently regular functions $f$ which can be obtained as the limits of sequences of polynomials, but precisely characterizing this class of functions is beyond the scope of this paper. 

\subsection{Finite size free cumulants}
\label{sec:finite-size-cumulants}

\subsubsection{Linearization: intuitive picture}

Our aim is to construct the equivalent of the free cumulants of random matrices discussed in Sec.~\ref{subsub:free-cumulants} in the tensor case. In particular, the tensor free cumulants should be asymptotically additive for sums of independent random tensors. 

The moment-generating function in the mixed and pure cases are:
\be\label{eq:genfunc}
 Z_A(B) = \bE[e^{\Tr(B^T A)}] \;, \qquad 
 Z_{T,\bar T} (J,\bar J) = \bE [e^{  J \cdot T +  \bar J\cdot \bar T }] \;,
\ee
where $\Tr(B^TA) = \sum_{\vec i,\vec j} A_{\vec i;\vec j} B_{\vec i;\vec j}$ and $J \cdot T = \sum_{\vec i} J_{\vec i} T_{\vec i}$ respectively $ \bar J \cdot \bar T = \sum_{\vec j} \bar J_{\vec j} \bar T_{\vec j}$
and the tensors $B$ respectively $J,\bar J$ are fixed (commonly called sources in the physics literature). Note that the source terms are chosen so as to ensure that the indices of the sources and the random tensors have the same nature, for instance the indices of $A$ in the first position  are summed together with the indices of $B$ in the first position,  etc. 

From this formulas, it is apparent why the pure case is not just the substitution $A = T\otimes \bar T$: performing this substitution in $Z_{A}(B)$ leads to a generating function with a bi-linear source for $T$ and $\bar T$ which is different from the moment-generating function in the pure case (the latter has linear sources).

Let us consider two independent \textsf{LU}-invariant mixed random tensors $A_1$ and  $A_2$ and the two moment-generating functions
$Z_{A_a}(B)$. As $A_1$ and $A_2$ are independent, the generating function factors:
\be
\label{eq:facto}
Z_{A_1 + A_2}(B) =  Z_{A_1}(B) Z_{A_2}(B) \;, 
\ee
so that the moment-generating functions are additive:
\be
\label{eq:log-is-loglog}
\log Z_{A_1 + A_2}(B) = \log Z_{A_1}(B) + \log Z_{A_2}(B) \;. 
\ee
Replacing $B$ by $zB$, differentiating $n$ times with respect to $z$ and setting $z$ to zero, we have:
\be
\label{eq:log-is-loglog-n}
\frac{\partial^{n}}{\partial z^n}\log Z_{A_1 + A_2}(zB) \bigr \rvert_{z=0} = \frac{\partial^{n}}{\partial z^n}\log Z_{A_1}(zB) \bigr \rvert_{z=0} + \frac{\partial^{n}}{\partial z^n}\log Z_{A_2}(zB) \bigr \rvert_{z=0} \;. 
\ee

From the \textsf{LU}-invariance of the measures of $A_1$ and  $A_2$ it follows that, as functions of $B$, $Z_{A_a}(B)$ are \textsf{LU}-invariant, $Z_{A_1 + A_2}$ is \textsf{LU}-invariant, and the three terms in \eqref{eq:log-is-loglog-n} are \textsf{LU}-invariant as well. Since they are homogeneous polynomials in $B$ of degree $n$, for $N_n$ sufficiently large,  they expand uniquely on the set of trace-invariants: 
\be
\label{eq:dvt-basis-A-regroup}
 \frac{\partial^{n}}{\partial z^n}\log Z_{A_1}(zB) \bigr \rvert_{z=0} = \sum_{[\bsig]_{\mathrm{m}} \in  S_n^D /{\sim_\mathrm{m}} } \Tr_{[\bsig]_{\mathrm{m}}}(B)\ \mathcal{K}^{\mathrm{m}}_{[\bsig]_{\mathrm{m}}}[A_1]  \;, 
\ee 
where the coefficients $\mathcal{K}^{\mathrm{m}}_{[\bsig]_{\mathrm{m}}}[A] $ are explicitly computed below.
Due to the linear independence, 
Eq~\eqref{eq:log-is-loglog-n}
implies that $\mathcal{K}^{\mathrm{m}}_{[\bsig]_{\mathrm{m}}}[A] $ are additive, as formalized in the proposition below.
\begin{proposition} 
\label{prop:large-N-linear-gen}
Let $A_1$ and $A_2$ be two sufficiently regular independent random tensors, and consider coefficients $\mathcal{K}^{\mathrm{m}}_{[\bsig]_{\mathrm{m}}}[A_a] $ for $a=1,2$ as well as $\mathcal{K}^{\mathrm{m}}_{[\bsig]_{\mathrm{m}}}[A_1+A_2]$ defined in \eqref{eq:dvt-basis-A-regroup}. Then there exists $N_n$ such that for any $N>N_n$:
\be
 \mathcal{K}^{\mathrm{m}}_{ [\bsig]_{\mathrm{m}} }[A_1+A_2]  =  \mathcal{K}^{\mathrm{m}}_{[\bsig]_{\mathrm{m} } } [A_1]  + \mathcal{K}^{\mathrm{m}}_{[\bsig]_{\mathrm{m}}}[A_2]  \; . 
\ee
\end{proposition}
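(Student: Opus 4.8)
The plan is to convert the multiplicativity of the moment-generating function into additivity of its logarithm, and then to extract coefficient-wise additivity using the \emph{unique} expansion of \textsf{LU}-invariant polynomials on trace-invariants supplied by the theory developed above.

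First I would record that, since $A_1$ and $A_2$ are independent, the joint expectation factors and hence $Z_{A_1+A_2}(B) = Z_{A_1}(B)\,Z_{A_2}(B)$, which is \eqref{eq:facto}. Taking logarithms gives \eqref{eq:log-is-loglog}; replacing $B$ by $zB$, differentiating $n$ times in $z$, and evaluating at $z=0$ then yields the additivity of the homogeneous degree-$n$ pieces, \eqref{eq:log-is-loglog-n}. Each of the three terms in \eqref{eq:log-is-loglog-n} is, as a function of $B$, a homogeneous polynomial of degree $n$; moreover, because each measure of $A_a$ is \textsf{LU}-invariant, each $Z_{A_a}(B)$, and hence $\log Z_{A_a}(B)$ together with its $z$-derivatives, is \textsf{LU}-invariant in $B$.

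Next I would invoke the invariant theory established earlier: by Lemma~\ref{lem:generating} each such \textsf{LU}-invariant homogeneous degree-$n$ polynomial in $B$ expands on the trace-invariants $\{\Tr_{[\bsig]_{\mathrm{m}}}(B)\}_{[\bsig]_{\mathrm{m}}\in S_n^D/{\sim_\mathrm{m}}}$, and by Theorem~\ref{thm:independence} there exists $N_n$ (namely $N_{n_{\rm max}}$ of that theorem with $n_{\rm max}=n$) such that for $N > N_n$ these trace-invariants are linearly independent functionals of $B$, so the expansion \eqref{eq:dvt-basis-A-regroup} is unique. Substituting the three expansions, for $A_1$, $A_2$ and $A_1+A_2$, into \eqref{eq:log-is-loglog-n} and comparing the coefficient of each $\Tr_{[\bsig]_{\mathrm{m}}}(B)$ then immediately produces $\mathcal{K}^{\mathrm{m}}_{[\bsig]_{\mathrm{m}}}[A_1+A_2] = \mathcal{K}^{\mathrm{m}}_{[\bsig]_{\mathrm{m}}}[A_1] + \mathcal{K}^{\mathrm{m}}_{[\bsig]_{\mathrm{m}}}[A_2]$.

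The only genuine subtlety, and the step I would treat with care, is the passage from equality of polynomials to equality of coefficients: this requires the linear independence of the trace-invariants, which holds only once $N$ exceeds the threshold $N_n$ furnished by Theorem~\ref{thm:independence}, and is precisely why the statement is quantified over $N > N_n$ rather than all $N$. Everything else is formal manipulation of the generating function, already prepared in \eqref{eq:facto}--\eqref{eq:dvt-basis-A-regroup}; the argument is the direct tensorial analogue of the classical fact that cumulants are additive under convolution, with ``cumulant'' replaced by the coefficient $\mathcal{K}^{\mathrm{m}}$ in the trace-invariant basis.
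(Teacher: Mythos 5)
Your proposal is correct and follows essentially the same route as the paper: the paper's own argument is precisely the text preceding the proposition, namely factorization of $Z_{A_1+A_2}$ by independence, additivity of the logarithm and its $z$-derivatives, \textsf{LU}-invariance and homogeneity of the resulting degree-$n$ polynomials in $B$, and then coefficient-wise identification via the unique expansion on trace-invariants guaranteed by Lemma~\ref{lem:generating} and Theorem~\ref{thm:independence} for $N>N_n$. You correctly isolate the linear independence threshold as the only non-formal ingredient, which is exactly the role it plays in the paper.
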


Everything goes through, mutatis mutandis, in the pure case:
\be
 \frac{\partial^{n}}{\partial z^n}
 \frac{\partial^{n}}{\partial \bar z^n}
 \log Z_{T,\bar T}(zJ,\bar z \bar J) \bigr \rvert_{z=0} = \sum_{[\bsig]_{\mathrm{p}} \in  S_n^D /{\sim_\mathrm{p}} } \Tr_{[\bsig]_{\mathrm{p}}}(J,\bar J)\, \mathcal{K}_{[\bsig]_{\mathrm{p}}}[T,\bar T]  \;, 
\ee 
and for two independent pairs $T_1,\bar T_1$ and $T_2,\bar T_2$ the cumulants are additive for any $N>N_n$:
\be
\label{eq:additivity-pure}
\mathcal{K}_{ [\bsig]_{\mathrm{p}} }[T_1+T_2, \bar T_1+\bar T_2]  =  \mathcal{K}_{[\bsig]_{\mathrm{p} } } [T_1,\bar T_1]  + \mathcal{K}_{[\bsig]_{\mathrm{p}}}[T_2,\bar T_2]  \; . 
\ee

\subsubsection{Moment-cumulant relations at finite $N$}
\label{subsub:cumulants-for-finite-N}
 A trick to derive the finite $N$ versions of the free cumulants in the matrix case is to take the generating function of classical cumulants and to average over the unitary group using Weingarten calculus \cite{CMSS}. We will employ the exact same strategy for tensors, and derive explicit expressions for the coefficients $ \mathcal{K}^{\mathrm{m}}_{ [\bsig]_{\mathrm{m}} }[A]$ and $\mathcal{K}_{[\bsig]_{\mathrm{p}}}[T,\bar T]  $. As these expressions turn out to be invertible, we posit that this coefficients yield the correct generalization of finite size free cumulants for \textsf{LU}-invariant  random tensors\footnote{Here, we mean appropriate quantities to obtain the generalizations of free cumulants of arbitrary orders in the limit $N\rightarrow \infty$. There is a notion of finite free cumulants \cite{finite-free-cum}, which appears to be different, but both converge to free cumulants in the matrix case. See also the different notion of finite free cumulants  for real symmetric random tensors in \cite{Tensors11}, obtained from a different approach. }. 

\begin{theorem}
\label{thm:finite-free-cumulants}
Consider $A$ a mixed $\mathsf{LU}$-invariant random tensor. Then for any fixed tensor $B$, the Taylor coefficients of the logarithm of $Z_A(zB)=\bE [e^{z\Tr B^T A}]$ admit the expansion:
\be
\label{eq:exp-logZ-trK}
\frac{\partial^{n}}{\partial z^n}\log Z_A(zB) \bigr \rvert_{z=0} = \sum_{\bsig \in S_n^D }  \Tr_{\bsig}(B) \; \mathcal{K}^{\mathrm{m}}_\bsig[A] \; ,
\ee
where the mixed finite  size  free cumulants $ \mathcal{K}^{\mathrm{m}}_\bsig[A]$ are  (the natural generalization of \eqref{eq:freecumumome}):
\be
\label{eq:cum-finN-int}
\begin{split}
 \mathcal{K}^{\mathrm{m}}_\bsig[A] & = \sum_{\btau \in S^D_{n}} \sum_{\substack{{\pi \in \mathcal{P}(n)}\\{\pi \ge \Pi(\bsig)\vee \Pi( \btau)}}} \lambda_\pi  \prod_{G\in \pi}   \bE\left[\Tr_{\btau_{|_G} } (A)\right]\prod_{c=1}^D W^{(N)} \left( \sigma_{ c|_G}
 \tau_{c|_G}^{-1} \right) \crcr
 &  = \sum_{\substack{{\pi \in \mathcal{P}(n)}\\{\pi \ge \Pi(\bsig) }}} \lambda_\pi 
 \sum_{\substack{ { \btau \in S^D_{n}  } \\ {
    \Pi(\btau) \le \pi
 } }}   \prod_{G\in \pi}   \bE\left[\Tr_{\btau_{|_G} } (A)\right]\prod_{c=1}^D W^{(N)} \left( \sigma_{ c|_G}
 \tau_{c|_G}^{-1} \right) 
 \; . 
\end{split}
\ee

In the pure case, we view permutations over $n$ elements as bijections from 
the set $\{1,\dots n\}$ of white elements to the set $\{\bar 1,\dots \bar n\}$ of black elements, $s\to \overline{\sigma_c(s)}$ and denoting $S_{n,\bar n}$ the set of such bijections, we have:
\be
\partial^n_{z}\partial^n_{\bar z}
\log Z_{T,\bar T} ( zJ,\bar z \bar J)\bigr\rvert_{z=\bar z=0}  = \sum_{\bsig \in S_{n,\bar n}^D }  \Tr_{\bsig}(J,\bar J) \; \mathcal{K}_\bsig[T,\bar T] \; , 
\ee
where the pure finite  size  free cumulants are written in terms of bipartite partitions as:
\be
\begin{split}
\label{eq:cum-finN-int-pur}
 \mathcal{K}_\bsig[T,\bar T] & = \sum_{\btau \in S^D_{n,\bar n}} \sum_{\substack{{\Pi \in \mathcal{P}(n,\bar n)}\\{\Pi \ge \Pi_{\mathrm{p}}(\bsig) \vee 
 \Pi_{\mathrm{p}} (\btau)}}} \lambda_\Pi  \prod_{G= B\cup \bar B\in \Pi}   \bE\left[\Tr_{\btau_{|_B} } (T,\bar T)\right]\prod_{c=1}^D W^{(N)} \left( \sigma_{ c|_B}
 \tau_{c|_B}^{-1} \right) \crcr
 &  = \sum_{\substack{{\Pi \in \mathcal{P}(n,\bar n)}\\{\Pi \ge \Pi_{\mathrm{p}}(\bsig) }}} \lambda_\Pi 
 \sum_{\substack{ { \btau \in S^D_{n,\bar n}  } \\ { \Pi_{\mathrm{p}}(\btau) \le \Pi } }}   \prod_{G= B\cup \bar B\in \Pi}   \bE\left[\Tr_{\btau_{|_B} } (T,\bar T)\right]\prod_{c=1}^D W^{(N)} \left( \sigma_{ c|_B}
 \tau_{c|_B}^{-1} \right) 
 \; , 
\end{split}
\ee
where $\tau_{|_B}$ is the restriction\footnote{This restriction is compatible with the block, as $\Pi \ge \Pi_{\mathrm{p}}(\btau)$.} of the bijection $\tau:\{1,\dots n\} \to \{\bar 1,\dots \bar n\}$ to the block $B$,
which  to the labels in the set $B$ maps those in the set $\bar B$, $ \sigma_{ c|_B}
 \tau_{c|_B}^{-1} $ is a permutation of the elements of $\bar B$ having the same cycle type as the permutation $ \tau_{c|_B}^{-1}\sigma_{ c|_B} $ of the elements of $B$, and $\Pi_{\mathrm{p}}(\bsig)$ denotes the bipartite partition into pure connected components corresponding to $\bsig$.

The equations~\eqref{eq:cum-finN-int}
and~\eqref{eq:cum-finN-int-pur} are inverted as:
\be
\label{eq:mom-finN}
\begin{split}
\bE\left[\Tr_{\bsig}(A)\right]& = \sum_{\btau \in S^D_n}\  \sum_{\substack{{\pi\in \mathcal{P}(n)}\\{ \pi \ge \Pi(\btau)}}}  \ 
\mathcal{K}^{\mathrm{m} }_{\pi,\btau}[A] \  N^{nD - d(\bsig, \btau)} \; , \crcr
\bE\left[\Tr_{\bsig}(T,\bar T)\right] &  = \sum_{\btau \in S^D_{n,\bar n} }\  \sum_{\substack{{\Pi\in \mathcal{P}(n,\bar n)}\\{ \Pi \ge \Pi_{\mathrm{p}}(\btau)}}}  \ 
\mathcal{K}_{\Pi,\btau}[T,\bar T] \  N^{nD - d(\bsig, \btau)}  \;,
\end{split}
\ee
where we respectively denoted $\mathcal{K}^{\mathrm{m}}_{\pi,\btau}[A]  = \prod_{G\in \pi} \mathcal{K}^{\mathrm{m}}_{\btau_{\lvert_G}}[A]$ and  $\mathcal{K}_{\Pi,\btau}[T,\bar T]  = \prod_{G=B\cup \bar B\in \Pi} \mathcal{K}_{\btau_{\lvert_B}}[T,\bar T]$ the multiplicative extensions of the finite cumulants, and where we recall that $d(\bsig, \btau)=\sum_{c=1}^D \lvert \sigma_c\tau_c^{-1}\rvert$.
 
It is self evident that $ \mathcal{K}^{\mathrm{m}}_\bsig[A] $ and $ \mathcal{K}_\bsig[T,\bar T] $ are class functions for the equivalence relations $\sim_{\mathrm{m}}$ respectively $\sim_{\mathrm{p}}$, and regrouping the sums over $\bsig$ into the corresponding equivalence classes and using Proposition~\ref{prop:large-N-linear-gen}, it follows that $\mathcal{K}^{\mathrm{m}}_\bsig[A] $ and 
$ \mathcal{K}_\bsig[T,\bar T] $ are \emph{additive} for $N$ large enough.
\end{theorem}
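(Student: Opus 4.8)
First I observe that $\partial_z^n\log Z_A(zB)\rvert_{z=0}$ is exactly the $n$-th classical cumulant $k_n\bigl(\Tr(B^TA)\bigr)$ of the scalar random variable $\Tr(B^TA)$, hence a homogeneous polynomial of degree $n$ in the entries of $B$. Since the law of $A$ is \textsf{LU}-invariant, $Z_A$ is an \textsf{LU}-invariant function of its argument, so this polynomial is \textsf{LU}-invariant; by Lemma~\ref{lem:generating} together with Theorem~\ref{thm:independence}, for $N$ large it expands uniquely on the trace-invariants $\Tr_{\bsig}(B)$. This establishes \eqref{eq:exp-logZ-trK} and defines the $\mathcal{K}^{\mathrm{m}}_{\bsig}[A]$ unambiguously. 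To make them explicit I average over the group, writing $\partial_z^n\log Z_A(zB)\rvert_{z=0}=\int\prod_c dU_c\,k_n\bigl(\Tr((UBU^\dagger)^TA)\bigr)$ with $U=U_1\otimes\cdots\otimes U_D$, expanding $(UBU^\dagger)_{\vec i;\vec j}$ into matrix entries, using multilinearity of $k_n$ to factor out the deterministic $B$-dependence and the $U$-dependence, and integrating each color with \eqref{eq:WeinDef}. This introduces two $D$-tuples $\btau,\bsig\in S_n^D$: the $\bsig$-contraction reassembles the sources into $\Tr_{\bsig}(B)$, the Weingarten weights give $\prod_c W^{(N)}(\sigma_c\tau_c^{-1})$ (a class function, with $W^{(N)}(\nu)=W^{(N)}(\nu^{-1})$), and the $\btau$-contraction closes the entry-cumulants of $A$ into an invariant scalar. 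Reading off the coefficient of $\Tr_{\bsig}(B)$ yields a first, \emph{unrestricted} expression $\mathcal{K}^{\mathrm{m}}_{\bsig}[A]=\sum_{\btau}C_{\btau}[A]\prod_c W^{(N)}(\sigma_c\tau_c^{-1})$.

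\textbf{Reorganization into the block-restricted form.} The core step is to reshape this into \eqref{eq:cum-finN-int}. I expand the joint entry-cumulant $C_{\btau}[A]$ through the classical moment--cumulant formula \eqref{eq:classical-cumulant} over $\mathcal{P}(n)$, group the index contractions according to a partition $\pi$, and note that $W^{(N)}(\sigma_{c\lvert G}\tau_{c\lvert G}^{-1})$ and $\Tr_{\btau_{\lvert G}}(A)$ make sense precisely when $\pi\ge\Pi(\bsig)\vee\Pi(\btau)$; using \textsf{LU}-invariance of the $A$-expectations to collapse each block onto $\bE[\Tr_{\btau_{\lvert G}}(A)]$ then produces the Möbius-weighted product over blocks. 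The interchange of the summations over the intermediate partitions that turns the cumulant side into a product of moments is the tensorial analogue of the passage from \eqref{eq:Cum-higher-order-fin-N-gen} to \eqref{eq:freecumumome}, and the two presentations in \eqref{eq:cum-finN-int} differ only by whether the $\pi$-sum is performed before or after the $\btau$-sum. I expect this to be the genuine obstacle: the raw integral delivers \emph{global} Weingarten functions $W^{(N)}(\sigma_c\tau_c^{-1})$ on $n$ elements and \emph{global} entry-cumulants, and matching them to the block-restricted product is not a formal rewriting but an identity at finite $N$ that relies essentially on \textsf{LU}-invariance. This is already visible at $n=2$ in the matrix reduction $D=1$, where invariance forces $\bE[M]$ to be scalar, so that a spurious contraction such as $\Tr(\bE[M]^2)$ collapses onto $(\bE\,\Tr M)^2/N$ and the unrestricted and block-restricted totals agree exactly.

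\textbf{Inversion.} For \eqref{eq:mom-finN} I start from the classical moment--cumulant formula \eqref{eq:classical-moment-cumulant}, written for the product over connected components of the trace-invariant $\Tr_{\bsig}(A)$, and substitute the expressions of the resulting classical cumulants in terms of the $\mathcal{K}^{\mathrm{m}}$. The Weingarten factors are removed using the defining orthogonality of the Weingarten functions against the Gram matrix $N^{\#(\cdot)}$, namely $\sum_{\btau\in S_n^D}\prod_c W^{(N)}(\sigma_c\tau_c^{-1})\,N^{\#(\tau_c\rho_c^{-1})}=\delta_{\bsig,\brho}$ for $n\le N$, applied color by color. The surviving power of $N$ is $\prod_c N^{\#(\sigma_c\tau_c^{-1})}=N^{nD-d(\bsig,\btau)}$, reproducing the exponent in \eqref{eq:mom-finN}, while the multiplicative extension $\mathcal{K}^{\mathrm{m}}_{\pi,\btau}[A]=\prod_{G}\mathcal{K}^{\mathrm{m}}_{\btau_{\lvert G}}[A]$ and the constraint $\pi\ge\Pi(\btau)$ account for the factorization over connected components. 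This is the direct analogue of \eqref{eq:Mom-higher-order-fin-N-gen}.

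\textbf{Pure case, class-function property, additivity.} The pure case is identical after replacing $\mathcal{P}(n)$ by the lattice $\mathcal{P}(n,\bar n)$ of bipartite partitions and regarding each $\sigma_c$ as a bijection $\{1,\dots,n\}\to\{\bar 1,\dots,\bar n\}$: the rotation acts as $T\mapsto UT$, $\bar T\mapsto\bar T U^\dagger$, so the Weingarten integral pairs the output legs of $T$ with the input legs of $\bar T$, each block $G=B\cup\bar B$ carrying the restricted bijection $\tau_{\lvert B}$; this yields \eqref{eq:cum-finN-int-pur} and the second line of \eqref{eq:mom-finN}. Finally, $\mathcal{K}^{\mathrm{m}}_{\bsig}[A]$ and $\mathcal{K}_{\bsig}[T,\bar T]$ are class functions for $\sim_{\mathrm{m}}$, resp.\ $\sim_{\mathrm{p}}$, because every ingredient in \eqref{eq:cum-finN-int}--\eqref{eq:cum-finN-int-pur} (the expectations of trace-invariants, the Weingarten class functions, the partition Möbius weights) is invariant under the relabelings \eqref{eq:invariance-traceinvariant-mixed}, resp.\ \eqref{eq:invariance-traceinvariant-pure}; regrouping \eqref{eq:exp-logZ-trK} over equivalence classes and invoking Proposition~\ref{prop:large-N-linear-gen} gives additivity for $N$ large enough.
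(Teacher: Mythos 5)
Your starting identity and your final answer are both right, but the central step of your argument --- passing from the ``global'' expression $\mathcal{K}^{\mathrm{m}}_\bsig[A]=\sum_{\btau}C_\btau[A]\prod_cW^{(N)}(\sigma_c\tau_c^{-1})$, with $C_\btau[A]$ the $\btau$-contraction of the joint cumulant of $n$ entries of $A$, to the block-restricted form \eqref{eq:cum-finN-int} --- is exactly the content of the theorem, and you flag it as ``the genuine obstacle'' without carrying it out. Your sketch does not work as stated: after expanding $k_n(\{A_{\vec a_s;\vec b_s}\})=\sum_\pi\lambda_\pi\prod_{G\in\pi}\bE[\prod_{s\in G}A_{\vec a_s;\vec b_s}]$, the contraction pattern $\btau$ does not respect the blocks of $\pi$, so the contracted product of block moments is \emph{not} $\prod_G\bE[\Tr_{\btau|_G}(A)]$, and no termwise ``collapse by \textsf{LU}-invariance'' turns one into the other. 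What actually closes the gap is a second Weingarten pass: expand each block moment as $\bE[\prod_{s\in G}A_{\vec a_s;\vec b_s}]=\sum_{\bmu_G,\bnu_G\in S_{|G|}^D}\prod_cW^{(N)}(\mu_{G,c}\nu_{G,c}^{-1})\bigl(\prod_{c,s}\delta_{a^c_s,b^c_{\mu_{G,c}(s)}}\bigr)\bE[\Tr_{\bnu_G}(A)]$ (this is \eqref{eq:unit-inv-lu-inv}), carry out the $a,b$ sums to produce the Gram factor $\prod_cN^{\#(\tau_c\mu_c^{-1})}$ with $\bmu$ the permutation restricting to $\bmu_G$ on each block, and only then sum over $\btau$ using the orthogonality $\sum_{\tau}W^{(N)}(\sigma\tau^{-1})N^{\#(\tau\mu^{-1})}=\delta_{\sigma,\mu}$ (valid for $n\le N$) to force $\bmu=\bsig$, hence $\Pi(\bsig)\le\pi$, which lands exactly on the second line of \eqref{eq:cum-finN-int}. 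You invoke precisely this orthogonality later for the inversion, so the tool is on your table, but the step is missing where it matters; the $n=2$, $D=1$ illustration is a consistency check, not an argument.

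For comparison, the paper never meets this obstacle: it writes $Z_A(zB)=\bE_{A,U}[e^{z\Tr(B^TUAU^\dagger)}]$ and applies the classical moment--cumulant formula to the scalar variable $\Tr(B^TUAU^\dagger)$ with respect to the \emph{joint} law of $(A,U)$, so that each block of $\pi$ carries its own moment of order $|G|$ and the Weingarten functions are born block-restricted on $S_{|G|}$; the only remaining manipulation is exchanging the $\pi$-sum with the permutation sums. Your cumulant is instead taken with respect to $A$ alone before averaging over $U$, which is what produces the global Weingarten functions; the unrestricted intermediate expression you obtain is essentially Prop.~\ref{prop:finite-free-prop} in disguise. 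The existence and uniqueness argument in your first paragraph, the treatment of the pure case, and the class-function/additivity remarks are fine.
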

\begin{proof}
See  Appendix~\ref{app:proofs-of-finite-free}. 
\end{proof}

Observe that in terms of classical cumulants, we may express  the relations above in terms of the  $\Phi$ (leading to the formulation considered for taking asymptotics):
\[
 \prod_{G= B\cup \bar B\in \Pi}   \bE\left[\Tr_{\btau_{|_B} } (T,\bar T)\right] = 
 \sum_{\substack{ {\pi \in \mathcal{P}(n,\bar n)}\\ { \Pi \ge \pi \ge \Pi_{\mathrm{p}}(\btau)} } }  
  \Phi_{\pi,\btau}[T,\bar T] \; ,
\]
and inverse formula (and similar formulas in the mixed case).

Note that for a \emph{purely connected} $\bsig$ we have not only $\Pi_{\mathrm{p}}(\bsig)=1_{n,\bar n}$ (the one set bipartite partition), but also $\Pi(\bsig) = 1_n$, and therefore:
\be
\label{eq:cum-finN-int-pure-conn}
\begin{split}
 \mathcal{K}^{\mathrm{m}}_\bsig[A] & = \sum_{\btau \in S^D_{n}} \bE\left[\Tr_{\btau} (A)\right]\prod_{c=1}^D W^{(N)} ( \sigma_{ c} \tau_{c}^{-1}) \;, \crcr
  \mathcal{K}_\bsig[T,\bar T] &= \sum_{\btau \in S^D_{n}} \bE\left[\Tr_{\btau} (T,\bar T)\right]\prod_{c=1}^D W^{(N)} ( \sigma_{ c} \tau_{c}^{-1})  \; ,
\end{split}
\ee
that is the pure cumulants can be obtained from the mixed ones by simply substituting $A = T \otimes \bar T$, but this is in general no longer true for $\bsig$ which is not purely connected (and for non-connected $\bsig$, $\mathcal{K}^\mathrm{m}_{\bsig}[T\otimes \bar T]$ is non in general invariant under $\sim_\mathrm{p}$. 

For instance  for $n=D=2$, consider a complex matrix $M$ and let  $A=M\otimes \bar M$, $\mathbf{id}_2 = \bigl(\mathrm{id}, \mathrm{id}\bigr)\in S_2^2 $ (purely connected), and $\btau_{(12)}=\bigl((12), (12)\bigr)\in S_2^2$ (connected but not purely connected). Then one explicitly computes:
\begin{align}
\label{eq:cum-finN-int-pure-conn-ex}
\mathcal{K}_{\mathbf{id}_2}[M, \bar M] & = \mathcal{K}^\mathrm{m}_{\mathbf{id}_2}[M\otimes \bar M] =   \mathcal{K}^\mathrm{m}_{\btau_{(12)}}[M\otimes \bar M]    - \frac 1 {N^4} \mathbb{E} \bigl[ \Tr(MM^\dagger)\bigr]^2  \; ,  \\
\mathcal{K}^\mathrm{m}_{\btau_{(12)}}[M\otimes \bar M] & = \frac {N^2 + 1} {N^2(N^2-1)^2}\mathbb{E} \bigl[ \Tr(MM^\dagger)\Tr(MM^\dagger)\bigr] -\frac{ 2\; \mathbb{E} \bigl[ \Tr(MM^\dagger MM^\dagger)\bigr]}{N(N^2-1)^2} \;,\nonumber
\end{align}
and:
\be
\mathcal{K}_{\btau_{(12)} }[M, \bar M] =   \mathcal{K}^\mathrm{m}_{\btau_{(12)}}[M\otimes \bar M]    - \frac 1 {N^4} \mathbb{E} \bigl[ \Tr(MM^\dagger)\bigr]^2 \neq  \mathcal{K}^\mathrm{m}_{\btau_{(12)}}[M\otimes \bar M] \; . \ee

\newpage

Finally, the following microscopic formulation holds, analogous to e.g.~Remark.~4.3 of \cite{CMSS}. The same formula holds considering possibly different tensors. 
\begin{proposition}
\label{prop:finite-free-prop}
For each $c\in \{1,\ldots D\}$, choose $\sigma_c\in S_n$, and distinct $1\le i_c(1), \ldots, i_c(n)\le N$, then the finite size free cumulants may be expressed in terms of the classical cumulants of entries of the tensors with distinct indices as:
$$
\mathcal{K}_\bsig^\mathrm{m}[ A] = k_n\Bigl(\bigl\{A_{i_1(\sigma_1(s)), \ldots, i_D(\sigma_D(s))\; ;\;   i_1(s), \ldots, i_D(s)  } \bigr\}_{1\le s \le n}\Bigr)\;.
$$
and
$$
\mathcal{K}_\bsig[T,\bar T] =  k_n\Bigl(\bigl\{T_{i_1(\sigma_1(s)), \ldots, i_D(\sigma_D(s))} , \bar T_{ i_1(s), \ldots, i_D(s)  } \bigr\}_{1\le s \le n}\Bigr)\;,
$$
\end{proposition}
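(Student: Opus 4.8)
The plan is to compute both sides by reducing everything to a single block expectation and then matching the result against the Weingarten formulas \eqref{eq:cum-finN-int} and \eqref{eq:cum-finN-int-pur}. I would start in the mixed case from the definition \eqref{eq:classical-cumulant}, writing $k_n(\{x_s\}) = \sum_{\pi\in\mathcal{P}(n)}\lambda_\pi\prod_{G\in\pi}\bE[\prod_{s\in G}x_s]$ with $x_s = A_{i_1(\sigma_1(s)),\ldots,i_D(\sigma_D(s)) ; i_1(s),\ldots,i_D(s)}$. The heart of the argument is the moment $\bE[\prod_{s\in G}x_s]$ over a single block $G$. Since the law of $A$ is \textsf{LU}-invariant, this expectation is unchanged if I first conjugate $A$ by $U_1\otimes\cdots\otimes U_D$ and then integrate over the Haar measure $\prod_c dU_c$; expanding the conjugation componentwise produces, for each colour $c$, one factor $(U_c)_{i_c(\sigma_c(s)),k_s^c}$ per output leg and one factor $\overline{(U_c)_{i_c(s),l_s^c}}$ per input leg, which I integrate with \eqref{eq:WeinDef} colour by colour.

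The decisive input is the distinctness of $i_c(1),\ldots,i_c(n)$. In the Weingarten sum \eqref{eq:WeinDef} for colour $c$ the ``output'' permutation, say $\alpha_c$ of $G$, enters through $\prod_{s\in G}\delta_{i_c(\sigma_c(s)),i_c(\alpha_c(s))}$, and distinctness turns each Kronecker delta into $\delta_{\sigma_c(s),\alpha_c(s)}$. This has two consequences: it collapses the sum over $\alpha_c$ to the single choice $\alpha_c=\sigma_c|_G$, and — since $\alpha_c$ must permute $G$ — it forces $G$ to be invariant under every $\sigma_c$, i.e.\ a union of blocks of $\Pi(\bsig)$, the block expectation vanishing otherwise. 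The remaining ``input'' permutation $\tau_c$ contributes $\prod_s\delta_{k_s^c,l_{\tau_c(s)}^c}$ which, once contracted against $\prod_{s\in G}A_{\vec k_s;\vec l_s}$ and summed over the internal indices, reproduces exactly the trace-invariant $\Tr_{\btau|_G}(A)$ of \eqref{def:trace-invariants}, weighted by $\prod_c W^{(N)}(\sigma_c|_G\tau_c|_G^{-1})$. Thus $\bE[\prod_{s\in G}x_s]$ is precisely the purely-connected expression \eqref{eq:cum-finN-int-pure-conn} with $\bsig$ replaced by $\sigma|_G$, and vanishes unless $G$ is $\bsig$-invariant.

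Plugging this back, the vanishing of non-invariant blocks automatically restricts the partition sum to $\pi\ge\Pi(\bsig)$, and assembling the per-block permutations into a single $\btau\in S_n^D$ with $\Pi(\btau)\le\pi$ yields exactly the second line of \eqref{eq:cum-finN-int}, establishing the mixed identity. For the pure case I would run the identical computation starting from the bipartite cumulant formula, with $T$-legs carrying $U_c$-factors and $\bar T$-legs carrying $\overline{U_c}$-factors; viewing the $\sigma_c$ as bijections in $S_{n,\bar n}$, distinctness now forces $\alpha_c=\sigma_c|_B$ and the block $G=B\cup\bar B$ to be a union of blocks of $\Pi_\mathrm{p}(\bsig)$, so that the bipartite partition sum restricts to $\Pi\ge\Pi_\mathrm{p}(\bsig)$ and the residual contractions rebuild $\Tr_{\btau|_B}(T,\bar T)$ from \eqref{def:trace-invariantsp}, the class-function property of $W^{(N)}$ identifying $W^{(N)}(\sigma_{c|_B}\tau_{c|_B}^{-1})$ with the factor in \eqref{eq:cum-finN-int-pur}. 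I expect the main obstacle to be pure book-keeping: keeping straight, in the pure case, the identification of white and black labels through the bijections $\sigma_c$, and checking that the constraint emerging from distinctness is exactly $\Pi\ge\Pi_\mathrm{p}(\bsig)$ rather than the mixed $\Pi\ge\Pi(\bsig)$; the analytic content (\textsf{LU}-invariance together with \eqref{eq:WeinDef}) is otherwise immediate.
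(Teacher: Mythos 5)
Your proof is correct and rests on exactly the same mechanism as the paper's: average over the unitary group using \textsf{LU}-invariance, apply the Weingarten formula \eqref{eq:WeinDef} colour by colour, and use the distinctness of the $i_c(s)$ to collapse the ``output'' permutation sum onto $\sigma_c$ (forcing each block to be a union of blocks of $\Pi(\bsig)$), after which the residual contractions rebuild the trace-invariants and reproduce \eqref{eq:cum-finN-int} and \eqref{eq:cum-finN-int-pur}. The only (immaterial) difference is one of ordering: the paper first derives the general identity $k_n\bigl(\{A_{\vec i_s;\vec j_s}\}\bigr)=\sum_{\btau}\bigl(\prod_{c,s}\delta_{i_s^c,j^c_{\tau_c(s)}}\bigr)\mathcal{K}^{\mathrm{m}}_{\btau}[A]$ for arbitrary indices and then specializes so that the deltas select $\btau=\bsig$, whereas you specialize the indices at the outset and match the resulting blockwise Weingarten expansion directly against the definition of $\mathcal{K}^{\mathrm{m}}_{\bsig}$.
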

\begin{proof}
See  Appendix~\ref{app:proofs-of-finite-free-prop}. 
\end{proof}

\subsection{Asymptotic moments}
\label{sec:asymptotic-moments-tensor}

While one needs all the trace-invariants to describe unitarily invariant functions in the limit $N\rightarrow \infty$, similar to random matrices, the dominant contribution of the expectations of these quantities in the limit  $N\rightarrow \infty$ should 
not contain more information on the asymptotic distribution than the expectations of the connected ones.  This is because one expects again an asymptotic factorization of the expectations of trace-invariants over their connected components, with the non-factorized parts playing a role only at sub-dominant orders. 

\subsubsection{Mixed case}
\label{subsub:mixed}

We consider a mixed invariant $\bsig$ consisting in several mixed connected components $\bsig_1^\mathrm{m},\dots \bsig^\mathrm{m}_q$
corresponding to the blocks of $\Pi(\bsig)$
and we denote:
\be
\Phi^{\mathrm{m}}_{ \bsig} [A]:=k_q \bigl(\Tr_{\bsig^\mathrm{m}_1}(A), \ldots, \Tr_{\bsig^\mathrm{m}_q}(A)\bigr) \;,
\ee
the associated finite $N$ classical cumulant. The superscript ``$\mathrm{m}$'' indicates  that the components are connected in the ``mixed sense'': these are the connected components of the $(D+1)$-edge colored graph, which includes the thick edges, in the representation introduced in Sec.~\ref{subsub:trace-invariants}.

The approach we pursue is to study invariant tensor distributions for which an asymptotic scaling function of the classical cumulants $r_A:S_n^D\rightarrow \mathbb{R}$ is given:
\be
\label{eq:asympt-class-cum-hyp}
\lim_{N\rightarrow \infty} \, \frac 1 {N^{r_A(\bsig) }}\ \Phi^{\mathrm{m}}_{ \bsig} [A] =    \varphi^{\mathrm{m}}_\bsig(a)\;,
\ee
where the $\varphi^{\mathrm{m}}_\bsig(a)$ are not all vanishing and are called \emph{asymptotic moments}. The aim is to obtain a theory of (free) probability for all invariant distributions $A, A',\ldots$ that share the same scaling function $r_A=r_{A'}$, but  might have different asymptotic moments. 
For classical random matrix ensembles for instance, the scaling function is $r_A(\sigma) = 2-\#\sigma$.

If $\bsig$ is connected (in the mixed sense), the classical cumulant equals the expectation and we have:
\be
\label{eq:asympt-conn-mom-tens}
\lim_{N\rightarrow \infty }\,\frac 1 {N^{r_A(\bsig) } }\ \mathbb E \left[\Tr_{\bsig}(A)\right]=  \varphi^{\mathrm{m}}_{\bsig}(a) \; ,
\ee
while if $\bsig$ is not connected ($K_\mathrm{m}(\bsig)>1$), we sometimes render its connected components explicit and denote:
\be
\label{eq:rescaled-cumulants-tensor}
\varphi^{\mathrm{m}}_\bsig(a) = \varphi^{\mathrm{m}}_{\bsig^\mathrm{m}_1, \ldots, \bsig^\mathrm{m}_q}(a) \; .
\ee 
For $\pi\ge \Pi(\bsig)$,
we define the multiplicative extension of the asymptotic moments: 
\be
\label{eq:part-cases-phi-bsig}
\varphi^{\mathrm{m}}_{\pi, \bsig} (a) = \prod_{ G\in \pi} \varphi^{\mathrm{m}}_{ \bsig_{\lvert_{G}}} (a), \quad \mathrm{so\; that }\quad \varphi^{\mathrm{m}}_{1_n, \bsig} (a) = \varphi^{\mathrm{m}}_{\bsig} (a) \; ,\quad  \varphi^{\mathrm{m}}_{\Pi(\bsig), \bsig}(a) = \prod_{i=1}^q \varphi^{\mathrm{m}}_{\bsig^\mathrm{m}_i}(a) \;,
\ee 
where $\bsig^\mathrm{m}_i$ are the mixed connected components of $\bsig$. We use similar notations for the finite $N$ versions $\Phi^{\mathrm{m}}_{\pi, \bsig} [A]$, and for different ensembles $A_1, \ldots, A_n$ we define in the obvious manner the notations $\Phi^{\mathrm{m}}_{\bsig} [\vec A] = \Phi^{\mathrm{m}}_{ \bsig} [A_1, \ldots, A_n]$ and $\varphi^{\mathrm{m}}_\bsig (\vec a)$ and their multiplicative extensions.

\subsubsection{Pure case}
\label{subsub:pure}

In the pure case, the trace-invariants factor over the pure connected components, that is, 
the connected components of the $D$-edge colored graph described in Sec.~\ref{subsub:trace-invariants}. Denoting  $\bsig_1^{\mathrm{p}}, \ldots, \bsig^{\mathrm{p}}_q$ the pure connected components of $\bsig$ corresponding to the blocks of the bipartite partition $\Pi_{\mathrm{p}}(\bsig)$, we have  $\Tr_{\bsig} (T, \bar  T) = \prod_{i=1}^q \Tr_{\bsig^{\mathrm{p}}_i} (T, \bar T)$,
and similar to the mixed case, we denote the pure classical cumulants:
\be
\Phi_{ \bsig} [T, \bar T]=k_q \bigl(\Tr_{\bsig^\mathrm{p}_1 } (T, \bar T), \ldots, \Tr_{\bsig^\mathrm{p}_q } (T, \bar T) \bigr) \; . 
\ee

We assume that a scaling function $r_{T,\bar T}:S_n^D\rightarrow \mathbb{R}$ is given such that:
\be
\label{eq:asympt-class-cum-hyp-pure}
\lim_{N\rightarrow \infty} \, \frac{1} 
{N^{r_{T, \bar T}(\bsig) } }\ \Phi_{ \bsig} [T, \bar T] =   \varphi_{ \bsig} (t, \bar t) \;,
\ee
where the $ \varphi_{ \bsig} (t, \bar t) $ are not all vanishing.
If $\bsig$ is purely connected, $K_\mathrm{p}(\bsig)=1$, then:
   \be
\label{eq:asympt-conn-mom-tens-pure}
\lim_{N\rightarrow \infty} \, \frac 1 {N^{r_{T, \bar T}(\bsig) }}\  \mathbb E \left[\Tr_{\bsig}(T, \bar T)\right] =   \varphi_{\bsig}(t, \bar t) \;,
\ee
and if it is not, we sometimes use the notation
$
\varphi_\bsig(t, \bar t)   = \varphi_{ \bsig_1^{\mathrm{p}}, \ldots \bsig^{\mathrm{p}}_q }(t, \bar t)$.
If $\Pi$ is a bipartite partition such that $\Pi\ge \Pi_\mathrm{p}(\bsig)$, we define the
multiplicative extension: 
\be
\label{eq:part-cases-phi-bsig-pure}
\begin{split}
\varphi_{\Pi, \bsig} (t, \bar t)  &= 
\prod_{ G = B\cup \bar B\in \Pi} \varphi_{ \bsig_{\lvert_B } } (t, \bar t) \; , \crcr
\varphi_{1_{n,\bar n}, \bsig} (t, \bar t) = \varphi_\bsig (t, \bar t),& \qquad \quad \varphi_{\Pi_\mathrm{p}(\bsig), \bsig}(t,\bar  t) = \prod_{i=1}^q \varphi_{\bsig^\mathrm{p}_i }(t, \bar t) \;,
\end{split}
\ee 
where $\bsig_{\lvert_B } $,
the restriction of the map $\bsig$ to the set $B$ (which is such that $\bsig_{\lvert_B } (B)=\bar B$) is well defined, as $\Pi\ge \Pi_\mathrm{p}(\bsig)$.

\ 

The $D=2$ pure case corresponds to matrices $M$ and $\bar M$ (where $\bar M$ can be the complex conjugate  of $M$ or not). The Gaussian scaling is in this case    
$r_{M,\bar M}(\bsig) = 2 - \#(\sigma_1\sigma_2^{-1})$.
The case where $M$ is the pure complex Gaussian ($D=2$) corresponds to the square Wishart random matrix ($D=1$), yielding the order $n$ purely connected asymptotic moments:
\be
\label{eq:asymptotic-moments-pure-gaussian} \varphi_{(\sigma_1, \sigma_2) |_{ 
  K_\mathrm{p}(\sigma_1, \sigma_2)=1 
} }= \,C_n \;.
\ee

For $D\ge 3$,  pure complex random tensor ensembles with $\bar T$ the complex conjugate of $T$ corresponding to  \textsf{LU}-invariant perturbed Gaussian distributions have been studied extensively \cite{1Nexpansion1, 1Nexpansion2, 1Nexpansion3, critical, Gurau-universality, uncoloring, bonzom-SD, Gurau-book, enhanced-1, Walsh-maps, multicritical, bonzom-review, Lionni-thesis, Gurau:2012ix, Gurau:2013pca}. We review them below in Sec.~\ref{sub:Complex-Gaussian-Tensor} and Sec.~\ref{sec:random-tensor-models}.  
For such models it can be proven \cite{Gurau-book} that the classical cumulants admit an asymptotic behavior:
\be
\label{eq:ansatz-asympt-classical-cumulants-pure}
\Phi_{ \bsig} [T, \bar T]  \sim N^{D - \Omega(\bsig)}  \varphi_\bsig(t, \bar t) \; ,
\ee
with $\Omega(\bsig)\ge 0$. Contrary to the $D=2$ case, for higher $D$ the scaling factor $\Omega(\bsig)\ge 0$ is not necessarily additive over the pure connected components, with explicit counterexamples are known for $D=6$ (see Sec.~\ref{sec:order-of-dominance-gaussian-scaling}).

\subsubsection{Order of dominance and asymptotic factorization}
\label{sec:order-general}

Both in the mixed and in the pure case, denoting 
$\bsig_1,\dots \bsig_q $ some connected (in the appropriate sense) invariants and $\bsig_G= \bigcup_{j\in G} \bsig_j$, the classical moment cumulant formula writes:
\be
 \frac{1}{N^{\sum_{i=1}^q r(\bsig_i) } } 
 \bE \left[ \prod_{i=1}^q \Tr_{\bsig_i}[\cdot] \right]  =  \sum_{ \pi\in \mathcal{P}(q)} 
  \frac{1}{ N^{  \sum_{i=1}^q r(\bsig_i)  -\sum_{G\in \pi}  r( \bsig_G)  } } 
 \prod_{G \in \pi}  \;
  \frac{1}{N^{ r(   \bsig_G) } } 
  \Phi_{\bsig_G}[\cdot]  \;.
\ee 
By our scaling assumption, $ N^{-   r( \bsig_G) }    \Phi_{\bsig_G }[\cdot] $
converges  to some finite value in the limit $N\rightarrow \infty$, and one may therefore deduce from the equation above that:
\begin{itemize}
 \item all the rescaled expectations generally\footnote{Here we are interested in characterizing appropriate choices of scaling functions that can be used to describe classes of distributions scaling in this manner, for which the values of the asymptotic moments $\varphi$ can a priori take any values. The following statements are meant in this sense, but other choices of scaling could be considered e.g. for distributions whose higher order asymptotic moments vanish.} have well defined  limits when $N\rightarrow \infty$ if and only if the scaling function is \emph{subadditive} on connected components, that is, for any set of connected components $\{\bsig_i\}_{i\in I}$:
\be
  r\Bigl(\bigcup_{i\in I} \bsig_i \Bigr) \le  \sum_{i\in I} r(\bsig_i ) \;.
\ee
\item the rescaled expectations generally factor at first order:
\be
\lim_{N\to \infty}  \frac{1}{N^{\sum_i r(\bsig_i)} } 
 \bE \left[ \prod_{i} \Tr_{\bsig_i}[\cdot] \right] = \prod_{i} \varphi_{\bsig_i} (\cdot) \;,
\ee
if and only if the scaling function is \emph{strictly subadditive}, that is, for any set of connected components $\{\bsig_i\}_{i\in I}$:
\be
  r\Bigl(\bigcup_{i\in I} \bsig_i \Bigr) <  \sum_{i\in I} r(\bsig_i ) \;.
\ee
\end{itemize}

A scaling function $r(\bsig) = D - \Omega(\bsig)$ as in \eqref{eq:ansatz-asympt-classical-cumulants-pure} is clearly strictly subadditive if $\Omega(\bsig)$ is additive (as it is the case for $D=2$), but because of the presence of the $D$ factor, it may very well be subadditive even if $\Omega(\bsig)$ is not additive, as it is the case for $D=6$ (see Sec.~\ref{sec:order-of-dominance-gaussian-scaling}).

\

The \emph{dominant}, or \emph{first order} invariants are the invariants with maximal $r(\bsig)$; the \emph{order of dominance} of an invariant is the amount by which its scaling is supressed with respect to the first order ones, that is $1 + \max_{\bsig'}r(\bsig') - r(\bsig)$.

\newpage

\section{The Gaussian scaling}
\label{sec:Gaussian-scaling}

We discuss the scaling function $r(\bsig)$ for some pure \textsf{LU}-invariant random tensors $T,\bar T$ with $\bar T$ the complex conjugate of $T$. We first
 discuss the Gaussian case, when the components of $T$ are i.i.d.~complex Gaussian random variables (a Ginibre like tensor) and subsequently the case of a \textsf{LU}-invariant perturbed Gaussian distribution.  In both cases, the first  order asymptotic moments $\varphi_{\bsig}(t,\bar t)$ corresponding to invariants with maximal scaling $r(\bsig)$ are a subclass of the purely connected invariants, called \emph{melonic}, which we will discuss in detail.

This discussion is also crucial for the rest of the paper. In Sec.~\ref{sec:free-cum},  we will  derive the first order free cumulants for generic ensembles of either pure random tensors that scale like a complex Gaussian tensor, or mixed random tensors that scale like a Wishart tensor. Due to these scaling assumption, such distributions have the same first order invariants as the pure complex Gaussian and Wishart tensors discussed here. 

\subsection{Asymptotic scaling of pure Gaussian tensors}
\label{sub:Complex-Gaussian-Tensor}

We start by recalling some well-known (see e.g.~\cite{Gurau-universality}) results for a Gaussian pure random tensor $T,\bar T$, that is, the components of $T$ are i.i.d.~complex Gaussian random variables.
As shown in  \cite{Gurau-universality}, a large class of perturbed Gaussian tensor measures fall in the same universality class.
The expectation of a function $f$ is:
\be
\label{eq:distrib-random-tens}
\bE \left[ f(T,\bar T) \right] = \int 
\frac{ dT d\bar T}{{\mathcal{N}}} \; e^{- N^{D-1} T\cdot\bar T
} f(T,\bar T), 
\ee
where $dT=\prod_{\{i_c\}_{1\le c \le D}} dT_{i^1 \ldots i^D}$   and similarly for the complex conjugate, the normalization $\mathcal{N}$ is chosen so that $\bE \left[ 1 \right]=1$, and 
$T\cdot\bar T =\sum_{i^1,\dots i^D =1 }^N T_{i^1 \ldots i^D}\bar T_{i^1 \ldots i^D}$.

We are interested in studying the asymptotic moments $ \varphi_\bsig(t, \bar t)$, which we denote from now on simply $ \varphi_\bsig$, 
and the scaling function $r_{T,\bar T} (\bsig)$,  henceforth denoted $r(\bsig)$.  The expectations of trace-invariants are computed using Wick theorem, that is, the classical moment-cumulant formula for a centered Gaussian random variable: 
\be
\label{eq:Wick}
\mathbb{E}\left[  T_{i^1_1, \ldots i^D_1} \bar T_{j^1_{\bar 1}, \ldots j^D_{\bar 1} } \cdots  T_{i^1_n, \ldots  i^D_n} \bar T_{j^1_{\bar n}, \ldots j^D_{\bar n} }  \right] = \sum_{\eta \in S_n} \prod_{s=1}^n \bE \left[ T_{i^1_{s} , \ldots i^D_{s} } \bar T_{j^1_{ 
\overline{ \eta(s) } } , \ldots j^D_{ \overline{ \eta(s)}  } }    \right]  \;,
\ee
where $\eta$ defines the ``Wick pairing''. Since $\bE \left[ T_{i^1 \ldots i^D} \bar T_{j^1 \ldots j^D} \right]  = N^{1-D} \prod_{c=1}^D   \delta_{i^c, j^c} $, we get:
\be
\mathbb{E}\left[\Tr_{\bsig}(T, \bar T)\right]  = \sum_{\eta \in S_n} N^{n - d(\bsig, \eta)}
 \; ,\qquad d(\bsig, \eta)=\sum_{c=1}^D\lvert \sigma_c \eta^{-1}\rvert \;. 
\ee

The classical cumulants are given by similar expressions, but with an additional connectivity condition. To be precise, for $\bsig_1, \ldots, \bsig_q$ a collection of purely connected trace-invariants, $K_\mathrm{p}(\bsig_i)=1$, denoting $\bsig\in S_n^D$ their disjoint union, we have: 
\be
\label{eq:sum-over-wick-pairings}
\Phi_{ \bsig} [T, \bar T] =k_q \bigl(\Tr_{\bsig_1} (T, \bar T), \ldots, \Tr_{\bsig_q}(T,\bar T)\bigr)  = \sum_{\substack{{\eta \in S_n, \textrm{ s.t.}}\\{K_\mathrm{p}(\bsig, \eta) = 1}}} N^{n - d(\bsig, \eta)},
\ee
where $K_\mathrm{p}(\bsig, \eta)$ is the number of pure connected components of the trace-invariant defined by the $D+1$ permutations $(\eta,\sigma_1, \ldots, \sigma_D)$. This sum is dominated by the terms which minimize $d(\bsig, \eta)$, that is:
\be
\label{eq:asympt-gaussian}
\Phi_{ \bsig} [T, \bar T] \sim N^{n - \min_{\eta\in S_n, K_\mathrm{p}(\bsig, \eta) = 1 } d(\bsig, \eta)} \, \varphi_\bsig  \; ,
\ee
where:
\be
\label{eq:asympt-moment-gaussian}
\varphi_\bsig  = \mathrm{Card}\bigl\{\eta \in S_n\  \mid\   K_\mathrm{p}(\bsig, \eta) =  1\textrm{ and } d(\bsig, \eta)   \textrm{ is minimal}\bigr\} \;.
\ee
It follows that the scaling function for a complex pure Gaussian tensor is:
\be
\label{eq:def-scling-purely-connected}
r(\bsig)  = n - \min_{\eta\in S_n \,, \; K_\mathrm{p}(\bsig, \eta) =  1} d(\bsig, \eta)
\; . 
\ee 

We will review below various known properties of this Gaussian scaling function. However, we emphasize from the beginning that one main question remains open: it is not known whether this scaling function is subadditive or not. We conjecture this to be the case.

\begin{conjecture}\label{conj:subadi}
 The Gaussian scaling function:
 \[
  r(\bsig)  = n - \min_{\eta\in S_n \,, \; K_\mathrm{p}(\bsig, \eta) =  1} d(\bsig, \eta)
\; ,
 \]
is strictly subadditive on the pure connected components, that is for any $\bsig$ with pure connected components $\bsig_i$ for $i \in I$, $r(\bsig) < \sum_{i\in I} r(\bsig_i) $.
\end{conjecture}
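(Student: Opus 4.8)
The plan is to recast the conjecture as a combinatorial statement about face counts of colored graphs and then to isolate the connectivity obstruction. Since each pure component $\bsig_i$ of $\bsig$ is already purely connected, adding the Wick permutation $\eta$ can never disconnect it, so for a single component the constraint $K_\mathrm{p}(\bsig_i,\eta_i)=1$ in \eqref{eq:def-scling-purely-connected} is vacuous and $r(\bsig_i)=n_i-\min_{\eta_i\in S_{n_i}}d(\bsig_i,\eta_i)$. Writing $m(\bsig)=\min_{\eta\,:\,K_\mathrm{p}(\bsig,\eta)=1}d(\bsig,\eta)$ and using $\sum_i n_i=n$, the conjecture is equivalent to the strict inequality $m(\bsig)>\sum_i m(\bsig_i)$, where each $m(\bsig_i)$ is now an \emph{unconstrained} minimum. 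Introducing the color-$0$ face count $X(\eta)=\sum_{c=1}^D\#(\sigma_c\eta^{-1})$, so that $d(\bsig,\eta)=Dn-X(\eta)$, the target becomes $\max_{\eta\,:\,K_\mathrm{p}(\bsig,\eta)=1}X(\eta)<\sum_i X_i^{\max}$, where $X_i^{\max}=\max_{\eta_i}\sum_c\#\bigl((\sigma_c)|_{V_i}\eta_i^{-1}\bigr)$ is attained, over all of $\{1,\dots,n\}$, by any block-diagonal $\eta$.

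The first step is a factorization lemma for the unconstrained problem: I would show that block-diagonal permutations globally maximize $X$, i.e.\ $\max_\eta X(\eta)=\sum_i X_i^{\max}$. The natural tool is an exchange argument: given $\eta$ with a cross-edge (some $s\in V_i$ with $\eta(s)\in W_j$, $j\neq i$, where $W_i=\sigma_c(V_i)$ is the common image block), pick a reducing pair of crossing edges and compose $\eta$ with the transposition of their black endpoints; this changes $X$ by $\sum_c(\pm1)$ while strictly lowering the total crossing $\sum_{i\neq j}\lvert\{s\in V_i:\eta(s)\in W_j\}\rvert$. If every reducing step can be chosen so that $X$ does not decrease, iterating produces a block-diagonal $\eta$ with $X$ no smaller, proving the lemma; combined with $m(\bsig)\ge\min_\eta d(\bsig,\eta)=Dn-\max_\eta X(\eta)$ this already yields the weak inequality $m(\bsig)\ge\sum_i m(\bsig_i)$.

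The second step upgrades this to the strict inequality using connectivity, and the model computation is a single minimal connection between two blocks. If $\eta$ agrees with a block-diagonal $\eta_0$ except that two crossing edges $s_0\in V_i\mapsto w_0\in W_j$ and $s_1\in V_j\mapsto w_1\in W_i$ have been swapped, then $w_0$ and $w_1$ lie in distinct color-$c$ faces of $\eta_0$ for every $c$ (those faces are confined to $W_j$ and $W_i$ respectively), so $\eta=(w_0w_1)\eta_0$ merges one pair of faces per color and $X(\eta)=X(\eta_0)-D$, giving $d(\bsig,\eta)=d(\bsig,\eta_0)+D\ge\sum_i m(\bsig_i)+D$. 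In general I would induct on $q$: since $K_\mathrm{p}(\bsig,\eta)=1$ forces the block-adjacency graph to be connected, one picks a spanning tree and undoes its $q-1$ connecting swaps one at a time, each step disconnecting one more component while splitting at least one face and hence raising $X$ by at least $1$, so that $X(\eta)\le\sum_iX_i^{\max}-(q-1)<\sum_i X_i^{\max}$.

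The hard part is controlling the signs $\sum_c(\pm1)$ in the surgery steps for an arbitrary, non-minimal cross-structure. An uncrossing transposition raises $X$ for the colors $c$ whose two black endpoints share a color-$c$ face and lowers it otherwise, and there is no elementary reason a given step must have a nonnegative net effect; this is precisely the mechanism behind the failure of additivity of $\Omega$ at $D=6$, where the clean relation $m(\bsig)=\sum_i m(\bsig_i)+D(q-1)$ breaks down. Consequently the global optimization over $\eta$ cannot be settled by the standard jacket identity $\sum_J F_J=\tfrac{D!}{2}\bigl((D-1)n+2\bigr)-2\omega$ alone, since that controls only the \emph{total} face count, not the color-$0$ faces, and is not tight for the configurations that matter. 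Closing this gap — ruling out that some cleverly crossed yet connected $\eta$ matches the block-diagonal face count (connecting cost $0$) — is exactly the combinatorial difficulty that keeps the statement a conjecture; I would attempt it by a refined, color-by-color accounting of how the connectivity of $(\bsig,\eta)$ is distributed among the $D$ bicolored ribbon structures, but I do not expect a short argument.
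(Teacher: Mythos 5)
This statement is presented in the paper as an \emph{open conjecture}: the authors write explicitly that ``it is not known whether this scaling function is subadditive or not,'' and no proof of Conjecture~\ref{conj:subadi} appears anywhere in the text. There is therefore no argument of the paper to compare yours against, and your proposal --- which, to your credit, you acknowledge does not close --- must be judged on its own. Your preliminary reductions are sound: the constraint $K_\mathrm{p}(\bsig_i,\eta_i)=1$ is indeed vacuous for a purely connected component; the translation of $d(\bsig,\eta)$ into the face count $X(\eta)=\sum_c\#(\sigma_c\eta^{-1})$ is correct; and your model computation (a single pair of crossing edges relative to a block-diagonal $\eta_0$ merges one color-$c$ face per color, so $X$ drops by exactly $D$) is right and consistent with Thm.~\ref{thm:second-degree}, which yields precisely this cost-$D$ separation in the melonic case.

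The two places where the argument genuinely fails are the ones you flag, and it is worth being precise about why the exchange argument cannot repair them as stated. First, your ``factorization lemma'' for the unconstrained maximum of $X$ is itself an open statement of the same nature as the conjecture: an uncrossing transposition changes each $\#(\sigma_c\eta^{-1})$ by $\pm1$ independently across the $D$ colors, and since the invariants $\Tr_{\bsig_i}(T,\bar T)$ are not sign-definite there is no positivity or H\"older-type argument forcing block-diagonal pairings to dominate; without that lemma even the weak inequality $r(\bsig)\le\sum_i r(\bsig_i)$ is not established. Second, the inductive step presumes that a connecting $\eta$ differs from a block-diagonal one by $q-1$ swaps along a spanning tree, but a general $\eta$ with $K_\mathrm{p}(\bsig,\eta)=1$ admits no such normal form, and the $D=6$ example of Sec.~\ref{sec:order-of-dominance-gaussian-scaling} (where $r(\bsig_0)=-4$ but $r(\bsig_0\cup\bsig_0)=-12$, a connecting cost of $4<D$) shows that the per-connection cost can fall strictly below $D$; any correct argument can therefore only hope to extract cost at least $1$ per connection, which is exactly the regime where the color-by-color $\pm1$ bookkeeping is uncontrolled. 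Your closing diagnosis --- that the jacket/degree identities control aggregate face counts but not the color-$0$ faces $X(\eta)$ --- is an accurate identification of why the problem remains open, but it is a diagnosis, not a proof.
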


\subsection{Melonic and compatible invariants}
\label{sub:Melo}

Two classes of trace-invariants will play an important role in the following. 

\paragraph{Melonic invariants.}Melonic invariants \cite{critical} dominate the asymptotic moments in the  Gaussian and perturbed Gaussian cases, and more generally, in the case of pure random tensors for which the $\Phi_{ \bsig} [T, \bar T] $ exhibit Gaussian  scaling. The following definitions and results are folklore in the random tensor literature, see \cite{Gurau-book} and references therein. 

Let us fix $D$.  Melonic invariants are defined recursively in the graphical representation: the only invariant with two vertices  is melonic (represented on the left in Fig.~\ref{fig:melon-recursive}) and corresponds to the unique element of $S_1^D$.
If a connected trace-invariant is melonic and has more than two vertices, then it contains a black and a white vertex linked by $D-1$ edges, representing a tensor $T$ and a tensor $\bar T$ sharing precisely $D-1$ indices summed together. If the two remaining indices have color $c$, replacing this pair by an edge of color $c$ as in Fig.~\ref{fig:melon-recursive}, the resulting invariant is itself melonic. Fig.~\ref{fig:ex-melo} depicts some example of melonic graph.\footnote{Conversely, melonic graphs are constructed by recursive insertions of pairs of vertices connected by $D-1$ edges, respecting the colorings, starting from the graph with two vertices.}

\begin{figure}[!h]
\centering
\raisebox{2mm}{\includegraphics[scale=1.3]{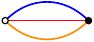}}\hspace{2cm}\includegraphics[scale=1.3]{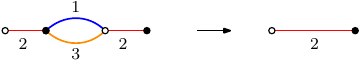}
\caption{Left: the only $D$-colored graph (here $D=3$) with two vertices is melonic. Right: a pair of vertices linked by $D-1$ edges is replaced by an edge. 
}
\label{fig:melon-recursive}
\end{figure}

This recursive construction induces a pairing of the black and white vertices, corresponding to the list of pairs of vertices recursively removed. The pairing of vertices does not depend on the order in which the removals are performed: for each melonic trace-invariant this pairing is unique and will be called its \emph{canonical pairing}. Another characterization is given in Thm.~\ref{thm:second-degree}.

An alternating cycle in the colored graph consisting in edges of a color $c$ and canonical pairs is said to be a  separating cycle, if cutting any pair of edges of color $c$ in the cycle, the number of pure connected components of the graph is raised by one. An equivalent characterization of melonic invariants is that any cycle alternating edges of a fixed color and canonical pairs either contains a single colored edge, or is separating.

\begin{figure}[!h]
\centering
\includegraphics[scale=1.3]{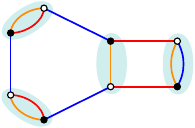}
\caption{Example of melonic graph for $D=3$. The canonical pairs are highligted in grey. 
}
\label{fig:ex-melo}
\end{figure}

The following shows that among connected invariants in $S_n^D$ , melonic invariants are those which minimize the sum of distances between all the pairs of permutations $\sum_{1\le c_1<c_2\le D} \lvert \sigma_{c_1}\sigma_{c_2}^{-1}\rvert$. 

\begin{theorem}[Gurau, Rivasseau \cite{1Nexpansion1, 1Nexpansion2, 1Nexpansion3}]
\label{thm:degree}
Consider a trace-invariant $\bsig\in S_n^D$. The \emph{degree of the invariant}:
\[
\omega(\bsig)=  \sum_{1\le c_1<c_2\le D} \lvert \sigma_{c_1}\sigma_{c_2}^{-1}\rvert  -  (D-1)\bigl( n - K_\mathrm{p}(\bsig)\bigr)  \;,
\]
is a non-negative integer $\omega(\bsig)\ge 0$. For $D\ge 4$, the degree vanishes if and only if $\bsig$ is melonic. 
\end{theorem}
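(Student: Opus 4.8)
The plan is to rewrite $\omega(\bsig)$ as a sum over the \emph{jackets} of the $D$-colored bipartite graph encoding $\bsig$, each of which is a closed orientable surface of non-negative genus. This gives integrality and non-negativity at once, and reduces the melonic characterization to the statement that \emph{every} jacket is planar. Integrality is in fact immediate: $\omega(\bsig)=\sum_{c_1<c_2}\lvert\sigma_{c_1}\sigma_{c_2}^{-1}\rvert-(D-1)(n-K_\mathrm{p}(\bsig))$ is manifestly an integer.

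First I would set up the jackets. The graph of $\bsig$ has $2n$ vertices, $nD$ edges and $K_\mathrm{p}(\bsig)$ connected components. To each cyclic order of the $D$ colors one associates a ribbon graph $J$ with the same vertices and edges, whose faces are the bicolored cycles of the pairs of colors adjacent in the cyclic order; the number of colors-$(c_1,c_2)$ faces is $\#(\sigma_{c_1}\sigma_{c_2}^{-1})$. Since the graph is bipartite, each $J$ is orientable, and Euler's relation reads
\[
2K_\mathrm{p}(\bsig)-2g_J=2n-nD+\sum_{(c_1,c_2)\,\mathrm{adj.}\,\mathrm{in}\,J}\#(\sigma_{c_1}\sigma_{c_2}^{-1}),\qquad g_J\ge 0 .
\]
Summing over the $(D-1)!/2$ jackets, and using that each unordered pair $\{c_1,c_2\}$ is adjacent in exactly $(D-2)!$ of them together with $\#(\sigma_{c_1}\sigma_{c_2}^{-1})=n-\lvert\sigma_{c_1}\sigma_{c_2}^{-1}\rvert$, the bookkeeping collapses to
\[
\sum_{J} g_J=\frac{(D-2)!}{2}\,\omega(\bsig),
\]
so that $\omega(\bsig)=\tfrac{2}{(D-2)!}\sum_J g_J\ge 0$, and $\omega(\bsig)=0$ if and only if every jacket is planar.

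For the easy implication (melonic $\Rightarrow\omega=0$) I would induct on $n$. The two-vertex invariant has $\sum_{c_1<c_2}\lvert\sigma_{c_1}\sigma_{c_2}^{-1}\rvert=0$ and $n=K_\mathrm{p}=1$, hence $\omega=0$. For the inductive step, inserting a pair of vertices joined by $D-1$ edges into an edge of color $c$ adds exactly $\binom{D-1}{2}$ new bigon faces, one for each pair of colors different from $c$, while each face involving color $c$ is merely lengthened; with $\delta n=1$, $\delta K_\mathrm{p}=0$ and $\delta\!\left(\sum_{c_1<c_2}\lvert\sigma_{c_1}\sigma_{c_2}^{-1}\rvert\right)=\binom{D}{2}-\binom{D-1}{2}=D-1$ one finds $\delta\omega=0$. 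Thus every melonic invariant has vanishing degree.

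The hard direction, and the place where $D\ge 4$ is essential, is the converse $\omega(\bsig)=0\Rightarrow\bsig$ melonic. I would again induct on $n$, the crux being a single lemma: \emph{a connected $D$-colored graph on $n\ge 2$ pairs of vertices all of whose jackets are planar contains a melonic pair}, i.e. two vertices joined by $D-1$ parallel edges. Granting it, deleting such a pair and reconnecting the two dangling color-$c$ strands — the inverse of the insertion analyzed above — yields a connected graph $\bsig'$ on $n-1$ pairs with $\omega(\bsig')=\omega(\bsig)=0$, to which the induction hypothesis applies. Extracting the melonic pair is the main obstacle. For $D=3$ there is a single jacket and planarity alone does not force melonicity, so the proof must genuinely exploit that for $D\ge 4$ there are at least three jackets whose simultaneous planarity is rigid: I would argue that $\sum_{c_1<c_2}\#(\sigma_{c_1}\sigma_{c_2}^{-1})$ is maximal precisely when $\omega=0$, and that a maximizing configuration with no two vertices sharing $D-1$ edges would force a non-separating alternating cycle in some jacket, hence positive genus, contradicting planarity. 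This combinatorial core is exactly where the counting over the $(D-1)!/2\ge 3$ jackets must be pushed, and it matches the separating-cycle and canonical-pairing characterizations recorded above.
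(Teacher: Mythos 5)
The paper does not prove this theorem: it is imported verbatim from the cited Gurau--Rivasseau references, so there is no internal proof to compare against. Your route is precisely the one used in those references, namely the jacket decomposition. Your bookkeeping is correct: each unordered pair of colors is adjacent in exactly $(D-2)!$ of the $(D-1)!/2$ cyclic orderings, the jackets are orientable because the graph is bipartite, each jacket has $K_\mathrm{p}(\bsig)$ connected components, and summing the Euler relations does collapse to $\sum_J g_J = \tfrac{(D-2)!}{2}\,\omega(\bsig)$, which gives integrality (up to the factor $2/(D-2)!$, though $\omega\in\mathbb{Z}$ is indeed immediate from its definition) and non-negativity in one stroke. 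The induction showing that a $(D-1)$-dipole insertion preserves $\omega$ (creating $\binom{D-1}{2}$ new bicolored faces, so $\delta\bigl(\sum_{c_1<c_2}\lvert\sigma_{c_1}\sigma_{c_2}^{-1}\rvert\bigr)=D-1$) is also correct and gives melonic $\Rightarrow\omega=0$.

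The one genuine gap is the converse. Your entire argument for $\omega(\bsig)=0\Rightarrow\bsig$ melonic rests on the lemma that a connected $D$-colored graph with $n\ge 2$ and all jackets planar contains a $(D-1)$-dipole, and you only assert this ("I would argue that\ldots a maximizing configuration\ldots would force a non-separating alternating cycle in some jacket"). This is exactly the combinatorial core of the Gurau--Rivasseau proof and it is not a one-line consequence of planarity of several jackets; the published argument proceeds through a careful analysis of faces of minimal length and of how dipoles of intermediate order ($k$-dipoles for $1\le k\le D-1$) can be reduced, and it is here, not merely in the count $(D-1)!/2\ge 3$, that the hypothesis $D\ge 4$ enters (for $D=3$ a single planar jacket does not force melonicity, as you correctly note). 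As a blind reconstruction your proposal identifies the right decomposition, the right reductions, and the right location of the difficulty, but the extraction lemma would need to be proved rather than sketched for the argument to be complete.
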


We have the following associated result (see e.g.~\cite{Gurau-universality}).

\begin{theorem}
\label{thm:second-degree}
For $D\ge 3$, consider a trace-invariant $\bsig\in S_n^D$ and $\eta\in S_n$.  Then:
\be
\omega(\bsig,\eta)  -  \omega(\bsig) 
= \bar \omega(\bsig;\eta) = D K_\mathrm{p}(\bsig, \eta) - (D-1) K_\mathrm{p}(\bsig) - n  +  d(\bsig, \eta) \ge 0,
\ee
with equality if and only if the trace-invariant $(\eta,\bsig)$ is melonic, which implies that $\bsig$ is \emph{melonic}.

If  $\bsig$ is \emph{melonic}, then there exists a unique $\eta\in S_n$ such that $K_\mathrm{p}(\bsig, \eta)=K_\mathrm{p}(\bsig)$ and $\bar \omega(\bsig;\eta) =0$: it corresponds to the canonical pairing of $\bsig$. This holds in particular if $\bsig$ is melonic and purely connected. 

If $\bsig$ is melonic but not purely connected, there are other $\eta\in S_n$ (different from the canonical pairing) such that $\bar \omega(\bsig;\eta)=0$ and $K_\mathrm{p}(\bsig, \eta)<K_\mathrm{p}(\bsig)$. 

\end{theorem}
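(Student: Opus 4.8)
The plan is to read $\omega(\bsig,\eta)$ as the degree of Theorem~\ref{thm:degree} applied to the $(D+1)$-colored trace-invariant $(\eta,\sigma_1,\dots,\sigma_D)$, and to control it against the degree of the $D$-colored invariant $\bsig$ by ``deleting the color $\eta$''. First I would record the purely algebraic identity: writing out the degree of the $(D+1)$-tuple,
\[
\omega(\bsig,\eta)=\sum_{1\le c_1<c_2\le D}\lvert\sigma_{c_1}\sigma_{c_2}^{-1}\rvert+\sum_{c=1}^D\lvert\sigma_c\eta^{-1}\rvert-D\bigl(n-K_\mathrm{p}(\bsig,\eta)\bigr),
\]
and subtracting the definition of $\omega(\bsig)$, every pairwise term $\lvert\sigma_{c_1}\sigma_{c_2}^{-1}\rvert$ cancels and one is left with $\omega(\bsig,\eta)-\omega(\bsig)=DK_\mathrm{p}(\bsig,\eta)-(D-1)K_\mathrm{p}(\bsig)-n+d(\bsig,\eta)=\bar\omega(\bsig;\eta)$, using $\lvert\sigma_c\eta^{-1}\rvert=\lvert\eta\sigma_c^{-1}\rvert$ and $d(\bsig,\eta)=\sum_c\lvert\sigma_c\eta^{-1}\rvert$. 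This gives the first displayed equality for free; the whole content is thus the inequality $\omega(\bsig,\eta)\ge\omega(\bsig)$ together with its equality case.

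For the inequality I would use the jacket (ribbon-graph) representation of the degree. For a $k$-colored graph the degree equals $\frac{2}{(k-2)!}\sum_J g_J$, the sum running over the $(k-1)!/2$ jackets $J$ (cyclic orders of the colors) and $g_J\ge0$ being the genus of $J$; this is standard (see \cite{Gurau-book}) and follows by summing Euler's relation $F_J-E+V=2K_\mathrm{p}-2g_J$ over all jackets and using that each unordered pair of colors is consecutive in exactly $(k-2)!$ of them. Deleting the color $\eta$ induces a map $\pi$ from the jackets of $(\eta,\bsig)$ onto the jackets of $\bsig$ which is exactly $D$-to-$1$ (there are $D$ slots into which $\eta$ can be inserted in a cyclic order of $\{1,\dots,D\}$). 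As ribbon graphs, $\pi(J')$ is obtained from $J'$ by deleting the color-$\eta$ edges, a perfect matching; since deleting a single edge of a ribbon graph never raises its genus (a one-line check on $F_J-E+V$ according to whether the edge borders two faces, splits one, or disconnects), one has $g_{J'}\ge g_{\pi(J')}$ termwise. Summing, $\sum_{J'}g_{J'}\ge D\sum_J g_J$, whence $\omega(\bsig,\eta)=\frac{2}{(D-1)!}\sum_{J'}g_{J'}\ge\frac{D}{D-1}\,\omega(\bsig)\ge\omega(\bsig)$, so $\bar\omega(\bsig;\eta)\ge0$.

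The sharpened bound $\omega(\bsig,\eta)\ge\frac{D}{D-1}\omega(\bsig)$ delivers the equality case cleanly: if $\bar\omega=0$ then $\omega(\bsig,\eta)=\omega(\bsig)$, and combined with $\omega(\bsig,\eta)\ge\frac{D}{D-1}\omega(\bsig)$ this forces $\omega(\bsig)=0$ and hence $\omega(\bsig,\eta)=0$; since $(\eta,\bsig)$ has $D+1\ge4$ colors, Theorem~\ref{thm:degree} identifies $\omega(\bsig,\eta)=0$ with $(\eta,\bsig)$ being melonic. Conversely a melonic $(\eta,\bsig)$ has $\omega(\bsig,\eta)=0$, forcing $\omega(\bsig)=0$ by the same inequality, so $\bar\omega=0$. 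The remaining assertion ``$(\eta,\bsig)$ melonic $\Rightarrow\bsig$ melonic'' I would prove by induction on the melonic recursion rather than through the degree (which only yields $\omega(\bsig)=0$, insufficient to conclude melonicity when $D=3$): deleting color $\eta$ from the base dipole is again a dipole; a melonic pair of $(\eta,\bsig)$ that \emph{uses} color $\eta$ (joined by $D$ edges including $\eta$) becomes, after deletion, a pair joined by $D-1$ edges, i.e.\ a melonic pair of $\bsig$; and a melonic pair that does \emph{not} use color $\eta$ detaches, after deletion, as an isolated $D$-colored dipole, itself melonic. In every case deletion preserves melonicity.

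Finally, for the statements about the optimal $\eta$ I would argue as follows. Taking $\eta_0$ to be the canonical pairing of the melonic $\bsig$, adding a color-$\eta_0$ edge inside each canonical pair turns every melonic pair of $\bsig$ (joined by $D-1$ edges) into a pair joined by $D$ edges, i.e.\ a melonic pair of $(\eta_0,\bsig)$; hence $(\eta_0,\bsig)$ is melonic and $\eta_0$ keeps every pure connected component intact, so $K_\mathrm{p}(\bsig,\eta_0)=K_\mathrm{p}(\bsig)$ and $\bar\omega(\bsig;\eta_0)=0$. For uniqueness, any $\eta$ with $\bar\omega=0$ makes $(\eta,\bsig)$ melonic, and the extra constraint $K_\mathrm{p}(\bsig,\eta)=K_\mathrm{p}(\bsig)$ forbids the color-$\eta$ edges from bridging distinct components; restricting to one component and running the melonic recursion then pins the color-$\eta$ matching to $\eta_0$ block by block. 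When $\bsig$ is not purely connected one instead exhibits an $\eta$ whose edges bridge two melonic components into a single connected melonic $(D+1)$-colored graph (gluing two melons along a color-$\eta$ root keeps the result melonic), giving $\bar\omega=0$ with $K_\mathrm{p}(\bsig,\eta)<K_\mathrm{p}(\bsig)$. The main obstacle I anticipate is precisely this last, structural layer: the jacket computation disposes of the inequality and of the equivalence ``$\bar\omega=0\Leftrightarrow$ melonic'' almost mechanically, but the uniqueness of the canonical pairing under the connectivity constraint, and the explicit construction of component-merging melonic $\eta$'s, require careful bookkeeping of the melonic recursion that the degree alone does not see.
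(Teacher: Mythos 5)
The paper does not actually prove this theorem --- it is stated as a known result with a pointer to the literature (``see e.g.~\cite{Gurau-universality}''), so there is no in-paper proof to compare against. Your argument is correct and is, in substance, the standard one from that literature: the identity $\bar\omega(\bsig;\eta)=\omega(\bsig,\eta)-\omega(\bsig)$ is an immediate cancellation of the pairwise terms, and the inequality follows from the jacket decomposition of the degree together with the fact that deleting the colour-$\eta$ edges maps the $D!/2$ jackets of $(\eta,\bsig)$ onto the $(D-1)!/2$ jackets of $\bsig$ in a $D$-to-$1$ fashion without increasing total genus; your normalisations ($\omega=\tfrac{2}{(k-2)!}\sum_J g_J$, the factor $D$, the resulting bound $\omega(\bsig,\eta)\ge\tfrac{D}{D-1}\,\omega(\bsig)$) all check out, and the sharpened bound does indeed settle the equality case and correctly sidesteps the $D=3$ issue with Thm.~\ref{thm:degree} by handling ``$(\eta,\bsig)$ melonic $\Rightarrow\bsig$ melonic'' through the recursion. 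The only places that deserve to be written out rather than gestured at are in the last two paragraphs: (i) for uniqueness you should make explicit that, for $\bsig_i$ connected with more than two vertices, a canonical pair of the melonic graph $(\eta,\bsig_i)$ cannot have $\eta$ as its free colour (it would be an isolated dipole of $\bsig_i$), hence every canonical pair of $(\eta,\bsig_i)$ is a $(D-1)$-dipole of $\bsig_i$ with its two vertices matched by $\eta$, and the induction closes using the order-independence of the peeling (which the paper takes as folklore); and (ii) the component-merging $\eta$ should be exhibited concretely, e.g.\ by modifying the canonical pairing on one canonical pair in each of two components into a single $2$-cycle, after which the glued $(D+1)$-coloured graph is checked to be melonic by peeling. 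Neither point is a gap in the strategy, only in the bookkeeping, which you yourself flag.
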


\paragraph{Compatible invariants.}
Compatible invariants $\bsig$  (see e.g.~\cite{funwithreplicas})  are those for which there exists a permutation $\eta$ lying on a geodesic linking every pair of permutations of $\bsig$. Defining:
\be\label{eq:compat-inv}
\nabla( \bsig ; \eta) = \sum_{1\le c_1<c_2\le D} \Big(  \lvert \sigma_{c_1} \eta^{-1}\rvert + \lvert \sigma_{c_2} \eta^{-1} \rvert  - \lvert \sigma_{c_1}\sigma_{c_2}^{-1}\rvert \Big) \ge 0 ,
\ee
then 
$ 
\min_{\eta\in S_n} \nabla( \bsig ; \eta) \ge 0
$
 vanishes if and only if $\bsig$ is compatible and if $\nabla( \bsig ; \eta)=0$, we say that $\eta$ \emph{renders $\bsig$ compatible}. 
\begin{lemma}
 Melonic invariants are compatible. If $\bsig\in S_n^D$ is melonic, there is a unique permutation $\eta$ which realizes $\nabla( \bsig ; \eta)=0$,  and it corresponds to the canonical pairing of $\bsig$. 
\end{lemma}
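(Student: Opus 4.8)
The plan is to reduce the whole statement to the two degree formulas already in hand, Theorem~\ref{thm:degree} and Theorem~\ref{thm:second-degree}, by writing $\nabla(\bsig;\eta)$ as an explicit linear combination of $\omega(\bsig)$, of $\bar\omega(\bsig;\eta)$, and of the number of pure connected components that get merged when the pairing $\eta$ is adjoined. I work throughout with $D\ge 3$, where Theorem~\ref{thm:second-degree} applies; the cases $D\le 2$ are matricial and immediate. First I would put $\nabla(\bsig;\eta)$ in closed form. In the double sum each distance $\lvert\sigma_c\eta^{-1}\rvert$ occurs exactly $D-1$ times (once for every other color), so $\sum_{c_1<c_2}\bigl(\lvert\sigma_{c_1}\eta^{-1}\rvert+\lvert\sigma_{c_2}\eta^{-1}\rvert\bigr)=(D-1)\,d(\bsig,\eta)$, while by the very definition of the degree $\sum_{c_1<c_2}\lvert\sigma_{c_1}\sigma_{c_2}^{-1}\rvert=\omega(\bsig)+(D-1)\bigl(n-K_\mathrm{p}(\bsig)\bigr)$. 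Hence
\[
\nabla(\bsig;\eta)=(D-1)\,d(\bsig,\eta)-\omega(\bsig)-(D-1)\bigl(n-K_\mathrm{p}(\bsig)\bigr).
\]

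Next I would eliminate $d(\bsig,\eta)$ using $\bar\omega(\bsig;\eta)=D K_\mathrm{p}(\bsig,\eta)-(D-1)K_\mathrm{p}(\bsig)-n+d(\bsig,\eta)$ from Theorem~\ref{thm:second-degree}. Substituting $d(\bsig,\eta)=\bar\omega(\bsig;\eta)-D K_\mathrm{p}(\bsig,\eta)+(D-1)K_\mathrm{p}(\bsig)+n$ and simplifying (the $(D-1)n$ terms cancel and $(D-1)^2+(D-1)=D(D-1)$) gives the key identity
\[
\nabla(\bsig;\eta)=(D-1)\,\bar\omega(\bsig;\eta)-\omega(\bsig)+D(D-1)\bigl(K_\mathrm{p}(\bsig)-K_\mathrm{p}(\bsig,\eta)\bigr).
\]

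Now suppose $\bsig$ is melonic. Theorem~\ref{thm:second-degree} supplies the canonical pairing $\eta_0$ with $\bar\omega(\bsig;\eta_0)=0$ and $K_\mathrm{p}(\bsig,\eta_0)=K_\mathrm{p}(\bsig)$; plugging these into the identity gives $\nabla(\bsig;\eta_0)=-\omega(\bsig)$. Since $\nabla\ge 0$ always (termwise triangle inequality) and $\omega(\bsig)\ge 0$ by Theorem~\ref{thm:degree}, both must vanish, so $\omega(\bsig)=0$ and $\nabla(\bsig;\eta_0)=0$. Thus $\eta_0$ renders $\bsig$ compatible, which proves that melonic invariants are compatible (and, as a by-product, re-derives $\omega(\bsig)=0$ for melonic $\bsig$ without invoking the $D\ge4$ converse of Theorem~\ref{thm:degree}). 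For uniqueness I would take any $\eta$ with $\nabla(\bsig;\eta)=0$; feeding $\omega(\bsig)=0$ into the identity and dividing by $D-1>0$ yields $\bar\omega(\bsig;\eta)+D\bigl(K_\mathrm{p}(\bsig)-K_\mathrm{p}(\bsig,\eta)\bigr)=0$. Here $\bar\omega(\bsig;\eta)\ge 0$ and $K_\mathrm{p}(\bsig,\eta)\le K_\mathrm{p}(\bsig)$, since adjoining the edges encoded by $\eta$ can only merge pure connected components, so both nonnegative summands vanish: $\bar\omega(\bsig;\eta)=0$ and $K_\mathrm{p}(\bsig,\eta)=K_\mathrm{p}(\bsig)$. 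By the uniqueness clause of Theorem~\ref{thm:second-degree}, the only such $\eta$ is $\eta_0$, finishing the proof.

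I do not expect a serious obstacle: the argument is pure bookkeeping once the correct linear combination is isolated. The one genuinely load-bearing step is spotting that identity relating $\nabla$, $\bar\omega$, $\omega$, and the component drop $K_\mathrm{p}(\bsig)-K_\mathrm{p}(\bsig,\eta)$; everything afterwards is sign-chasing together with the two elementary nonnegativities ($\nabla\ge0$ termwise and $K_\mathrm{p}(\bsig,\eta)\le K_\mathrm{p}(\bsig)$). The only care needed is to restrict to $D\ge 3$ so that Theorem~\ref{thm:second-degree} is available.
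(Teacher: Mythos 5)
Your proof is correct and follows essentially the same route as the paper: the identity you isolate, $\nabla(\bsig;\eta)=(D-1)\bar\omega(\bsig;\eta)-\omega(\bsig)+D(D-1)\bigl(K_\mathrm{p}(\bsig)-K_\mathrm{p}(\bsig,\eta)\bigr)$, is exactly the paper's relation \eqref{eq:nabla-and-degrees}, and the conclusion is drawn from Thm.~\ref{thm:degree} and the uniqueness clause of Thm.~\ref{thm:second-degree} just as in the paper. The only (harmless) variation is that you re-derive $\omega(\bsig)=0$ from the identity via the two non-negativities instead of quoting it directly.
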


\proof 
Since:
 \be
 \label{eq:two-perm-to-one-perm}
 \sum_{1\le c_1<c_2\le D}  \Big( \lvert \sigma_{c_1} \eta^{-1}\rvert + \lvert \sigma_{c_2} \eta^{-1} \rvert\Big)  = (D-1)\sum_{c=1}^D\lvert \sigma_{c} \eta^{-1}\rvert \;, 
 \ee
we have:
 \be
 \label{eq:nabla-and-degrees}
\frac{1}{D-1}\nabla( \bsig ; \eta) = 
\bar \omega(\bsig; \eta) - \frac{\omega(\bsig)}{D-1} + D\bigl[K_\mathrm{p}(\bsig) - K_\mathrm{p}(\bsig, \eta)\bigr].
\ee
For a melonic $\bsig$, Thm.~\ref{thm:degree}, 
and Thm.~\ref{thm:second-degree} imply that first 
$\omega(\bsig)=0$, and second there exists a unique $\eta$ corresponding to the canonical pairing such that $K_\mathrm{p}(\bsig) =K_\mathrm{p}(\bsig, \eta)$ and  $\bar \omega(\bsig; \eta) = 0$.\qed

\subsection{Order of dominance for the Gaussian scaling}
\label{sec:order-of-dominance-gaussian-scaling}

Rewriting the Gaussian scaling function in $r(\bsig) = n-\min d(\bsig,\eta)$ using Thm.~\ref{thm:second-degree} we conclude that the finite $N$ cumulants scale like: 
\be
\label{eq:Gaussian-asymptotics-with-degree}
\Phi_{ \bsig} [T, \bar T] \sim N^{ 1 - (D-1) (K_\mathrm{p}(\bsig)-1) - \min \bar \omega(\bsig;\eta)} \, \varphi_\bsig  \; ,
\ee
where the minimum is taken over the $\eta \in S_n$ such that $K_\mathrm{p}(\bsig, \eta) = 1$ and  $\varphi_{\bsig}$ is the number of $\eta\in S_n$ with $K_\mathrm{p}(\bsig, \eta) = 1$ which minimize $\bar \omega(\bsig;\eta)$.

\paragraph{First order invariants.} The first order contributions are the invariants with maximal scaling.  
\begin{theorem}
\label{thm:melonic-gaussian}
For $D\ge 3$ and $T$ the order $D$ complex pure Gaussian tensor, the invariants $\bsig$ with maximal scaling 
$r(\bsig)$ are the \emph{purely connected, melonic} invariants. Furthermore, the corresponding \emph{asymptotic moment $\varphi_\bsig$ is one}, and the first correction is of order $N^{3-D}$:
\[
\Phi_{ \bsig} [T, \bar T]_{\bigl\lvert{\substack{{K_\mathrm{p}(\bsig)=1}\\{\omega(\bsig)=0}}}} = N + O({1}/{N^{D-3}}) \;.
\]
\end{theorem}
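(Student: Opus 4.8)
The plan is to read off all three assertions from the rewritten scaling formula. Starting from $r(\bsig) = n - \min_{\eta:\,K_\mathrm{p}(\bsig,\eta)=1} d(\bsig,\eta)$ in \eqref{eq:def-scling-purely-connected} and substituting Theorem~\ref{thm:second-degree} to trade $d(\bsig,\eta)$ for $\bar\omega(\bsig;\eta)$, one lands exactly on \eqref{eq:Gaussian-asymptotics-with-degree}, i.e.
\[
r(\bsig) = 1 - (D-1)\bigl(K_\mathrm{p}(\bsig)-1\bigr) - \min_{\eta:\,K_\mathrm{p}(\bsig,\eta)=1}\bar\omega(\bsig;\eta).
\]
Both subtracted terms are non-negative (the first since $D\ge 3$ and $K_\mathrm{p}(\bsig)\ge 1$, the second by Theorem~\ref{thm:second-degree}), so $r(\bsig)\le 1$, with equality iff $K_\mathrm{p}(\bsig)=1$ and some admissible $\eta$ achieves $\bar\omega(\bsig;\eta)=0$. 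By Theorem~\ref{thm:second-degree} the latter forces $(\eta,\bsig)$, hence $\bsig$, to be melonic; conversely, for purely connected melonic $\bsig$ the canonical pairing realizes $K_\mathrm{p}(\bsig,\eta)=K_\mathrm{p}(\bsig)=1$ and $\bar\omega=0$. This identifies the maximal-scaling invariants with the purely connected melonic ones and gives $r=1$.

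For the moment value, recall from \eqref{eq:asympt-moment-gaussian} that $\varphi_\bsig$ counts the admissible $\eta$ minimising $\bar\omega$, i.e. here those with $\bar\omega(\bsig;\eta)=0$; by the uniqueness clause of Theorem~\ref{thm:second-degree} this $\eta$ is unique for purely connected melonic $\bsig$, so $\varphi_\bsig=1$. The first simplification I would then exploit is that when $\bsig$ is purely connected the $D$-coloured graph $\bsig$ is already connected, so adding any pairing keeps it connected and $K_\mathrm{p}(\bsig,\eta)=1$ for \emph{every} $\eta\in S_n$; the connectivity constraint in \eqref{eq:sum-over-wick-pairings} is vacuous. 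Using $\bar\omega(\bsig;\eta)=1-n+d(\bsig,\eta)$ throughout, the moment becomes the unconstrained sum $\Phi_{\bsig}[T,\bar T]=\sum_{\eta\in S_n} N^{1-\bar\omega(\bsig;\eta)}$, whose leading term is the canonical pairing ($\bar\omega=0$, coefficient $1$, giving $N$). Thus $\Phi_{\bsig}=N+O(N^{3-D})$ is equivalent to the \emph{gap statement}: every $\eta$ other than the canonical pairing has $\bar\omega(\bsig;\eta)\ge D-2$.

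I expect this gap to be the main obstacle, and I would prove it by induction on $n$ along the recursive melonic structure. For $n\ge 2$, choose a removable melonic pair $\{w,b\}$ joined by the $D-1$ edges of colours $\neq c_0$, so the canonical pairing satisfies $\eta_0(w)=b$. If the arbitrary $\eta$ satisfies $\eta(w)=b$, then $\{w,b\}$ is a melonic pair of the $(D+1)$-coloured graph $(\eta,\bsig)$ (now joined by $D$ parallel edges), and contracting it preserves the degree; since $\omega(\bsig)=0$ this yields $\bar\omega(\bsig;\eta)=\bar\omega(\bsig';\eta')$ for the contracted purely connected melonic $\bsig'$ on $n-1$ vertices, to which the induction hypothesis applies (with $\eta=\eta_0$ iff $\eta'$ is canonical).

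The remaining case, $\eta(w)=b_1\neq b$, carries the arithmetic. Set $w_1=\eta^{-1}(b)$ and $\eta^{*}=\eta\,(w\,w_1)$, so that $\eta^{*}(w)=b$ falls in the previous case and $\bar\omega(\bsig;\eta^{*})\ge 0$. From $\eta=\eta^{*}(w\,w_1)$ one gets $\sigma_c\eta^{-1}=\bigl(\sigma_c(\eta^{*})^{-1}\bigr)(b\,b_1)$, so each colour $c$ shifts the cycle count of $\sigma_c(\eta^{*})^{-1}$ by $\pm1$. The key point is that for every colour $c\neq c_0$ the black vertex $b$ is a fixed point of $\sigma_c(\eta^{*})^{-1}$ (since $\eta^{*}$ pairs $w$ with $b$ and the melon edge gives $\sigma_c(w)=b$); hence $b$ and $b_1\neq b$ lie in distinct cycles and multiplication by $(b\,b_1)$ merges them, contributing $-1$, so only the single colour $c_0$ can contribute $+1$. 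Because $K_\mathrm{p}(\bsig,\eta)=K_\mathrm{p}(\bsig,\eta^{*})=1$ keeps $\bar\omega=1-n+d$ valid for both, this gives $\bar\omega(\bsig;\eta)-\bar\omega(\bsig;\eta^{*})\ge(D-1)-1=D-2$, whence $\bar\omega(\bsig;\eta)\ge D-2$. The points still to nail down rigorously are the standard folklore facts that a melonic invariant admits such a removable pair and that contraction preserves melonicity, pure-connectedness and the degree; these are where I would be most careful.
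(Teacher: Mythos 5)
Your proposal is correct. For the identification of the maximal-scaling invariants and for $\varphi_\bsig=1$ you follow exactly the paper's route: rewrite $r(\bsig)$ via Theorem~\ref{thm:second-degree} as $1-(D-1)(K_\mathrm{p}(\bsig)-1)-\min\bar\omega$, note both subtracted terms are non-negative, and use the uniqueness of the degree-zero pairing. Where you genuinely diverge is the order of the correction: the paper does not prove this at all, stating only that ``the only point not contained in Thm.~\ref{thm:second-degree} is the order of the correction'' and deferring to the reference \cite{Gurau-Schaeffer}. You instead supply a self-contained proof of the gap statement $\bar\omega(\bsig;\eta)\ge D-2$ for every non-canonical $\eta$, by induction along the melonic reduction. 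I checked the two cases: when $\eta(w)=b$ the $D$-dipole contraction of $(\eta,\bsig)$ does preserve $\bar\omega$ (directly from $\bar\omega=1-n+d$: the distances for $c\ne c_0$ are unchanged since $b$ is a fixed point of $\sigma_c\eta^{-1}$, while $|\sigma_{c_0}\eta^{-1}|$ and $n$ each drop by one), and when $\eta(w)=b_1\ne b$ your factorization $\sigma_c\eta^{-1}=\sigma_c(\eta^*)^{-1}(b\,b_1)$ together with the fixed-point observation forces a cycle merge in $D-1$ colours, giving $d(\bsig,\eta)\ge d(\bsig,\eta^*)+D-2$ and hence the bound via $\bar\omega(\bsig;\eta^*)\ge 0$. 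The folklore ingredients you flag (existence of a removable $(D-1)$-dipole, preservation of melonicity and pure connectedness under contraction) are exactly the recursive definition of melonic graphs given in Sec.~\ref{sub:Melo}, so nothing is missing. What your approach buys is a proof of the $O(N^{3-D})$ error term that is internal to the paper's formalism rather than imported; what it costs is the extra inductive bookkeeping that the citation avoids.
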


The only point not contained in Thm.~\ref{thm:second-degree} is the order of the correction, see e.g.~\cite{Gurau-Schaeffer}.
The fact that $\varphi_\bsig=1$ is to be compared with the $D=2$ case \eqref{eq:asymptotic-moments-pure-gaussian}, for which one obtains the Catalan number $C_n$. The melonic invariants can be enumerated and the number of connected melonic invariants is the Fuss-Catalan number \cite{critical}, which is to be
compared with the $D=2$ case for which there is only one such invariant at each $n$.
 
\paragraph{Purely connected invariants.}The order of dominance $1 +  (D-1) (K_\mathrm{p}(\bsig)-1) + \min \bar \omega(\bsig;\eta)$
of an invariant captures the amount by which its scaling is suppressed with respect to the first order invariants. The invariants of orders $k \in \{2,\ldots, D-1\}$ are purely connected and such that $\min \bar \omega = k-1$ \cite{SYK}. For order $D$ and above, both $K_\mathrm{p}(\bsig)$ and $\min \bar \omega(\bsig,\eta)$ play a role in determining the order.
This is new with respect to the $D=2$ case, for which there is no equivalent of $\min \bar \omega(\bsig,\eta)$. For purely connected trace-invariants, \eqref{eq:Gaussian-asymptotics-with-degree} simplifies to:  
\be
\label{eq:Gaussian-asymptotics-connected}
\Phi_{ \bsig} [T, \bar T] _{\bigl\lvert_{K_\mathrm{p}(\bsig)=1}} \sim N^{ 1 - \min_{\eta\in S_n} \bar \omega(\bsig;\eta)} \, \varphi_\bsig \;,
\ee
and one can in principle \cite{SYK, Lionni-thesis} identify for any order $k\ge 2$ the purely connected trace-invariants with $\min_{\eta\in S_n} \bar \omega(\bsig;\eta) =k-1$. The contributions of order 2 and 3 are for  instance given in (19) and (25), (26) and (27) of \cite{SYK} and are asymptotically enumerated in \cite{SYK2}. In practice, this becomes rapidly quite tedious. Given a purely connected trace-invariant $\bsig$, short of a full computation, it is not obvious how to relate $\min_{\eta\in S_n} \bar \omega(  \bsig;\eta)$ to the properties of $\bsig$. 

\begin{proposition}
\label{prop:lower-bound-on-bar-omega}
From  \eqref{eq:nabla-and-degrees} it follows that
for $\bsig\in S_n^D$ purely connected we have:
\[
\min_{\eta\in S_n} \bar \omega(\bsig;\eta)  \ge \frac{\omega(\bsig)}{D-1}, 
\]
with equality if and only if $\bsig$ is compatible. 
\end{proposition}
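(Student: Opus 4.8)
The plan is to read the statement directly off the identity \eqref{eq:nabla-and-degrees}, whose whole point is to relate the three quantities $\nabla(\bsig;\eta)$, $\bar \omega(\bsig;\eta)$ and $\omega(\bsig)$, modulo a correction term measuring the defect $K_\mathrm{p}(\bsig) - K_\mathrm{p}(\bsig,\eta)$. The crux of the argument is that, under the purely connected hypothesis, this correction term vanishes identically in $\eta$, so that \eqref{eq:nabla-and-degrees} collapses to a clean affine relation between $\bar \omega$ and $\omega$ governed only by the non-negative quantity $\nabla$.

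First I would establish that for a purely connected $\bsig$ one has $K_\mathrm{p}(\bsig,\eta) = K_\mathrm{p}(\bsig) = 1$ for \emph{every} $\eta\in S_n$. By definition $\bsig$ purely connected means $\Pi_\mathrm{p}(\bsig) = 1_{n,\bar n}$, the one-block bipartite partition. Since $\Pi_\mathrm{p}(\bsig,\eta) = \Pi_\mathrm{p}(\eta)\vee \Pi_\mathrm{p}(\bsig)$ and the join of any partition with $1_{n,\bar n}$ is again $1_{n,\bar n}$, adding the extra color $\eta$ cannot split the single pure connected component; hence $K_\mathrm{p}(\bsig,\eta)=1$. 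Consequently the bracket $D\bigl[K_\mathrm{p}(\bsig)-K_\mathrm{p}(\bsig,\eta)\bigr]$ in \eqref{eq:nabla-and-degrees} is zero for all $\eta$, leaving
\[
\bar \omega(\bsig;\eta) \;=\; \frac{\omega(\bsig)}{D-1} \;+\; \frac{1}{D-1}\,\nabla(\bsig;\eta)\;.
\]

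From here the conclusion is immediate: since $\nabla(\bsig;\eta)\ge 0$ and $D\ge 2$, each term $\bar \omega(\bsig;\eta)$ is bounded below by $\omega(\bsig)/(D-1)$, and minimizing over $\eta\in S_n$ gives $\min_{\eta}\bar \omega(\bsig;\eta) = \tfrac{\omega(\bsig)}{D-1} + \tfrac{1}{D-1}\min_{\eta}\nabla(\bsig;\eta)\ge \tfrac{\omega(\bsig)}{D-1}$. Equality holds precisely when $\min_{\eta}\nabla(\bsig;\eta)=0$, which by the definition following \eqref{eq:compat-inv} is exactly the statement that $\bsig$ is compatible. I do not anticipate a genuine obstacle here; the only point needing care is the monotonicity claim $K_\mathrm{p}(\bsig,\eta)=K_\mathrm{p}(\bsig)$, which must be invoked for purely connected $\bsig$ rather than assumed to hold in general (indeed Thm.~\ref{thm:second-degree} shows it can fail when $\bsig$ is merely melonic but not purely connected).
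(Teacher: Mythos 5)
Your proof is correct and is exactly the argument the paper intends: the proposition is stated as a direct consequence of \eqref{eq:nabla-and-degrees}, and the one nontrivial step — that $K_\mathrm{p}(\bsig,\eta)=K_\mathrm{p}(\bsig)=1$ for all $\eta$ when $\bsig$ is purely connected, so the bracketed correction term vanishes — is precisely the observation needed, which you justify correctly via the join with the one-block partition. The remaining deductions ($\nabla\ge 0$ gives the bound, and equality at the minimum is exactly the compatibility condition from \eqref{eq:compat-inv}) match the paper's route.
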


This gives a lower bound on the order of contribution of an invariant $\bsig$.  There are compatible and non-compatible invariants at any order,  as for every $k\ge 2$ it is easy to construct both compatible and non-compatible purely connected invariants satisfying $\min_{\eta\in S_n} \bar \omega( \bsig;\eta) =k-1$.\footnote{This relies on the method of \cite{Gurau-Schaeffer}. One can also prove that the probability  for an invariant $\bsig\in S_n^D$ of degree $(k-1)(D-1)$ to be compatible tends to one when $n$ goes to infinity.}

\paragraph{Non-connected invariants.} The first contribution with $K_\mathrm{p}(\bsig)>1$ arises at order $D$ and corresponds to $\bsig$ consisting of two purely connected melonic graphs. This generalizes to arbitrary number of pure connected components.
\begin{theorem}
For $D\ge 3$ and $T$ the order $D$ complex pure Gaussian tensor, the invariants $\bsig$ with maximal scaling $r(\bsig)$ at fixed number of connected components $q=K_\mathrm{p}(\bsig)$ are the disjoint unions of $q$ melonic purely connected components:
\[
\Phi_{ \bsig} [T, \bar T]_{\bigl\lvert{\substack{{K_\mathrm{p}(\bsig)=q}\\{\omega(\bsig)=0}}}} = N^{1-(D-1)(q-1)}\Bigl(\varphi_\bsig + O({1}/{N^{D-2}})\Bigr),
\]
where $\varphi_\bsig$ is the number of $\eta$ such that $(\bsig,\eta)$ is melonic.\footnote{It should be straightforward to enumerate this family.} 
\end{theorem}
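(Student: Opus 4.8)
The plan is to read everything off the refined scaling formula \eqref{eq:Gaussian-asymptotics-with-degree}. At fixed $q=K_\mathrm{p}(\bsig)$ the prefactor $N^{1-(D-1)(q-1)}$ is frozen, so maximizing $r(\bsig)$ is equivalent to minimizing $\min_{\eta\,:\,K_\mathrm{p}(\bsig,\eta)=1}\bar\omega(\bsig;\eta)$. Since $\bar\omega(\bsig;\eta)\ge 0$ by Thm~\ref{thm:second-degree}, this immediately gives the upper bound $r(\bsig)\le 1-(D-1)(q-1)$, with equality precisely when some \emph{admissible} $\eta$ (one with $K_\mathrm{p}(\bsig,\eta)=1$) achieves $\bar\omega(\bsig;\eta)=0$.

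For the ``only if'' direction I would invoke Thm~\ref{thm:second-degree} directly: $\bar\omega(\bsig;\eta)=0$ forces $(\eta,\bsig)$ melonic, which implies $\bsig$ is melonic, i.e. $\omega(\bsig)=0$. Since $\omega$ is additive over the pure connected components (all of $d$, $n$, $K_\mathrm{p}$ are), $\omega(\bsig)=0$ means every component $\bsig_i$ is melonic. Thus any invariant attaining the maximal scaling at fixed $q$ must be a disjoint union of $q$ melonic purely connected components.

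The main work is the converse: given $\bsig=\bsig_1\sqcup\cdots\sqcup\bsig_q$ with each $\bsig_i$ melonic and purely connected, I must produce an admissible $\eta$ with $K_\mathrm{p}(\bsig,\eta)=1$ and $\bar\omega(\bsig;\eta)=0$. I would build it by iterated connected sum, viewing $\eta$ as an extra colour $D+1$ so that $(\eta,\bsig)$ is a $(D+1)$-coloured graph whose degree equals $\omega(\bsig)+\bar\omega(\bsig;\eta)$. Start from the canonical pairings $\eta_i$, for which each $(\eta_i,\bsig_i)$ is connected and melonic (degree zero). Then glue two melonic pieces along a colour-$(D+1)$ edge: if $\eta_1(w_1)=b_1$ and $\eta_2(w_2)=b_2$, replace these by $\eta(w_1)=b_2,\ \eta(w_2)=b_1$. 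The key computation is that this leaves the colours $1,\dots,D$ untouched, raises each $\lvert\eta\sigma_c^{-1}\rvert$ by exactly one (the transposition $(b_1b_2)$ merges two cycles lying in distinct components), contributing $+D$ to the pairwise-distance sum, while the connectivity term drops by $D$ as $K_\mathrm{p}$ passes from $2$ to $1$; the two effects cancel, so the $(D+1)$-coloured degree is additive and the glued graph is again connected and melonic. Inducting on $q$ then yields an $\eta$ with $K_\mathrm{p}(\bsig,\eta)=1$ and $\omega(\bsig,\eta)=0$, hence $\bar\omega(\bsig;\eta)=0$, saturating the bound. \textbf{This gluing computation is the crux}; the only thing to check carefully is that the two cycles hit by $(b_1b_2)$ really sit in different components before gluing, which holds because colours $1,\dots,D$ never connect the two pieces.

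Finally, \eqref{eq:asympt-moment-gaussian} identifies the asymptotic moment as $\varphi_\bsig=\#\{\eta\,:\,K_\mathrm{p}(\bsig,\eta)=1,\ (\bsig,\eta)\text{ melonic}\}$, i.e. the number of connected melonic completions just constructed, which is nonempty by the previous step. The order of the first correction should follow, as in Thm~\ref{thm:melonic-gaussian} and \cite{Gurau-Schaeffer}, from bounding the next-smallest value of $\bar\omega(\bsig;\eta)$ among admissible $\eta$; I expect this to be the one genuinely quantitative point left, the rest being the degree bookkeeping above.
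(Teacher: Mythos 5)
Your proposal is correct and follows the route the paper itself intends: the theorem is stated without an explicit proof, as a consequence of the scaling formula \eqref{eq:Gaussian-asymptotics-with-degree} together with Thm.~\ref{thm:second-degree} (whose last clause is exactly the existence of connected melonic completions that your gluing construction establishes), with the additivity of $\omega$ over pure connected components handling the converse and the order of the correction deferred to \cite{Gurau-Schaeffer}, just as in Thm.~\ref{thm:melonic-gaussian}. Your transposition bookkeeping for the connected sum (distance up by $D$, $DK_\mathrm{p}(\bsig,\eta)$ down by $D$, so $\bar\omega$ stays at zero) is the right computation and correctly identifies the one point that needs checking.
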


For non-melonic, non purely connected invariants, the situation is worse. The task of computing the Gaussian scaling function for an arbitrary invariant is computationally hard: beyond checking all the Wick pairings, no procedure is known to read off $\min \bar \omega(\bsig,\eta)$ or to construct the $\eta$ minimizing $\bar \omega(\bsig;\eta)$ under the constraint $K_\mathrm{p}(\bsig, \eta) = 1$.
As general exact results are lacking, the best one can  
do is to search for convenient bounds on the scaling function. For any invariant $\bsig$ an obvious bound on the Gaussian scaling is:
\be
\label{eq:naive-bound}
 r( \bsig ) = 1 - (D-1) (K_\mathrm{p}(\bsig)-1) - 
 \min_{\eta_\in S_n,K_{\mathrm{p}} (\bsig,\eta) =1}
\bar \omega(\bsig,\eta)\le D - \sum_{i=1}^{ K_\mathrm{p}(\bsig)} (D-1) \; .
\ee
The bound is saturated by the melonic family and is additive on the pure connected components.

The whole idea is to try to improve this naive additive bound as much as possible. The strategy is to search for bounds consisting in a constant term $D$,\footnote{ See Thm 4.2.2 in the published version of \cite{Lionni-thesis}. } the scaling of the expectation of $1$, plus a piece which is additive on the pure connected components. The whole point is to find optimal upper bounds: for various families $\mathbb{B}$ of purely connected invariants one searches for some numbers $b(\bsig_i)$ for $\bsig_i\in \mathbb{B}$ such that:
\be
\label{eq:asympt-gaussian-expected1}
r\Bigl(\bigcup_{i} \bsig_i \Bigr) = D - \sum_{i} (D-1) - 
 \min_{\eta_\in S_n,K_{\mathrm{p}} (\bsig,\eta) =1}
\bar \omega(\bsig,\eta)\le D - \sum_i b(\bsig_i)\;,
\ee 
and the bound is \emph{tight},\footnote{It is immediate to prove that if such a bound holds and it is saturated for all the connected invariants then the scaling is strictly subadditive.} in the sense that for any $\bsig_l \in \mathbb{B}$, there exists some $\bigcup_{j}\bsig_j$ with $\bsig_j\in \mathbb{B}$, such that the bound is saturated for $ \bigcup_{j}\bsig_j \cup \bsig_l$.
The naive bound \eqref{eq:naive-bound} would correspond to $b(\bsig_i)=D-1$, but it is not tight, and the point is to improve this by increasing $b(\bsig_i)$ as much as possible: while for purely connected melonic invariants $b(\bsig_i) = D-1$ cannot be improved, it turns out that for other invariants this can be improved to some $b(\bsig_i)> D-1$, called  \emph{optimal} (due to the tightness)
\cite{enhanced-1, Walsh-maps, octa, multicritical, bonzom-review, Lionni-thesis, Bonzom-balls}.

For $D<6$, the maximal $b(\bsig_i)$ allowed by the scaling of the purely connected invariants, $b(\bsig_i) = D-1 +  \min_{\eta \in S_n} \bar \omega( \bsig_i;\eta)$, 
works in all the examples treated so far\footnote{Rigorously, one should denote this $ b_{ \mathbb{B}}(\bsig_i)$, as one proves the bound for a fixed family of invariants. It is an open question whether these improved bounds change by including more invariants.}, and \eqref{eq:asympt-gaussian-expected1} is always saturated.  However, as detailed in the last section of \cite{Lionni-thesis} this cannot be true in general. For $D=6$ there exists a connected $\bsig_0\in S_3^6$ such that $r(\bsig_0)=-4$ but $r(\bsig_0\cup\bsig_0)=-12 $, inconsistent with this choice of $b(\bsig_i)$.\footnote{It would give $b(\bsig_0)=10$ but then $r(\bsig_0\cup\bsig_0)=-12 > 6-20=-14$ contradicts \eqref{eq:asympt-gaussian-expected1}. Instead the correct choice is likely $b(\bsig_0)=9$ \cite{Lionni-thesis}, so that \eqref{eq:asympt-gaussian-expected1} is a strict inequality for $\bsig_0$. This phenomenon is also found for the real Gaussian tensor for $n=2$ and $D=3$, for instance. } In particular: 
\[
\frac {k_2\bigl(\Tr_{\bsig_0}(T, \bar T), \Tr_{\bsig_0}(T, \bar T)\bigr)}{\mathbb{E}\bigl[\Tr_{\bsig_0}(T, \bar T)\bigr]^2} \asymp N^{-4} \;,
\] 
which is less suppressed than what the intuition from measure  concentration estimates would suggest \cite{ledoux}. 
To our knowledge this counter-intuitive scaling is specific to the tensor realm.  
However, we stress that the Gaussian scaling function is clearly subadditive in this case, that is, this example does not contradict our conjecture~\ref{conj:subadi}. 

\subsection{Random tensor models with invariant potentials}
\label{sec:random-tensor-models}

The expectations of a perturbed Gaussian tensor measure are given by:
\be
\label{eq:distrib-random-tens}
\bE \left[ f(T,\bar T) \right] = \int \frac{dT d\bar T} { \mathcal{N} } \, e^{- N^{D-1}T\cdot\bar T
+  V(T,\bar T)
} f(T,\bar T) \;,
\ee 
where $\mathcal{N}$ is such that $\mathbb{E}(1)=1$,  and  the perturbation potential is an invariant: 
\be
\label{eq:potential}
V(T,\bar T) \bigl[\{z_\bsig\}\bigr]= \sum_{n\ge 2}  \sum_{\bsig\in S^D_n} \frac{N^{ \zeta(\bsig)} z_\bsig}{c(\bsig)}\, \Tr_{\bsig}(T, \bar T) \;,
\ee
with $\zeta(\bsig)\ge 0$ a choice of scaling, $c(\bsig)$ suitable combinatorial factors and $z_\bsig\in \mathbb{C}$ the coupling constants. The convergence of the integral is ensured for some choices of the couplings. We denote $\mathbb{B}$ the (potentially infinite) set of trace-invariants for which $z_\bsig\neq 0$. The Gaussian case is recovered by setting all the couplings to $0$, and will be denoted $T_0$ in this subsection.

\

The characterization of the asymptotic moments of such distributions is usually approached ``perturbatively'', that is by expanding
$\exp V(T,\bar T)$ in Taylor series and exchanging the Gaussian integral and the sum. The cumulants are then expressed as divergent sums over connected $(D+1)$-colored graphs. The limit $N\rightarrow \infty$ selects sub-series of graphs that optimize some combinatorial constraints and are summable when the coupling constants are sufficiently small. The fact that this procedure provides the true $N\rightarrow \infty$ asymptotic of the cumulants of \eqref{eq:distrib-random-tens} is proved by constructive methods \cite{Gurau-universality,Gurau:2013pca,Gurau-book}.

\

The advantage of the perturbative approach is that it reduces the computation of the asymptotic scaling of the cumulants 
$\Phi_{ \bsig} [T, \bar T] =k_q \bigl(\Tr_{\bsig_1 }(T, \bar T), \ldots, \Tr_{\bsig_q}(T,\bar T)\bigr)$ for $T$ distributed as in \eqref{eq:distrib-random-tens}, to the evaluation of the dominant asymptotic contributions of Gaussian  cumulants of the form: 
\be
k_{q+l} \Bigl(\Tr_{\bsig_1 }(T_0, \bar T_0), \ldots, \Tr_{\bsig_q }(T_0,\bar T_0), \Tr_{\btau_1}(T_0, \bar T_0), \ldots, \Tr_{\btau_l}(T_0,\bar T_0)\Bigr)  \;,
\ee
where $T_0$ is the order $D$ pure Gaussian tensor of the previous subsections, and the $\btau_i$ can be any invariant in $\mathbb{B}$.

Denoting by $\bsig$ and $\btau$ the disjoint unions of the  $\bsig_i$ and $\btau_j$, respectively, and letting $n$ be  the total number of tensors $T_0$, so that $\bsig\cup \btau\in S_n^D$, and $\eta\in S_n$ be an arbitrary permutation, the asymptotic scaling $r_V(\bsig)$ of $\Phi_{ \bsig} [T, \bar T] $ is obtained \eqref{eq:asympt-gaussian} by maximizing $n- d(\bsig\cup \btau, \eta)$ over $\eta\in S_n$ such that $K_\mathrm{p}(\bsig\cup \btau, \eta) = 1$. 
As discussed in the previous section, a tight additive bound  \eqref{eq:asympt-gaussian-expected1} on the Gaussian scaling holds in principle for some optimal $b$, and this bound translates here to:
\be
\label{eq:potential-exponent-to-maximize}
\begin{split}
& r_V(\bsig)\le \max_{ \btau_j  \in \mathbb{B }} \bigg\{D - \sum_{i=1}^q b(\bsig_i) + \sum_{j=1}^l\bigl(\zeta(\btau_j) - b(\btau_j)\bigr) \bigg\} \;  , \crcr
& b(\btau_j) \ge D-1 \; , \qquad  b(\btau_j^{\text{melonic} } ) = D-1 \; ,
\end{split}
\ee
where this bound is also optimal for $r_V$.\footnote{Similar issues as for the Gaussian case hold here: the question of the  independence with respect to $\mathbb{B}$ and convergence issues \cite{Lionni-thesis}, which are also related here to the existence of the maximum, as well as the question of whether this bound is always saturated for $D<6$.}

The statements in the theorem below are either reformulations of the results in \cite{Gurau-universality} and \cite{enhanced-1, Walsh-maps,  multicritical, bonzom-review, Lionni-thesis}, or follow from  \eqref{eq:potential-exponent-to-maximize}. It relates the optimal $b$'s from the tight additive upper bound \eqref{eq:asympt-gaussian-expected1} on the Gaussian scaling $r$, to the $\zeta$'s in the potential \eqref{eq:potential}.

\begin{theorem}[Gaussian universality] 
\label{thm:gaussian-universality}
As a function of the scaling $\zeta(\btau)$ of the potential $V$ in the distribution \eqref{eq:distrib-random-tens}, we obtain different large $N$ limits: 
\begin{itemize}
 \item If $\zeta(\btau) = D-1$ for all the invariants $\btau \in \mathbb{B}$, then the distribution is asymptotically Gaussian with a modified covariance  \cite{Gurau-universality}:
\be
\label{eq:gau1}
\Phi_{ \bsig} [T, \bar T] \sim_N \Phi_{ \bsig} [T_0, \bar T_0] \; G^n \;,  \qquad \forall\bsig\in S^D_n \;,
\ee
where the notation $\sim_N$ signifies that the two are identical at first order in the limit $N\rightarrow \infty$.
Furthermore, denoting $n_\btau$ the cardinal of the set on which $\btau$ acts, $G$ is the unique power series solution of the equation:
\be
\label{eq:gau2}
G= 1 - \sum_{\substack{{\btau\, \in \, \mathbb{B}}\\{\mathrm{melonic} }}} n_\btau z_\btau G^{n_\btau} \; . 
\ee

We stress that this holds independently on the existence of a tight additive upper bound. The same holds if $\zeta(\bsig) = D-1$ for the melonic graphs and $\zeta(\bsig) <D-1$ for the non-melonic ones.\footnote{This includes the case where the scaling $\zeta(\bsig)$ is equal to  $-\omega(\bsig)$, which is common in the literature \cite{Gurau-universality, uncoloring}. }
\end{itemize}

Assuming the existence of a tight additive upper bound on the Gaussian scaling \eqref{eq:asympt-gaussian-expected1} involving optimal $b(\btau)>D-1$ for the non-melonic $\btau\in \mathbb{B}$  and  of the upper bound \eqref{eq:potential-exponent-to-maximize}, then:

\begin{itemize}

\item If for all $\btau \in \mathbb{B}$, $\zeta(\btau) < b(\btau)$,\footnote{For instance if for all $\btau \in \mathbb{B}$, $\zeta(\btau) < D-1$.} then the distribution is asymptotically Gaussian identical with the distribution of $T_0$, in the sense that for any trace-invariant $\bsig$:
\be
\Phi_{ \bsig} [T, \bar T] \sim \Phi_{ \bsig} [T_0, \bar T_0] \; . 
\ee

\item  If $\zeta(\bsig) =D-1$ for the melonic graphs and 
 $D-1\le \zeta(\btau) < b(\btau) $ for the non-melonic ones, the Gaussian behavior of 
  \eqref{eq:gau1} and \eqref{eq:gau2} is obtained again.
 
\item If for some non-melonic $\btau \in \mathbb{B}$, $\zeta(\btau) = b(\btau)$, then the distribution is not necessarily asymptotically Gaussian \cite{enhanced-1, Walsh-maps,  multicritical, bonzom-review, Lionni-thesis}.\footnote{Note that the scaling $r_V$ may still coincide with the Gaussian scaling $r$ considered in this paper even when the distribution is not asymptotically Gaussian. Choosing $\zeta(\bsig)=b(\bsig)$ for all $\bsig$, the upper bounds \eqref{eq:asympt-gaussian-expected1} and \eqref{eq:potential-exponent-to-maximize} coincide, and are saturated for all known choices of $\mathbb{B}$ for $D=3,4,5$. }

\item If for some $\btau \in \mathbb{B}$,  $\zeta(\btau) > b(\btau)$, then the distribution does not admit a large $N$ limit.

\end{itemize}
\end{theorem}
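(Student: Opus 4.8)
The plan is to work entirely within the perturbative framework already set up above. Taylor expanding $e^{V(T,\bar T)}$ and exchanging the Gaussian integral with the sum reduces each cumulant $\Phi_{\bsig}[T,\bar T]$ of the perturbed measure \eqref{eq:distrib-random-tens} to a weighted sum over the number $l$ of inserted potential vertices and over the invariants $\btau_1,\dots,\btau_l\in\mathbb{B}$ those vertices carry, of the \emph{connected} Gaussian cumulants $k_{q+l}(\Tr_{\bsig_1}(T_0,\bar T_0),\dots,\Tr_{\btau_l}(T_0,\bar T_0))$, each weighted by $\prod_j N^{\zeta(\btau_j)}z_{\btau_j}/c(\btau_j)$. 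By \eqref{eq:asympt-gaussian}--\eqref{eq:asympt-moment-gaussian} such a Gaussian cumulant scales as $N^{r(\bsig\cup\btau)}$, so a configuration with insertions $\btau_1,\dots,\btau_l$ contributes at order $N^{r(\bsig\cup\btau)+\sum_j\zeta(\btau_j)}$; invoking the assumed tight additive bound \eqref{eq:asympt-gaussian-expected1} this is at most $N^{D-\sum_i b(\bsig_i)+\sum_j(\zeta(\btau_j)-b(\btau_j))}$, i.e.\ \eqref{eq:potential-exponent-to-maximize}. The whole argument then hinges on the sign of each marginal exponent $\zeta(\btau_j)-b(\btau_j)$, which decides whether inserting a copy of $\btau_j$ lowers, preserves, or raises the power of $N$. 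It is useful to record that for a purely connected $\btau$ the constraint $K_\mathrm{p}(\btau,\eta)=1$ is automatic, so $r(\btau)=1-\min_\eta\bar\omega(\btau;\eta)=D-b(\btau)$, and in particular $b=D-1$ (hence $r=1$) exactly on the melonic family.

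The two extreme cases then follow immediately. If $\zeta(\btau)<b(\btau)$ for every $\btau\in\mathbb{B}$ (in particular whenever all $\zeta(\btau)<D-1$, since $b\ge D-1$), each insertion strictly lowers the exponent, so the sum is dominated by $l=0$, which is exactly the bare cumulant $\Phi_{\bsig}[T_0,\bar T_0]$; for a purely connected $\bsig$ the $l=0$ term saturates the bound by the relation $r(\bsig)=D-b(\bsig)$ above, so the $l\ge 1$ terms are genuinely subleading, and the general case follows from the multiplicative connected-cumulant structure. This proves the bullet $\Phi_{\bsig}[T,\bar T]\sim\Phi_{\bsig}[T_0,\bar T_0]$. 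Conversely, if $\zeta(\btau)>b(\btau)$ for some $\btau$, tightness of \eqref{eq:asympt-gaussian-expected1} lets one \emph{realize} the enhancement: inserting arbitrarily many copies of that single $\btau$ produces contributions whose exponent $D-\sum_i b(\bsig_i)+l(\zeta(\btau)-b(\btau))$ grows without bound, so the maximum in \eqref{eq:potential-exponent-to-maximize} does not exist and no finite large $N$ limit can be defined. Here it is essential that the bound be saturated, not merely an upper bound.

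The heart of the theorem is the marginal case $\zeta(\btau)=b(\btau)$, which on melonic invariants reads $\zeta=D-1$. Under the first and third hypotheses---melonic couplings marginal, all other couplings strictly suppressed---only melonic insertions survive at leading order, and since these neither raise nor lower the exponent, \emph{all} numbers $l$ of melonic insertions contribute at the same order and must be resummed. I would carry out this resummation using the recursive structure of melonic graphs (Thm.~\ref{thm:second-degree} and the canonical pairing): the dominant melonic dressing of a bare Wick propagator is itself a melonic two-point insertion, which organizes the sum into a self-consistent dressing of a single effective covariance $G\,N^{1-D}$. Summing one melonic insertion on the dressed line yields the Schwinger--Dyson fixed-point equation $G=1-\sum_{\btau\ \text{melonic}}n_\btau z_\btau G^{n_\btau}$, which is \eqref{eq:gau2}; since an order-$n$ invariant carries $n$ such propagators at dominant order, each dressed by a factor $G$, one obtains $\Phi_{\bsig}[T,\bar T]\sim_N\Phi_{\bsig}[T_0,\bar T_0]\,G^n$, proving \eqref{eq:gau1}. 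The first bullet needs no tight bound because melonic dominance among the $l=0$ Gaussian cumulants is already guaranteed by Thm.~\ref{thm:melonic-gaussian}, and the suppression of any non-melonic insertion at scale $\zeta=D-1$ follows directly from $\min_\eta\bar\omega(\btau;\eta)\ge 1$ for non-melonic $\btau$ (Thm.~\ref{thm:degree}, Thm.~\ref{thm:second-degree}), i.e.\ from analyzing the combined colored graph without invoking the $b$-machinery; the latter enters only in the third bullet, where $\zeta$ genuinely varies across $\mathbb{B}$. When instead $\zeta(\btau)=b(\btau)$ for some \emph{non-melonic} $\btau$, those insertions become marginal as well, the dominant graphs are no longer purely melonic, the resummation need not close into a Gaussian law, and I would appeal to the constructive analyses of \cite{enhanced-1, Walsh-maps, multicritical, bonzom-review, Lionni-thesis} to conclude that the limit can be genuinely non-Gaussian.

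The principal obstacle lies outside this bookkeeping. First, the existence and tightness of \eqref{eq:asympt-gaussian-expected1}---equivalently the existence of the optimal exponents $b(\btau)$---is the genuine combinatorial difficulty, since computing $\min_\eta\bar\omega(\bsig;\eta)$ under $K_\mathrm{p}(\bsig,\eta)=1$ is hard in general, as stressed in Sec.~\ref{sec:order-of-dominance-gaussian-scaling}; and one needs the bound saturated, not just valid, to turn the upper estimates into sharp asymptotics in the suppressed and enhanced cases. Second, and more seriously, the perturbative manipulation---exchanging $\int dT\,d\bar T$ with the Taylor sum and asserting that the resulting formal series computes the \emph{true} $N\to\infty$ asymptotics---is not justified term by term and requires the constructive field-theory control of \cite{Gurau-universality, Gurau:2013pca, Gurau-book}, namely uniform remainder estimates and Borel summability in the couplings. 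By comparison, checking that the melonic Schwinger--Dyson recursion closes and that $G$ is the unique power-series solution of \eqref{eq:gau2} with $G(0)=1$ is a routine fixed-point argument.
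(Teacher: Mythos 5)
Your proposal follows essentially the same route as the paper: the paper offers no self-contained proof of this theorem beyond the perturbative reduction to Gaussian cumulants with insertions, the bound \eqref{eq:potential-exponent-to-maximize} together with its assumed optimality, and citations to \cite{Gurau-universality} for the melonic resummation and to \cite{enhanced-1, Walsh-maps, multicritical, bonzom-review, Lionni-thesis} for the marginal non-melonic cases --- which is exactly the bookkeeping you reconstruct, including the honest remark that the constructive control of the formal series is the real analytic content. The one caution concerns your identity $b(\btau)=D-1+\min_{\eta}\bar\omega(\btau;\eta)$ (hence $r(\btau)=D-b(\btau)$) for purely connected $\btau$: this is only the \emph{maximal admissible} value of $b$, and the paper's $D=6$ example shows the optimal (tight) $b$ can be strictly smaller, so the saturation of the $l=0$ term in your first bullet should be drawn from the assumed optimality of \eqref{eq:potential-exponent-to-maximize} rather than from that identity.
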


\subsection{Scaling of the Wishart tensor}
\label{sub:true-Wishart-tensor}

Similar to a Wishart matrix, the Wishart tensor is  obtained from the pure complex Gaussian tensor $T,\bar T$ by partially tracing over one of its indices: 
\be
\label{eq:wishart-tensor}
W_{i^1 \ldots i^D ; j^1 \ldots j^D} = \sum_{k=1}^N T_{i^1 \ldots i^D k}\bar T_{j^1 \ldots j^D k} \; .
\ee
A priori, this is just an equivalent perspective on the pure case $T, \bar T$ with the difference that there is a fixed labeling of the $T$ and $\bar T$ indicating the pairs whose $(D+1)$th indices are summed together to form a mixed tensor. 

\paragraph{Scaling function.}As $T$ is a pure complex Gaussian, we have for $\bsig\in S_n^D$:
\be
\label{eq:scaling-true-wishart-D}
\lim_{N\rightarrow \infty} \frac 1 { N^{r_W(\bsig)}}\,\Phi^\mathrm{m}_\bsig[W] =\varphi^\mathrm{m}_\bsig(w), \hspace{1.5cm} r_W(\bsig)= n - \min d\bigl( ( \bsig ,  \mathrm{id}), \eta\bigr),
\ee
where the minimum is taken over $\eta\in S_n$ for which $K_\mathrm{m}(\bsig, \eta) =K_\mathrm{p}\bigl(( \bsig ,\mathrm{id}) , \eta\bigr)  =1$, and $(\bsig , \mathrm{id}) =(\sigma_1, \ldots, \sigma_D, \mathrm{id})$ so that $d\bigl(( \bsig ,  \mathrm{id} ), \eta\bigr) = d(\bsig, \eta)  + \lvert \eta \rvert$. As $r_W(\bsig)=r(  \bsig, \mathrm{id})$ with $r$ the Gaussian scaling in \eqref{eq:def-scling-purely-connected}:
\be
 r_W(\bsig) = 1- D(K_\mathrm{m}(\bsig) - 1) - \min \bar \omega \bigl( (\bsig,  \mathrm{id}) ; \eta\bigr). 
\ee

\paragraph{First order.}The first order corresponds to the $\bsig$ which are mixed connected (hence not necessarily purely connected) and for which $(\bsig, \mathrm{id})$ is melonic, that is, they are melonic when including the thick edges corresponding to the permutation $\mathrm{id}$, therefore:
\begin{enumerate}[label=$-$]
\item If $\bsig$ is purely connected, then its canonical pairing must be the identity: $\bar \omega(\bsig;\mathrm{id})=0$. 
\item If $\bsig$ is not purely connected, then each of its pure connected components is melonic, and two situations may occur for  the thick edges: either they connect the vertices of a canonical pair in one of the pure connected components, or if they connect different pure connected components, then the cycle in the graph formed alternatively by thick edges and  canonical pairs is a  separating cycle (Sec.~\ref{sub:Melo}). In that case, the canonical pairing is not the identity. 
\end {enumerate}

\begin{figure}[!h]
\centering
\raisebox{+0.7cm}{\includegraphics[scale=1]{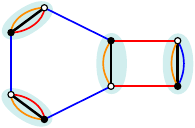}}\hspace{3cm}\includegraphics[scale=1]{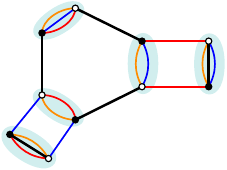}
\caption{A pure random tensor with $D$ indices with scaling function $r$ is defined asymptotically at first order by the purely connected melonic invariants. A mixed random tensor with scaling function $r_W$ is defined asymptotically at first order by the melonic invariants which are  connected when the thick edges  play the role of a color $D+1$. The  purely connected ones (left) coincide in the two situations (Fig.~\ref{fig:ex-melo}), but the others (right) are considered  only in the mixed case. 
}
\label{fig:ex-invariants-purs-mixed}
\end{figure}

At this level it is just a partition of the melonic connected graphs  with $D+1$ colors into two classes, according to whether the deletion of all the edges of color $D+1$ disconnects the graph or not. In Sec.~\ref{sub:mixed-like-pure-gaussian}, we will treat more generally mixed random tensors that scale like a Wishart tensor (without assuming that they derive from a pure tensor with $D+1$ indices). In that case, there is a difference between the invariants that characterize the asymptotic distributions in the pure and mixed cases, see Fig.~\ref{fig:ex-invariants-purs-mixed}.

\newpage

\section{Free cumulants of unitarily invariant random tensors }
\label{sec:free-cum}

In this section, we define and discuss first order free cumulants in two situations: pure random tensors that scale like a pure complex Gaussian, and mixed random tensors that scale like a Wishart tensor. 

\subsection{Free cumulants  for pure Gaussian tensors}
\label{sec:free-cumulants-Gaussian}

We start from the example of a pure complex Gaussian tensor with covariance 
$\bE \left[ T_{i^1 \ldots i^D} \bar T_{j^1 \ldots j^D} \right]  = N^{1-D} \; C \prod_{c=1}^D \delta_{i^c, j^c}$. One then has for the two-tensors invariant $T\cdot \bar T$  (left of Fig.~\ref{fig:melon-recursive}), the only element of $S_1^D$, which we denote by $\textbf{id}_1$:
\be
\Phi_{\mathbf{id}{_1}} \bigl[T, \bar T\bigr]= \mathbb E \bigl[  T\cdot \bar T\bigr]  = N C \;, 
\ee
so that the asymptotic covariance is: 
\be
\label{eq:asymptotic-covariance-gaussian}
\varphi_{\mathbf{id}{_1}}  = 
 \lim_{N\to \infty} \frac{1}{N} \Phi_{\mathbf{id}{_1}} \bigl[T, \bar T\bigr]
= C \;.
\ee
In the case discussed in Thm.~\ref{thm:gaussian-universality} of an asymptotically Gaussian random tensor model for instance, one would have $C=G$. 

\paragraph{Free cumulants.}In the Gaussian case, the moment-generating function is simply:
\be
\log Z_{T,\bar T} (J,\bar J) = \log\bE_{T,\bar T} [e^{  J \cdot T +  \bar J\cdot \bar T }] = C \, N^{1-D} \; J \cdot \bar J  =  C \, N^{1-D} \; \Tr_{\textbf{id}_1}(J,\bar J) \;,
\ee
hence the only finite $N$ free cumulant is $\mathcal{K}_{\textbf{id}_1}[T,\bar T]$, and the only non-trivial asymptotic free cumulant is:
\be
 \kappa_{{\mathbf{id}{_1} } } = 
  \lim_{N\to \infty} \frac{1}{N^{1-D}} \mathcal{K}_{\textbf{id}_1}[T,\bar T] = C  \;.
\ee

The corresponding free cumulants can be generally expressed in terms of the asymptotic moments by: 
\be
\label{eq:free-cumulants-gaussian}
\kappa_{\bsig} =  \varphi_{\mathbf{id}{_1}}  \times  \delta_{\bsig, {\mathbf{id}{_1}} }\;,
\ee
and  \eqref{eq:asympt-moment-gaussian} gives the inverse relation, that is, the asymptotic moments are expressed in terms of the free cumulants through the relation:
\be
\label{eq:asympt-gaussian-bis}
\forall \bsig\in S_n^D \;, \qquad\varphi_{ \bsig}  =  \mathrm{Card}\bigl\{\eta \mid K_\mathrm{p}(\eta, \bsig)=1\ \mathrm{and}\ \bar \omega(\bsig, \eta)\ \mathrm{minimal}\bigr\} \times  \bigl(\kappa_{\mathbf{id}{_1}} \bigr)^n \; .
\ee
In particular for $D\ge 3$, the cardinal is just $1$ if $\bsig$ is purely connected and melonic, see Thm.~\ref{thm:melonic-gaussian}, therefore one has:  
\be
\label{eq:asympt-gaussian-melonic-bis}
{\varphi_{ \bsig}}  _{\bigl\lvert{\substack{{K_\mathrm{p}(\bsig)=1}\\{\omega(\bsig)=0}}}} =  (\kappa_{\mathbf{id}{_1}} )^n \;.
\ee

\begin{remark}
 This is to be compared with the $D=2$ result, for which \eqref{eq:asympt-gaussian-bis} still holds but for $\bsig=(\sigma_1, \sigma_2)$ with $\#(\sigma_1\sigma_2^{-1})=1$, the number of $\eta$ which minimize $\bar\omega(\bsig,\eta)$ is the Catalan number \eqref{eq:Catalan}:
\be
\label{eq:asympt-gaussian-melonic-bis}
\varphi_\bsig =  C_{n} \times (\kappa_{\mathbf{id}{_1}} )^n \;.
\ee

The notion of cumulant depends on whether we consider a pure  complex tensor $T,\bar T$ with $D=2$, or a (mixed) Wishart matrix $W=TT^\dagger$ in $D=1$. The cumulants in the two cases are related by: 
\be
\label{eq:wishart-as-free-cumulantvs-pure-D2}
\kappa_n(w)=\kappa_{\Pi_\mathrm{p}(\gamma_n, \gamma_n), (\gamma_n, \gamma_n)}(t, \bar t) \; , 
\ee
which agree since the pure connected components of $(\sigma, \sigma)$ are a collection of invariants in $S_1^2$. More generally $\kappa_{\Pi(\sigma), \sigma}(w) = \kappa_{\Pi_\mathrm{p}(\sigma, \sigma), (\sigma, \sigma)}(t, \bar t)$.
\end{remark}

\subsection{Free cumulants for pure unitarily invariant random tensors}
\label{sec:free-cumulants-pure}

Our aim is to identify the correct notion of tensorial free-cumulants associated to the first order moments $\Phi_{ \bsig} [T,  \bar T] $ for \textsf{LU}-invariant pure random tensors $T,\bar T$ whose classical cumulants scale with the Gaussian scaling function \eqref{eq:asympt-gaussian}, that is:
\be
\label{eq:scaling-hypothesis-pure}
\lim_{N\rightarrow \infty}\frac 1 {N^{r(\bsig)}}\ \Phi_{ \bsig} [T,  \bar T]  =  \varphi_\bsig(t, \bar t), \hspace{1.5cm} r(\bsig)= n - \min_{\eta\in S_n, K_\mathrm{p}(\bsig, \eta) = 1 } d(\bsig, \eta) \; .
\ee
We do not assume anything regarding the  asymptotic moments $\varphi_{\bsig}(t, \bar t)$: they are an unspecified list of numbers 
that characterize the distribution asymptotically.
This includes the pure complex Gaussian case and some Gaussian measures perturbed by invariant potentials, but the results derived here are a priori more general.  

Our starting point is Thm.~\ref{thm:finite-free-cumulants}, namely the formula of the finite $N$ free cumulant:
\be
\begin{split}
 \mathcal{K}_\bsig[T,\bar T] & = \sum_{\btau \in S^D_{n,\bar n}} \sum_{\substack{{\Pi \in \mathcal{P}(n,\bar n)}\\{\Pi \ge \Pi_{\mathrm{p}}(\bsig) \vee 
 \Pi_{\mathrm{p}} (\btau)}}} \lambda_\Pi  \prod_{G= B\cup \bar B\in \Pi}   \bE\left[\Tr_{\btau_{|_B} } (T,\bar T)\right]\prod_{c=1}^D W^{(N)} \left( \sigma_{ c|_B}
 \tau_{c|_B}^{-1} \right) , 
\end{split}
\ee
supplemented by the finite $N$ moment cumulant expressions for $\Pi\ge \Pi(\btau)$:
\be
\prod_{G = B\cup \bar B \in \Pi} \bE\left[\Tr_{\btau|_{B}} (T,\bar T)\right] =
\sum_{\substack{ {\Pi' \in  \mathcal{P}(n,\bar n) }\\ { \Pi\ge \Pi' \ge \Pi(\btau)} } } \;\; 
\prod_{G' = B'\cup\bar B' \in \Pi'}  \Phi_{\btau|_{B'}} [T,\bar T] \;,
\ee
and the asymptotic of the Weingarten function
$W^{(N)} ( \nu) = \mathsf{M}(\nu) \,  N^{- n - \lvert\nu\rvert} (1+O(N^{-2}))$ for $\nu\in S_n$.

\paragraph{Limit of first order finite size free cumulants.} 

As discussed in Section~\ref{sec:order-of-dominance-gaussian-scaling}, the first order consists of the purely connected melonic trace-invariants. 

In order to state the appropriate moment cumulant relation at first order, we recall that $\btau \preceq \bsig$ means that for all $c\in \{1, \ldots, D\}$, $\tau_c\preceq \sigma_c$, that is, for each cycle of $\sigma_c$, the restriction of $\tau_c$ to the support of this cycle is a non-crossing permutation. Also, for $\bnu\in S_n^D$, we define $\mathsf{M}(\bnu) = \prod_{c=1}^D \mathsf{M}(\nu_c)$, where $\mathsf{M}(\nu)$ is the  M\"{o}bius function on the lattice of non-crossing partitions. Denoting $\mathrm{NC}(n)$ the lattice of non-crossing partitions on $n$ elements, $\mathsf{M}(\bnu)$ is the M\"{o}bius function on the direct product of lattices $\left(\mathrm{NC}(n)\right)^{\times D}$, which is itself a lattice.

\begin{theorem}
\label{thm:limit-of-finite-cumulants}  
Let $D\ge3$ and let $\bsig\in S_n^D$ be a melonic purely connected invariant, $\omega(\bsig)=0$ and   $K_\mathrm{p}(\bsig)=1$, and let $\eta\in S_n$ be the permutation defined by the canonical pairs of $\bsig$. Consider a pure random tensor $T,\bar T$ whose classical cumulants scale as in  \eqref{eq:scaling-hypothesis-pure}. Then the finite  size  free cumulant 
$\mathcal{K}_{\bsig}[T,  \bar T]$ scales as $N^{1 - nD}$ and:
 \be
 \label{eq:imit-free-finite-melonic}
\kappa_{\bsig}(t, \bar t) = \lim_{N\rightarrow \infty} N^{nD-1}\mathcal{K}_{\bsig}[T, \bar T]  = \sum_{\substack{{\btau \in S^D_{n}}\\{\btau\eta^{-1} \preceq \bsig\eta^{-1}}}}   \varphi _{\Pi_\mathrm{p}(\btau), \btau} (t, \bar t)\,  \mathsf{M}(\bsig\btau^{-1}) \;.
\ee
In particular, the $\btau$s in the sum are melonic but not necessarily purely connected, and $\eta$ is the canonical pairing on $\btau$. 
\end{theorem}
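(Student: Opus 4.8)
The plan is to start from the simplified form of the finite size free cumulant. Since $\bsig$ is purely connected we have $\Pi_\mathrm{p}(\bsig)=1_{n,\bar n}$, so the outer partition sum in the pure version of Theorem~\ref{thm:finite-free-cumulants} collapses and we are left with \eqref{eq:cum-finN-int-pure-conn}, $\mathcal{K}_{\bsig}[T,\bar T]=\sum_{\btau\in S_n^D}\bE[\Tr_{\btau}(T,\bar T)]\prod_{c=1}^D W^{(N)}(\sigma_c\tau_c^{-1})$. First I would expand the expectation over the connected classical cumulants, $\bE[\Tr_{\btau}(T,\bar T)]=\sum_{\Pi'\ge \Pi_\mathrm{p}(\btau)}\prod_{G'=B'\cup\bar B'\in\Pi'}\Phi_{\btau_{|_{B'}}}[T,\bar T]$, insert the scaling hypothesis \eqref{eq:scaling-hypothesis-pure} for each factor, $\Phi_{\btau_{|_{B'}}}\sim N^{r(\btau_{|_{B'}})}\varphi_{\btau_{|_{B'}}}$, and the Weingarten asymptotics $W^{(N)}(\nu)\sim \mathsf{M}(\nu)N^{-n-\lvert\nu\rvert}$. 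The product of the $D$ Weingarten factors is by definition $\mathsf{M}(\bsig\btau^{-1})=\prod_c\mathsf{M}(\sigma_c\tau_c^{-1})$ times $N^{-Dn-d(\bsig,\btau)}$, so a term indexed by $(\btau,\Pi')$ carries the power $N^{E}$ with
\[
E=\sum_{G'\in\Pi'} r(\btau_{|_{B'}})-Dn-d(\bsig,\btau)\;.
\]

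The core is a power-counting estimate showing $E\le 1-Dn$. Writing $r(\btau_{|_{B'}})=\lvert B'\rvert-\min_{\eta'_{B'}} d(\btau_{|_{B'}},\eta'_{B'})$ over Wick pairings with $K_\mathrm{p}(\btau_{|_{B'}},\eta'_{B'})=1$, and assembling the per-block minimizers into one block-diagonal $\eta'$, one gets $\sum_{G'} r(\btau_{|_{B'}})=n-d(\btau,\eta')$. The triangle inequality \eqref{eq:triangular-dist}, $d(\btau,\eta')+d(\bsig,\btau)\ge d(\bsig,\eta')$, then yields $E+Dn\le n-d(\bsig,\eta')$. The decisive input is that $\bsig$ is purely connected, so $K_\mathrm{p}(\bsig,\eta')=1$ for \emph{every} $\eta'$; Theorem~\ref{thm:second-degree} now reads $\bar\omega(\bsig;\eta')=D-(D-1)-n+d(\bsig,\eta')\ge 0$, i.e. $n-d(\bsig,\eta')\le 1$. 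This gives $E\le 1-Dn$, which is the advertised scaling of $\mathcal{K}_{\bsig}$.

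For the leading coefficient I would track the equality cases. Saturating $\bar\omega(\bsig;\eta')=0$ with $K_\mathrm{p}(\bsig,\eta')=K_\mathrm{p}(\bsig)=1$ forces, by the uniqueness clause of Theorem~\ref{thm:second-degree}, $\eta'=\eta$, the canonical pairing of $\bsig$; saturating the triangle inequality forces $\btau$ onto the colorwise geodesics $\bsig\to\eta$, which is exactly $\btau\eta^{-1}\preceq\bsig\eta^{-1}$. For such $\btau$ the dominant inner partition is forced to be $\Pi'=\Pi_\mathrm{p}(\btau,\eta)=\Pi_\mathrm{p}(\btau)$, so the cumulant product assembles into $\varphi_{\Pi_\mathrm{p}(\btau),\btau}(t,\bar t)$, while the Weingarten Möbius factor is $\mathsf{M}(\bsig\btau^{-1})$. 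Using $K_\mathrm{p}(\btau)-d(\bsig,\btau)=1$ along the geodesic (which follows from $d(\bsig,\btau)+d(\btau,\eta)=d(\bsig,\eta)=n-1$ together with $d(\btau,\eta)=n-K_\mathrm{p}(\btau)$), each surviving term contributes at exactly order $N^{1-Dn}$; summing over the surviving $\btau$ reproduces \eqref{eq:imit-free-finite-melonic}. Every non-saturating term is of strictly smaller order, and as the sums over $\btau$ and $\Pi'$ are finite these wash out after multiplying by $N^{Dn-1}$, legitimizing the term-by-term passage to the limit.

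The main obstacle is the combinatorial lemma hidden in the equality analysis: I must show that the geodesic condition $\btau\eta^{-1}\preceq\bsig\eta^{-1}$ \emph{alone} guarantees that $\btau$ is melonic with canonical pairing $\eta$ (equivalently $\bar\omega(\btau;\eta)=0$ and $K_\mathrm{p}(\btau,\eta)=K_\mathrm{p}(\btau)$), so that $\Pi_\mathrm{p}(\btau,\eta)=\Pi_\mathrm{p}(\btau)$ and the surviving index set is precisely $\{\btau:\btau\eta^{-1}\preceq\bsig\eta^{-1}\}$ carrying the factorized moment $\varphi_{\Pi_\mathrm{p}(\btau),\btau}$. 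Were some geodesic $\btau$ non-melonic it would be strictly subleading and its appearance in the stated sum would be spurious. I would prove this by propagating the vanishing of the degree along the geodesic: the colorwise additivity $\lvert\sigma_c\eta^{-1}\rvert=\lvert\sigma_c\tau_c^{-1}\rvert+\lvert\tau_c\eta^{-1}\rvert$ combined with $\omega(\bsig)=0$ and the non-negativity statements of Theorems~\ref{thm:degree} and~\ref{thm:second-degree} should pin $\omega(\btau)=0$ and $\bar\omega(\btau;\eta)=0$, forcing $\btau$ melonic with $\eta$ as its canonical pairing.
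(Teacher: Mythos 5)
Your proposal follows essentially the same route as the paper's proof: the purely connected simplification \eqref{eq:cum-finN-int-pure-conn}, the power counting combining the scaling hypothesis, the Weingarten asymptotics and the $D$-fold triangle inequality to get the bound $E\le 1-Dn$ via $\bar\omega(\bsig;\eta')\ge 0$, and then the equality analysis forcing $\eta'=\eta$ (uniqueness in Thm.~\ref{thm:second-degree}), $\btau\eta^{-1}\preceq\bsig\eta^{-1}$ and $\Pi'=\Pi_\mathrm{p}(\btau)$. The one point where your sketch of the final lemma is optimistic is the claim that the non-negativity statements of Thms.~\ref{thm:degree} and~\ref{thm:second-degree} alone "pin" $\bar\omega(\btau;\eta)=0$ and $K_\mathrm{p}(\btau,\eta)=K_\mathrm{p}(\btau)$: when the triangle equality and $\bar\omega(\bsig;\eta)=0$ are rewritten in terms of degrees, the left-hand side is a sum of \emph{three} terms, and you also need the non-negativity of $\mathcal{C}(\bsig;\btau)=K_\mathrm{p}(\bsig,\btau)-K_\mathrm{p}(\btau)+d(\bsig,\btau)$, which the paper isolates as Lemma~\ref{lem:introduce-C} (a short edge-deletion argument); with that ingredient added, your argument closes exactly as the paper's does.
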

\begin{proof}
 The proof is presented in Sec.~\ref{sec:proof-free-cumulants-pure-melonic}.  
\end{proof}

In order to gain some intuition on the $\btau$s contributing to this sum, we consider a melonic graph $\bsig$ with canonical pairing given by the identity (that is, the graph $\bsig\eta^{-1}$ in the previous theorem): due to the composition by $\eta^{-1}$, the non-labeled graphs $[\btau]_{\mathrm{p}} \in S_n^D/{\sim_{\mathrm{p}}}$  in the sum are the same for any choice of $\bsig$ in a given class  $[\bsig]_{\mathrm{p}} \in S_n^D/{\sim_{\mathrm{p}}}$, so we may make this choice.  

For any color $c$, the permutation $\sigma_c$ consists in disjoint cycles which correspond to the cycles in the graph formed alternatively of canonical pairs and edges of color $c$, and for each such cycle, one independently considers all the possible non-crossing permutations. If the cycle is a fixed-point, that is, if an edge of color $c$ connects the two vertices of a canonical pair, then this edge remains unchanged in all the $\btau\preceq\bsig$. Otherwise, one may generate all the $\btau \preceq \bsig$ by a sequence of flips of edges that belong to cycles of length greater than one, where a flip of two edges consists in exchanging the white vertices to which they are connected. See Fig.~\ref{fig:ex-preceq-melo}, where all the $[\btau]_{\mathrm{p}} \in S_n^D/{\sim_{\mathrm{p}}}$  in the sum are represented for the example of Fig.~\ref{fig:ex-melo}.

\begin{figure}[!h]
\centering
\includegraphics[scale=1]{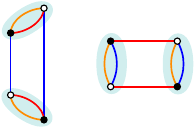}\hspace{2cm}\includegraphics[scale=1]{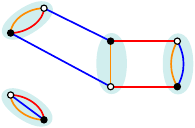}\hspace{2cm}\includegraphics[scale=1]{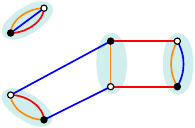}\\[+1cm]
\includegraphics[scale=1]{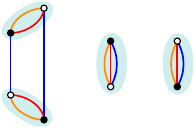}\hspace{2cm}\includegraphics[scale=1]{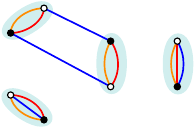}\hspace{2cm}\includegraphics[scale=1]{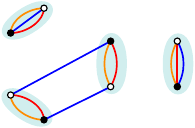}\\[+1cm]
\includegraphics[scale=1]{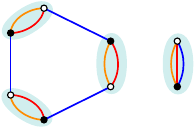}\hspace{2cm}\includegraphics[scale=1]{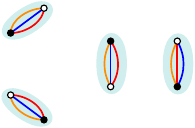}\hspace{2cm}\includegraphics[scale=1]{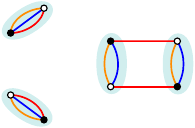}
\caption{The $\btau$s appearing in the formula defining the tensorial free cumulants, for $\bsig$ represented in Fig.~\ref{fig:ex-melo}. The blue blobs indicate the canonical pairing.  The labeling is not indicated, as the same diagrams are obtained regardless of the labeling of $\bsig$.  If the labeling is such that the canonical pairing is the identity, then these diagrams correspond to the $\btau\preceq\bsig$.
}
\label{fig:ex-preceq-melo}
\end{figure}

The theorem generalizes trivially for $n$ pure random tensors. We consider some pure random tensors $(T_a, \bar T_a)$, $(T_b , \bar T_b ) $ etc, with bipartite joint distribution, and we denote $\vec X = (X_1,\dots X_n)$, $X_s\in \{T_a, T_b \ldots\}$  
and $ \vec{X'} = (X'_{\bar 1},\dots X'_{\bar n}) $,  $ X'_{\bar s}\in \{\bar T_a, \bar T_b \ldots\}$. For $\btau\in S_n^D$ and $\Pi\ge \Pi_{\mathrm{p}}(\btau)$, we define:
\be
\label{eq:multi-tensors-pure-scaling}
\Phi_ {\Pi, \btau} [\vec X, \vec{ X'}]=  \prod_{ G = B\cup \bar B \in \Pi} k_{ K_{\mathrm{p}}(\btau_{|_B} ) }\left( \Tr_{\btau_{|_B} } 
\Bigl( \{ X_s \}_{s\in B}, \{ X'_{\bar s} \}_{\bar s\in \bar B}  \Bigr) \right) \;,
\ee
where $B$ collects the labels of the white vertices, and $\bar B = \tau(B)$ the ones of the black vertices. We denote by $\mathcal{K}_\bsig [\vec X, \vec {X'}]$ the appropriate generalization of the pure cumulants in \eqref{eq:cum-finN-int-pur} to $n$ pairs of tensors, and assuming that the generalized scaling ansatz:
\be
\label{eq:scaling-hypothesis-pure-multitensors}
\lim_{N\rightarrow \infty}\frac 1 {N^{r(\bsig)}}\ \Phi_{ \bsig}  [\vec X, \vec{ X'} ]  =  \varphi_\bsig(\vec x, \vec{x'}) \;,
\ee
holds\footnote{\label{footnote:gaussian-mixed}This assumption holds for classical random matrix ensembles, and for instance for pure Gaussian tensors. In that case, the scaling is saturated if among the Wick pairings $\eta$ contributing at first order, there exists some such that for all $s$, $X_s$ and $X'_{\overline{ \eta(s) } }$  are all of the same type $T_1, \bar T_1$ or $T_2, \bar T_2$, and if not the moment is suppressed in scaling ($\varphi_\bsig(\vec x, \vec{x'})$ vanishes).}, then Thm.~\ref{thm:limit-of-finite-cumulants} goes through, thus defining $\kappa_{\bsig}(\vec x, \vec{x'})$. If  for all $n\ge 1$ and $s\in \{1, \ldots, n\}$, $X_s$ and  $X'_s$ are drawn independently from a set of complex pairs (or a bipartite distribution), then $\varphi_\bsig(\vec x, \vec{x'})$ define the joint distribution of the pairs. 

Note that using pure cumulants with pairs  $(T_a,\bar T_a)$, as we do here, or mixed cumulants with substitutions $A_a = T_a\otimes \bar T_a$  in these scaling assumptions, leads to different objects and scaling assumptions. Indeed, in the second case, one considers only trace-invariants in which the tensors  $X_s$ and $X'_{\bar s}$ belong to the same pair: if $X_s=T_a$ for instance, then one must have $X_{\bar s}'=\bar T_a$. As an example, consider the invariant $\btau= \mathbf{id}{_1} \cup \mathbf{id}{_1}$ consisting in the disjoint union 
of two two-vertices invariants, and two pairs $(T_a,\bar T_a)$ and $(T_b , \bar T_b)$ then:
\begin{itemize}
 \item with pure invariants (as we chose to do here), one makes scaling assumptions on the four combinations:
 \[ \Tr_{\mathbf{id}{_1} } [T_a,\bar T_a] \;, \qquad 
    \Tr_{\mathbf{id}{_1} } [T_a,\bar T_b] \;, \qquad 
    \Tr_{\mathbf{id}{_1} } [T_b,\bar T_a] \; , \qquad
    \Tr_{\mathbf{id}{_1} } [T_b,\bar T_b] \; 
\]
 \item with mixed invariants, but substituting 
 $A = T\otimes \bar T$, one only obtains the two combinations:
 \[
  \Tr_{\mathbf{id}{_1} } [T_a \otimes \bar T_a] =
   \Tr_{\mathbf{id}{_1} } [T_b,\bar T_b] \;,\qquad
    \Tr_{\mathbf{id}{_1} } [T_b \otimes \bar T_b] =
    \Tr_{\mathbf{id}{_1} } [T_b,\bar T_b] \;.
 \]
\end{itemize}

\paragraph{Properties of first order free cumulants.} Free-cumulants for arbitrary melonic invariants are defined as multiplicative extensions.  The following theorem (proven in Sec.~\ref{sec:proof-free-cumulants-pure-melonic-multinv}) is  analogous to Prop.~11.4 of \cite{NicaSpeicher}. Since $\varphi _{\Pi_\mathrm{p}(\btau), \btau} (t, \bar t)=\varphi _{\Pi_\mathrm{p}(\btau\eta^{-1}), \btau\eta^{-1}} (t, \bar t)$, the sum defining $\kappa_{\bsig}(t, \bar t)$ in Thm.~\ref{thm:limit-of-finite-cumulants}  is independent on the choice of $\bsig$ in $[\bsig]_\mathrm{p}$, so that for simplicity we fix the labeling to be such that the canonical pairing is the identity: $\eta=\mathrm{id}$.

\begin{theorem}
\label{thm:free-cumulants-melonic}
Let $\bsig\in S_n^D$ be a 
melonic trace-invariant  
($\omega(\bsig)=0$), labeled to set the canonical pairing to be the identity. We denote $\bsig_1^\mathrm{p}, \ldots, \bsig_q^\mathrm{p}$ its pure connected components and we define:
\be
\label{eq:def-mult-ext-free-cumulants-melonic}
\kappa_{\Pi_\mathrm{p}(\bsig), \bsig} (\vec x, \vec {x'}) = \sum_{\btau \preceq \bsig} \varphi_{\Pi_\mathrm{p}(\btau), \btau}(\vec x,\vec {x'} ) \,\mathsf{M}(\bsig\btau^{-1}), 
\ee
where $\vec x=(x_1,\ldots x_n)$, $x_i\in \{t_a, t_b, \ldots\}$, $\vec{x'}=(x'_{\bar 1},\ldots x'_{\bar n})$, $x'_{\bar i}\in \{\bar t_a, \bar t_b, \ldots\}$, and $\varphi_{\;\Pi_\mathrm{p}(\bsig), \bsig}(\vec x,\vec {x'})  =\prod_{i=1}^q \varphi_{\bsig_i^\mathrm{p}} (\vec x, \vec{x'})$. Then:
\begin{itemize}
\item Each $\btau \preceq \bsig$ is itself melonic, with canonical pairing the identity. 

\item Prop.~\ref{prop:finite-free-prop} allows expressing $\kappa_{\bsig}$ as a rescaled limit of classical cumulants of tensor entries. 

\item The family $\kappa_{\Pi_\mathrm{p}(\bsig), \bsig} (\vec x, \vec {x'} )$ is the multiplicative extension of the family 
$\{ \kappa_\bsig (\vec x, \vec{ x'})  \, | \, K_\mathrm{p}(\bsig)=1, \omega(\bsig)=0 
\} $ defined in Thm.~\ref{thm:limit-of-finite-cumulants} :
\be
\label{eq:multiplicativty-melonic-free-cumulants}
\kappa_{\Pi_\mathrm{p}(\bsig), \bsig} (\vec x, \vec {x'}) = \prod_{i=1}^q \kappa_{\bsig_i^\mathrm{p} } (\vec x, \vec {x'})\;.
\ee
\item If $\bsig$ is purely connected and melonic and if $(T_1, \bar T_1)$ and $(T_2, \bar T_2)$ are independent and scale as   \eqref{eq:scaling-hypothesis-pure-multitensors}, then $(T_1+T_2, \bar T_1+\bar T_2)$ scales as \eqref{eq:scaling-hypothesis-pure}, and: 
\be
\kappa_\bsig(t_1+t_2, \bar t_1+\bar t_2)= \kappa_\bsig(t_1, \bar t_1)+\kappa_\bsig(t_2, \bar t_2)\;.
\ee
\item The defining relation \eqref{eq:def-mult-ext-free-cumulants-melonic} can be inverted, so that for any melonic $\bsig$ with canonical pairing the identity: 
\be
\label{eq:inversion-free-cum-melonic}
\varphi_{\Pi_\mathrm{p}(\bsig), \bsig} (\vec x, \vec{x'})= \sum_{\btau \preceq \bsig} \kappa_{\Pi_\mathrm{p}(\btau), \btau} (\vec x, \vec{x'}), 
\ee
so that the data 
$\{\kappa_{\bsig} (\vec x, \vec {x'})  \, | \, K_\mathrm{p}(\bsig)=1, \omega(\bsig)=0 \} $ is equivalent to 
$\{\varphi_{\bsig} (\vec x, \vec {x'})  \, | \, K_\mathrm{p}(\bsig)=1, \omega(\bsig)=0\} $. This 
together with \eqref{eq:multiplicativty-melonic-free-cumulants} can be taken as an alternative definition of melonic free-cumulants.
\end{itemize}
\end{theorem}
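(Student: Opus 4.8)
The plan is to prove the five items in order, treating the first (structural) item as the crux on which multiplicativity and inversion rest, since both require that the summands $\btau\preceq\bsig$ be themselves melonic.

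First I would show that every $\btau\preceq\bsig$ is melonic with canonical pairing the identity. As the canonical pairing of $\bsig$ is the identity, the permutation $\mathrm{id}$ renders $\bsig$ compatible (by the lemma that melonic invariants are compatible), i.e.\ $\mathrm{id}$ lies on a geodesic between $\sigma_{c_1}$ and $\sigma_{c_2}$ for every pair of colours. Since $\tau_c\preceq\sigma_c$ means $\tau_c$ lies on a geodesic from $\mathrm{id}$ to $\sigma_c$, concatenating $\sigma_{c_1}\to\tau_{c_1}\to\mathrm{id}\to\tau_{c_2}\to\sigma_{c_2}$ realizes the distance $d(\sigma_{c_1},\sigma_{c_2})$, so every sub-path is geodesic; in particular $\mathrm{id}$ lies on a geodesic between $\tau_{c_1}$ and $\tau_{c_2}$, giving $\nabla(\btau;\mathrm{id})=0$. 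Using \eqref{eq:two-perm-to-one-perm} this turns the degree into $\omega(\btau)=(D-1)\bigl(\sum_c|\tau_c|+K_\mathrm{p}(\btau)-n\bigr)$, and the analogous computation for the $(D+1)$-coloured object $(\mathrm{id},\btau)$ gives $\omega(\mathrm{id},\btau)=D\bigl(\sum_c|\tau_c|+K_\mathrm{m}(\btau)-n\bigr)$. Both $\sum_c|\tau_c|+K_\mathrm{p}(\btau)$ and $\sum_c|\tau_c|+K_\mathrm{m}(\btau)$ are non-increasing along any chain of elementary non-crossing cycle-splittings refining $\bsig$ into $\btau$ (each split lowers one $|\tau_c|$ by one and raises the number of connected components by at most one); since both equal $n$ at $\btau=\bsig$ and the inequalities $\omega\ge0$ force them to be $\ge n$, they equal $n$. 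Hence $\omega(\btau)=\omega(\mathrm{id},\btau)=0$ and $K_\mathrm{p}(\btau)=K_\mathrm{m}(\btau)=K_\mathrm{p}(\btau,\mathrm{id})$, so by Thm~\ref{thm:second-degree} the invariant $\btau$ is melonic and $\mathrm{id}$ is its canonical pairing. I expect this to be the main obstacle, as it is the only genuinely combinatorial step; the remaining items are formal once it is in place.

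With the first item available, the second is immediate: Thm~\ref{thm:limit-of-finite-cumulants} expresses $\kappa_\bsig$ as $\lim_N N^{nD-1}\mathcal{K}_\bsig[T,\bar T]$, and Prop~\ref{prop:finite-free-prop} rewrites $\mathcal{K}_\bsig[T,\bar T]$ as a classical cumulant $k_n$ of tensor entries carrying distinct indices, so $\kappa_\bsig$ is the rescaled limit of that classical cumulant. For the third item (multiplicativity) I would argue that the defining sum \eqref{eq:def-mult-ext-free-cumulants-melonic} factorizes over the pure connected components of $\bsig$: each cycle of $\sigma_c$ is supported inside a single block of $\Pi_\mathrm{p}(\bsig)$, so the constraint $\btau\preceq\bsig$ splits as $\btau=\bigcup_i\btau_i$ with $\btau_i\preceq\bsig_i^\mathrm{p}$, while $\mathsf{M}(\bsig\btau^{-1})=\prod_c\mathsf{M}(\sigma_c\tau_c^{-1})$ and $\varphi_{\Pi_\mathrm{p}(\btau),\btau}$ both factor blockwise; collecting factors yields $\prod_i\kappa_{\bsig_i^\mathrm{p}}$, which is \eqref{eq:multiplicativty-melonic-free-cumulants}.

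For the fourth item I would invoke the finite-$N$ additivity of the pure free cumulants \eqref{eq:additivity-pure} (these are $\sim_\mathrm{p}$ class functions, so Prop~\ref{prop:large-N-linear-gen} applies): $\mathcal{K}_\bsig[T_1+T_2,\bar T_1+\bar T_2]=\mathcal{K}_\bsig[T_1,\bar T_1]+\mathcal{K}_\bsig[T_2,\bar T_2]$ for $N$ large. Both summands scale as $N^{1-nD}$ with limits $\kappa_\bsig(t_a,\bar t_a)$; expanding $\Tr_\bsig(T_1+T_2,\bar T_1+\bar T_2)$ over the assignments of each vertex to one of the two tensors shows, via the multi-tensor scaling assumption \eqref{eq:scaling-hypothesis-pure-multitensors}, that $(T_1+T_2,\bar T_1+\bar T_2)$ again obeys \eqref{eq:scaling-hypothesis-pure}, so $\kappa_\bsig(t_1+t_2,\bar t_1+\bar t_2)$ is defined and equals the sum of limits. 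Finally, for the inversion I would note that by the first item the index set $\{\btau:\btau\preceq\bsig\}$ is the product lattice $\prod_c\prod_{\text{cycles of }\sigma_c}\mathrm{NC}(\cdot)$, on which $\mathsf{M}(\bsig\btau^{-1})=\prod_c\mathsf{M}(\sigma_c\tau_c^{-1})$ is exactly the Möbius function $\mu(\btau,\bsig)$ of the interval $[\btau,\bsig]$ (the standard non-crossing Möbius function, see \cite{NicaSpeicher}). Reading \eqref{eq:def-mult-ext-free-cumulants-melonic} as $\kappa_{\Pi_\mathrm{p}(\bsig),\bsig}=\sum_{\btau\preceq\bsig}\varphi_{\Pi_\mathrm{p}(\btau),\btau}\,\mu(\btau,\bsig)$ and applying Möbius inversion in this lattice yields \eqref{eq:inversion-free-cum-melonic}, establishing the claimed equivalence of the two families of data.
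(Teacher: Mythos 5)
Your proposal is correct and, for the second through fifth items, follows essentially the paper's own route: multiplicativity by blockwise factorization of the constraint $\btau\preceq\bsig$, of $\mathsf{M}$ and of $\varphi$; additivity from the finite-$N$ additivity \eqref{eq:additivity-pure} plus multilinearity to get the scaling of the sum; and inversion by M\"obius inversion in the product of non-crossing lattices. The one place where you genuinely diverge is the first item: the paper simply cites Thm.~\ref{thm:limit-of-finite-cumulants}, whose appendix proof identifies the dominant $\btau$ via the reformulated triangular inequality $\mathcal{C}(\bsig;\btau)+D\bigl(K_\mathrm{p}(\btau)-\#\pi\bigr)+\bar\omega(\btau;\eta)\ge\bar\omega(\bsig;\eta)$ and forces each non-negative term to vanish. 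You instead give a self-contained argument: geodesic concatenation through $\mathrm{id}$ gives $\nabla(\btau;\mathrm{id})=0$, which linearizes $\omega(\btau)$ and $\omega(\btau,\mathrm{id})$ in $\sum_c\lvert\tau_c\rvert+K_\mathrm{p}(\btau)$ and $\sum_c\lvert\tau_c\rvert+K_\mathrm{m}(\btau)$, and then monotonicity of these quantities along elementary cycle-splittings combined with $\omega\ge 0$ pins both degrees to zero, so Thm.~\ref{thm:second-degree} applies. Both arguments rest on the same degree machinery (Thm.~\ref{thm:degree}, Thm.~\ref{thm:second-degree} and the identity \eqref{eq:nabla-and-degrees}); yours avoids re-entering the asymptotic analysis of the finite-$N$ cumulants and is a purely combinatorial statement about the lattice $\{\btau\preceq\bsig\}$, at the cost of re-deriving what the paper already extracted in the course of proving Thm.~\ref{thm:limit-of-finite-cumulants}. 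I find no gap in either route; note only that your appeal to Thm.~\ref{thm:second-degree} (rather than to $\omega(\btau)=0$ alone) is indeed necessary to cover $D=3$, and you have used it correctly.
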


 \

For a Gaussian complex tensor $T,\bar T$, we have $\kappa_{\bsig} (t, \bar t)= C\; \delta_{\bsig, {\mathbf{id}{_1}} }$ and, with the notations of the theorem,  the only non-zero term in the sum \eqref{eq:inversion-free-cum-melonic} is $\btau=\mathbf{id}{_n}=(\mathrm{id}, \ldots, \mathrm{id})\in S_n^D$ (the melonic invariant consisting of $n$ disjoint two-vertex graphs whose canonical pairing is the identity), that is:
\be
\label{eq:trivial-moment-cumulant-gaussian}
\varphi_{\Pi_\mathrm{p}(\bsig), \bsig} (t, \bar t)=  \kappa_{\Pi_\mathrm{p}(\mathbf{id}{_n}), \mathbf{id}{_n}}(t,\bar t) \; , 
\ee
and one recovers \eqref{eq:asympt-gaussian-melonic-bis}. 

For $D=1$, the free cumulants of the square Wishart random matrix \eqref{eq:cumulants-wishart-D1} are $\kappa_n(w)=1$, and those of the GUE are $\kappa_n(m)=\delta_{n,2}$. Both lead to similar combinatorics, in the sense that the associated asymptotic moments enumerate graphs embedded on the 2-sphere in the same combinatorial universality classes. 
For the pure $D>1$ case treated here this is no longer the case: as explained above, choosing $\kappa_{\bsig} (t, \bar t)=1$ for purely connected melonic invariants leads to richer combinatorics \eqref{eq:inversion-free-cum-melonic} in comparison to the choice $\kappa_{\bsig} (t, \bar t)= \delta_{\bsig, {\mathbf{id}{_1}} }$, which leads to \eqref{eq:trivial-moment-cumulant-gaussian}.

The choice $\kappa_{\bsig} (t, \bar t)=1$ is quite natural and generalizes the square Wishart distribution, albeit in a different sense from what we will discuss below in Sec.~\ref{sub:mixed-like-pure-gaussian}. If $\bsig$ is purely connected and melonic, with canonical pairing the identity, and if for each $c$, the partition $\Lambda(\sigma_c)\vdash n$ has parts $\lambda_c^1,\ldots, \lambda_c^{k_c}$, with $k_c=\#(\sigma_c)$: 
\be
\label{eq:asymptotic-moments-cumulants-are-one}
\varphi_{\bsig}(t,\bar t) = \prod_{c=1}^D \prod_{i=1}^{k_c} C_{\lambda_{c,i}},
\ee
to be compared with the Catalan number for $D=1$ \eqref{eq:asympt-gaussian-melonic-bis}.

\subsection{A first glimpse at purely connected higher orders}
\label{sec:free-cumulants-pure-compatible}

Non-melonic purely connected invariants are higher-order free cumulants as they are more suppressed than their melonic counterparts in the  limit $N\rightarrow \infty$. This is new with respect to matrices, for which the order of dominance is given only by the number of pure connected components.

In the proof of Thm.~\ref{thm:limit-of-finite-cumulants}  
we have obtained (see  \ref{eq:cum-finN-pure-conn-phi-first-asympt} and subsequent) the general formula 
for a finite $N$ free cumulant associated to a purely connected invariant:
\begin{align}
\nonumber
\mathcal{K}_{\bsig}[T,  \bar T]  = \frac1{N^{nD}}\sum_{\btau \in S^D_{n}}  \sum_{ \pi\ge \Pi_\mathrm{p}(\btau)}  \left [ \varphi _{\pi, \btau} (t, \bar t)\,  \mathsf{M}(\bsig\btau^{-1})  N^{r(\bsig) - \Delta_\pi(\bsig; \btau)} +o\left(N^{r(\bsig) - \Delta_\pi(\bsig; \btau)}\right) \right] \;, 
\end{align}
where $\Delta_\pi(\bsig;\btau) \ge 0 $. Furthermore, we have shown that $ \Delta_\pi(\bsig;\btau) = 0$ if and only if:
\be
\label{eq:general-formula-asymptotics}
\begin{split}
& d(\bsig, \btau) + \min_{\eta \in H_{\btau,\pi}} d(\btau  , \eta ) =\min_{\eta_0\in S_n} d(\bsig, \eta_0) \; , \crcr
& H_{\btau,\pi}=\bigl\{\eta \in S_n \mid \Pi_\mathrm{p}(\eta) \le \pi \quad \mathrm{and}\quad \forall B\in \pi,\ K_\mathrm{p}(\btau_{\lvert_B}, \eta_{\lvert_B})=1\bigr\} \; ,
\end{split}
\ee 
and that $ \Delta_{1_n}(\bsig;\bsig) = 0 $, that is the condition is always satisfied at least for $\btau=\bsig$ and $\pi=1_n$ (the one-block partition). We denote the set of terms that dominate by:
\be
\label{eq:bfSp}
\bfS(\bsig) = \Bigl\{(\pi, \btau) \ \bigl\lvert \  \pi \ge \Pi_\mathrm{p}(\btau)\ \  \mathrm{and}\ \ \Delta_\pi(\bsig,\btau) =0 \Bigr\}.
\ee

\begin{proposition}
\label{prop:limit-of-finite-cumulants-general}  
Let $\bsig\in S_n^D$, $D\ge 1$, be purely connected. Consider a pure random tensor whose cumulants scale as \eqref{eq:scaling-hypothesis-pure}. Then the finite  size  free cumulant  $\mathcal{K}_{\bsig}[T, \bar T]$ scales as $N^{r(\bsig) - nD}$, where $r(\bsig)=n-\min_{\eta\in S_n} d(\bsig, \eta)$, and:
 \be
\lim_{N\rightarrow \infty} N^{nD-r(\bsig)}\mathcal{K}_{\bsig}[T, \bar T]  = \sum_{(\pi, \btau)\in \bfS(\bsig)}   \varphi _{\pi, \btau} (t, \bar t)\,  \mathsf{M}(\bsig\btau^{-1} ) \;.
\ee
\end{proposition}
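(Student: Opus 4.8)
The plan is to read the statement off directly from the finite-$N$ expression of the free cumulant, since for a purely connected $\bsig$ the formula \eqref{eq:cum-finN-int-pure-conn} specializes to
\[
\mathcal{K}_\bsig[T,\bar T]=\sum_{\btau\in S_n^D}\bE\!\left[\Tr_\btau(T,\bar T)\right]\prod_{c=1}^D W^{(N)}(\sigma_c\tau_c^{-1}).
\]
First I would expand each $\bE[\Tr_\btau(T,\bar T)]$ over bipartite partitions $\pi\geq\Pi_\mathrm{p}(\btau)$ by the classical moment–cumulant formula, writing it as $\sum_{\pi\geq\Pi_\mathrm{p}(\btau)}\Phi_{\pi,\btau}[T,\bar T]$ with $\Phi_{\pi,\btau}$ the multiplicative extension of the pure classical cumulants. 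Substituting the Weingarten asymptotics $W^{(N)}(\nu)=\mathsf{M}(\nu)N^{-n-\lvert\nu\rvert}(1+O(N^{-2}))$ color by color turns the product of Weingarten functions into $\mathsf{M}(\bsig\btau^{-1})\,N^{-nD-d(\bsig,\btau)}(1+O(N^{-2}))$, so that $\mathcal{K}_\bsig[T,\bar T]$ becomes a \emph{finite} double sum over $(\pi,\btau)$, each summand carrying a definite power of $N$.

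Next I would carry out the power counting. By the scaling hypothesis \eqref{eq:scaling-hypothesis-pure} applied blockwise, $\Phi_{\pi,\btau}[T,\bar T]\sim N^{r_\pi(\btau)}\varphi_{\pi,\btau}(t,\bar t)$ with $r_\pi(\btau)=\sum_{G=B\cup\bar B\in\pi}r(\btau_{\lvert_B})$, so the $(\pi,\btau)$ summand scales as $N^{r_\pi(\btau)-nD-d(\bsig,\btau)}=N^{r(\bsig)-nD-\Delta_\pi(\bsig;\btau)}$, where I set $\Delta_\pi(\bsig;\btau)=r(\bsig)-r_\pi(\btau)+d(\bsig,\btau)$. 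The key algebraic identity is that, because $d$ is additive over the blocks of $\pi$ and each block contributes $r(\btau_{\lvert_B})=\lvert B\rvert-\min d$,
\[
r_\pi(\btau)=n-\min_{\eta\in H_{\btau,\pi}}d(\btau,\eta),
\]
with $H_{\btau,\pi}$ the set appearing in \eqref{eq:general-formula-asymptotics}. Together with $r(\bsig)=n-\min_{\eta_0}d(\bsig,\eta_0)$ this rewrites $\Delta_\pi(\bsig;\btau)=d(\bsig,\btau)+\min_{\eta\in H_{\btau,\pi}}d(\btau,\eta)-\min_{\eta_0\in S_n}d(\bsig,\eta_0)$.

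The heart of the argument, and the step I expect to be the main obstacle, is to show $\Delta_\pi(\bsig;\btau)\geq0$ and to identify exactly when it vanishes. This follows from the triangle inequality for the distance $d$: taking $\eta^\star\in H_{\btau,\pi}$ realizing $\min_{\eta\in H_{\btau,\pi}}d(\btau,\eta)$ and using $\eta^\star\in S_n$,
\[
\min_{\eta_0\in S_n}d(\bsig,\eta_0)\leq d(\bsig,\eta^\star)\leq d(\bsig,\btau)+d(\btau,\eta^\star),
\]
which is precisely $\Delta_\pi(\bsig;\btau)\geq0$, with equality if and only if the saturation condition \eqref{eq:general-formula-asymptotics} holds. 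Taking $\btau=\bsig$ and $\pi=1_n$ gives $H_{\bsig,1_n}=S_n$ (because $\bsig$ is purely connected, so $K_\mathrm{p}(\bsig,\eta)=1$ for all $\eta$), whence $\Delta_{1_n}(\bsig;\bsig)=0$ and the dominant power $N^{r(\bsig)-nD}$ is actually attained. I would then conclude: since the double sum is finite I may interchange the limit with the summation, all terms with $\Delta_\pi(\bsig;\btau)>0$ vanish in the rescaled limit, and the surviving terms are exactly those of $\bfS(\bsig)$ from \eqref{eq:bfSp}, giving
\[
\lim_{N\to\infty}N^{nD-r(\bsig)}\mathcal{K}_\bsig[T,\bar T]=\sum_{(\pi,\btau)\in\bfS(\bsig)}\varphi_{\pi,\btau}(t,\bar t)\,\mathsf{M}(\bsig\btau^{-1}).
\]
I note that, unlike in Thm.~\ref{thm:limit-of-finite-cumulants}, melonicity is never invoked here, which is why no simplification of $\bfS(\bsig)$ (to the $\btau\eta^{-1}\preceq\bsig\eta^{-1}$ with $\pi=\Pi_\mathrm{p}(\btau)$) takes place and the statement holds for all $D\geq1$; indeed this is essentially the intermediate formula obtained in the course of that proof, prior to the melonic specialization.
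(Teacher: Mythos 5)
Your proposal is correct and follows essentially the same route as the paper: the paper's proof of this proposition is precisely the ``general asymptotic'' part of the proof of Thm.~\ref{thm:limit-of-finite-cumulants} in Appendix~\ref{sec:proof-free-cumulants-pure-melonic}, which expands $\mathcal{K}_\bsig$ over $(\pi,\btau)$ via the classical moment--cumulant formula and the Weingarten asymptotics, identifies the same exponent $\Delta_\pi(\bsig;\btau)=d(\bsig,\btau)+\min_{\eta\in H_{\btau,\pi}}d(\btau,\eta)-\min_{\eta_0}d(\bsig,\eta_0)\ge 0$ via the $D$-fold triangle inequality, and notes non-emptiness of $\bfS(\bsig)$ through $(\Pi_\mathrm{p}(\bsig),\bsig)$ using $H_{\bsig,\Pi_\mathrm{p}(\bsig)}=S_n$ for purely connected $\bsig$. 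Your closing observation that melonicity is never used here, so that no simplification of $\bfS(\bsig)$ occurs, matches how the paper organizes the argument.
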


We note that if there exists a $\tilde \eta\in H_{\btau,\pi}$ for which both $d(\btau  ; \tilde \eta ) = \min_{\eta \in H_{\btau,\pi}} d(\btau  ; \eta )$ and $d(\bsig  ; \tilde \eta ) =\min_{\eta_0\in S_n} d(\bsig, \eta_0)$,
that is $
d(\bsig, \btau) + d(\btau  ,\tilde \eta ) =d(\bsig, \tilde \eta)$, then
$\btau \tilde \eta^{-1}\preceq\, \bsig \tilde \eta^{-1}$, and in this case the cumulant involves a sum in a lattice and the relation can be inverted using M\"{o}bius inversion. However, at this point, it is not clear if such a permutation always exist, or how to organize $\bfS(\bsig)$ as a lattice in general.

The condition \eqref{eq:general-formula-asymptotics} can be written 
in a form adapted to compatible invariants discussed in Section~\ref{sub:Melo}. We first reformulate the \eqref{eq:two-perm-to-one-perm} as:
\be
\label{eq:nabla-to-faces}
d(\btau, \eta) = \frac 1 {D-1}\Bigl( \nabla(\btau ; \eta) + \sum_{1\le c_1<c_2\le D} \lvert \tau_{c_1}\tau_{c_2}^{-1}\rvert\Bigr) \;. 
\ee
On the other hand, we define:
\be
\nabla^{(2)}(\bsig ; \btau) = \sum_{1\le c_1<c_2\le D} \bigl( \lvert \sigma_{c_1}\tau_{c_1}^{-1}\rvert + \lvert \tau_{c_1}\tau_{c_2}^{-1}\rvert+ \lvert \tau_{c_2}\sigma_{c_2}^{-1}\rvert - \lvert \sigma_{c_1}\sigma_{c_2}^{-1}\rvert \bigr)\ge 0 \; ,
\ee
and with the help of these two relations we express:\footnote{We used $d (\bsig, \btau)  + \frac 1 {D-1} \sum_{1\le c_1<c_2\le D} \lvert \tau_{c_1}\tau_{c_2}^{-1}\rvert  = \frac 1 {D-1}\Bigl(\nabla^{(2)}(\bsig ; \btau) + \sum_{1\le c_1<c_2\le D}  \lvert \sigma_{c_1}\sigma_{c_2}^{-1}\rvert \Bigr)$.}
\be
\nabla^{(2)}(\bsig; \btau) + \nabla(\btau ; \eta) = \nabla(\bsig ; \eta) + (D-1)\bigl[ d(\bsig, \btau) + d(\btau, \eta) - d(\bsig, \eta)\bigr] \;. 
\ee

We conclude that the set $\bfS(\bsig)$ 
is also the set of pairs $(\pi, \btau )$ such that $\pi \ge \Pi_\mathrm{p}(\bsig)$ and:
\be
 \nabla^{(2)}(\bsig; \btau) +\min_{\eta\in H_{\btau,\pi}} \nabla(\btau   ; \eta) =  \min_{\eta_0\in S_n} \nabla(\bsig ; \eta_0). 
\ee
Assuming that $\bsig$ is compatible,  this condition becomes 
$ \nabla^{(2)}(\bsig; \btau) = \min_{\eta\in H_{\btau,\pi}} \nabla(\btau   ; \eta) = 0$.
On the other hand, we also have that 
if $\nabla^{(2)}(\bsig; \btau) =0$, then  $\eta$ renders $\btau$ compatible if and only if it renders $\bsig$ compatible and moreover it is such that $\btau\eta^{-1}\preceq\bsig\eta^{-1}$. 

This allows defining equivalently $\bfS(\bsig)$ as the set of pairs $(\pi, \btau)$ such that $\pi \ge \Pi_\mathrm{p}(\bsig)$ and there exists $\eta\in H_{\btau, \pi}$ which renders $\bsig$ compatible and is such that $\btau\eta^{-1}\preceq\bsig\eta^{-1}$. At this point, in order  to proceed, we would need to determine whether it is possible for a $\pi >\Pi_\mathrm{p}(\btau)$ to contribute to the dominant contribution. We posit that this is not the case. 

\begin{conjecture}
\label{conj:compatible-and-eta}
Consider $\btau\in S_n^D$. Then any $\eta\in S_n$ satisfying $\nabla(\btau, \eta)=0$ is such that $\Pi_\mathrm{p}(\eta)\le\Pi_\mathrm{p}(\btau)$, that is, $\eta\in H_{\btau, \Pi_\mathrm{p}(\btau)}$. 

It is straightforward to prove that if the Gaussian scaling is subadditive then this conjecture holds, that is Conjecture~\ref{conj:subadi} implies the present conjecture.
\end{conjecture}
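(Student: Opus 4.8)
The plan is to prove the stated implication: assuming the Gaussian scaling function $r$ is strictly subadditive on pure connected components (Conjecture~\ref{conj:subadi}), any $\eta$ with $\nabla(\btau;\eta)=0$ satisfies $\Pi_\mathrm{p}(\eta)\le\Pi_\mathrm{p}(\btau)$. First I would reformulate the conclusion: since $\Pi_\mathrm{p}(\btau,\eta)=\Pi_\mathrm{p}(\btau)\vee\Pi_\mathrm{p}(\eta)$, the condition $\Pi_\mathrm{p}(\eta)\le\Pi_\mathrm{p}(\btau)$ is equivalent to $K_\mathrm{p}(\btau,\eta)=K_\mathrm{p}(\btau)$. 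If $\btau$ is not compatible there is no $\eta$ with $\nabla(\btau;\eta)=0$ and the claim is vacuous, so I assume $\btau$ compatible. The key elementary identity, obtained from \eqref{eq:two-perm-to-one-perm}, is $\nabla(\btau;\eta)=(D-1)\,d(\btau,\eta)-\sum_{c_1<c_2}|\tau_{c_1}\tau_{c_2}^{-1}|$; since the second term is independent of $\eta$ and $\nabla\ge0$, the permutations rendering $\btau$ compatible are exactly those realizing the unconstrained minimum $m(\btau):=\min_{\eta'}d(\btau,\eta')$. (This is where $D\ge2$ is used: for $D=1$ the form $\nabla$ is vacuous and the statement indeed fails.)

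Next I would argue by contradiction. Suppose $\nabla(\btau;\eta)=0$ but $K_\mathrm{p}(\btau,\eta)<K_\mathrm{p}(\btau)=q$. The connected components of $(\btau,\eta)$ group the $q$ pure connected components $\btau_1,\dots,\btau_q$ of $\btau$ into $q'<q$ groups $I_1,\dots,I_{q'}$; writing $C_j=\bigcup_{i\in I_j}B_i$ for the corresponding sets of white labels, the permutation $\eta$ maps $C_j$ onto $\bar C_j$, because any $s$ and $\overline{\eta(s)}$ lie in the same component of $(\btau,\eta)$. Consequently both $d(\btau,\eta)$ and $\sum_{c_1<c_2}|\tau_{c_1}\tau_{c_2}^{-1}|$ split over the groups, so $\nabla(\btau;\eta)=\sum_j\nabla(\btau^{(j)};\eta|_{C_j})=0$ forces each summand to vanish, where $\btau^{(j)}=\bigcup_{i\in I_j}\btau_i$. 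Thus for every $j$, $\eta|_{C_j}$ renders $\btau^{(j)}$ compatible and, by definition of the grouping, $K_\mathrm{p}(\btau^{(j)},\eta|_{C_j})=1$. Since $q'<q$, some group $j_0$ contains at least two pure components.

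For that group I would derive the contradiction with strict subadditivity. Compatibility of $\eta|_{C_{j_0}}$ gives $d(\btau^{(j_0)},\eta|_{C_{j_0}})=m(\btau^{(j_0)})$, the unconstrained minimum on $C_{j_0}$. But $K_\mathrm{p}(\btau^{(j_0)},\eta|_{C_{j_0}})=1$ makes $\eta|_{C_{j_0}}$ admissible in the constrained minimum defining $r$, so $m_1(\btau^{(j_0)}):=\min_{\eta':K_\mathrm{p}(\btau^{(j_0)},\eta')=1}d(\btau^{(j_0)},\eta')\le m(\btau^{(j_0)})$; as always $m_1\ge m$, hence $m_1(\btau^{(j_0)})=m(\btau^{(j_0)})$ and $r(\btau^{(j_0)})=N_{j_0}-m(\btau^{(j_0)})$ with $N_{j_0}=|C_{j_0}|$. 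On the other hand each $\btau_i$ is purely connected, so $m(\btau_i)=m_1(\btau_i)$ and $r(\btau_i)=n_i-m(\btau_i)$; since the unconstrained minimum is never larger than the component-respecting one, $m(\btau^{(j_0)})\le\sum_{i\in I_{j_0}}m(\btau_i)$, giving $r(\btau^{(j_0)})\ge\sum_{i\in I_{j_0}}r(\btau_i)$. This contradicts the strict subadditivity $r(\btau^{(j_0)})<\sum_{i\in I_{j_0}}r(\btau_i)$ guaranteed by Conjecture~\ref{conj:subadi}, completing the argument.

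The only genuine bookkeeping obstacle is the second paragraph: verifying that $\eta$ respects the coarse bipartite partition into groups and that $\nabla$ and $d$ therefore decompose additively over them; once this is in place the contradiction is immediate. I expect the delicate point to be the insistence on \emph{strict} subadditivity, since weak subadditivity would only yield $r(\btau^{(j_0)})=\sum_{i}r(\btau_i)$ and hence no contradiction. It is also worth recording explicitly that the whole argument requires $D\ge2$, matching the failure of the statement in the matrix case $D=1$.
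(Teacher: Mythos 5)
Your argument is correct, and it fills in a step the paper itself leaves unproved: the text only asserts that the implication from Conjecture~\ref{conj:subadi} is ``straightforward'', so there is no proof in the paper to compare against. The chain of reductions is sound: $\nabla(\btau;\eta)=(D-1)\,d(\btau,\eta)-\sum_{c_1<c_2}\lvert\tau_{c_1}\tau_{c_2}^{-1}\rvert$ via \eqref{eq:two-perm-to-one-perm}, so the $\eta$ with $\nabla(\btau;\eta)=0$ are exactly the unconstrained minimizers of $d(\btau,\cdot)$ when $\btau$ is compatible; the blocks of $\Pi_\mathrm{p}(\btau)\vee\Pi_\mathrm{p}(\eta)$ are $\eta$-invariant in the bipartite sense, so both $d$ and $\nabla$ split over the groups $C_j$ and each summand must vanish; and for a group $j_0$ containing at least two pure components, the restriction $\eta|_{C_{j_0}}$ is simultaneously an unconstrained minimizer and admissible for the connectivity-constrained minimum defining $r$, which forces $r(\btau^{(j_0)})\ge\sum_{i\in I_{j_0}}r(\btau_i)$ and contradicts strict subadditivity. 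Your two flagged caveats are both genuine and worth recording: the argument really does require the \emph{strict} form of subadditivity (weak subadditivity only yields equality of scalings, not a contradiction), which matches the formal statement of Conjecture~\ref{conj:subadi} even though the surrounding prose says only ``subadditive''; and the hypothesis $D\ge 2$ is essential, since for $D=1$ the form $\nabla$ is empty and the conclusion fails.
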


Assuming this conjecture,\footnote{And assuming that if several permutations $\eta$ render $\bsig$ compatible then the families of invariants such that $\btau\eta^{-1}\preceq \bsig\eta^{-1}$ are disjoint, which is a weak assumption view that one expects usually at most one such $\eta$ to exist.}
then the cumulants of purely connected invariants in Proposition~\ref{prop:limit-of-finite-cumulants-general} become for $\bsig$ compatible: 
\be
\lim_{N\rightarrow \infty} N^{nD-r(\bsig)}\mathcal{K}_{\bsig}[T, \bar T]  = \sum_{\substack{{\eta \in S_{n}}\\{\nabla (\bsig, \eta)=0}}}  \  \sum_{\substack{{\btau \in S^D_{n}}\\{\btau \eta^{-1} \preceq \bsig \eta^{-1}}}}    \varphi _{\Pi_\mathrm{p}(\btau), \btau} (t, \bar t)\,  \mathsf{M}(\bsig\btau^{-1})  \; . 
\ee
This is a sum which can be inverted.  In particular, if there is a unique $\eta$  such that $\nabla(\bsig, \eta)=0$ then, choosing the labeling of $\bsig$ so that this $\eta$ is the identity, we have:
\be
\lim_{N\rightarrow \infty} N^{nD-r(\bsig)}\mathcal{K}_{\bsig}[T, \bar T]  =  \sum_{\btau  \preceq \bsig }   \varphi _{\Pi_\mathrm{p}(\btau), \btau} (t, \bar t)\,  \mathsf{M}(\bsig\btau^{-1}) .
\ee

Using the methods of \cite{Gurau-Schaeffer} one should be able to build infinite families of compatible trace-invariants $\bsig$ of fixed degree $\omega$ and which each have a single $\eta$ for which $\nabla(\bsig, \eta)=0$.  We can conjecturally generalize Thm.~\ref{thm:free-cumulants-melonic} to this subclass of compatible invariants. 

\begin{theorem}
\label{thm:generalization-to-compatible}
If Conj.~\ref{conj:compatible-and-eta} holds, then the finite size  free cumulants of a purely connected compatible invariant $\bsig$ with a unique $\eta\in S_n$ satisfying $\nabla(\bsig, \eta)=0$
scale like $N^{1 - \frac {\omega(\bsig)}{D-1} - nD}$ and:
 \be
 \label{eq:imit-free-finite-compatible}
\kappa_{\bsig}(t, \bar t):= \lim_{N\rightarrow \infty} N^{nD-1+ \frac {\omega(\bsig)}{D-1}}\mathcal{K}_{\bsig}[T, \bar T]  = \sum_{\substack{{\btau \in S^D_{n}}\\{\btau\eta^{-1} \preceq \bsig\eta^{-1}}}}   \varphi _{\Pi_\mathrm{p}(\btau), \btau} (t, \bar t)\,  \mathsf{M}(\bsig\btau^{-1}). 
\ee
Furthermore, Thm.~\ref{thm:free-cumulants-melonic} can be generalized immediately to this case. 
\end{theorem}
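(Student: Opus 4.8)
The plan is to assemble the statement from Proposition~\ref{prop:limit-of-finite-cumulants-general} together with the degree identities of Section~\ref{sub:Melo}; the only genuinely new input is the value of the scaling exponent, the rest being a reorganisation of the discussion that precedes the theorem. Proposition~\ref{prop:limit-of-finite-cumulants-general} already gives that $\mathcal{K}_\bsig[T,\bar T]$ scales as $N^{r(\bsig)-nD}$ with $r(\bsig)=n-\min_{\eta\in S_n}d(\bsig,\eta)$, and that its rescaled limit equals $\sum_{(\pi,\btau)\in\bfS(\bsig)}\varphi_{\pi,\btau}(t,\bar t)\,\mathsf{M}(\bsig\btau^{-1})$. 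It therefore suffices to (i) evaluate $r(\bsig)$ for a purely connected compatible $\bsig$ and (ii) show, under Conjecture~\ref{conj:compatible-and-eta}, that $\bfS(\bsig)$ collapses onto the geodesic interval appearing in \eqref{eq:imit-free-finite-compatible}.

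For step (i), which needs only compatibility and not the conjecture, I would first note that a purely connected $\bsig$ has $K_{\mathrm p}(\bsig)=1$, and since adjoining $\eta$ as an extra colour can only merge pure connected components, $\Pi_{\mathrm p}(\bsig,\eta)=\Pi_{\mathrm p}(\bsig)\vee\Pi_{\mathrm p}(\eta)=\Pi_{\mathrm p}(\bsig)$, whence $K_{\mathrm p}(\bsig,\eta)=1$ for every $\eta$. Thus the connectivity constraint in \eqref{eq:def-scling-purely-connected} is vacuous and the two minimisations coincide. Inserting $K_{\mathrm p}(\bsig)=K_{\mathrm p}(\bsig,\eta)=1$ into Theorem~\ref{thm:second-degree} gives $d(\bsig,\eta)=n-1+\bar\omega(\bsig;\eta)$, hence $\min_\eta d(\bsig,\eta)=n-1+\min_\eta\bar\omega(\bsig;\eta)$. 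Proposition~\ref{prop:lower-bound-on-bar-omega} bounds $\min_\eta\bar\omega(\bsig;\eta)\ge\omega(\bsig)/(D-1)$ with equality exactly when $\bsig$ is compatible, so for compatible $\bsig$ one gets $\min_\eta d(\bsig,\eta)=n-1+\omega(\bsig)/(D-1)$ and therefore $r(\bsig)=1-\omega(\bsig)/(D-1)$. This produces the announced scaling $N^{1-\omega(\bsig)/(D-1)-nD}$ and identifies the normalisation $N^{nD-1+\omega(\bsig)/(D-1)}$ with $N^{nD-r(\bsig)}$.

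For step (ii) I would invoke the reformulation of $\bfS(\bsig)$ through $\nabla^{(2)}$ and $\nabla$ established just above the theorem: for compatible $\bsig$, $(\pi,\btau)\in\bfS(\bsig)$ forces $\nabla^{(2)}(\bsig;\btau)=0$ and $\min_{\eta\in H_{\btau,\pi}}\nabla(\btau;\eta)=0$, so some $\eta\in H_{\btau,\pi}$ satisfies $\nabla(\btau;\eta)=0$. Conjecture~\ref{conj:compatible-and-eta} then forces $\Pi_{\mathrm p}(\eta)\le\Pi_{\mathrm p}(\btau)$, whence $\Pi_{\mathrm p}(\btau,\eta)=\Pi_{\mathrm p}(\btau)$; but membership $\eta\in H_{\btau,\pi}$ demands $K_{\mathrm p}(\btau_{|_B},\eta_{|_B})=1$ for every block $B\in\pi$, which is possible only if each such $B$ is a single block of $\Pi_{\mathrm p}(\btau)$. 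Since moreover $\pi\ge\Pi_{\mathrm p}(\btau)$, this pins $\pi=\Pi_{\mathrm p}(\btau)$ and excludes every strict coarsening $\pi>\Pi_{\mathrm p}(\btau)$ from contributing; consequently only the first-order factors $\varphi_{\Pi_{\mathrm p}(\btau),\btau}$ survive. Organising the remaining sum by the permutations $\eta$ with $\nabla(\bsig;\eta)=0$, and using that for such $\eta$ one has $\btau\eta^{-1}\preceq\bsig\eta^{-1}$, reproduces the double sum stated before the theorem; the hypothesis that $\eta$ is unique, together with the labelling choice $\eta=\mathrm{id}$, leaves the single interval sum $\sum_{\btau\preceq\bsig}\varphi_{\Pi_{\mathrm p}(\btau),\btau}\,\mathsf{M}(\bsig\btau^{-1})$, i.e. \eqref{eq:imit-free-finite-compatible}.

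The ``furthermore'' clause follows because the index set $\{\btau:\btau\eta^{-1}\preceq\bsig\eta^{-1}\}$ carries precisely the lattice structure used in the melonic case: by the characterisation of $\preceq$ around \eqref{eq:Euler-charact-bipartite-map1}, it is isomorphic to the direct product, over the colours $c$ and over the cycles of $\sigma_c\eta^{-1}$, of non-crossing partition lattices $\mathrm{NC}$, on which $\mathsf{M}(\bsig\btau^{-1})=\prod_{c=1}^D\mathsf{M}(\sigma_c\tau_c^{-1})$ is the Möbius function. Relation \eqref{eq:imit-free-finite-compatible} is therefore invertible by Möbius inversion in this product lattice, the multiplicative extension over pure connected components is defined exactly as in Theorem~\ref{thm:free-cumulants-melonic}, and additivity under sums of independent tensors descends from the finite-$N$ additivity of Proposition~\ref{prop:large-N-linear-gen}; thus every item of Theorem~\ref{thm:free-cumulants-melonic} transfers verbatim. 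The main obstacle is conceptual rather than computational: the whole reduction in step (ii), and in particular the suppression of the $\pi>\Pi_{\mathrm p}(\btau)$ terms, rests entirely on Conjecture~\ref{conj:compatible-and-eta} (implied by the subadditivity Conjecture~\ref{conj:subadi}), and on the genericity assumption that distinct compatibility permutations $\eta$ yield disjoint geodesic families; without these one can assert only the unconditional Proposition~\ref{prop:limit-of-finite-cumulants-general}.
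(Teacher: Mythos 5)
Your argument is correct and follows essentially the same route as the paper: the theorem is assembled from Proposition~\ref{prop:limit-of-finite-cumulants-general}, the $\nabla^{(2)}/\nabla$ reformulation of $\bfS(\bsig)$, Conjecture~\ref{conj:compatible-and-eta} to pin $\pi=\Pi_\mathrm{p}(\btau)$, and Proposition~\ref{prop:lower-bound-on-bar-omega} to identify $r(\bsig)=1-\omega(\bsig)/(D-1)$, which is precisely the chain of reasoning the paper carries out in the discussion preceding the statement. Your explicit derivation of the exponent via $d(\bsig,\eta)=n-1+\bar\omega(\bsig;\eta)$ for purely connected $\bsig$, and your closing caveat about the disjointness assumption (moot here since $\eta$ is unique), match the paper's implicit computation and its footnote, respectively.
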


\subsection{Mixed perspective on the pure case}

At the end of Sec.~\ref{subsub:cumulants-for-finite-N}, we discussed the difference at the level of finite $N$ free cumulants between pure random tensors $T, \bar T$ and the mixed quantity $T\otimes \bar T$: for any $\btau$,
$\mathbb{E}\bigl[\Tr_\btau(T, \bar T)\bigr]= \mathbb{E}\bigl[\Tr_\btau(T\otimes  \bar T)\bigr]$  so that for $\bsig$ purely connected,   $\mathcal{K}_\bsig[T, \bar T]= \mathcal{K}^\mathrm{m}_\bsig[T\otimes  \bar T]$ \eqref{eq:cum-finN-int-pure-conn}, and if in addition $\bsig$ is melonic, $N^{nD - 1}\mathcal{K}^\mathrm{m}_\bsig[T\otimes  \bar T] $ converges towards $\kappa^\mathrm{m}_\bsig(t\otimes \bar t) = \kappa_\bsig(t, \bar t)$.
 For $\bsig$ connected ($K_\mathrm{m}(\bsig)=1$) but not purely connected ($K_\mathrm{p}(\bsig)>1$), one generally has $\mathcal{K}_\bsig[T, \bar T]\neq \mathcal{K}^\mathrm{m}_\bsig[T\otimes  \bar T]$, as in the mixed case one still has:
\be
 \mathcal{K}^\mathrm{m}_\bsig[T\otimes  \bar T] = \sum_{\btau \in S^D_{n}} \bE\left[\Tr_{\btau} (T,\bar T)\right]\prod_{c=1}^D W^{(N)} ( \sigma_{ c} \tau_{c}^{-1})  \; ,
\ee
while $\mathcal{K}_\bsig[T, \bar T]$ involves an additional sum over partitions \eqref{eq:cum-finN-int-pur}. In fact  (the proof is in Sec.~\ref{sec:proof-facto-mixed-on-pure}): 

\begin{lemma}
\label{lem:facto-mixed-on-pure}Let $D\ge 3$, $\bsig\in S_n^D$ be melonic, connected but not purely connected. Let $\eta$ be the canonical pairing of $\bsig$. Consider $T,\bar T$ satisfying \eqref{eq:scaling-hypothesis-pure}. Then:
$$
\kappa_\bsig^\mathrm{m}(t\otimes \bar t) = \lim_{N\rightarrow \infty} N^{nD - K_\mathrm{p}(\bsig)}\mathcal{K}^\mathrm{m}_\bsig[T\otimes  \bar T] = \kappa_{\Pi_\mathrm{p}(\bsig), \bsig}(t, \bar t)
 $$
 \end{lemma}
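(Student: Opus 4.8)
The plan is to evaluate the right-hand side of the mixed cumulant formula directly and extract its leading $N$ behaviour. Since $\bsig$ is connected in the mixed sense ($\Pi(\bsig)=1_n$), the partition sum in \eqref{eq:cum-finN-int} collapses to $\pi=1_n$ and one has
\be
\mathcal{K}^\mathrm{m}_\bsig[T\otimes  \bar T] = \sum_{\btau \in S^D_{n}} \bE\left[\Tr_{\btau} (T,\bar T)\right]\prod_{c=1}^D W^{(N)} ( \sigma_{ c} \tau_{c}^{-1})  \; .
\ee
First I would fix the labelling so that the canonical pairing of $\bsig$ is $\eta=\mathrm{id}$, which is legitimate since the expression is a class function for $\sim_\mathrm{p}$ and the right-hand side of \eqref{eq:def-mult-ext-free-cumulants-melonic} is labelling independent. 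Then I expand each expectation over bipartite partitions, $\bE[\Tr_\btau(T,\bar T)]=\sum_{\Pi\ge\Pi_\mathrm{p}(\btau)}\Phi_{\Pi,\btau}[T,\bar T]$, and insert the Weingarten asymptotics $\prod_c W^{(N)}(\sigma_c\tau_c^{-1})=\mathsf{M}(\bsig\btau^{-1})N^{-nD-d(\bsig,\btau)}(1+O(N^{-2}))$ together with the scaling hypothesis \eqref{eq:scaling-hypothesis-pure} in the multiplicative form $\Phi_{\Pi,\btau}[T,\bar T]\sim N^{\sum_{B}r(\btau_{|_B})}\varphi_{\Pi,\btau}(t,\bar t)$, cf.~\eqref{eq:part-cases-phi-bsig-pure}. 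Each pair $(\btau,\Pi)$ then contributes a term of order $N^{E(\btau,\Pi)}$ with $E(\btau,\Pi)=\sum_{B}r(\btau_{|_B})-nD-d(\bsig,\btau)$, and since the index set is finite the limit is governed by the pairs of maximal exponent.

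The core step is the bound $\sum_B r(\btau_{|_B})-d(\bsig,\btau)\le K_\mathrm{p}(\bsig)$. Writing $r(\btau_{|_B})=|B|-\min_{\eta_B:\,K_\mathrm{p}(\btau_{|_B},\eta_B)=1}d(\btau_{|_B},\eta_B)$ and assembling the block minimisers into a single $\hat\eta\in H_{\btau,\Pi}$ (notation of \eqref{eq:general-formula-asymptotics}), one gets $\sum_B r(\btau_{|_B})=n-\min_{\hat\eta\in H_{\btau,\Pi}}d(\btau,\hat\eta)$. For the minimiser $\hat\eta$ the triangle inequality gives $d(\bsig,\btau)+d(\btau,\hat\eta)\ge d(\bsig,\hat\eta)$, while Thm.~\ref{thm:second-degree} ($\bar\omega(\bsig;\hat\eta)=DK_\mathrm{p}(\bsig,\hat\eta)-(D-1)K_\mathrm{p}(\bsig)-n+d(\bsig,\hat\eta)\ge 0$) yields $n-d(\bsig,\hat\eta)\le DK_\mathrm{p}(\bsig,\hat\eta)-(D-1)K_\mathrm{p}(\bsig)$; finally $K_\mathrm{p}(\bsig,\hat\eta)\le K_\mathrm{p}(\bsig)$ because adjoining $\hat\eta$ can only merge pure components ($\Pi_\mathrm{p}(\bsig,\hat\eta)\ge\Pi_\mathrm{p}(\bsig)$). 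Chaining these gives $E(\btau,\Pi)\le K_\mathrm{p}(\bsig)-nD$, so $\mathcal{K}^\mathrm{m}_\bsig[T\otimes\bar T]=O(N^{K_\mathrm{p}(\bsig)-nD})$ as claimed.

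Next I analyse saturation. Equality in the chain forces simultaneously $\bar\omega(\bsig;\hat\eta)=0$ and $K_\mathrm{p}(\bsig,\hat\eta)=K_\mathrm{p}(\bsig)$, which by the uniqueness clause of Thm.~\ref{thm:second-degree} identifies $\hat\eta$ with the canonical pairing of $\bsig$, i.e.~$\hat\eta=\mathrm{id}$; tightness of the triangle inequality then reads $|\sigma_c\tau_c^{-1}|+|\tau_c|=|\sigma_c|$ for every $c$, that is $\btau\preceq\bsig$ (and each such $\btau$ is melonic with canonical pairing $\mathrm{id}$, cf.~Thm.~\ref{thm:free-cumulants-melonic}). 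Moreover $\mathrm{id}=\hat\eta\in H_{\btau,\Pi}$ requires $K_\mathrm{p}(\btau_{|_B},\mathrm{id}_{|_B})=1$ on every block $B$; since $\mathrm{id}$ pairs within the pure components of $\btau$ and never merges distinct ones, this forces $\Pi=\Pi_\mathrm{p}(\btau)$. Hence the dominant pairs are exactly $(\Pi_\mathrm{p}(\btau),\btau)$ with $\btau\preceq\bsig$, for which $\varphi_{\Pi,\btau}=\varphi_{\Pi_\mathrm{p}(\btau),\btau}$; collecting them and dividing by $N^{K_\mathrm{p}(\bsig)-nD}$ gives
\be
\lim_{N\to\infty}N^{nD-K_\mathrm{p}(\bsig)}\mathcal{K}^\mathrm{m}_\bsig[T\otimes\bar T]=\sum_{\btau\preceq\bsig}\varphi_{\Pi_\mathrm{p}(\btau),\btau}(t,\bar t)\,\mathsf{M}(\bsig\btau^{-1})=\kappa_{\Pi_\mathrm{p}(\bsig),\bsig}(t,\bar t),
\ee
the last equality being the definition \eqref{eq:def-mult-ext-free-cumulants-melonic}.

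I expect the main obstacle to be the saturation bookkeeping rather than the inequality itself: one must ensure that no partition $\Pi$ strictly coarser than $\Pi_\mathrm{p}(\btau)$ survives at leading order (handled above by showing $\mathrm{id}\notin H_{\btau,\Pi}$ for such $\Pi$, which forces a strict drop of the exponent), and that the surviving coefficients reassemble exactly into the multiplicative extension $\kappa_{\Pi_\mathrm{p}(\bsig),\bsig}$ with no residual cross terms. The remaining routine verification is that $\mathrm{id}$ genuinely realises $\min_{\hat\eta\in H_{\btau,\Pi_\mathrm{p}(\btau)}}d(\btau,\hat\eta)$ for $\btau\preceq\bsig$, which follows from $\bar\omega(\btau_i^\mathrm{p};\mathrm{id})=0$ on each purely connected melonic component $\btau_i^\mathrm{p}$.
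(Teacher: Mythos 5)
Your proof is correct and follows essentially the same route as the paper's: both start from the one-sum formula \eqref{eq:cum-finN-int-pure-conn}, insert the Weingarten asymptotics and the scaling hypothesis, bound the exponent by combining the $D$-fold triangle inequality with the degree identity of Thm.~\ref{thm:second-degree} and $K_\mathrm{p}(\bsig,\eta)\le K_\mathrm{p}(\bsig)$, and identify the saturating pairs as $(\Pi_\mathrm{p}(\btau),\btau)$ with $\btau\eta^{-1}\preceq\bsig\eta^{-1}$ via the uniqueness of the canonical pairing. The only difference is cosmetic bookkeeping: the paper packages the non-negative deficit using $\mathcal{C}(\bsig;\btau)$ from Lemma~\ref{lem:introduce-C}, whereas you chain the three inequalities directly.
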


Considering now $T_1,  \bar T_1$ and $T_2, \bar T_2$ two independent \textsf{LU}-invariant pure random tensors, in general   $\mathbb{E}\bigl[\Tr_\btau(T_1+T_2, \bar T_1+\bar T_2)\bigr] \neq \mathbb{E}\bigl[\Tr_\btau(T_1\otimes  \bar T_1 + T_2\otimes  \bar T_2)\bigr]$ (see the discussion below \eqref{eq:scaling-hypothesis-pure-multitensors}) and therefore $\mathcal{K}_\bsig[T_1+T_2, \bar T_1+\bar T_2] \neq \mathcal{K}^\mathrm{m}_\bsig[T_1\otimes  \bar T_1 + T_2\otimes  \bar T_2]$, even for purely connected $\bsig$. However, from Prop.~\ref{prop:large-N-linear-gen} and 
Thm.~\ref{thm:finite-free-cumulants}, one has equality for $N$ sufficiently large and purely connected $\bsig$: 
\be
\label{eq:asympt-equality-cumulants-mixed-on-pure}
\mathcal{K}^\mathrm{m}_\bsig[T_1\otimes  \bar T_1 + T_2\otimes  \bar T_2] = \mathcal{K}^\mathrm{m}_\bsig[T_1\otimes  \bar T_1] +   \mathcal{K}^\mathrm{m}_\bsig[T_2\otimes  \bar T_2] = \mathcal{K}_\bsig[T_1,  \bar T_1] +   \mathcal{K}_\bsig[T_2,  \bar T_2] = \mathcal{K}_\bsig[T_1+T_2, \bar T_1+\bar T_2],
\ee
so that their rescaled limits coincide. 

\begin{theorem}
\label{thm:sum-and-mixed-on-pure}
Let $D\ge3$, $\bsig\in S_n^D$ be melonic $\omega(\bsig)=0$, purely connected $K_\mathrm{p}(\bsig)=1$, and labeled such that its canonical pairing is the identity.
Let $T_1,  \bar T_1$ and $T_2, \bar T_2$ be two \emph{independent} \textsf{LU}-invariant pure random tensors satisfying \eqref{eq:scaling-hypothesis-pure-multitensors},
and define the mixed tensor $A=T_1\otimes  \bar T_1 + T_2\otimes  \bar T_2$. Then, 
\begin{align}
\label{eq:first-momcum-in-mixed-on-pure}
 \kappa^\mathrm{m}_{\bsig}(a)  &= \lim_{N\to \infty} N^{nD-1}  \mathcal{K}^{\mathrm{m}}_\bsig[A]
  = \sum_{\substack{{\btau \in S^D_{n}}\\{\btau \preceq \bsig } } }   \varphi ^\mathrm{m}_{\Pi_\mathrm{p}(\btau), \btau} (a)\,  \mathsf{M}(\bsig\btau^{-1}) \;,\\
    \varphi^\mathrm{m}_{\bsig}(a) &= \sum_{\substack{{\btau \in S^D_{n}}\\{\btau \preceq \bsig } } }   \kappa^\mathrm{m}_{\Pi_\mathrm{p}(\btau), \btau}(a) \;,
    \label{eq:second-momcum-in-mixed-on-pure}
\end{align}
where for $\bnu$ purely connected and melonic, $\varphi^\mathrm{m}_\bnu(a)=\lim_{\rightarrow \infty} \Phi^\mathrm{m}_\bnu[A]/N<\infty$. 
In particular, $\kappa^\mathrm{m}_{\bsig}(a) = \kappa_\bsig(t_1+t_2, \bar t_1+\bar t_2)$ and $\varphi^\mathrm{m}_{\bsig}(a) = \varphi_\bsig(t_1+t_2, \bar t_1+\bar t_2)$. As a consequence, 
\be
\label{eq:non-canonical-vanish}
\sum_{{{\vec x, \vec{x'}}\textrm{ s.t. } {\vec{x'}\neq \vec{\bar x}}} } \varphi_{\bsig}(\vec x, \vec{x'})=0, 
\ee
where $\vec x=(x_1,\ldots x_n)$, $x_i\in \{t_1, t_2\}$, $\vec{x'}=(x'_{\bar 1},\ldots x'_{\bar n})$, $x'_{\bar i}\in \{\bar t_1, \bar t_2\}$, such that not $x'_{\bar i} \neq \overline{x_{i}}$ for at least one canonical pair $1\le i \le n$.
\end{theorem}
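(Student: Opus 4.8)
The plan is to exploit the finite-$N$ additivity of the mixed finite-size free cumulants together with the identification, already available from Lemma~\ref{lem:facto-mixed-on-pure} and the purely connected reduction \eqref{eq:cum-finN-int-pure-conn}, of $\mathcal{K}^{\mathrm{m}}_{\bsig}$ with the pure cumulants. First I would record the chain of finite-$N$ equalities \eqref{eq:asympt-equality-cumulants-mixed-on-pure}, which holds for $N$ large enough when $\bsig$ is purely connected: since $\bsig$ is purely connected, \eqref{eq:cum-finN-int-pure-conn} gives $\mathcal{K}^{\mathrm{m}}_{\bsig}[T_a\otimes\bar T_a]=\mathcal{K}_{\bsig}[T_a,\bar T_a]$ for each $a$, while Proposition~\ref{prop:large-N-linear-gen} and Theorem~\ref{thm:finite-free-cumulants} give additivity of $\mathcal{K}^{\mathrm{m}}$ over the independent summands $A=T_1\otimes\bar T_1+T_2\otimes\bar T_2$. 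Rescaling by $N^{nD-1}$ and taking $N\to\infty$, the melonic scaling of Theorem~\ref{thm:limit-of-finite-cumulants} applies to each pure cumulant, so the limit $\kappa^{\mathrm{m}}_{\bsig}(a)$ exists and equals $\kappa_{\bsig}(t_1+t_2,\bar t_1+\bar t_2)$; expanding the latter by the defining formula \eqref{eq:def-mult-ext-free-cumulants-melonic} with canonical pairing the identity yields exactly \eqref{eq:first-momcum-in-mixed-on-pure}. The Möbius inversion \eqref{eq:inversion-free-cum-melonic} from Theorem~\ref{thm:free-cumulants-melonic}, transported verbatim to the mixed moments $\varphi^{\mathrm{m}}$, then produces \eqref{eq:second-momcum-in-mixed-on-pure}; one must only check that $\varphi^{\mathrm{m}}_{\bnu}(a)=\lim_N \Phi^{\mathrm{m}}_{\bnu}[A]/N$ is finite for $\bnu$ purely connected melonic, which follows from the first-order scaling $r(\bnu)=1$ of Theorem~\ref{thm:melonic-gaussian} together with the assumed scaling \eqref{eq:scaling-hypothesis-pure-multitensors}.

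Next I would identify $\varphi^{\mathrm{m}}_{\bsig}(a)$ with $\varphi_{\bsig}(t_1+t_2,\bar t_1+\bar t_2)$. The key point is that, because $A=T_1\otimes\bar T_1+T_2\otimes\bar T_2$, expanding $\Tr_{\bsig}(A)$ multilinearly in the two summands produces precisely a sum over assignments of each white vertex to a type $x_i\in\{t_1,t_2\}$ and each black vertex to a type $x'_{\bar j}\in\{\bar t_1,\bar t_2\}$; since $T_1,\bar T_1$ and $T_2,\bar T_2$ are independent and individually \textsf{LU}-invariant satisfying \eqref{eq:scaling-hypothesis-pure-multitensors}, the expectation factorizes over mixed types and, after rescaling, yields $\sum_{\vec x,\vec{x'}}\varphi_{\bsig}(\vec x,\vec{x'})$. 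On the other hand, the same quantity computed as a pure object with the single summed tensor $T_1+T_2$ equals $\varphi_{\bsig}(t_1+t_2,\bar t_1+\bar t_2)$ by definition. Comparing these two evaluations of the same limit gives both $\varphi^{\mathrm{m}}_{\bsig}(a)=\varphi_{\bsig}(t_1+t_2,\bar t_1+\bar t_2)$ and the identity $\sum_{\vec x,\vec{x'}}\varphi_{\bsig}(\vec x,\vec{x'})=\varphi_{\bsig}(t_1+t_2,\bar t_1+\bar t_2)$.

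Finally, \eqref{eq:non-canonical-vanish} follows by isolating the mixed-type terms. In the \emph{mixed} evaluation via $A$, the tensor product structure $T_a\otimes\bar T_a$ forces each white vertex $i$ and its canonically paired black vertex to carry matching types; any assignment with $x'_{\bar i}\neq\overline{x_i}$ on some canonical pair corresponds, in the covariance $\bE[T_a\bar T_b]$ with $a\neq b$, to a vanishing contribution by independence, so the mixed evaluation retains \emph{only} the canonical-type terms and reproduces $\varphi_{\bsig}(t_1+t_2,\bar t_1+\bar t_2)=\sum_{\vec x=\overline{\vec{x'}}}\varphi_{\bsig}(\vec x,\vec{x'})$. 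Subtracting this from the full pure sum $\sum_{\vec x,\vec{x'}}\varphi_{\bsig}(\vec x,\vec{x'})=\varphi_{\bsig}(t_1+t_2,\bar t_1+\bar t_2)$ established above cancels the canonical terms and leaves exactly \eqref{eq:non-canonical-vanish}. The step I expect to be the main obstacle is the bookkeeping in the second paragraph: one must verify carefully that the multilinear expansion of the mixed trace-invariant $\Tr_{\bsig}(A)$ matches, type by type, the pure trace-invariants $\Tr_{\bsig}(\vec X,\vec{X'})$ appearing in \eqref{eq:scaling-hypothesis-pure-multitensors}, and that the melonic first-order scaling is uniform across \emph{all} type assignments (so that no mixed-type term survives at a lower but still comparable order). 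This requires invoking footnote~\ref{footnote:gaussian-mixed} and the fact that for melonic purely connected $\bsig$ the dominant Wick pairing is the canonical one, which pins the surviving types to the canonical pairing.
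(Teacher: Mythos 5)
Your overall strategy is close to the paper's, and your first paragraph (finite-$N$ additivity via \eqref{eq:asympt-equality-cumulants-mixed-on-pure} giving $\kappa^\mathrm{m}_{\bsig}(a)=\kappa_\bsig(t_1+t_2,\bar t_1+\bar t_2)$) is correct. But there is a genuine gap in your second paragraph, and it propagates. The multilinear expansion of $\Tr_{\bsig}(A)$ with $A=T_1\otimes\bar T_1+T_2\otimes\bar T_2$ assigns a type to each \emph{thick edge}, i.e.\ to each copy of $A$; it therefore produces only the $2^n$ diagonal assignments in which every white vertex and its canonically paired black vertex carry matching types, $\vec{x'}=\vec{\bar x}$. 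It does \emph{not} produce the $2^{2n}$ independent assignments, so the mixed evaluation yields $\varphi^\mathrm{m}_{\bsig}(a)=\sum_{\vec x}\varphi_\bsig(\vec x,\vec{\bar x})$, not $\sum_{\vec x,\vec{x'}}\varphi_\bsig(\vec x,\vec{x'})$. Consequently $\bE[\Tr_\bsig(A)]$ and $\bE[\Tr_\bsig(T_1+T_2,\bar T_1+\bar T_2)]$ are \emph{not} ``two evaluations of the same quantity'' — the paper explicitly stresses that these finite-$N$ objects differ — and your direct comparison of moments does not establish $\varphi^\mathrm{m}_{\bsig}(a)=\varphi_\bsig(t_1+t_2,\bar t_1+\bar t_2)$. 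Your third paragraph in fact concedes the diagonal-only structure, contradicting the second; once that concession is made, you are left with no derivation of the moment equality, which is precisely the linchpin from which \eqref{eq:non-canonical-vanish} is extracted by subtraction.

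The same issue undercuts your route to \eqref{eq:first-momcum-in-mixed-on-pure}: expanding $\kappa_\bsig(t_1+t_2,\bar t_1+\bar t_2)$ via \eqref{eq:def-mult-ext-free-cumulants-melonic} produces summands $\varphi_{\Pi_\mathrm{p}(\btau),\btau}(t_1+t_2,\bar t_1+\bar t_2)$, and replacing these by $\varphi^\mathrm{m}_{\Pi_\mathrm{p}(\btau),\btau}(a)$ presupposes the very moment equality (equivalently \eqref{eq:non-canonical-vanish}) you have not yet proved — the argument is circular as written. The repair, which is what the paper does, is to work at the level of cumulants throughout: expand $\mathcal{K}^\mathrm{m}_\bsig[A]$ multilinearly into pure invariants $\Tr_\btau(\vec X,\vec{\bar X})$ with the diagonal constraint $\bar X_i=\overline{X_i}$, apply the classical moment--cumulant formula over pure connected components and rerun the asymptotic analysis of the melonic case to get $\kappa^\mathrm{m}_\bsig(a)=\sum_{\vec x}\sum_{\btau\preceq\bsig}\varphi_{\Pi_\mathrm{p}(\btau),\btau}(\vec x,\vec{\bar x})\,\mathsf{M}(\bsig\btau^{-1})$, identify $\varphi^\mathrm{m}_\bnu(a)=\sum_{\vec x}\varphi_\bnu(\vec x,\vec{\bar x})$ for first-order $\bnu$ to obtain \eqref{eq:first-momcum-in-mixed-on-pure}, and only then deduce $\varphi^\mathrm{m}_\bsig(a)=\varphi_\bsig(t_1+t_2,\bar t_1+\bar t_2)$ by applying the two Möbius inversions to the (genuinely equal) cumulants. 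The final subtraction step you describe is then exactly right.
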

\begin{proof}
The theorem is proven in Sec.~\ref{sec:sum-and-mixed-on-pure}. \end{proof}

 A consequence of \eqref{eq:non-canonical-vanish} is that if the quantities in the sum are non-negative they must vanish, so that in that case the corresponding $\Phi_\bsig[\vec X, \vec{X'}]$ all have a scale lower than $N^{r(\bsig)}$ in \eqref{eq:scaling-hypothesis-pure-multitensors}. This generalizes a  known result for the Gaussian case, see Footnote \ref{footnote:gaussian-mixed}. The condition $ \varphi_{\bsig}(\vec x, \vec{x'})=0$ for $\vec{x'}\neq \vec{\bar x}$ is one of the conditions for tensor freeness, see Thm.~\ref{thm:equiv-tensor-freeness-cumulants-moments-pure} in Sec.~\ref{sec:tensor-freeness}.

\subsection{Mixed random tensors that scale like a Wishart tensor}
\label{sub:mixed-like-pure-gaussian}

In this section, we will discuss the case when our random tensor is mixed rather than pure. As scaling assumption, we will consider mixed random tensors $A$ that scales like the square Wishart random tensor of Section~\ref{sub:true-Wishart-tensor}, that is we assume that for $\bsig \in S_n^D$:
\be
\label{eq:scaling-wishart-D1}
\lim_{N\rightarrow \infty} \frac 1 { N^{r_W(\bsig)}}\,\Phi^\mathrm{m}_\bsig[A] =\varphi^\mathrm{m}_\bsig(a), \hspace{1.5cm} r_W(\bsig)= n - \min d\bigl( (\bsig  , \mathrm{id}), \eta\bigr),
\ee
where the minimum is taken over $\eta\in S_n$ for which $K_\mathrm{m}(\bsig, \eta)=1$. Note that $d( ( \bsig , \mathrm{id} ) , \eta) = d(\bsig, \eta)  + \lvert \eta \rvert$ and also $r_W(\bsig)=r(\bsig , \mathrm{id})$ with $r$ the Gaussian scaling function in \eqref{eq:def-scling-purely-connected}. 

For the version of the theorem that involves different tensors, one must make the following stronger assumption for any $\vec A=(A_1, \ldots, A_n)$:
\be
\label{eq:scaling-wishart-multitens}
\lim_{N\rightarrow \infty} \frac 1 { N^{r_W(\bsig)}}\,\Phi^\mathrm{m}_\bsig[{\vec A}\,] =\varphi^\mathrm{m}_\bsig(\vec a), 
\ee

This includes the case where there exists a \textsf{LU}-invariant pure random tensor $T,\bar T$ with $D+1$ indices, not necessarily the pure  Gaussian itself but displaying Gaussian scaling \eqref{eq:scaling-hypothesis-pure}, and $A$ is: 
\be
\label{eq:wishart-perspective-on-pure}
A_{i^1 \ldots i^D ; j^1 \ldots j^D} = \sum_{k=1}^N T_{i^1 \ldots i^D \, k}\bar T_{j^1 \ldots j^D \, k}. 
\ee
For this case the results discussed here are to be compared to the pure case with one additional color (see Sec.~\ref{sub:true-Wishart-tensor}), and we review this 
comparison at the end of this subsection.

For the general case of a mixed tensor $A$  with scaling function $r_W$, the thick edges cannot a priori be seen simply as an additional color. In order to understand this, 
let us first study the free moment-cumulant relations for the first order moments $\Phi^\mathrm{m}_\bsig[A]$, corresponding to a melonic connected invariant $\bsig$ such that $(\bsig, \mathrm{id})$ is melonic (see Sec.~\ref{sub:true-Wishart-tensor}).

\begin{theorem}
\label{thm:limit-for-wishart-tensor-first-order}
Let $D\ge 2$, $\bsig\in S_n^D$ be connected and such that $\omega( \bsig , \mathrm{id} )=0$ (which implies that $\bsig$ is melonic) and let $\eta\in S_n$ be the canonical pairing of $(\bsig,\mathrm{id})$. Consider some mixed random tensors $A_1,A_2,\ldots$ that scale as in \eqref{eq:scaling-wishart-D1}, and satisfy \eqref{eq:scaling-wishart-multitens}. Then the limit of the mixed finite  size free cumulant associated to $\bsig$ is:
 \be
\kappa_{\bsig}^\mathrm{m}(\vec a):= \lim_{N\rightarrow \infty} N^{nD-1}\mathcal{K}^\mathrm{m}_{\bsig}[\vec A]  = \sum_{\substack{
{\btau\eta^{-1} \preceq \bsig\eta^{-1}}}}   \varphi^\mathrm{m} _{\Pi(\btau), \btau } (\vec a)\,  \mathsf{M}(\bsig\btau^{-1}),
\ee
and each $\btau$ in the sum is such that  $( \btau,\mathrm{id})$ is melonic with canonical pairing $\eta$. Furthermore:
\begin{itemize}
\item Prop.~\ref{prop:finite-free-prop} allows expressing $\kappa^\mathrm{m}_{\bsig}$ as a rescaled limit of classical cumulants of tensor entries. 
\item If $A_1$ and $A_2$ are independent and scale as in \eqref{eq:scaling-wishart-D1}, then $A=A_1+A_2$ scales as in  \eqref{eq:scaling-wishart-D1}, and $\kappa^\mathrm{m}_\bsig(a)= \kappa^\mathrm{m}_\bsig(a_1)+\kappa^\mathrm{m}_\bsig(a_2)$.
\item The relation can be inverted. With the notations above:
\be
\label{eq:free-com-cum-gen-mixed}
\varphi^\mathrm{m}_{\bsig}(\vec a) = \sum_{\substack{{\btau\eta^{-1}\preceq\bsig\eta^{-1}}}} \kappa^\mathrm{m}_{\Pi\left(\btau\right), \btau }(\vec a). 
\ee
\end{itemize}
\end{theorem}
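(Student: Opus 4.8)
The plan is to replay the proof of Thm.~\ref{thm:limit-of-finite-cumulants} in the augmented $(D+1)$-colour picture obtained by adjoining the frozen colour $\mathrm{id}$ to every $D$-tuple. Set $\tilde\bsig=(\bsig,\mathrm{id})$ and, for $\btau\in S_n^D$, $\tilde\btau=(\btau,\mathrm{id})$, both regarded in $S_n^{D+1}$. The hypothesis $\omega(\bsig,\mathrm{id})=0$ says precisely that $\tilde\bsig$ is melonic, and since $\bsig$ is mixed-connected we have $K_\mathrm{p}(\tilde\bsig)=K_\mathrm{m}(\bsig)=1$, so $\tilde\bsig$ is purely connected and melonic with canonical pairing $\eta$. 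Because $r_W(\bsig)=r(\tilde\bsig)$ (Sec.~\ref{sub:true-Wishart-tensor}), because the blocks of $\Pi(\btau)$ coincide with those of $\Pi_\mathrm{p}(\tilde\btau)$ (Sec.~\ref{sec:notations-prerequ}), and because $d(\bsig,\btau)=d(\tilde\bsig,\tilde\btau)$ (the frozen colour contributes $|\mathrm{id}|=0$), every combinatorial estimate of the pure melonic proof transcribes to the present situation; the single bookkeeping difference is that the frozen colour carries no Weingarten factor, which is exactly what shifts the overall scaling from $N^{1-n(D+1)}$ to $N^{1-nD}$.

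First I would simplify the finite-$N$ cumulant. As $\bsig$ is connected, $\Pi(\bsig)=1_n$ forces the outer partition in \eqref{eq:cum-finN-int} to be $1_n$, giving $\mathcal{K}^{\mathrm{m}}_\bsig[\vec A]=\sum_{\btau\in S_n^D}\bE[\Tr_{\btau}(\vec A)]\prod_{c=1}^D W^{(N)}(\sigma_c\tau_c^{-1})$. Expanding $\bE[\Tr_{\btau}(\vec A)]=\sum_{\pi\ge\Pi(\btau)}\prod_{G\in\pi}\Phi^{\mathrm{m}}_{\btau_{|_G}}[\vec A]$, inserting the ansatz \eqref{eq:scaling-wishart-multitens} $\Phi^{\mathrm{m}}_{\btau_{|_G}}[\vec A]\sim N^{r_W(\btau_{|_G})}\varphi^{\mathrm{m}}_{\btau_{|_G}}(\vec a)$ and the Weingarten asymptotics $W^{(N)}(\nu)=\mathsf{M}(\nu)N^{-n-|\nu|}(1+O(N^{-2}))$, the pair $(\pi,\btau)$ contributes with power $P(\pi,\btau)=\sum_{G\in\pi}r_W(\btau_{|_G})-nD-d(\bsig,\btau)$. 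The core of the argument is then the inequality $P(\pi,\btau)\le 1-nD=r_W(\bsig)-nD$ with equality iff $\pi=\Pi(\btau)$ and $\btau\eta^{-1}\preceq\bsig\eta^{-1}$. I would prove it in two steps: subadditivity of $r_W$ on mixed components restricts the maximum to $\pi=\Pi(\btau)$; then, writing $r_W(\btau_{|_G})=r((\btau_{|_G},\mathrm{id}))$, applying Thm.~\ref{thm:second-degree} to the purely connected augmented blocks and combining with the triangle inequality $d(\tilde\bsig,\eta)\le d(\tilde\bsig,\tilde\btau)+d(\tilde\btau,\eta)$ produces the mixed analogue of the non-negative defect $\Delta_\pi(\bsig;\btau)$ and of the sets $H_{\btau,\pi}$ of \eqref{eq:general-formula-asymptotics}. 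Finally the melonic rigidity of Thm.~\ref{thm:second-degree}, namely uniqueness of the canonical pairing realising the minimum, collapses the saturating set to $\{\btau:\btau\eta^{-1}\preceq\bsig\eta^{-1}\}$ with $\pi=\Pi(\btau)$, each such $\btau$ being itself melonic with canonical pairing $\eta$ as in Thm.~\ref{thm:free-cumulants-melonic}; summing the surviving Möbius weights $\mathsf{M}(\bsig\btau^{-1})$ gives the formula.

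The three itemised consequences are then short. The microscopic expression is obtained by inserting Prop.~\ref{prop:finite-free-prop} into the finite-$N$ cumulant and passing to the rescaled limit. Additivity follows from the finite-$N$ additivity of Prop.~\ref{prop:large-N-linear-gen}: for $N$ large $\mathcal{K}^{\mathrm{m}}_\bsig[A_1+A_2]=\mathcal{K}^{\mathrm{m}}_\bsig[A_1]+\mathcal{K}^{\mathrm{m}}_\bsig[A_2]$, while the hypothesis \eqref{eq:scaling-wishart-multitens}, assumed for every tuple, guarantees that $A_1+A_2$ again obeys \eqref{eq:scaling-wishart-D1}; multiplying by $N^{nD-1}$ and letting $N\to\infty$ yields $\kappa^{\mathrm{m}}_\bsig(a)=\kappa^{\mathrm{m}}_\bsig(a_1)+\kappa^{\mathrm{m}}_\bsig(a_2)$. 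The inversion \eqref{eq:free-com-cum-gen-mixed} is Möbius inversion: the change of variable $\btau\mapsto\btau\eta^{-1}$ identifies the index set with $\{\bnu\preceq\bsig\eta^{-1}\}$, a product over colours and over the cycles of $\bsig\eta^{-1}$ of non-crossing lattices, on which $\mathsf{M}(\bsig\btau^{-1})=\prod_c\mathsf{M}(\sigma_c\tau_c^{-1})$ is the Möbius function; the defining relation then inverts through the zeta function exactly as \eqref{eq:mom-cum} and \eqref{eq:inversion-free-cum-melonic}.

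The step I expect to be the main obstacle is the power-counting inequality and its equality analysis. The difficulty is that $\mathcal{K}^{\mathrm{m}}_\bsig$ is \emph{not} literally the $(D+1)$-colour pure free cumulant of $\tilde\bsig$: in the pure formula the last colour of $\btau$ is summed over, whereas here it is frozen to $\mathrm{id}$. One therefore cannot simply quote Thm.~\ref{thm:limit-of-finite-cumulants} and must redo the estimates in the augmented picture. The delicate point inside this is reconciling the global connectivity constraint $K_\mathrm{p}(\tilde\bsig,\eta)=1$ entering $\min_\eta d(\tilde\bsig,\eta)$ with the blockwise constraints defining $H_{\btau,\pi}$; it is precisely the melonic rigidity of Thm.~\ref{thm:second-degree}, with its unique minimiser equal to the canonical pairing, that makes the two compatible and forces $\pi=\Pi(\btau)$.
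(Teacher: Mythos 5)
Your proposal is correct and follows essentially the same route as the paper's proof in Appendix~\ref{subsub:free-cumulants-wishart}: adjoin the frozen colour $\mathrm{id}$, transcribe the power counting of the pure melonic case via the triangle inequality and the degree decomposition $\mathcal{C}_\mathrm{m}(\bsig;\btau)+(D+1)\bigl(K_\mathrm{m}(\btau)-\#(\pi)\bigr)+\bar\omega\bigl((\btau,\mathrm{id});\eta\bigr)\ge\bar\omega\bigl((\bsig,\mathrm{id});\eta\bigr)$, and invert by M\"obius inversion in the lattice interval with the $(D+1)$th colour frozen to $\eta^{-1}$. One caution: you should not invoke ``subadditivity of $r_W$ on mixed components'' to force $\pi=\Pi(\btau)$, since strict subadditivity of the Gaussian scaling is precisely the open Conjecture~\ref{conj:subadi}; the step is in any case redundant, because at first order $\bar\omega\bigl((\bsig,\mathrm{id});\eta_0\bigr)=0$ forces all three non-negative terms on the left, in particular $(D+1)\bigl(K_\mathrm{m}(\btau)-\#(\pi)\bigr)$, to vanish, which already yields $\pi=\Pi(\btau)$.
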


\begin{proof}
The proof can be found Section~\ref{subsub:free-cumulants-wishart}.
\end{proof}

In the general mixed case one cannot treat the thick edges (the identity) as an additional color because the inverse relation \eqref{eq:free-com-cum-gen-mixed} does not include a sum over permutations $\tau_{D+1}$ such that $\tau_{D+1}\eta^{-1}\preceq \eta^{-1}$, as one would have in the pure case with one additional color. See  Fig.~\ref{fig:ex-preceq-melo-mixed}

\begin{figure}[!h]
\centering
\includegraphics[scale=1]{ex-melo-nonpure.pdf}\hspace{1.5cm}\includegraphics[scale=1]{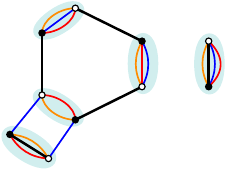}\\
\includegraphics[scale=1]{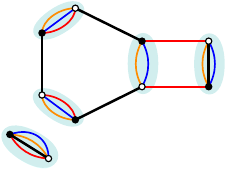}\hspace{1.5cm}\includegraphics[scale=1]{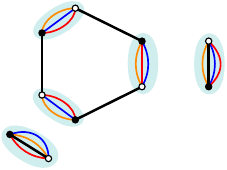}
\caption{A $\bsig$ which is connected but not purely connected and such that $(\bsig, \mathrm{id})$ is melonic (upper left), with canonical pairing given by the blue blobs, and all the  $\btau$ such that $\btau\eta^{-1}\preceq\bsig\eta^{-1}$. The labeling is not indicated, as the same diagrams are obtained regardless of the labeling of $\bsig$. 
}
\label{fig:ex-preceq-melo-mixed}
\end{figure}

The inverse relation \eqref{eq:free-com-cum-gen-mixed} simplifies if the connected components of $\bsig$ are purely connected, as then the canonical pairing on $\bsig$ is $\eta=\mathrm{id}$, to:
\be
\label{eq:free-com-cum-gen-mixed-purelyconn}
\varphi^\mathrm{m}_{\Pi(\bsig), \bsig }(a) = \sum_{\substack{
{ \btau\preceq\bsig}}} \kappa^\mathrm{m}_{\Pi\left(\btau\right), \btau }(a) \;,
\ee
that is one recovers the formula from the pure 
case with $D$ indices with canonical pairing $\eta=\mathrm{id}$  \eqref{eq:inversion-free-cum-melonic}, since for any $\btau \preceq \bsig$, $\Pi(\btau)$ and $\Pi_\mathrm{p}(\btau)$ coincide. Thus for the purely-connected melonic invariants, which are first order in both situations, nothing distinguishes a pure distribution with $D$ indices with Gaussian scaling function $r(\bsig)$ from a mixed distribution with scaling function $r_W(\bsig)=r(\bsig,\mathrm{id})$. The difference between the two comes from the fact that there are more first order invariants in the mixed case, namely those which are connected but not purely connected (Fig.~\ref{fig:ex-invariants-purs-mixed})

Observe that there are invariants for which 
$\btau\eta^{-1}\preceq \bsig\eta^{-1}$ forces $\btau=\bsig$. This is for instance the case for $\bsig$ melonic with all $\sigma_c$ equal. For such invariants the asymptotic moment equals the free cumulant, and it is additive. This phenomenon is specific to the tensor case and does not arise for matrices. 

If $\kappa^\mathrm{m}_\btau(a)=1$ for the first order $\btau$ (connected and $( \btau,\mathrm{id} )$ melonic) and we pick a connected melonic invariant $\bsig\in S_n^D$ with $\eta=\mathrm{id}$ (in which case $\bsig$ is purely connected) then $\varphi^\mathrm{m}_{\bsig}(a)$ is given by the same product of Catalan numbers as in \eqref{eq:asymptotic-moments-cumulants-are-one}. 

\paragraph{Higher orders.}Prop.~\ref{prop:limit-of-finite-cumulants-general} which deals with the higher order in the pure case also generalizes with obvious modifications to the mixed one. In particular, for connected $\bsig$, $\mathcal{K}^\mathrm{m}_\bsig[A]$ scales as:
\be
\label{eq:scling-wishart-finite-cumulants-general}
\mathcal{K}^\mathrm{m}_\bsig[A]_{\lvert_{K_\mathrm{m}(\bsig)=1}} \asymp N^ {r_W(\bsig) - nD} \;, 
\ee
which is consistent with  \eqref{eq:rescaled-higher-moments-and-cumulants} for $D=1$.

\paragraph{The mixed and Wishart point of vue on the pure case.}For the genuine Wishart like case, we have the following.
\begin{proposition}
\label{prop:mixed-persp-on-pure}
Let $A_{i^1\ldots i^D ; j^1 \ldots j^D} = \sum_{k=1}^N T_{i^1\ldots i^D k}\bar T_{j^1 \ldots j^D k}$, $D\ge 2$, for some pure random tensor $T, \bar T$ with $D+1$ indices with Gaussian scaling \eqref{eq:scaling-hypothesis-pure}. Let $\bsig\in S_n^D$  be  such that $( \bsig,\mathrm{id} )$ is melonic  with canonical pairing $\eta$. Then the pure free cumulants of $T,\bar T$ and the mixed free cumulants of $A$ are related by:
\be
\label{eq:mixed-perspective-on-wishart-general}
\kappa_{\Pi_\mathrm{p} ( \bsig,\mathrm{id} ), (\bsig, \mathrm{id} ) }(t,\bar t)= \sum_{ \substack{
{\nu \in S_n} \\
{ \nu\eta^{-1}\preceq \eta^{-1}} } }\kappa^\mathrm{m}_{\Pi\left(\bsig \nu^{-1}\right), \bsig \nu^{-1} }( a)\; \mathsf{M}(\nu)  \; , 
\ee
where $\nu\in S_n$ encodes a change of the thick edges representing the tensors $A$.\footnote{\label{footnote:nu-new-thick}If we change variable to $\bsig'=\bsig\nu^{-1}$ corresponding to a change of labeling of the white vertices, the identity after the change of variables represents the thick edges, but before this change of labeling, the thick edges are represented by the permutation $\nu$.} 
Conversely:
\be
\label{eq:mixed-perspective-on-wishart-general-inverse}
\kappa^\mathrm{m}_{\Pi\left(\bsig\right), \bsig }(a)= \sum_{\substack{{\nu\in S_{n}}\\{ \nu \eta^{-1}\preceq \eta^{-1}}}} \kappa_{\Pi_\mathrm{p} (\bsig, \nu), (\bsig , \nu ) }(t,\bar t) \; .
\ee

If $\bsig$ is purely connected, $\eta=\mathrm{id}$ and 
$\kappa_{( \bsig,\mathrm{id} ) }(t, \bar t)=\kappa^{\mathrm{m}}_{\bsig}(a)$.
\end{proposition}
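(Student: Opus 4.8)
The plan is to reduce the statement to two ingredients: a moment bridge identifying the Wishart-type data of $A$ with the pure data of $(T,\bar T)$ carrying one extra color frozen to $\mathrm{id}$, and a combinatorial lemma describing how the canonical pairing reacts to a non-crossing rearrangement of the thick edges. First I would establish the moment bridge. Writing out $\Tr_{\bsig}(A)$ with the substitution $A_{\vec i;\vec j}=\sum_k T_{\vec i k}\bar T_{\vec j k}$ shows term by term that $\Tr_{\bsig}(A)=\Tr_{(\bsig,\mathrm{id})}(T,\bar T)$ for every $\bsig\in S_n^D$, the $(D+1)$-th color being forced to $\mathrm{id}$ because each copy of $A$ contracts the $k$-index of the $s$-th $T$ with that of the $s$-th $\bar T$. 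Since the mixed connected components of $\bsig$ are in bijection with the pure connected components of $(\bsig,\mathrm{id})$ (Sec.~\ref{sec:notations-prerequ}), the classical cumulants coincide, $\Phi^\mathrm{m}_{\bsig'}[A]=\Phi_{(\bsig',\mathrm{id})}[T,\bar T]$; and because $r_W(\bsig')=r(\bsig',\mathrm{id})$ the rescaled limits give $\varphi^\mathrm{m}_{\Pi(\bsig'),\bsig'}(a)=\varphi_{\Pi_\mathrm{p}(\bsig',\mathrm{id}),(\bsig',\mathrm{id})}(t,\bar t)$ for every $\bsig'$, and likewise for several tensors.

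Next I would prove the key lemma: if $(\bsig,\mathrm{id})$ is melonic with canonical pairing $\eta$ and $\nu\preceq\eta$ (equivalently $\nu\eta^{-1}\preceq\eta^{-1}$), then $(\bsig\nu^{-1},\mathrm{id})$ is again melonic with canonical pairing $\eta\nu^{-1}$. The compatibility half is clean: melonicity makes $\eta$ the unique permutation rendering $(\bsig,\mathrm{id})$ compatible (Lemma in Sec.~\ref{sub:Melo}), and inspecting the $(c,D+1)$ terms of $\nabla$ with color $D+1=\mathrm{id}$, namely $|\sigma_c\eta^{-1}|+|\eta|-|\sigma_c|=0$, forces $\eta\preceq\sigma_c$ for every $c$. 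Hence $\nu\preceq\eta\preceq\sigma_c$; since right translation by $\nu^{-1}$ is an isometry of the Cayley graph it sends the geodesic $\mathrm{id}\to\nu\to\eta\to\sigma_c$ to $\nu^{-1}\to\mathrm{id}\to\eta\nu^{-1}\to\sigma_c\nu^{-1}$, whose sub-path from $\mathrm{id}$ yields $\eta\nu^{-1}\preceq\sigma_c\nu^{-1}$, while the $(c_1,c_2)$ terms of $\nabla(\bsig\nu^{-1},\mathrm{id};\eta\nu^{-1})$ reduce identically to those of $\nabla(\bsig,\mathrm{id};\eta)$. Thus $\nabla(\bsig\nu^{-1},\mathrm{id};\eta\nu^{-1})=0$. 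The melonicity ($\omega=0$) is the delicate half, and I expect it to be the main obstacle: I would argue it graphically, reading $\nu\preceq\eta$ as a sequence of non-crossing flips of thick edges along the canonical pairs (as in the flip description following Thm.~\ref{thm:limit-of-finite-cumulants}), each preserving the melonic structure, or else close the degree relation \eqref{eq:nabla-and-degrees} using the compatibility just obtained.

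With these in hand the forward identity is bookkeeping. I would expand $\kappa_{\Pi_\mathrm{p}(\bsig,\mathrm{id}),(\bsig,\mathrm{id})}(t,\bar t)$ by the multiplicative form of Thm.~\ref{thm:limit-of-finite-cumulants}–Thm.~\ref{thm:free-cumulants-melonic}, the summation variable being a $(D+1)$-tuple $(\btau,\rho)$ with $\btau\eta^{-1}\preceq\bsig\eta^{-1}$ and $\rho\eta^{-1}\preceq\eta^{-1}$ (i.e.\ $\rho\preceq\eta$), the Möbius weight factoring as $\mathsf{M}(\bsig\btau^{-1})\mathsf{M}(\rho)$. The moment bridge via $(\btau,\rho)\sim_\mathrm{p}(\btau\rho^{-1},\mathrm{id})$ turns $\varphi_{\Pi_\mathrm{p}(\btau,\rho),(\btau,\rho)}$ into $\varphi^\mathrm{m}_{\Pi(\btau\rho^{-1}),\btau\rho^{-1}}(a)$, and the substitution $\btau''=\btau\rho^{-1}$ makes the inner $\btau$-sum agree termwise with the mixed free-cumulant expansion of $\bsig\rho^{-1}$ (Thm.~\ref{thm:limit-for-wishart-tensor-first-order}); here the key lemma is exactly what identifies the constraint $\btau''\rho\eta^{-1}\preceq\bsig\eta^{-1}$ with $\btau''(\eta\rho^{-1})^{-1}\preceq\bsig\rho^{-1}(\eta\rho^{-1})^{-1}$ and supplies the pairing $\eta\rho^{-1}$. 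This collapses the double sum to $\sum_{\nu\preceq\eta}\mathsf{M}(\nu)\,\kappa^\mathrm{m}_{\Pi(\bsig\nu^{-1}),\bsig\nu^{-1}}(a)$, which is \eqref{eq:mixed-perspective-on-wishart-general}. The inverse \eqref{eq:mixed-perspective-on-wishart-general-inverse} then follows by Möbius inversion in the non-crossing lattice $\{\nu\preceq\eta\}$, where $\mathsf{M}(\nu)$ is the interval Möbius value from $\hat0=\Pi(\mathrm{id})$ to $\Pi(\nu)$: applying the forward identity to every $\bsig\nu^{-1}$ with its pairing $\eta\nu^{-1}$ and reindexing the intervals $[\hat0,\eta\nu^{-1}]\cong[\nu,\eta]$ exhibits the two families as mutual Möbius transforms; this last reindexing is the only remaining technical point. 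Finally, for purely connected $\bsig$ one has $\eta=\mathrm{id}$ (Sec.~\ref{sub:true-Wishart-tensor}), so the sum degenerates to $\nu=\mathrm{id}$ with $\mathsf{M}(\mathrm{id})=1$ and $\Pi_\mathrm{p}(\bsig,\mathrm{id})=1_{n,\bar n}$, $\Pi(\bsig)=1_n$, giving $\kappa_{(\bsig,\mathrm{id})}(t,\bar t)=\kappa^\mathrm{m}_{\bsig}(a)$.
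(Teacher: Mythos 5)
Your proposal is correct and runs on the same engine as the paper's proof: the identification $\Tr_{\bsig}(A)=\Tr_{(\bsig,\mathrm{id})}(T,\bar T)$, hence $\varphi_{\Pi_\mathrm{p}(\btau,\rho),(\btau,\rho)}(t,\bar t)=\varphi^{\mathrm{m}}_{\Pi(\btau\rho^{-1}),\btau\rho^{-1}}(a)$, followed by the $(D+1)$-color moment--cumulant expansion of $\kappa_{\Pi_\mathrm{p}(\bsig,\mathrm{id}),(\bsig,\mathrm{id})}$ and the reindexing $\btau''=\btau\rho^{-1}$. Two organizational differences are worth noting. For the forward identity you recognize the inner $\btau''$-sum directly as the expansion of $\kappa^{\mathrm{m}}_{\Pi(\bsig\rho^{-1}),\bsig\rho^{-1}}$ from Thm.~\ref{thm:limit-for-wishart-tensor-first-order}, whereas the paper expands one level further (mixed moments into mixed cumulants) and then collapses the extra sum via $\sum_{\bnu\eta^{-1}\preceq\btau\eta^{-1}\preceq\bsig\eta^{-1}}\mathsf{M}(\bsig\btau^{-1})=\delta(\bnu,\bsig)$; your route is slightly more economical and the constraint bookkeeping $\btau''(\eta\rho^{-1})^{-1}\preceq\bsig\rho^{-1}(\eta\rho^{-1})^{-1}$ checks out. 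For the converse, you obtain \eqref{eq:mixed-perspective-on-wishart-general-inverse} by M\"obius inversion of \eqref{eq:mixed-perspective-on-wishart-general} applied to every $\bsig\nu^{-1}$, while the paper re-derives it independently by the mirror-image chain of substitutions; both are valid, and the delta identity you need on the interval $[\lambda\eta^{-1},\eta^{-1}]$ of non-crossing permutations does hold. Finally, the ``key lemma'' you single out as the main obstacle — that $(\bsig\nu^{-1},\mathrm{id})$ is melonic with canonical pairing $\eta\nu^{-1}$ whenever $\nu\eta^{-1}\preceq\eta^{-1}$ — does not require a separate flip or degree argument: since $(\bsig,\nu)\eta^{-1}\preceq(\bsig,\mathrm{id})\eta^{-1}$ as $(D+1)$-tuples, the last assertion of Thm.~\ref{thm:limit-of-finite-cumulants} already gives that $(\bsig,\nu)$ is melonic with canonical pairing $\eta$, and right-relabeling the white vertices by $\nu^{-1}$ transports this to $(\bsig\nu^{-1},\mathrm{id})$ with pairing $\eta\nu^{-1}$; this is exactly how the paper dispatches the point in one sentence.
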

\begin{proof}
See Section~\ref{sub:proof-mixed-persp-on-pure}.
\end{proof}

\begin{figure}[!h]
\centering
\includegraphics[scale=1]{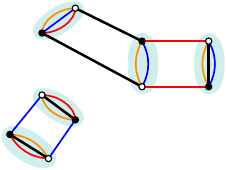}\hspace{1.5cm}\includegraphics[scale=1]{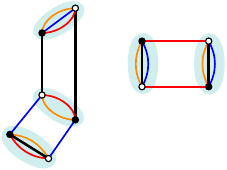}\hspace{1.5cm}
\includegraphics[scale=1]{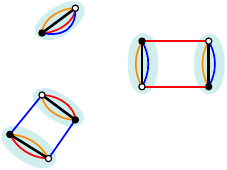}\caption{Examples of contributions to the right-hand side of \eqref{eq:mixed-perspective-on-wishart-general} for $\bsig$ as on the upper left of Fig.~\ref{fig:ex-preceq-melo-mixed}: the new permutation $\nu$ represents a modification of the thick edges (see Footnote \ref{footnote:nu-new-thick}). 
}
\label{fig:ex-preceq-melo-mixed-compar-kappa}
\end{figure}

In particular, if $T$ is a pure Gaussian with $D+1$ indices, $D\ge 2$,  and $A$ is as in \eqref{eq:wishart-perspective-on-pure}, we have from  \eqref{eq:free-cumulants-gaussian} that
 $\kappa_{\Pi_\mathrm{p} (\bsig, \nu), (\bsig , \nu ) }(t,\bar t) =    \delta_{ (\bsig , \nu), (\eta, \ldots, \eta) }$ where $\eta$ is the canonical pairing of $(\bsig, \mathrm{id})$. 
Inserting this into \eqref{eq:mixed-perspective-on-wishart-general-inverse} for $\bsig$  connected with  $( \bsig,\mathrm{id} )$ melonic, we see that the right-hand side is zero unless $\bsig= (\eta, \ldots, \eta)\in S_n^D$ and $\nu=\eta$. For $\bsig$ to be connected, $\eta$ must therefore be a cycle of length $n$. We find for  $\bsig$  connected with  $( \bsig,\mathrm{id} )$ melonic that:
\be
\label{eq:wishart-higher-D-free-cum}
\kappa^\mathrm{m}_{\bsig }(w) = \delta_{\bsig, (\eta, \ldots, \eta) },
\ee
 where $\eta$ is a cycle of length $n$,  and more generally if $\bsig$ is not necessarily connected that $\kappa^\mathrm{m}_{\Pi(\bsig),\bsig }(w) = \delta_{\bsig, (\eta, \ldots, \eta) }$, without condition on $\eta$. This is  different from the  case of a pure  complex Gaussian with two indices (a random matrix) \eqref{eq:free-cumulants-gaussian}, but it directly generalize the result for the $D=1$ Wishart matrix \eqref{eq:wishart-as-free-cumulantvs-pure-D2}.
However, for $D\ge 2$, applying \eqref{eq:free-com-cum-gen-mixed} leads for $\bsig$  connected with  $( \bsig,\mathrm{id} )$ melonic  to:
\be
\label{eq:wishart-higher-D-as-mum}
\varphi^\mathrm{m}_{\bsig }(w) =1,
\ee
as expected, since again $\btau=(\eta, \ldots, \eta)$ is the only non-vanishing term in the sum \eqref{eq:free-com-cum-gen-mixed}.

\newpage
\section{Asymptotic tensor freeness}
\label{sec:tensor-freeness}

\subsection{Matrix freeness}
\label{sub:matrix-freeness}
Let us first review the notion of freeness adapted to unitarily invariant random matrices $A,B\in \mathcal{M}_N(\mathbb{C})$ \cite{Voiculescu, Speicher94}. 
If $M=(M_1, \ldots, M_n)\in \{A,B\}^n$  and defining $\kappa_n(m_1, \ldots, m_n)$ where $m_i\in \{a,b\}$, $A,B$ are asymptotically free if and only if for all $n\ge 2$:
\be
\label{eq:matrix-freeness-cum}
\kappa_n(m_1, \ldots, m_n)=0\;,
\ee 
whenever there exist $1\le i<j\le n$ such that $m_i\neq m_j$. For matrices, one can make sense of  $a,b$ as  being non-commutative random variables, that is, elements of a non-commutative probability space defined as an algebra  $\mathcal{A}$ with a unit element $1$, together with a linear functional $\varphi:\mathcal{A}\rightarrow \mathbb{C}$, that maps the unit element to 1. The family $\varphi(a^n)$, $n\ge 1$ are the moments of $a$.  For a single variable $a$ this relates for $n\ge 1$ to $\varphi_n$ defined before  as $\varphi(a^n)=\varphi_n(a)$ where $a^n\in \mathcal{A}$. 
$\varphi$ corresponds asymptotically to ``$\frac 1 N \mathbb{E}[\Tr(\cdot)]$''.
A random matrix $A$ converges in distribution towards $a$ when $N\rightarrow \infty$ if for all $n\in \mathbb{N}^\star$, $\frac 1 N \mathbb{E}[\Tr(A^n)]\rightarrow \varphi(a^n)$, and the statement above for the \emph{asymptotic freeness} of $A,B$ directly refers to the \emph{freeness} of the non-commutative random variables $a$ and $b$. 

Freeness is equivalent to the vanishing of mixed centered moments (the original formulation \cite{Voiculescu}), that is, for any $n\ge 2$, for any $(m_1, \ldots, m_n)$ with $m_i\in \{a,b\}$ (or some finite fixed set of elements of $\mathcal{A}$), and any $g_i$ in the subalgebra generated by 1 and $m_i$: 
\be
\label{eq:freeness-centered-matrices}
\varphi\bigl(g_1 \cdots g_n  \bigr) =0, 
\ee
whenever $\varphi(g_i)=0$ for all $g_i$, and $g_1\cdots g_n$ are \emph{almost alternating}\footnote{The elements $g_1\cdots g_n$ are \emph{strictly} alternating if  $g_1$ and $g_n$ belong to different subalgebras.}, that is, $g_i$ and $g_{i+1}$ belong to two different subalgebras.  Since the variables can be centered, for two variables $a,b$, this is equivalent to requiring that for all $k\ge 1$ and all $n_1, \ldots, n_k \ge 1$, $m_1\ge 1$, if $k\ge 2$ $m_k\ge 0$, and if $k\ge 3$ $m_2\ldots, m_{k-1}\ge 1$:
\be
\varphi\Bigl(\bigl(a^{n_1} - \varphi(a^{n_1}) 1\bigr) \bigl(b^{m_1} - \varphi(b^{m_1}) 1\bigr) \cdots \bigl(a^{n_k} - \varphi(a^{n_k})1\bigr)  \bigl(b^{m_k} - \varphi(b^{m_k})1\bigr) \Bigr) =0. 
\ee

\subsection{Asymptotic tensor freeness at the level of free cumulants}
\label{sub:tensor-freeness-cumulants}

Having defined free cumulants of pure and mixed random tensors, we define asymptotic (first order) tensor freeness of a collection of random tensors as the vanishing of $\kappa_{\bsig}$ involving different elements\footnote{Called mixed cumulants, not to be confused with  the term ``mixed'' for a tensor, which comes from the quantum information interpretation of \textsf{LU}-invariant distributions as distributions over quantum states.} for any  \emph{first order }$\bsig$. More precisely:

\begin{itemize}
\item {\bf For mixed tensors} $A,B,\ldots$ that scale like the Wishart tensor   \eqref{eq:scaling-wishart-D1} and \eqref{eq:scaling-wishart-multitens}, $D\ge 2$, the first order consists of  the \emph{connected} $\bsig\in S_n^D$, $n\ge 1$, such that $\omega(\bsig,\mathrm{id})=0$. This constrains the thick edges as follows: in the graph $\bsig$, any  cycle of alternated thick edges and canonical pairs that  involves more than one thick edge must be  separating (Fig.~\ref{fig:ex-invariants-purs-mixed}). $A,B,\ldots $ are asymptotically free if for any such $\bsig$ and any $\vec m=(m_1,\ldots, m_n)$ with $m_i\in \{a,b,\ldots\}$:
\be
\kappa^{\mathrm{m}}_{\bsig}(\vec m)=0,
\ee
whenever $m_i\neq m_j$ for some $i\neq j$. 
\item {\bf For pure tensors}, $(T_a, \bar T_a)$, $(T_b , \bar T_b)$ $\ldots$ that scale like a complex Gaussian \eqref{eq:scaling-hypothesis-pure} and \eqref{eq:scaling-hypothesis-pure-multitensors}, $D\ge 3$, the first order consists of  the \emph{purely connected and melonic }$\bsig\in S_n^D$,  $n\ge 1$. We require that $\bsig$ be labeled such that its canonical pairing $\eta$ is the identity and consider for $1\le i \le n$, $\vec X = (X_1, \dots X_n)$ and $\vec{X'} = (X'_{\bar 1},\dots X'_{\bar n})$ with $X_i\in \{T_a, T_b \ldots \}$ and $X'_{\bar s}\in \{\bar T_a, \bar T_b \ldots\}$. Note that we allow $X'_{\bar s}\neq \bar X_s$, that is, one could have $X_s=T_a$ and $X'_{\bar s}=\bar T_b$. In that case,  $(T_a, \bar T_a)$, $(T_b , \bar T_b)$ $\ldots$ are asymptotically free if:
\be
\kappa_{\bsig}(\vec x, \vec{x'})=0,
\ee
whenever $x_i\neq x_j$ or $x_{\bar i}'\neq x_{\bar j}'$, or  $\overline{x_i}\neq x_{\bar j}'$ for some $i, j$. 
\end{itemize}

In the pure case,  if there exists $\eta^{-1}$ such that for each $s$, $X_{\overline{\eta^{-1}( s)}}' = \overline{X_{ s}}$, then by changing the labeling of the white vertices one may replace $\bsig$ by $\bsig \eta$, setting $X_{\bar s}' = \overline{X_{s}}$ with now $\eta$ defining the canonical pairing. After this change of labeling, the black and white vertices can be seen as representing  the inputs and the outputs of the same tensor $X_s\otimes \overline{X_{s}}$,  and are linked by a thick edge encoded by the permutation $\mathrm{id}$. For $\vec X, \vec X'$ satisfying this, the vanishing condition for purely connected $\bsig$ with canonical pairing the identity (i.e.~with $\bar \omega(\bsig;\mathrm{id})=0$)  must be considered in both the pure and mixed cases. On the other hand, the vanishing condition in the pure case for  $\vec X, \vec X'$ which do not satisfy this condition must be considered \emph{only in the pure case}, while  the vanishing condition for $\bsig$ such that $\omega(  \bsig, \mathrm{id} )=0$, connected but not purely, must  be considered \emph{only in the mixed case}.

\subsection{Paired tensors}
\label{sub:paired-tensors}

In order to state the definition of matrix freeness at the level of moments (Sec.~\ref{sub:matrix-freeness}), it was necessary to consider elements in the subalgebras  $\mathcal{A}[a,1]$ generated by $a$ and $1$.  
We let $\mathcal{G}[A]=\{A^n\}_{ n\ge 1}$: the algebra $\mathcal{A}[A,\un]$ generated by $A$ and $\un$ can also be seen as the set of linear combinations of elements of $\{\un\}\cup \mathcal{G}[A]$. In the tensor case,  stating the moments version of tensor freeness will require  extending the notions of tensors and trace-invariants. We will introduce here some sets which will generalize the roles of $\mathcal{G}[A]$, of $\mathcal{A}[A,\un]$, and then of $\mathcal{A}[a,1]$ in Sec.~\ref{sec:limiting-spaces}.

\paragraph{Paired tensors.}A \emph{paired tensor} is a tensor with components $P_{\{i_{c,r} , j_{c,r}\}}\in\mathbb{C}$, where $1\le c\le D$ and $r\in \{0, \ldots, k_c\}$, $1\le i_{c,r} , j_{c,r} \le N$, and where $k_c\ge 0$ and  $\sum_{c=1}^D k_c\ge 1$. There may therefore be no index of color $c$ or there can be several of the same color $c$, in which case the second index~$r$ partitions the inputs and outputs of the same color in pairs. We refer to $r$ as the \emph{shade}: a pair of outputs and  inputs $(i_{c,r} , j_{c,r})$ has color $c$ and shade $r$. The inputs or outputs of the same color~$c$ carrying different shades $r$ are distinguishable. 
Tensors $A$ or $T\otimes \bar T$ considered earlier in the text are paired tensors.
If $\mathcal{D}=\sum_c k_c$, the identity  $\un^{\otimes \mathcal{D}}$ of $\mathcal{M}_N(\mathbb{C})^{\otimes \mathcal{D}}$ can be viewed as a paired tensor: 
$
{\un^{\otimes \mathcal{D}}}_{\{i_{c,r} , j_{c,r}\}} =\prod_{c=1}^D\prod_{r=1}^{k_c} \delta_{i_{c,r} , j_{c,r}}.
$

\paragraph{Trace-invariants of paired tensors.}We generalize the notion of trace-invariants for paired tensors.  Given $n$ labeled paired tensors  $P_1, \ldots, P_n$, if we represent these tensors by thick edges linking a black and a white vertex, with $k_c(i)$ distinguishable half-edges of color $c$ respectively attached to the inputs and the outputs of the tensor $P_i$ for each color $1\le c \le D$, then the trace-invariants of these paired tensors  are encoded   by colored graphs $\mathsf{g}$ with thick edges representing the tensors, and edges of color $c$ representing the index summations. A vertex may now have none or several incident edges for each color $c$. We use the notation $\Tr_{\mathsf{g}}(\vec P)$. See the example in Fig.~\ref{fig:paired-invariant}.

We cannot simply use a single permutation per color  as  before in order to encode labeled invariants, since here a tensor may now have different pairs of  indices of the same color. 
However, in the graph there are cycles which go from the output of a tensor which has subscript $(c,r_1)$, through an edge of color $c$, to the intput of a tensor which has subscript $(c,r_2)$, to the output of the same tensor to which it is paired, through an edge of color $c$, etc. If the paired tensors are labeled from 1 to $n$, we thus obtain  a set of \emph{cycles} $\{\gamma_{c,b}\}$, each of the form $\bigl([k_1, r_1]\cdots   [k_q, r_q]\bigr)$ where $1\le k_1, \ldots, k_q\le n$ are the labels of the thick edges (paired tensors) encountered in the cycle (all distinct), and $r_s$ is the shade of the pair of output and input of color $c$ of the paired tensor number $k_s$ which belongs to the   cycle under consideration. 
 The index $b$ labels the different cycles that have the same color $c$. 
 Starting from the collection of labeled paired tensors, one can reconstruct all the colored edges from this data: an output of color $c$ and shade $r$ of a given tensor $k$ appears in only one cycle $\gamma_{c,b_0}$ of the $\{\gamma_{c,b}\}$, and if $\gamma_{c,b_0}\bigl([k,r])=[k', r']$, an edge of color $c$ is added between the output of color $c$, shade $r$ of $k$ and the input of color $c$, shade $r'$ of $k'$.

\begin{figure}[!h]
\centering
\includegraphics[scale=0.65]{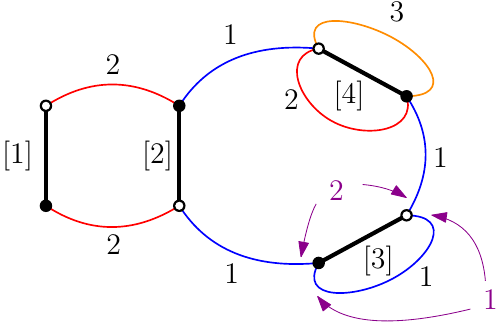}
\caption{Melonic graph of four paired tensors, represented as thick edges with labels between brackets. The color is indicated on the edges, and the shade is indicated in magenta in the only case where it is needed ($k_c>1$).
The labeled invariant is  encoded by the cycles $\gamma_{1,1}=\bigl([2,1] [3,2] [4,1]\bigr)$, $\gamma_{1,2}=\bigl( [3,1]\bigr)$, $\gamma_{2,1}=\bigl([1,1][2,1]\bigr)$, $\gamma_{2,2}=\bigl([4,1]\bigr)$ and $\gamma_{3,1}=\bigl([4,1]\bigr)$. 
}
\label{fig:paired-invariant}
\end{figure}

\paragraph{Melonic invariants of paired tensors.}We can generalize for paired tensors the recursive construction that defines melonic graphs and invariants (Sec.~\ref{sub:Melo}).  For our use here, it is enough to define them with thick edges on the canonical pairs (so in the following, ``melonic graphs \emph{of paired tensors}'' always have thick edges on the canonical pairs).  The definition is the following:
\begin{itemize}
\item A melonic invariant with one paired tensor $P$ is always denoted as $\mathbf{id}_1$ (the number of indices is not made explicit unless there is a possible confusion). It corresponds to a graph with two vertices linked by one thick edge and a number of colored edges. One has:
\be
\label{eq:trace-of-paired-tensor}
\Tr_{\mathbf{id}_1}(P)=\Tr(P) = {\sum}_{\{i_{c,r}\}_{c,r}} P_{\{i_{c,r} , i_{c,r}\}}
\ee
\item A \emph{connected} melonic invariant with more than one paired tensor always has a tensor with $\mathcal{D}\ge 1$ inputs,  of which $\mathcal{D}- 1$ are summed with their paired outputs.  In the colored graph, it corresponds to a thick edge with a number of  edges (possibly zero) with colors in $1, \ldots, D$ linking the same two vertices, with the exception of one pair of an  input and output of the same shade. Removing this tensor and reconnecting the two pending half-edges  as in Fig.~\ref{fig:melon-recursive}, one gets a smaller melonic graph.\footnote{So for a melonic invariant of paired tensors, a paired tensor with a single input and output  (such as the one labeled $[1]$ in Fig.~\ref{fig:paired-invariant}) always corresponds to a canonical pair.} An example is shown in Fig.~\ref{fig:paired-invariant}. 
\end{itemize}

Consider a connected melonic invariant with graph $\mathsf{g}$ as just defined, but for some regular (labeled) tensors $M_1, \ldots M_n$ (so the vertices have exactly one incident edge of color $c$ for each $1\le c \le D$) with $M_i\in\{A, B \ldots\}$ in the mixed case or $M_i=X_i\otimes X_{\bar i}'$ with $X_i\in \{T_a, T_b \ldots\}$ and  $X_{\bar i}'\in \{\bar T_a, \bar T_b \ldots\}$ in the pure case. Then this invariant   is a purely connected melonic invariant $\bsig$ in the usual sense, and the canonical pairing is the identity since by construction, the black and white vertices of  the canonical pairs are linked by the thick edges. 

To generate all first order invariants in the mixed case, one may consider $q$ paired tensors $P^\ell$, $1\le \ell\le q$, of the form: 
\be
\label{eq:paired-tensors-for-general-melo}
P^{\ell}_{\{i_{c,r} , j_{c,r}\}_{\substack{{1\le c \le D}\\{1\le r\le k_\ell}}}}=(M^\ell_1)_{i_{1,1}, \ldots i_{D,1};j_{1,2}, \ldots j_{D,2}}(M^\ell_2)_{i_{1,2}, \ldots i_{D,2};j_{1,3}, \ldots j_{D,3}}\ \cdots\  (M^\ell_{k_\ell})_{i_{1,k_\ell}, \ldots i_{D,k_\ell};j_{1,1}, \ldots j_{D,1}}.
\ee
A melonic invariants of $q$ paired tensors $P^1,\ldots P^q$ of this kind is a trace invariant $\Tr_\bsig[\{M^\ell_i\}]$ for $\bsig \in S_n^D$ with $n=\sum_{\ell=1}^q k_\ell$, connected and such that $(\bsig, \mathrm{id})$ is melonic. The canonical pairing of $(\bsig, \mathrm{id})$ is given by the pairing of inputs and outputs of  \eqref{eq:paired-tensors-for-general-melo}. All the invariants of this kind can be generated this way, since inserting a tensor of the form \eqref{eq:paired-tensors-for-general-melo} reproduces a cycle alternating thick edges and canonical pairs which is separating for $k_\ell>1$.

\paragraph{Paired tensors generated by first-order invariants.}To a tensor $A$ (mixed or pure $A=T_a\otimes  \bar T_a$), we associate a family of paired tensors as follows. Take any first order invariant $\bsig$ (in the mixed case, $\bsig$ is connected and $(\bsig, \mathrm{id})$ is melonic, and in the pure case, with the convention of Sec.~\ref{sub:tensor-freeness-cumulants}, $\bsig$ is purely connected melonic with canonical pairing the identity), and consider the cycles in the graph which alternate edges of color $c$ and canonical pairs.  For each such cycle, choose an edge and split it open. Each edge deletion  removes in the corresponding trace-invariant a summation between the output and the input of two (non-necessarily distinct) tensors, and these indices are now \emph{free indices} in the sense that they are not summed, and are \emph{paired} (the corresponding half-edges are linked by a path of edges of color $c$ for some $c$ and canonical pairs). If $E$ is the set of edges which have been split open (that is, $E$ is a set of pairs of the form $(i, \sigma_c(i))$ for which the summation is not carried in the definition of the trace-invariant \eqref{def:trace-invariants}), we denote by $\Tr_{\bsig_{\setminus E}}(A)$ the resulting ``partial trace-invariant''.  It is a paired tensor  with $k_c$ paired inputs and outputs for each color $c$, and if there are $n$ regular tensors $A$, the number of inputs of $\Tr_{\bsig_{\setminus E}}(A)$  is easily seen to be:
\be
\label{eq:mathcalD-of-generated}
\mathcal{D} =  \sum_{c=1}^D k_c = n(D-1)+K, 
\ee
 where $K$ is the number of connected components. If the canonical pairing of $\bsig$ coincides with the thick edges, the $(D+1)$-colored graph $\bsig_{\setminus E}$ for which the thick edges are represented is connected and it is a tree. The graph is not connected if the canonical pairing of $\bsig$ differs from the thick edges (a situation which occurs only in the mixed case)\footnote{$K$ is one plus the sum over cycles alternating thick edges and canonical pairs of the number of thick edges in the cycle minus one.}, and each connected component is a tree. In the latter case, there are inputs paired with outputs from a different connected component. 
 
 The simplest example is the tensor $A$ itself (in the pure case $A=T_a\otimes  \bar T_a$).  In the mixed case, paired tensors of the form \eqref{eq:paired-tensors-for-general-melo} with all $M_a^\ell$ equal to $A$ are generated  this way. 
See Fig.~\ref{fig:examples-paired-tensors-generated} for more generic examples. 
\begin{figure}[!h]
\centering
\raisebox{1cm}{\includegraphics[scale=1]{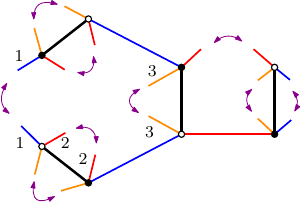}}\hspace{2cm}\includegraphics[scale=1]{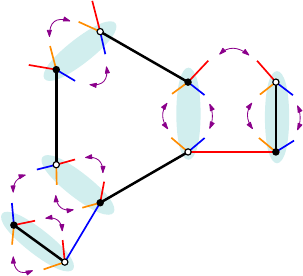}
\caption{Examples of paired tensors generated by the first order invariants of Fig.~\ref{fig:ex-invariants-purs-mixed} (the left one - with nine inputs  - may be considered both for the pure and mixed cases, and the one on the right  - with thirteen inputs - for the mixed case only). The arrows indicate the paired indices. }
\label{fig:examples-paired-tensors-generated}
\end{figure}

We denote by $\mathcal{G}^\mathrm{m}_{D, \mathcal{D}}[A]$ and by  $\mathcal{G}^\mathrm{p}_{D, \mathcal{D}}[T, \bar T]$ the sets of paired tensors with $\mathcal{D}$ inputs generated through this procedure by $A$ for the mixed first order invariants or by $T,\bar T$ (though the choice $A=T\otimes\bar T$)  for the pure first order invariants. These sets generalize the set $\mathcal{G}[A]=\mathcal{G}^\mathrm{m}_{1,1}[A]=\{A^n\}_{ n\ge 1}$ defined above\footnote{Indeed for the mixed $D=1$ case, a first order trace-invariant corresponds graphically to a cycle of length $n\ge 1$ consisting alternatively of edges of color 1 and thick edges, and splitting any edge open in this cycle, one obtains $A^n$, with ${\mathcal{D}}=1$ input.}.
One may also consider first order trace-invariants of some possibly different tensors:  In the mixed case,  $\vec M$ then takes value in a fixed set $S=\{A,B\ldots \}$, while in the pure case, $\vec X$ and $\vec X'$ respectively take value in some sets $\Theta=\{T_a, T_b \ldots\}$ and $\bar \Theta=\{\bar T_a, \bar T_b \ldots\}$.\footnote{In this case, as in Sec.~\ref{sub:tensor-freeness-cumulants}, we keep the thick edges on the canonical pairs. Each thick edge represents $X_s\otimes X'_{\bar s}$, but we allow $X'_{\bar s} \neq \bar X_s$.} We then use the notations   $\mathcal{G}^\mathrm{m}_{D, \mathcal{D}}[S]$ and $\mathcal{G}^\mathrm{p}_{D, \mathcal{D}}[\Theta, \bar \Theta]$ for the resulting sets of paired tensors (for the $D=1$ mixed case, this is the set of matrices $AB^{3}A^2C\cdots$ formed by multiplying words of matrices in the set  $S$). We let $\mathcal{G}^\mathrm{m}_D=\bigcup_{{\mathcal{D}}\ge 1}\mathcal{G}^\mathrm{m}_{D, \mathcal{D}}$ and $\mathcal{G}^\mathrm{p}_D=\bigcup_{{\mathcal{D}}\ge 1}\mathcal{G}^\mathrm{p}_{D, \mathcal{D}}$. We say that an element of  $\mathcal{G}^\mathrm{m}_D[S]$ or $\mathcal{G}^\mathrm{p}_D[\Theta, \bar \Theta]$  is \emph{generated by} $S$ or $(\Theta,\bar \Theta)$.

\subsection{Grouping and ungrouping tensors}
\label{sub:grouping}

\paragraph{Ungrouping tensors.}Consider some paired tensors $H_1, \ldots H_p$ with $H_\ell \in \mathcal{G}^\mathrm{m}_{D}[S_\ell]$, where $S_\ell\subset \{A,B\ldots\}$, or $H_\ell \in \mathcal{G}^\mathrm{p}_{D}[\Theta_\ell, \bar \Theta_\ell]$, where $\Theta_\ell\subset \{T_a,T_b\ldots\}$ and  $\bar \Theta_\ell\subset \{\bar T_a,\bar T_b\ldots\}$ and a colored graph $\mathsf{g}$ of $\vec H$. Assume that $\vec H$ involves in total $n$ regular tensors $M_1, \ldots M_n$, $M_i\in \{A,B\ldots\}$ or $M_i=X_i\otimes X_{\bar i}'$ with $X_i\in \{T_a, T_b \ldots \}$ and $X_{\bar i}'\in \{\bar T_a, \bar T_b\ldots \}$.  One can encode the index summations represented by both the edges of $\mathsf{g}$ as well as the internal edges of all the  $H_\ell$  using a $D$-tuple of permutations $\bsig\in S_n^D$, and then:  
\be
\label{eq:paired-invariant-vs-usual-invariant}
\Tr_{\mathsf{g}}(\vec H) = \Tr_\bsig(\vec M),
\ee
which in the pure case where $M_i=X_i\otimes X_{\bar i}'$ can also be written with the dedicated notation $\Tr_{\mathsf{g}}(\vec H) = \Tr_\bsig(\vec M)=\Tr_\bsig(\vec X, \vec X')$. See the examples of Fig.~\ref{fig:paired-invariant-ungrouping}.
\begin{lemma}
\label{lem:melopaired-vs-first-order}
With these notations, $\mathsf{g}$ is a melonic graph of paired tensors if and only if $\bsig$ is first order, i.e.~$(\bsig, \mathrm{id})$ is  connected and melonic in the mixed case, or $\bsig$ is purely connected  with canonical pairing the identity in the pure case.
\end{lemma}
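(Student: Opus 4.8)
The plan is to prove the two implications by matching the recursive ``melon-move'' definition of a melonic graph of paired tensors with the recursive definition of a melonic colored graph of regular tensors, transported across the ungrouping correspondence $\Tr_{\mathsf{g}}(\vec H) = \Tr_{\bsig}(\vec M)$ of \eqref{eq:paired-invariant-vs-usual-invariant}, and to conclude by induction on the number $p$ of paired tensors. Throughout I would treat the mixed and pure cases in parallel: in the mixed case the thick edges of the regular tensors $M_i$ form the color $D+1$ encoded by $\mathrm{id}$, and one works with $(\bsig, \mathrm{id})$ and $K_\mathrm{m}$; in the pure case each $M_i = X_i \otimes X_{\bar i}'$ and one works with $\bsig$, $K_\mathrm{p}$ and the distinguished pairing $\eta = \mathrm{id}$, exactly as in the equivalence between pure and mixed invariants recorded after \eqref{eq:eq-classes-pure}.

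First I would make the ungrouping precise at the level of graphs. By the definition of the sets $\mathcal{G}^\mathrm{m}_D[S_\ell]$ and $\mathcal{G}^\mathrm{p}_D[\Theta_\ell, \bar \Theta_\ell]$ (Sec.~\ref{sub:paired-tensors}), each $H_\ell$ is generated from a first-order invariant by cutting open one edge in every cycle alternating color-$c$ edges and canonical pairs; consequently, once the thick edges of its constituent regular tensors are drawn, the internal index structure of $H_\ell$ is a disjoint union of melonic trees whose leaves carry the free (paired) indices, and in which the regular tensors are paired precisely by their own thick edges. Assembling the internal edges of the $H_\ell$ with the colored edges of $\mathsf{g}$ then produces the colored graph $(\bsig, \mathrm{id})$ (resp.\ $\bsig$), whose canonical pairs are exactly the thick edges of the regular tensors; in particular the canonical pairing $\eta$ of $(\bsig,\mathrm{id})$ restricts on each $H_\ell$ to the internal canonical pairing of that melonic tree.

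For the direction ``$\mathsf{g}$ melonic $\Rightarrow$ $\bsig$ first order'', I would induct on $p$. The base case $p = 1$ is $\mathsf{g} = \mathbf{id}_1$, for which the defining property of $\mathcal{G}^\mathrm{m}_D$ / $\mathcal{G}^\mathrm{p}_D$ immediately gives that $(\bsig,\mathrm{id})$ is connected and melonic (resp.\ $\bsig$ purely connected melonic with canonical pairing the identity). For $p > 1$, the recursion supplies a paired tensor $H_\ell$ with $\mathcal{D}_\ell$ paired indices of which $\mathcal{D}_\ell - 1$ are summed to their partners; removing $H_\ell$ and reconnecting the remaining pair as in Fig.~\ref{fig:melon-recursive} yields a smaller melonic paired-tensor graph $\mathsf{g}'$ with associated $\bsig'$. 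In the ungrouped picture this removal deletes the melonic tree of regular tensors internal to $H_\ell$ together with the incident edges of $\mathsf{g}$; I would verify that it is realized as a sequence of elementary melon moves on $(\bsig, \mathrm{id})$ --- peeling the leaves of $H_\ell$'s internal tree one at a time, each being a genuine melon move because $H_\ell$ is internally melonic --- reducing $(\bsig, \mathrm{id})$ to $(\bsig', \mathrm{id})$. By the induction hypothesis $(\bsig', \mathrm{id})$ is melonic and connected, and since each elementary move is invertible by a melonic insertion, $(\bsig, \mathrm{id})$ is melonic and connected as well.

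For the converse ``$\bsig$ first order $\Rightarrow$ $\mathsf{g}$ melonic'', I would run the melon-move recursion on $(\bsig, \mathrm{id})$ and organize it so that each move is classified as \emph{internal} to some $H_\ell$ or \emph{crossing} between two of them. Because each $H_\ell$ is an internally melonic tree (its internal cycles are single-edge or separating, by the characterization recalled in Sec.~\ref{sub:Melo}), the internal moves fully reduce every $H_\ell$ down to a single regular tensor without affecting $\mathsf{g}$, while the surviving crossing moves are exactly the paired-tensor melon moves on $\mathsf{g}$; reading the recursion backwards exhibits $\mathsf{g}$ as built by melonic insertions from $\mathbf{id}_1$, i.e.\ $\mathsf{g}$ is melonic. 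Equivalently, one checks that contracting the internal separating cycles of the $H_\ell$ in a melonic graph leaves every remaining color/canonical cycle of $\mathsf{g}$ single-edge or separating. The main obstacle I anticipate is precisely the canonical-pairing bookkeeping in the mixed case, where the thick edges $\mathrm{id}$ need not coincide with $\eta$: one must confirm that the melon moves dictated by the paired-tensor structure of $\mathsf{g}$ coincide with the melon moves relative to $\eta$, and that neither cutting nor regrouping the internal trees of the $H_\ell$ ever creates a non-separating cycle carrying more than one colored edge. This reduces to the identity ``$\eta$ restricted to $H_\ell$ equals the internal canonical pairing of $H_\ell$'' established in the second step, which is exactly what synchronizes the two recursions.
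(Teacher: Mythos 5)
Your forward direction is essentially sound: with the $\mathcal{D}_\ell-1$ self-contractions of $\mathsf{g}$ restored, the distinguished $H_\ell$ of the paired-tensor recursion is exactly its generating first-order invariant $\btau_\ell$ with a single edge removed, i.e.~a melonic two-point insertion, and such an insertion can be contracted by elementary melon moves (though not by ``peeling leaves'': one must repeatedly remove innermost melons, of which there is always at least one not broken by the missing edge when there are at least two canonical pairs). Combined with the invertibility of melon moves and your induction on $p$, this yields ``$\mathsf{g}$ melonic $\Rightarrow\bsig$ first order''.

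The converse direction has a genuine gap. You propose to run the melon-move recursion on $(\bsig,\mathrm{id})$ and to perform the ``internal'' moves first, reducing each $H_\ell$ to a single regular tensor without touching $\mathsf{g}$. But inside $\bsig$ the internal structure of $H_\ell$ is a forest in which one edge of \emph{every} cycle alternating color-$c$ edges and canonical pairs has been cut open: if it contains $n_\ell$ regular tensors and $K_\ell$ tree components, it retains only $n_\ell-K_\ell$ colored edges in total (cf.~\eqref{eq:mathcalD-of-generated}), so no black--white pair inside an $H_\ell$ is joined by the $D-1$ (pure) or $D$ (mixed) parallel edges that an elementary melon move requires. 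There are essentially no internal moves available; every melon move on $(\bsig,\mathrm{id})$ necessarily uses edges of $\mathsf{g}$, and you give no control over how such moves interact with the decomposition into the $H_\ell$. Your parenthetical fallback (``contracting the internal separating cycles\dots'') points in the right direction but is not carried out. This is exactly where the paper takes a different route: it replaces the recursive definition by the equivalent non-recursive characterization of melonicity recalled in Sec.~\ref{sub:Melo} (every cycle alternating canonical pairs and at least two color-$c$ edges is separating) and observes that, since the paired input and output of a given shade of $H_\ell$ are joined inside $H_\ell$ by a path alternating color-$c$ edges and canonical pairs, the relevant cycles of $\mathsf{g}$ and of $\bsig$ correspond to one another with matching separating properties. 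That formulation handles both implications at once and avoids the ordering problem entirely; your converse should be restructured along those lines. Relatedly, your assertion that the canonical pairing of $(\bsig,\mathrm{id})$ restricts on each $H_\ell$ to the internal canonical pairing of $\btau_\ell$ is stated but never proved; in the cycle formulation it is not needed as an input.
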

\proof It is equivalent to show that every cycle that alternates canonical pairs and at least two edges of color $c$ for any $1\le c \le D$ is separating (where in the case of paired tensors, canonical pairs coincide with the thick edges and the inputs and outputs are of the same shade). This holds because of the fact that the paired inputs and outputs of an element $H_\ell$ of $\mathcal{G}^\mathrm{m}_{D}[S_\ell]$ or $\mathcal{G}^\mathrm{p}_{D}[\Theta_\ell, \bar \Theta_\ell]$ are linked inside $H_\ell$ by a sequence of edges of color $c$ and canonical pairs. 
 \qed
 
 \

\begin{figure}[h!]
\centering
\includegraphics[scale=0.9]{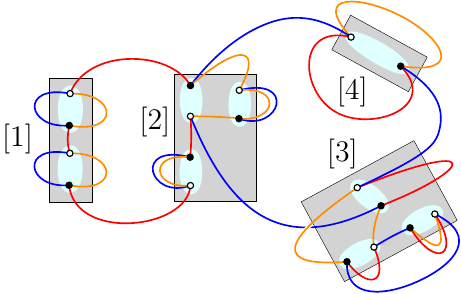}\hspace{1.5cm}\includegraphics[scale=0.9]{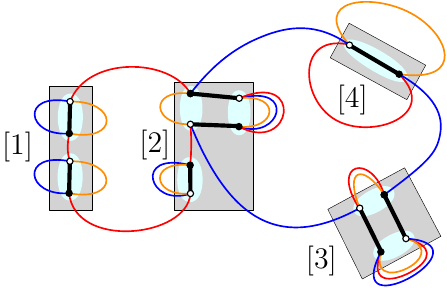}
\caption{Two examples of melonic graphs of paired tensors $\mathsf{g}$, together with some explicit choices of paired tensors $H_1, \ldots H_4$ generated by $T, \bar T$ on the left (the white and black vertices) and by   $A$  on the right (the thick edges). The grey box labeled $[i]$ represents the paired tensor $H_i$, and the blue blobs represent the canonical pairs. The edges which are not entirely included inside the grey boxes are edges of $\mathsf{g}$ and the ones entirely included in the grey boxes are edges of the $H_i$.  Considering the edges of  $\mathsf{g}$ as well as those internal to the paired tensors, one gets a first order invariant  of the regular tensors $T, \bar T$ 
in the pure case and 
 $A$
 in the mixed case. 
}
\label{fig:paired-invariant-ungrouping}
\end{figure}

Assuming $\mathsf{g}$ to be melonic and if now $E$ is a set of edges of  ${\mathsf{g}}$ with colors in  $\{1, \ldots D\}$ and \emph{with one edge per cycle that alternates edges of color $c$ and paired inputs and outputs}, we use the notation $\Tr_{\mathsf{g}_{\setminus E}}(\vec H)$ for the partial trace-invariant, which has $\lvert E \rvert$ inputs. From \eqref{eq:paired-invariant-vs-usual-invariant} and noticing that by construction there is still  in  $\bsig$ one edge of $E$ per cycle alternating non-thick edges and canonical pairs: 
\be
\Tr_{\mathsf{g}_{\setminus E}}(\vec H)=\Tr_{\bsig_{\setminus E}}(\vec M),
\ee 
which in the mixed case is an element of $\mathcal{G}^\mathrm{m}_{D}[S]$ with $S=\cup_\ell S_\ell$, and in the pure case is an element of $\mathcal{G}^\mathrm{p}_{D}[\Theta, \bar \Theta]$, with $\Theta=\cup_\ell \Theta_\ell$ and $\bar \Theta=\cup_\ell \bar \Theta_\ell$ and $\Tr_{\bsig_{\setminus E}}(\vec M)=\Tr_{\bsig_{\setminus E}}(\vec X, \vec X')$.
In both cases $\mathsf{g}_{\setminus E}$ is a tree, and $\Tr_{\mathsf{g}_{\setminus E}}(\vec H)$ is a paired tensor,
and the result of applying to it $\Tr_{\mathbf{id}_1}=\Tr$ defined in  \eqref{eq:trace-of-paired-tensor} is:
\be
\Tr \bigl(\Tr_{\mathsf{g}_{\setminus E}}(\vec H)\bigr) = \Tr \bigl(\Tr_{\bsig_{\setminus E}}(\vec M)\bigr) = \Tr_{\mathsf{g}}(\vec H) = \Tr_\bsig(\vec M). 
\ee

\paragraph{Grouping tensors.}
\emph{If now we remove the condition on $E$} so that it can be \emph{any set}  of edges of color $1, \ldots D$ of $\mathsf{g}$ (a melonic graph of paired tensors), then the colored graph $
\mathsf{g}_{\setminus E}$ might no longer be connected (Fig.~\ref{fig:paired-invariant-grouping}). Letting $\vec H_{\lvert_\jmath}$ be the restriction of $\vec H$ to the set $R_\jmath\subset \{1, \dots p\}$ of the labels of the  thick edges that belong to the connected component number $\jmath$ of $\mathsf{g}_{\setminus E}$, the partial trace of $\vec H_{\lvert_\jmath}$ associated to the connected component number $\jmath$ is a paired tensor $P_\jmath$. Starting from a half-edge in $\mathsf{g}_{\setminus E}$ (created by splitting an edge open), and alternatively following paired tensors and pairs of inputs and outputs of the same shade, one must eventually arrive at another half-edge. Connecting these two half-edges to form an edge (the resulting edge might or might not be an edge of $\mathsf{g}$), and repeating the operation for all the half-edges of $\mathsf{g}_{\setminus E}$, one obtains a graph $\mathsf{h}$ encoding an invariant of the paired tensors $\vec H$ whose connected components $\mathsf{h}_\jmath$ involve the same $H_\ell$, $\ell\in R_j$ as the  $P_\jmath$. To recover $P_\jmath$ from $\mathsf{h}_\jmath$, one must remove some edges, at most one edge is removed for each cycle alternating paired indices and colored edges.
\begin{figure}[!h]
\centering
\includegraphics[scale=0.55]{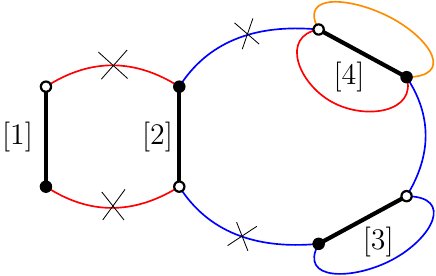}\hspace{1.6cm}\includegraphics[scale=0.55]{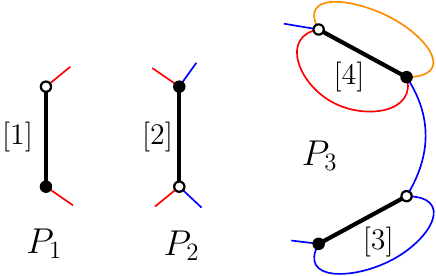}\hspace{1.6cm}\includegraphics[scale=0.55]{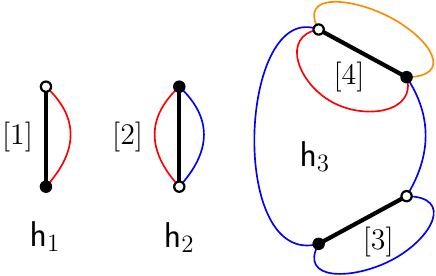}
\caption{
Left: the example  of Fig.~\ref{fig:paired-invariant} and a set of edges $E$ (crossed). Center: splitting these edges open, one gets a collection of paired tensors $P_\jmath$. Right: the melonic graphs $\mathsf{h}_\jmath$. 
}
\label{fig:paired-invariant-grouping}
\end{figure}

In addition, since $\mathsf{g}$ is a melonic graph of paired tensors that are elements of $\mathcal{G}^\mathrm{m}_{D}[S_\ell]$ or  $\mathcal{G}^\mathrm{p}_{D}[\Theta_\ell, \bar \Theta_\ell]$, there is a $\bsig\in S_n^D$ 
 first order (Lemma~\ref{eq:paired-invariant-vs-usual-invariant}), and the relevant canonical pairing $\eta$,
and a $\vec M$, such that $\Tr_{\mathsf{g}}(\vec H) = \Tr_\bsig(\vec M)$ (in the pure case, $M_i=X_i\otimes X_i'$ and one can also write $\Tr_{\mathsf{g}}(\vec H) = \Tr_\bsig(\vec M)=\Tr_\bsig(\vec X, \vec X')$). There is also a $\btau\in S_n^D$  with connected components $\btau_\jmath$ satisfying $\Tr_{\mathsf{h}_\jmath}(\vec H_\jmath) = \Tr_{\btau_\jmath}(\vec M_\jmath)$ with $\vec M_\jmath=\vec M_{\lvert_{R_\jmath}}$ (so in the pure case, $\Tr_{\mathsf{h}_\jmath}(\vec H_\jmath) = \Tr_{\btau_\jmath}(\vec X_\jmath, \vec X_\jmath')$),  and one has\footnote{Indeed the cycles that alternate edges of color $c$ and paired indices in $\mathsf{h}$ or $\mathsf{g}$ correspond to cycles that alternate edges of color $c$ and canonical pairs in $\btau$ or $\bsig$. The operation of splitting open the edges of $E$ and reconnecting them to the first half-edge that follows, indeed builds a non-crossing permutation for each such cycle (it's the smallest that leave the edges not in $E$ unchanged). } $\btau\eta^{-1} \preceq \bsig\eta^{-1}$, and therefore in the mixed case $(\btau, \mathrm{id})$ is melonic with canonical pairing $\eta$   (Thm.~\ref{thm:limit-for-wishart-tensor-first-order}), and in the pure case $\btau$ is melonic with canonical pairing the identity  (Thm.~\ref{thm:free-cumulants-melonic}). In both cases, the connected components $\{\btau_\jmath\}$  of $\btau$ are first order invariants, so from Lemma~\ref{eq:paired-invariant-vs-usual-invariant} \emph{the $\{\mathsf{h}_\jmath\}$ are connected melonic graphs of the paired tensors $\vec H_{\lvert_\jmath}$.} If in addition $E$ is such that $P_\jmath$ is obtained from   $\btau_\jmath$ (which is first order) by removing one edge  per cycle alternating edges of color $1\le c \le D$ and canonical pairs, then by definition of the generated sets:
\be
\label{eq:Pj-is-generated}
P_\jmath \in \mathcal{G}^\mathrm{m}_{D}[S_\jmath] \qquad \textrm{(mixed case)},\qquad \textrm{and}\qquad  P_\jmath \in \mathcal{G}^\mathrm{p}_{D}[\Theta_\jmath, \bar \Theta_\jmath] \qquad \textrm{(pure case)}
\ee
where $S_\jmath=\cup_{\ell \in R_j} S_\ell$, $\Theta_\jmath=\cup_{\ell \in R_j} \Theta_\ell$   and $\bar \Theta_\jmath=\cup_{\ell \in R_j} \bar \Theta_\ell$. Furthermore, the trace-invariants corresponding to the $\mathsf{h}_\jmath$ satisfy:
\be
\label{eq:trace-of-one-paired-vs-trace-invariant}
 \Tr_{\mathsf{h}_\jmath}(\vec H_\jmath) =  \Tr_{\mathbf{id}_1} (P_\jmath)=  \Tr(P_\jmath). 
\ee

One can also see $\Tr_\mathsf{g}(\vec H)$ itself as a melonic invariant $\mathsf{k}$ of the paired tensors $\vec P$. The edges of the set $E$ considered above encode summations between the indices of the $P_\jmath$. Letting $\mathsf{k}$ be the colored graph  with one thick edge per paired tensor $P_\jmath$ and whose colored edges are  the edges of  $E$, we have:
\be
\label{eq:trace-of-n-paired-vs-trace-invariant}
\Tr_\mathsf{g}(\vec H) = \Tr_{\mathsf{k}}(\vec P). 
\ee

In particular,  starting from $\bsig\in S_n^D$ purely connected with $\bar \omega(\bsig; \mathrm{id})=0$ and \emph{regular} tensors $\vec Q\in \{A, B \ldots\}$ (or the pure equivalent $Q_i=X_i\otimes X_{\bar i}'$ with $X_i\in \{T_a, T_b \ldots \}$ and $X_{\bar i}'\in \{\bar T_a, \bar T_b \ldots \}$) instead of \emph{paired} tensors $\vec H$, choosing a set of edges $E$ and taking the $P_\jmath$ to be the connected components of $\Tr_{\bsig_{\setminus E}}(\vec Q)$ (which rewrites in the pure case as $\Tr_{\bsig_{\setminus E}}(\vec X, \vec X')$), 
one gets some $\bnu_\jmath$ instead of $\mathsf{h}_\jmath$, and the relations \eqref{eq:trace-of-one-paired-vs-trace-invariant} and  \eqref{eq:trace-of-n-paired-vs-trace-invariant} become $\Tr_{\bnu_\jmath}(\vec Q)= \Tr(P_\jmath)$ and $\Tr_\mathsf{\bsig}(\vec Q) = \Tr_{\mathsf{k}}(\vec P)$.

\paragraph{Alternating and almost alternating invariants.}Given a trace-invariant of paired tensors $H_1, \ldots H_p$ encoded by its colored graph $\mathsf{g}$,  with $H_\ell \in \mathcal{G}^\mathrm{m}_{D}[Q_\ell]$ with $Q_\ell\in \{A,B\ldots\}$,  or $H_\ell \in \mathcal{G}^\mathrm{p}_{D}[X_\ell, \bar X_\ell]$, where $(X_\ell, \bar X_\ell) \in \{(T_a, \bar T_a),(T_b, \bar T_b)\ldots\}$ (that is, here, if $X_\ell=T_a$, one must have $\bar X_\ell=\bar T_a$, etc),
we will say that $(\mathsf{g}, \vec H)$ is \emph{alternating} or \emph{strictly alternating}  if it has more than one paired tensor and all the  edges of color $1\le c \le D$ of $\mathsf{g}$  that link some $H_\ell$, $H_{\ell'}$ with $\ell\neq \ell'$ are such that $H_\ell$ and $H_{\ell'}$  are generated by different tensors $Q_\ell\neq Q_{\ell'}$ or $(X_\ell, \bar X_\ell)\neq(X_{\ell'}, \bar X_{\ell'})$. Note that  any number of edges may  link the inputs and outputs of a given $H_\ell$.

We say that $(\mathsf{g}, \vec H)$  is \emph{almost alternating}, if at least one edge of some  color $1\le c \le D$ links some paired tensors $H_\ell$, $H_{\ell'}$ with $\ell\neq \ell'$ and generated by different tensors $Q_\ell\neq Q_{\ell'}$, and \emph{at most one} edge  links two paired tensors $H_\ell$, $H_{\ell'}$  with $\ell\neq \ell'$ and generated by the same $Q_\ell=Q_{\ell'}$. %
\begin{figure}[!h]
\centering
\includegraphics[scale=0.5]{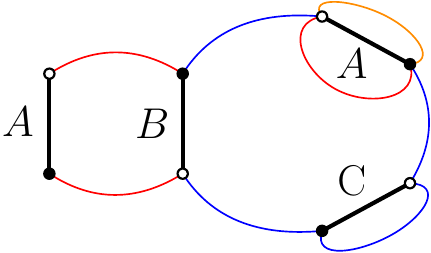}\hspace{2cm}\includegraphics[scale=0.5]{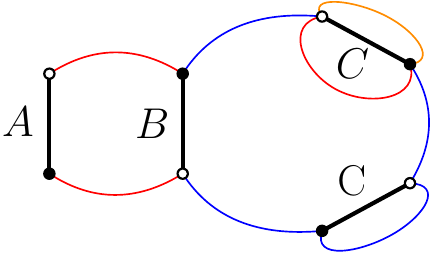}\caption{
Examples of alternating (left) and almost alternating (right) melonic $(\mathsf{g}, \vec H)$, where the letters indicate the generators for the paired tensors represented by the thick edges. In the pure case, the generators $A,B,C$ are replaced by $(T_a, \bar T_a),(T_b, \bar T_b),(T_c, \bar T_c)$.
}
\label{fig:paired-invariant-alternating}
\end{figure}

With the same notations, if $(\mathsf{g}, \vec H)$ is not alternating, we may form groups of tensors to form coarser invariants which are alternating, as we now detail.
In $\mathsf{g}$, let $E^{\neq} = E^{\neq}(\mathsf{g}, \vec H)$ be the set of edges of color $c\in\{1, \ldots D\}$ that link paired tensors $H_\ell, H_{\ell'}$ \emph{generated by different} tensors $Q_\ell\neq Q_{\ell'}$\footnote{For the example on the right of Fig.~\ref{fig:paired-invariant-alternating}, $E^{\neq} $ is the set of crossed edges on the left of Fig.~\ref{fig:paired-invariant-grouping}. }. We choose $E^\mathrm{int}$ a set of edges of $\mathsf{g}_{\setminus E^{\neq}}$, such that $E^\mathrm{int}$ contains one edge per cycle of $\mathsf{g}_{\setminus E^{\neq}}$ that alternates edges of color $c$ and paired inputs and outputs, and let $E^\mathrm{tot}= E^{\neq}\cup E^\mathrm{int}$. 
We let $P^\neq_\jmath=P^\neq_\jmath (\mathsf{g}, \vec H, E^\mathrm{int})$ and $\mathsf{h}^\neq_\jmath=\mathsf{h}^\neq_\jmath(\mathsf{g}, \vec H, E^\mathrm{int})$ be the $P_\jmath$ and $\mathsf{h}_\jmath$ of $\mathsf{g}_{\setminus E^\mathrm{tot}}$, constructed as above.  From the construction,  all paired tensors $H_\ell$ that belong to the same $P^\neq_\jmath$ are generated by the \emph{same} tensor $Q_\ell=Q(\jmath)$ in  $\{A, B \ldots\}$, so that from \eqref{eq:Pj-is-generated},  in the mixed case, $P^\neq_\jmath\in  \mathcal{G}^\mathrm{m}_{D}[Q(\jmath)]$, and in the pure case, $Q(\jmath)=X(\jmath)\otimes \bar X(\jmath)$ and $P^\neq_\jmath \in  \mathcal{G}^\mathrm{p}_{D}[X(\jmath), \bar X(\jmath)]$. The colored graph $\mathsf k$ built above is in the present situation denoted by $\mathsf{k}^\neq=\mathsf{k}^\neq(\mathsf{g}, \vec H, E^\mathrm{int})$. From the construction, every edge $e$ of $\mathsf{k}^\neq$ either links paired tensors $P^\neq_\jmath, P^\neq_{\jmath'}$ with $\jmath\neq \jmath'$ that are generated by different tensors $Q(\jmath)\neq Q(\jmath')$, or $e$ links and input and output of one  paired tensor $P^\neq_\jmath$: $(\mathsf{k}^\neq, \vec P^\neq)$ is strictly alternating.

\subsection{Asymptotic moments of random paired tensors}
\label{sub:asympt-moments-of-paired}

\paragraph{Asymptotics.} 
Consider any paired-tensors $H_1, \ldots H_q$,  $H_\ell \in \mathcal{G}^\mathrm{m}_{D}[S_\ell]$, where $S_\ell\subset \{A,B\ldots\}$, or $H_\ell \in \mathcal{G}^\mathrm{p}_{D}[\Theta_\ell, \bar \Theta_\ell]$, where $\Theta_\ell\subset \{T_a,T_b\ldots\}$ and  $\bar \Theta_\ell\subset \{\bar T_a,\bar T_b\ldots\}$,   and a trace-invariant of these $q$ paired tensors with colored graph $\mathsf{g}$, that involves a total of  $n$  regular tensors $M_1, \ldots M_n$, $M_i\in \{A,B\ldots\}$ or $M_i=X_i\otimes X_{\bar i}'$ with  with $X_i\in \{T_a, T_b \ldots \}$ and $X_{\bar i}'\in \{\bar T_a, \bar T_b \ldots \}$. Then from \eqref{eq:paired-invariant-vs-usual-invariant}, there is a $\btau\in S_n^D$ such that $\Tr_{\mathsf{g}}(\vec H)=\Tr_\btau(\vec M)$. 
Furthermore, $\mathsf{g}$ is a connected melonic graph of  paired tensors  if and only if $\btau$ is  
first order (Lemma~\ref{eq:paired-invariant-vs-usual-invariant}). 
Assume that the regular random tensors $\{A,B\ldots\}$  or $\{T_a, \bar T_a, T_b, \bar T_b \ldots\}$ scale as assumed in the previous sections respectively in the mixed \eqref{eq:scaling-wishart-D1}
and \eqref{eq:scaling-wishart-multitens} and pure \eqref{eq:scaling-hypothesis-pure} and \eqref{eq:scaling-hypothesis-pure-multitensors} cases.
In both cases,  the expectations of trace-invariants for
\emph{first order} $\btau$ are
of order $N$, and we may set:
\be
\label{eq:asymptotic-moments-paired-tensors}
\phi_{\mathsf{g}}\bigl(\vec h\bigr)  : =\lim_{N\rightarrow \infty}  \frac 1 {N} 
 \mathbb{E}\bigl[\Tr_{\btau}(\vec M)\bigr]. 
\ee
In the mixed and pure case,  this respectively corresponds to the dedicated notations: 
\be
\label{eq:asympt-of-melo-paired-vs-first-order-regular}
\phi_{\mathsf{g}}(\vec h)=\varphi_{\btau}^\mathrm{m}(\vec m)\qquad \textrm{ or }\qquad  \phi_{\mathsf{g}}(\vec h)=\varphi_{\btau}(\vec x, \vec{x'}).
\ee
We say that $(\mathsf{g}, \vec h)$ is \emph{almost or strictly  alternating} if $(\mathsf{g}, \vec H)$ is.

Considering a set $E$ of edges of $\mathsf{g}$ satisfying the condition 
above \eqref{eq:Pj-is-generated}
and splitting open the edges of $E$ defines as in \eqref{eq:Pj-is-generated} some $P_\jmath\in \mathcal{G}^\mathrm{m}_{D}[S_\jmath]$ or $\mathcal{G}^\mathrm{p}_{D}[\Theta_\jmath, \bar \Theta_\jmath]$, and with the notations of Sec.~\ref{sub:grouping},  some $\vec H_\jmath$, $\btau_\jmath$, and $\mathsf{k}$. The $\btau_\jmath$ being 
first order, we may define as in \eqref{eq:asymptotic-moments-paired-tensors}  
$\phi_{\mathsf{h}_\jmath}(\vec h_\jmath)  = \varphi_{\btau_\jmath}^\mathrm{m}(\vec m_\jmath)$, or in the pure case  $\phi_{\mathsf{h}_\jmath}(\vec h_\jmath)  = \varphi_{\btau_\jmath}(\vec x_\jmath, \vec {{x}_{\jmath}'})$, so that \eqref{eq:trace-of-one-paired-vs-trace-invariant} becomes: 
\be
\label{eq-two-relations-between-asymptotic-paired-and-unpaired}
 \phi_{\mathsf{h}_\jmath}(\vec h_\jmath) =  \phi_{\mathbf{id}_1} (p_\jmath)=  \phi(p_\jmath).
 \ee
Asymptotically,  \eqref{eq:trace-of-n-paired-vs-trace-invariant} implies:
\be
\label{eq:asymptotic-relation-grouping}
\phi_\mathsf{g}(\vec h) = \phi_{\mathsf{k}}(\vec p). 
\ee
In particular,  starting from $\bsig\in S_n^D$ purely connected with $\bar \omega(\bsig; \mathrm{id})=0$ and $\vec Q\in \{A, B \ldots\}$ (or $Q_i=X_i\otimes X_{\bar i}'$ with $X_i\in \{T_a, T_b \ldots \}$ and $X_{\bar i}'\in \{\bar T_a, \bar T_b \ldots \}$)   instead of $\vec H$, with $\bnu_\jmath$ instead of $\mathsf{h}_\jmath$: $\varphi_{\bnu_\jmath}^\mathrm{m}(\vec q)= \phi(p_\jmath)$ and $\varphi^\mathrm{m}_{\bsig}(\vec q) = \phi_{\mathsf{k}}(\vec p)$ (or pure equivalent $\varphi_{\bnu_\jmath}(\vec x, \vec{x'})= \phi(p_\jmath)$ and $\varphi_{\bsig}(\vec x, \vec{x'}) = \phi_{\mathsf{k}}(\vec p)$).

\paragraph{Multilinearity.} {We now consider the asymptotics of $\Tr_{\mathsf{g}}(\vec H'')$, where $H''_\ell= H_\ell + t_\ell H'_\ell$ with $t_\ell\in \mathbb{C}$ and $H_\ell,H'_\ell\in \mathcal{G}^\mathrm{m}_{D}[S]$ with $S=\{A,B\ldots\}$, or $H_\ell \in \mathcal{G}^\mathrm{p}_{D}[\Theta, \bar \Theta]$, where $\Theta= \{T_a,T_b\ldots\}$ and  $\bar \Theta= \{\bar T_a,\bar T_b\ldots\}$,  $\mathsf{g}$ a connected melonic graph of these paired tensors. By multilinearity:
\be 
\label{eq:develop-multilinear}
\Tr_{\mathsf{g}}(\vec H'') = \sum_{k=0}^p \ \sum_{1\le \ell(1) < \ell(2) <\ldots < \ell(k)\le p}\Bigl(\prod_{j=1}^k t_{\ell(j)} \Bigr)\;\Tr_{\mathsf{g}}\Bigl(\vec H''_{[\ell(1), \ldots, \ell(k)]}\Bigr), 
\ee
where setting $I=\{\ell(1), \ldots, \ell(k)\}$, the $\ell$th component of $\vec H''_{[I]}$ is $H_\ell$ if $\ell\notin I$ and is $H_\ell'$ otherwise. Due to \eqref{eq:paired-invariant-vs-usual-invariant}, \eqref{eq:scaling-hypothesis-pure-multitensors} and \eqref{eq:scaling-wishart-multitens}, 
$\Tr_{\mathsf{g}}(\vec H''_{[I]})$ is of order $N$ and the  limit of  the rescaled expectation is: 
\be 
\label{eq:varphi-multilinear}
\phi_{\mathsf{g}}(\vec h'')  = \sum_{k=0}^p \ \sum_{1\le \ell(1) < \ell(2) <\ldots < \ell(k)\le p}\Bigl(\prod_{j=1}^k t_{\ell(j)} \Bigr)\;\phi_{\mathsf{g}}\Bigl(\vec h''_{[I)]}\Bigr), \qquad h''_\ell= h_\ell + t_\ell h'_\ell\;,
\ee
where  the $\ell$th component of $\vec h''_{[I]}$ is $h_\ell$ if $\ell\notin I$ and  $h_\ell'$ otherwise:   $\phi_{\mathsf{g}}$ is multilinear by construction.

\subsection{Asymptotically centering paired tensors}
\label{sub:centering-paired}
For any $\mathcal{D}\in \mathbb{N}^\star$, we let:
\be
\un_{\mathcal{D}} = \frac{\un^{\otimes \mathcal{D}}}{N^{\mathcal{D}-1}}. 
\ee

If $H\in \mathcal{G}^\mathrm{m}_{D, \mathcal{D}}[S]$ with $S= \{A,B\ldots\}$ or $H\in \mathcal{G}^\mathrm{p}_{D, \mathcal{D}}[\Theta, \bar \Theta]$, where $\Theta= \{T_a,T_b\ldots\}$ and  $\bar \Theta= \{\bar T_a,\bar T_b\ldots\}$, in order to  asymptotically center $H$, we may define at finite $N$:
\be
\label{eq:Gell}
G = H -  \phi(h) \un_\mathcal{D}, 
\ee 
where  $\phi(h)=\phi_{\mathbf{id}_1}(h)$, so if $H$  involves a total of  $n$  regular tensors $M_1, \ldots M_n$, $M_i\in \{A,B\ldots\}$ or $M_i=X_i\otimes X_{\bar i}'$ with  $X_i\in \{T_a, T_b \ldots \}$ and $X_{\bar i}'\in \{\bar T_a, \bar T_b \ldots \}$, there exists a first order $\bsig\in S_n^D$  such that $ \phi(h)=\varphi_\bsig^\mathrm{m}(\vec m)$ in the mixed case and  $ \phi(h)=\varphi_\bsig(\vec x, \vec{x'})$ in the pure case.

We need to justify that the rescaling \eqref{eq:Gell} is the appropriate one for considering quantities
like $\phi_\mathsf{g}$, that it will be asymptotically multilinear, and that $G$ thus defined will be  asymptotically centered: $\phi(g)=0$. 
The reason why the identity is rescaled by $N^{\mathcal{D}-1}$ is because for any $\bsig\in S_n^{D}$: 
\be
\label{eq:scaling-of-id-general-invariant}
\Tr_\bsig(\un_D) =N^{n-d(\bsig, \mathrm{id})} = N^{1-(D-1)(K_\mathrm{p}(\bsig)-1) - \bar \omega(\bsig;\mathrm{id})}, 
\ee
so if $\bsig$ is purely connected and  $\bar \omega(\bsig; \mathrm{id})=0$ one has  $\Tr_\bsig(\un_D)=N$.

\begin{lemma}
\label{lem:preimage-of-identity}
Any element of $\mathcal{G}^\mathrm{m}_{D, \mathcal{D}}[\un_D]$  \emph{which originates from a purely connected $\bsig$} is equal to~$\un_{\mathcal{D}}$. Therefore for any $\mathcal{D}$ there exists $\bsig\in S_n^D$ with $n=(\mathcal{D}-1)/(n-1)$, $\bsig$ purely connected with $\bar \omega(\bsig;\mathrm{id})=0$,  and a set $E$ consisting of $\mathcal{D}$ edges, one per cycle alternating 
edges of color $c$ and thick edges, such that:
\be
\un_{\mathcal{D}}= \Tr_{\bsig_{\setminus_E}}\left(\un_D \right)
\ee
\end{lemma}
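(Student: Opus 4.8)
The plan is to prove the lemma in two stages: first establish that any element of $\mathcal{G}^\mathrm{m}_{D, \mathcal{D}}[\un_D]$ arising from a purely connected $\bsig$ must equal $\un_\mathcal{D}$, and then exhibit a concrete $\bsig$ and edge-set $E$ realizing each $\mathcal{D}$. The first stage is where the content lies; the second is essentially a counting/construction matching the formula \eqref{eq:mathcalD-of-generated}.

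For the first stage, I would start from the defining construction of $\mathcal{G}^\mathrm{m}_{D,\mathcal D}[\un_D]$: we take a purely connected first-order invariant $\bsig$ (so $K_\mathrm{p}(\bsig)=1$, $\bar\omega(\bsig;\mathrm{id})=0$, and the canonical pairing is the identity), plug in the regular tensor $\un_D=\un^{\otimes D}/N^{D-1}$ at every vertex, then split open one edge per cycle alternating edges of color $c$ and canonical pairs. The key structural input is that each regular tensor here is literally the (rescaled) identity, whose components factorize as $\prod_c \delta_{i^c,j^c}$. So after plugging in identities, every colored edge of color $c$ simply identifies the two indices it connects, and every canonical pair (thick edge) identifies input and output of each color through the $\delta$'s of $\un_D$. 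I would argue that in a purely connected melonic graph with canonical pairing the identity, following alternating sequences of color-$c$ edges and identity-thick-edges, all indices in a given alternating cycle get forced equal. After splitting the chosen edges in $E$, the resulting paired tensor therefore has components that are exactly products of Kronecker deltas pairing each surviving free output index with its partnered free input index — which is precisely the definition of $\un_\mathcal{D}$. The normalization is tracked via \eqref{eq:scaling-of-id-general-invariant}: since $\bsig$ is purely connected with $\bar\omega(\bsig;\mathrm{id})=0$ we have $\Tr_\bsig(\un_D)=N$, and the power of $N$ accumulated from the $n$ factors of $N^{-(D-1)}$ against the free index sums reproduces exactly the $N^{\mathcal D-1}$ needed to match $\un_\mathcal{D}=\un^{\otimes\mathcal D}/N^{\mathcal D-1}$.

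For the second stage, I would simply note that the map $n\mapsto \mathcal D=n(D-1)+K$ with $K=1$ (purely connected) gives $\mathcal D=n(D-1)+1$, so that $n=(\mathcal D-1)/(D-1)$ — consistent with the index count in the statement (correcting the apparent typo $(\mathcal D-1)/(n-1)$). Given $\mathcal D$ with $(D-1)\mid(\mathcal D-1)$, I would exhibit an explicit purely connected melonic $\bsig$ on $n$ tensors — for instance the ``necklace'' melon where all $\sigma_c$ are equal to a single $n$-cycle $\gamma_n$ — whose canonical pairing is the identity, and choose $E$ to be one edge per color-$c$ alternating cycle; by \eqref{eq:mathcalD-of-generated} this has exactly $\mathcal D$ free inputs. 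By the first stage, $\Tr_{\bsig_{\setminus E}}(\un_D)=\un_\mathcal{D}$.

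\textbf{The main obstacle} I anticipate is making the ``all indices in an alternating cycle are forced equal, hence the leftover paired tensor is literally a product of deltas'' argument fully rigorous rather than pictorial: one must carefully track, through the combinatorics of the canonical pairing and the split edges $E$, exactly which free output indices end up paired with which free input indices, and verify that no extra closed index loops survive to produce spurious factors of $N$ that would spoil the normalization. The cleanest route is probably to write the components of $\Tr_{\bsig_{\setminus E}}(\un_D)$ explicitly as a product of $\delta$'s indexed by the edges and pairs, then use the tree structure of $\bsig_{\setminus E}$ (guaranteed since $\bsig$ is first order and purely connected, so splitting one edge per fundamental cycle yields a tree) to contract the deltas along the tree and read off that each surviving free index is delta-identified with precisely one partner, giving exactly $\un_\mathcal{D}$ up to the power of $N$ computed above.
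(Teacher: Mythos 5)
Your first stage is exactly the paper's proof, just spelled out in more detail: the paper's argument is the one-liner that every alternating path of thick edges and color-$c$ edges with half-edge extremities contracts to a single Kronecker delta, the closed alternating cycles each contribute a factor of $N$, and the resulting exponent $n-d(\bsig,\mathrm{id})-\mathcal{D}=1-\mathcal{D}$ matches the normalization of $\un_{\mathcal{D}}$. You are also right that the displayed $n=(\mathcal{D}-1)/(n-1)$ is a typo for $n=(\mathcal{D}-1)/(D-1)$.

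One concrete error in your second stage: the ``necklace'' with all $\sigma_c$ equal to the $n$-cycle $\gamma_n$ is \emph{not} purely connected. Since every color pairs the white vertex $s$ with the same black vertex $\overline{\gamma_n(s)}$, the partition $\Pi_\mathrm{p}(\bsig)=\bigvee_c\Pi_\mathrm{p}(\sigma_c)$ consists of $n$ blocks, so $K_\mathrm{p}(\bsig)=n$ and the graph is $n$ disjoint dipoles (it is $\mathbf{id}_n$ up to relabeling), which for $n\ge 2$ violates the hypothesis. A correct witness is any chain of nested melon insertions, e.g.\ $\sigma_1=(1\,2\,\cdots\,n)$ and $\sigma_c=\mathrm{id}$ for $c\ge 2$, which is purely connected and melonic with canonical pairing the identity; with $E$ one edge per alternating cycle this gives $\mathcal{D}=n(D-1)+1$ as required. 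This does not affect the substance of the lemma, whose content lies entirely in your first stage.
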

\proof With the notations of the lemma, every path of $\bsig$ consisting alternatively of thick edges representing  and edges of color $c$ and whose extremities are half-edges represents a Kronnecker delta in the corresponding indices. The cycles of this kind contribute as factors of $N$, and the overall exponent of $N$ is $n-d(\bsig, \mathrm{id}) - \mathcal{D} = 1 - \mathcal{D}$. \qed

\

If  $\mathsf{g}$ is a  connected melonic graph  of $\un_{\mathcal{D}_1}, \ldots \un_{\mathcal{D}_p}$, then from Lemma~\ref{lem:preimage-of-identity}, \eqref{eq:paired-invariant-vs-usual-invariant} and \eqref{eq:scaling-of-id-general-invariant}:
\be
\label{eq:paired-invariant-of-identities}
\Tr_{ \mathsf{g} } \left(\un_{\mathcal{D}_1}, \ldots, \un_{\mathcal{D}_p}\right) = N, 
\ee}
\emph{leading to the same global rescaling $N$ for melonic invariants as in} \eqref{eq:asymptotic-moments-paired-tensors}. By linearity of  $\Tr=\Tr_{\mathbf{id}_1}$, we do find that $G$ introduced in \eqref{eq:Gell} satisfies $\phi(g)=\phi_{\mathbf{id}_1}(g)=0$.

\

Lemma~\ref{lem:preimage-of-identity} also implies that the following sets both contain $\un_{ \mathcal{D}}$:
\begin{itemize}
\item The set $\mathcal{G}^\mathrm{p}_{D, \mathcal{D}}[\Theta, \bar \Theta;\un_D]$  of paired tensors with $\mathcal{D}$ inputs obtained from pure first order invariants  (Sec.~\ref{sub:paired-tensors}, see also Fig.~\ref{fig:examples-paired-tensors-generated}), where any thick edge may represent $\un_D$, or any of the elements of the form $X_i\otimes X_{\bar i}'$ with  $X_i\in \Theta$ and $X_{\bar i}'\in \bar \Theta$. 
\item   The set $\mathcal{G}^\mathrm{m}_{D, \mathcal{D}}[S;\un_D]$ of paired tensors with $\mathcal{D}$ inputs obtained from mixed first order invariants, where any thick edge may represent any element of $S$, but only thick edge that coincide with a canonical pair may be replaced with $\un_D$.
\end{itemize}
Respectively let $\mathcal{A}^\mathrm{m}_{D,\mathcal{D}}[S;\un_D]$ and $\mathcal{A}^\mathrm{p}_{D,\mathcal{D}}[\Theta, \bar \Theta; \un_D]$ be the sets of complex linear combinations of elements of $\mathcal{G}^\mathrm{m}_{D, \mathcal{D}}[S;\un_D]$ and $\mathcal{G}^\mathrm{p}_{D, \mathcal{D}}[\theta, \bar \theta;\un_D]$.   \eqref{eq:Gell} is an element of $\mathcal{A}^\mathrm{m}_{D,\mathcal{D}}[S;\un_D]$ in the mixed case and of $\mathcal{A}^\mathrm{p}_{D,\mathcal{D}}[\Theta, \bar \Theta; \un_D]$ in the pure case. We let $\mathcal{A}^\mathrm{m}_D=\cup_{\mathcal{D}\ge 1}\mathcal{A}^\mathrm{m}_{D,\mathcal{D}}$ and $\mathcal{A}^\mathrm{p}_D=\cup_{\mathcal{D}\ge 1}\mathcal{A}^\mathrm{p}_{D,\mathcal{D}}$. For $D=1$ and $S=\{A\}$, $\mathcal{A}^\mathrm{m}_1[A;\un]=\mathcal{A}^\mathrm{m}_{1,1}[A; \un]$ is the algebra $\mathcal{A}[A,\un]$ generated by $\un$ and $A$.
 
\

Considering $\vec H'$ where $H'_\ell=H_\ell + \alpha_\ell \un_{\mathcal{D}_\ell}$ where  $H_\ell \in \mathcal{G}^\mathrm{m}_{D, \mathcal{D}_\ell}  [S_\ell]$,  $S_\ell\subset \{A,B\ldots\}$, or $H_\ell \in \mathcal{G}^\mathrm{p}_{D, \mathcal{D}_\ell}   [\Theta_\ell, \bar \Theta_\ell]$, where $\Theta_\ell\subset \{T_a,T_b\ldots\}$ and  $\bar \Theta_\ell\subset \{\bar T_a,\bar T_b\ldots\}$, in order to make sense of  $\phi_{\mathsf{g}}\bigl(\vec{h}'\bigr)$, where $\mathsf{g}$ is a connected melonic graph of $p$ paired tensors, we write by multilinearity:
\be 
\label{eq:develop-centered-grouped}
\Tr_\mathsf{g}(\vec H') = \sum_{k=0}^p \ \sum_{1\le \ell(1) < \ell(2) <\ldots < \ell(k)\le p}\Bigl(\prod_{j=1}^k \alpha_{\ell(j)} \Bigr)\;\Tr_\mathsf{g} \Bigl(\vec H_{[\ell(1), \ldots, \ell(k)]}\Bigr), 
\ee
where letting $I=\{\ell(1), \ldots, \ell(k)\}$, the component of index $\ell$ of the vector $\vec H_{[I]}$ is $H_\ell$ if $\ell\notin I$ and is $\un_{\mathcal{D}_\ell}$ otherwise. We therefore need to study the asymptotics of the right-hand-side.

\begin{figure}[!h]
\centering
\includegraphics[scale=1.1]{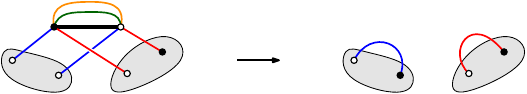}
\caption{Removal of a thick edge ($D=4$, $l=2$) from a purely connected melonic graph~$\bsig$ with $\bar \omega(\bsig; \mathrm{id})=0$. The grey blobs represent parts of the graph which are disconnected from one another due to the properties of $\bsig$, so that there are two connected components on the right. }
\label{fig:thick-edge-removal}
\end{figure}

If $\bsig\in S_n^D$ is purely connected and $\bar \omega(\bsig; \mathrm{id})=0$, choosing a thick edge labeled $i$, we may remove this thick edge and the two vertices it links from the graph, as well as the $D-l$ colored edges that link these two vertices, and then reconnect the $l$ pending half-edges of color $1,\ldots D$ as in Fig.~\ref{fig:thick-edge-removal}. Permutation-wise, this corresponds to removing $i$ from all cycles of the permutations $\sigma_c$, leading to a new $D$-tuple $\bsig_{\setminus i}$. This invariant is still melonic with $\bar \omega(\bsig; \mathrm{id})=0$, but it now has $l$ (purely) connected components.  One has for any $\vec M$ with\footnote{The result does not depend on which $M_i$ is set to $\un_D$.} $M_n=\un_D$:
\be
\Tr_{\bsig}\bigl(\vec M\bigr) =  \frac 1 {N^{l-1}} \Tr_{\bsig_{\setminus n}}\bigl(\vec M_{\setminus n})\bigr), 
\ee
where $l-1=D-1 + l-D$. In the pure case, $\vec M$ has components $M_i=X_i\otimes X_{\bar i}'$.  Since both sides are first order, the expectation of the left-hand side is of order $N$ and that of the trace-invariant on the right-hand side is of order $l$, so that in the end we find (here $\Pi(\bsig_{\setminus n})=\Pi_\mathrm{p}(\bsig_{\setminus n})$):
$$
\phi_{\bsig}\bigl(m_1, \ldots, m_{n-1},1_D\bigr):=  \lim_{N} \mathbb{E}\biggl[\frac {\Tr_{\bsig}\bigl(\vec M\bigr)}{N} \biggr]=  \lim_{N}   \mathbb{E}\biggl[\frac{\Tr_{\bsig_{\setminus n}}\bigl(\vec M_{\setminus n})\bigr)}{{N^{l}} }\biggr]= \varphi^\mathrm{m}_{\Pi(\bsig_{\setminus n}), \bsig_{\setminus n}}(m_1, \ldots, m_{n-1}), 
$$
where for the pure case the right hand side can be replaced by $\varphi_{\Pi_\mathrm{p}(\bsig_{\setminus n}), \bsig_{\setminus n}}(\vec x_{\setminus n}, \vec{x'}_{\setminus n})$. 
One may iterate this process for all the thick edges whose labels are in $I=\{n-k+1,\ldots, n\}$, and as the procedure does not depend on the ordering of the thick edge removals, one may denote the resulting invariant by $\bsig_{\setminus I}$. We may then define $\phi_{\bsig}\bigl(m_1, \ldots, m_{n-k},1_D, \ldots, 1_D\bigr)$ in the same manner, and notice that for any $k<n$: 
$$
\phi_{\bsig}\bigl(m_1, \ldots, m_{n-k},1_D, \ldots, 1_D\bigr) = \varphi^\mathrm{m}_{\Pi(\bsig_{\setminus I}), \bsig_{\setminus I}}(m_1, \ldots, m_{n-k}),
$$
or $\varphi_{\Pi_\mathrm{p}(\bsig_{\setminus I}), \bsig_{\setminus I}}(\vec x_{\setminus I}, \vec{x'}_{\setminus I})$ for the pure case. For $k=n$, one has from \eqref{eq:scaling-of-id-general-invariant} that $\phi_{\bsig}(1_D)=1$.

\emph{If now  $\mathsf{g}$ is a more general connected melonic graph of $p$ paired tensors}, and with the notations of \eqref{eq:develop-centered-grouped}, combining \eqref{eq:paired-invariant-vs-usual-invariant} and Lemma~\ref{lem:melopaired-vs-first-order} for  the $H_\ell$ for  $\ell\notin I$ as well as Lemma~\ref{lem:preimage-of-identity} for the $\ell\in I$, we see that there exists a first order $\bsig\in S_n^D$ and $\vec M=(M_1, \ldots, M_n)$ such that $\Tr_\mathsf{g} (\vec H_{[I]}) = \Tr_\bsig(\vec M),$ where $I=\{\ell(1), \ldots, \ell(k)\}$.
Each $M_i$ comes from one of the components of $\vec H_{[I]}$, so it takes value in the set $\{\un_D, A,B\ldots\}$ (in the pure case, each $M_i$ different from $\un_D$ writes as  $M_i=X_i\otimes X_{\bar i}'$). We therefore define 
$\phi_\mathsf{g} (\vec h_{[I]}) =
\lim_{N\rightarrow \infty} \mathbb{E}[
\Tr_\bsig(\vec M )]/N
$, where the component of index $\ell$ of  $\vec h_{[I]}$ is $h_\ell$ if $\ell\notin\{\ell(1), \ldots, \ell(k)\}$ and is $1_{\mathcal{D}_\ell}$ otherwise. If $k=p$, we define $\phi_\mathsf{g}(1_{\mathcal{D}_1}, \ldots, 1_{\mathcal{D}_p})$ as the limit of  \eqref{eq:paired-invariant-of-identities} rescaled by $1/N$, which is always equal to 1 regardless of  the $\mathcal{D}_\ell$:
\be
\label{eq:varphi-of-one}
\phi_\mathsf{g}(1_{\mathcal{D}_1}, \ldots, 1_{\mathcal{D}_p})=1. 
\ee

 We define as above $\mathsf{g}_{\setminus I}$ as the (non-necessarily connected) melonic graph of paired tensors obtained by removing the thick edges with indices in $I=\{\ell(1), \ldots, \ell(k)\}$ from the graph in the same way as in Fig.~\ref{fig:thick-edge-removal} (the procedure can be adapted here since the inputs and outputs of the same color  are paired). We also denote by $\phi_{\Pi(\mathsf{g}_{\setminus I}), \mathsf{g}_{\setminus I}}(\vec h_{\setminus I})$ the product of the $\phi_{\mathsf{g}_j}$ over the connected components $\mathsf{g}_j$ of $\mathsf{g}_{\setminus I}$, where $\vec h_{\setminus I}$ gathers the $h_\ell$ for $\ell \notin I=\{\ell(1), \ldots, \ell(k)\}$. One has:
\be
\label{eq:varphi-of-one-and-contraction}
\phi_\mathsf{g} \bigl(\vec h_{[I]}\bigr) = \phi_{\Pi(\mathsf{g}_{\setminus I}), \mathsf{g}_{\setminus I}}(\vec h_{\setminus I}). 
\ee
Since none of the arguments onn the right-hand side is 1, it can be expressed in terms of asymptotic moments of  $\vec m$ or $\vec x, \vec{x'}$. Finally coming back to \eqref{eq:develop-centered-grouped}, we see that all the terms in the sum are of order $N$, so we may rescale and take the limit of the expectation, to obtain: 
 \be 
\phi_\mathsf{g}(\vec h') = \sum_{k=0}^p \ \sum_{1\le \ell(1) < \ell(2) <\ldots < \ell(k)\le p}\Bigl(\prod_{j=1}^k \alpha_{\ell(j)} \Bigr)\;\phi_\mathsf{g} \bigl(\vec h_{[I]}\bigr) , 
\qquad h'_\ell = h_\ell + \alpha_\ell 1_{\mathcal{D}_\ell}
\ee
The notation  $h'_\ell = h_\ell + \alpha_\ell 1_{\mathcal{D}_\ell}$ is consistent with multilinearity, since  the component of index $\ell$ of  $\vec h_{[I]}$ is $h_\ell$ if $\ell\notin\{\ell(1), \ldots, \ell(k)\}$ and $1_{\mathcal{D}_\ell}$ otherwise.

\subsection{Free cumulants of paired tensors}
\label{sec:free-cumulants-paired}

We are now able to introduce free cumulants associated to connected melonic graphs of paired tensors. In Sec.~\ref{sub:paired-tensors}, we detailed how trace-invariants of paired tensors can be encoded by graphs  $\mathsf{g}$  or cycles $\{\gamma_{c,b}\}$, each of the form $\bigl((k_1, r_1)\cdots   (k_{q_{c,b}}, r_{q_{c,b}})\bigr)$ where $1\le k_1, \ldots k_q\le n$ are the labels of the thick edges encountered in the cycle, and $r_s$ is the shade of the pair of output and input of color $c$ of the paired tensor number $k_s$ which is in the cycle.

Consider any paired-tensors $H_1, \ldots H_q$,  $H_\ell \in \mathcal{G}^\mathrm{m}_{D}[S_\ell]$, where $S_\ell\subset \{A,B\ldots\}$, or $H_\ell \in \mathcal{G}^\mathrm{p}_{D}[\Theta_\ell, \bar \Theta_\ell]$, where $\Theta_\ell\subset \{T_a,T_b\ldots\}$ and  $\bar \Theta_\ell\subset \{\bar T_a,\bar T_b\ldots\}$,   and a trace-invariant of these $q$ paired tensors with colored graph $\mathsf{g}$, encoded by some such cycles $\{\gamma_{c,b}\}$.   We may independently consider some non-crossing \emph{permutations} $\{\tau_{c,b}\}$ with $\tau_{c,b}\preceq \gamma_{c,b}$ in $S_{\mathrm{NC}}(\gamma_{c,b})$, which is isomorphic to $\mathrm{NC}(q_{c,b})$, where $q_{c,b}$ is the number of elements in the cycle $\gamma_{c,b}$. 
We use the notation $\{ \mathsf{h}\mid \mathsf{h} \preceq \mathsf{g} \}$ to refer to the set of  $\mathsf{h}$ built from all the $\tau_{c,b}\preceq \gamma_{c,b}$, and the notation $\mathsf{M}(\mathsf{h}, \mathsf{g})=\prod_{c,b} \mathsf{M}(\tau_{c,b}\gamma_{c,b}^{-1})$.

We may therefore define free cumulants for connected melonic graphs $\mathsf{g}$ of  $q$ paired tensors for any $\vec h =(h_1, \ldots h_q)$ as:
\be
\label{eq:paired-free-cumulant-moment}
\varkappa_{\mathsf{g}} (\vec h)= \sum_{\mathsf{h} \preceq \mathsf{g}} \phi_{\Pi(\mathsf{h}), \mathsf{h}}(\vec h) \; \mathsf{M}(\mathsf{h}, \mathsf{g}),
\ee
where as above, $\phi_{\Pi(\mathsf{h}), \mathsf{h}}(\vec h)$ is the product of the $\phi_{\mathsf{h}_j}$ over the connected components $\mathsf{h}_j$ of $\mathsf{h}$. By multilinearity of the $\phi$, the $\varkappa$ are multilinear.

To understand this formula better, one can use \eqref{eq:asympt-of-melo-paired-vs-first-order-regular} and Lemma~\ref{lem:melopaired-vs-first-order}: we may gather the colored edges of $\mathsf{g}$ together with the colored edges of each $H_\ell$, to form a colored graph of $M_1, \ldots M_n$,   namely, a first order invariant $\bsig\in S_n^D$, such that $\Tr_\mathsf{g}(\vec H)=\Tr_\bsig(\vec M)$. In the mixed case, $M_i\in \{A, B \ldots\}$, while in the pure case, $M_i=X_i\otimes X_{\bar i}'$ with $X_i\in \{ T_a,  T_b \ldots\}$ and $X_{\bar i}'\in  \{\bar T_a, \bar T_b \ldots\}$ (so $\Tr_\bsig(\vec M)=\Tr_\bsig(\vec X, \vec X')$).
\emph{To each $\mathsf{h}\preceq \mathsf g$ corresponds a $\brho\in S_n^D$ which satisfies $\brho \eta^{-1}\preceq \bsig\eta^{-1}$, where $\eta\in S_n$ is the canonical pairing of $\bsig$ (pure) or $(\bsig, \mathrm{id})$ (mixed), but not any such $\brho$ may occur in the sum, as the edges internal to the $H_\ell$ must be the same in $\brho$ and $\bsig$}, as will be detailed in the proof of Prop.~\ref{prop:cumul-of-paired-vs-non-paired}, see \eqref{eq:kappa-paired-vs-not-constrained}. One has:
\be
\label{eq:varphi-paired-vs-not-for-h}
\phi_{\Pi(\mathsf{h}), \mathsf{h}}(\vec h) =  \varphi^\mathrm{m}_{\Pi(\brho), \brho}(\vec m)  \quad \textrm{(mixed)},\qquad \textrm{and}\qquad  \phi_{\Pi(\mathsf{h}), \mathsf{h}}(\vec h) =  \varphi_{\Pi_\mathrm{p}(\brho), \brho}(\vec x, \vec{x'}) \quad \textrm{(pure)},
 \ee 
where  in the pure case $\Pi(\brho)=\Pi_\mathrm{p}(\brho)$ due to the fact that  the connected components of $\brho$ are purely connected (since in that case the canonical pairing is the identity). 

If $\mathsf{g}$ is not necessarily connected, we set $\varkappa_{\Pi(\mathsf g), \mathsf{g}} (\vec h)$ to be the  product of the $\varkappa_{\mathsf{g}_i}$ over the connected components $\mathsf{g}_i$. We can invert the formula in the lattice $\bigtimes_{c, b} S_{\mathrm{NC}}(\gamma_{c,b})\cong \bigtimes_{c, b} \mathrm{NC}(q_{c,b})$, to obtain: 
\be
\label{eq:paired-free-moment-cumulant}
\phi_{\mathsf{g}} (\vec h)= \sum_{\mathsf{h} \preceq \mathsf{g}} \varkappa_{\Pi(\mathsf{h}), \mathsf{h}}(\vec h). 
\ee

The relation between $\varkappa$ and $\kappa$ is more complicated than the relation between $\phi$ and $\varphi$  \eqref{eq:varphi-paired-vs-not-for-h}: see \eqref{eq:kappa-paired-vs-not-for-h} later in this section.
With the same notations as above, the following proposition, analogous to Prop.~11.15 of \cite{NicaSpeicher}. is proved in Sec.~\ref{sub:proof-of-sec-cum-paired}.
\begin{proposition}
\label{prop:cumul-of-1}
Let $\mathsf{g}$ be a connected melonic graph of  $q\ge 2$ paired tensors $H_1, \ldots H_q$, $\mathcal{D}_\ell$ be the number of inputs of $H_\ell$,  then one has 
$
\varkappa_{\mathsf{g}} (\vec h) =0
$
if one of the $h_\ell$ is $1_{\mathcal{D}_\ell}$ (i.e.~if $H_\ell=\un_{\mathcal{D}_\ell}$). 
\end{proposition}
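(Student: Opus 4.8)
The plan is to prove this vanishing result by reducing it, via the moment-cumulant formula \eqref{eq:paired-free-cumulant-moment}, to the classical statement that free cumulants of order $\ge 2$ vanish when one argument is the unit. The key structural input is the multilinearity of $\phi$ established in Sec.~\ref{sub:asympt-moments-of-paired} together with the contraction relation \eqref{eq:varphi-of-one-and-contraction}, which says that plugging $\un_{\mathcal{D}_\ell}$ into a slot of a connected melonic graph amounts to deleting the corresponding thick edge and factoring over the resulting connected components.

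First I would fix $\ell$ with $h_\ell = 1_{\mathcal{D}_\ell}$, so $H_\ell = \un_{\mathcal{D}_\ell}$, and write out \eqref{eq:paired-free-cumulant-moment} as $\varkappa_{\mathsf g}(\vec h) = \sum_{\mathsf h \preceq \mathsf g} \phi_{\Pi(\mathsf h), \mathsf h}(\vec h)\,\mathsf{M}(\mathsf h, \mathsf g)$. The essential observation is that the summation over $\{\tau_{c,b} \preceq \gamma_{c,b}\}$ splits according to how each non-crossing permutation treats the element $\ell$ inside the cycles $\gamma_{c,b}$ in which the thick edge $\ell$ appears. Using \eqref{eq:varphi-of-one-and-contraction}, for every $\mathsf h \preceq \mathsf g$ the value $\phi_{\Pi(\mathsf h),\mathsf h}(\vec h)$ only depends on the connected components of $\mathsf h_{\setminus \ell}$ obtained by removing the thick edge $\ell$, since $\phi$ of the block containing $\ell$ reduces to $\phi_{\Pi(\cdot_{\setminus\ell}),\cdot_{\setminus\ell}}$ of that block with $\ell$ deleted. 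Thus the $\phi$-factor is constant along the fibres of the map $\mathsf h \mapsto \mathsf h_{\setminus \ell}$, and I would regroup the sum accordingly.

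Next I would carry out the M\"obius sum over each fibre. For each color $c$ and each cycle $\gamma_{c,b}$ containing the label $\ell$, the non-crossing permutations $\tau_{c,b}\preceq\gamma_{c,b}$ that restrict to a fixed non-crossing permutation of $\gamma_{c,b}\setminus\{\ell\}$ range over the interval between two fixed elements in $\mathrm{NC}(q_{c,b})$, and summing the M\"obius function $\mathsf{M}(\tau_{c,b}\gamma_{c,b}^{-1})$ over such an interval gives zero whenever the interval is non-trivial (this is the standard cancellation $\sum_{\pi\le\rho} \mathsf{M}(\pi)=\delta_{\rho,1}$ in the incidence algebra of the non-crossing partition lattice, \eqref{eq:Moebius-on-NC}). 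Because $\mathsf g$ is connected with $q\ge 2$ paired tensors, the thick edge $\ell$ lies in at least one cycle of length $\ge 2$, so at least one such interval is non-trivial, producing an overall factor of zero. This mirrors exactly the argument for Prop.~11.15 of \cite{NicaSpeicher}, where the factorization $\phi_{\Pi(\mathsf h),\mathsf h}=\phi(\text{block with }\ell)\cdot\phi(\text{rest})$ and the vanishing of $\phi(\un)$-weighted M\"obius sums combine.

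The main obstacle I anticipate is bookkeeping the fibre structure correctly across the multiple lattices $\bigtimes_{c,b}\mathrm{NC}(q_{c,b})$: one must verify that deleting $\ell$ and factoring through \eqref{eq:varphi-of-one-and-contraction} is compatible with the product structure of $\mathsf{M}(\mathsf h,\mathsf g)=\prod_{c,b}\mathsf{M}(\tau_{c,b}\gamma_{c,b}^{-1})$, so that the M\"obius sum genuinely factorizes color-by-color and cycle-by-cycle, and to confirm that connectedness with $q \ge 2$ guarantees the presence of a nontrivial interval. Once the sum is seen to factor as a product over $(c,b)$ of partial M\"obius sums, at least one of which vanishes, the conclusion $\varkappa_{\mathsf g}(\vec h)=0$ is immediate.
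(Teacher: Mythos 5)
Your overall strategy --- substitute $h_\ell=1_{\mathcal{D}_\ell}$ into \eqref{eq:paired-free-cumulant-moment}, use \eqref{eq:varphi-of-one-and-contraction} to see that $\phi_{\Pi(\mathsf{h}),\mathsf{h}}(\vec h)$ depends only on $\mathsf{h}_{\setminus\ell}$, and then kill the M\"obius sum fibre by fibre --- is a legitimate alternative to the paper's argument, and those first two steps are correct. The gap is in the key combinatorial step. The fibre $\{\tau_{c,b}\preceq\gamma_{c,b}\ :\ (\tau_{c,b})_{\setminus\ell}=\rho_{c,b}\}$ is \emph{not} an interval of $\mathrm{NC}(q_{c,b})$, so the cancellation $\sum_{\pi\le\rho}\mathsf{M}(\pi)=\delta_{\rho,1}$ you invoke does not apply. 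Concretely, take $\gamma=(123)$, $\ell=3$ and $\rho=\{\{1\},\{2\}\}$: the fibre is $\bigl\{\{\{1\},\{2\},\{3\}\},\ \{\{1,3\},\{2\}\},\ \{\{1\},\{2,3\}\}\bigr\}$, which has two incomparable maximal elements and omits $\{\{1,2\},\{3\}\}$ and $1_3$; its M\"obius sum is $2-1-1=0$, but not because it ranges over an interval. The vanishing of each fibre sum is true, but it is precisely the $D=1$, single-cycle instance of the statement being proved ($\kappa_m(a_1,\ldots,1,\ldots,a_m)=0$, read off coefficient by coefficient in front of each $\varphi_\rho$), so as written the argument is unsupported exactly where all the content lies. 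You would need to prove the fibre identity separately (e.g.\ by induction on the cycle length, or via the Kreweras complement), and you should also record that for a cycle in which $\ell$ is a fixed point the fibre sum equals $1$, not $0$ --- which is why connectedness with $q\ge2$ is needed to guarantee at least one vanishing factor, as you correctly observe.

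For comparison, the paper sidesteps this combinatorial identity by inducting on $q$ with the \emph{inverse} relation \eqref{eq:paired-free-moment-cumulant}: writing $\phi_{\mathsf{g}}(\vec h)=\varkappa_{\mathsf{g}}(\vec h)+\sum_{\mathsf{h}\preceq\mathsf{g},\,\mathsf{h}\neq\mathsf{g}}\varkappa_{\Pi(\mathsf{h}),\mathsf{h}}(\vec h)$, the left-hand side collapses to $\phi_{\Pi(\mathsf{g}_{\setminus\ell}),\mathsf{g}_{\setminus\ell}}(\vec h_{\setminus\ell})$ by \eqref{eq:varphi-of-one-and-contraction}, while the induction hypothesis forces $\ell$ to sit in a singleton component (contributing $\varkappa_{\mathbf{id}_1}(1_{\mathcal{D}_\ell})=1$) in every surviving term, so the sum reassembles into that same quantity and subtraction gives $\varkappa_{\mathsf{g}}(\vec h)=0$ (with the $q=2$ base case computed by hand). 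If you want to keep your direct route, the cleanest fix is to establish the one-cycle fibre identity by this same induction and then feed it into your product decomposition over the cycles $(c,b)$.
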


This of course applies as well to  $\kappa_\bsig(\vec x, \vec{x'})$ and  $\kappa^\mathrm{m}_\bsig(\vec m)$. The second result we will need relates free cumulants of paired tensors $\vec H$ and those obtained by grouping them to obtain bigger paired tensors $\vec P$ as in Sec.~\ref{sub:grouping}.
See the proof in Sec.~\ref{sub:proof-of-sec-cum-paired}. It generalizes Thm.~11.12 of \cite{NicaSpeicher}.

\begin{proposition}
\label{prop:cumul-of-paired-vs-non-paired}
Let $\mathsf{g}$ be a connected melonic graph of  $q$ paired tensors $H_1, \ldots H_q$, $H_\ell \in \mathcal{G}^\mathrm{m}_{D}[S_\ell]$, where $S_\ell\subset \{A,B\ldots\}$, or $H_\ell \in \mathcal{G}^\mathrm{p}_{D}[\Theta_\ell, \bar \Theta_\ell]$, where $\Theta_\ell\subset \{T_a,T_b\ldots\}$ and  $\bar \Theta_\ell\subset \{\bar T_a,\bar T_b\ldots\}$,  $E$ a set of edges of $\mathsf{g}$, the $P_\jmath$ obtained as connected components of $\Tr_{\mathsf{g}_{\setminus E}}(\vec H)$, 
the $\mathsf{h}_\jmath$ such that 
$\Tr_{\mathsf{h}_\jmath}(\vec H_\jmath) =  \Tr(P_\jmath)$ \eqref{eq:trace-of-one-paired-vs-trace-invariant}, and which constitute the connected components of $\mathsf{h}$, which is a melonic invariant of $\vec H$, and finally, $\mathsf{k}$ such that  $\Tr_\mathsf{g}(\vec H) = \Tr_{\mathsf{k}}(\vec P)$  \eqref{eq:trace-of-n-paired-vs-trace-invariant}. Then:
$$
\varkappa_{\mathsf{k}} (\vec p) = \sum_{\substack{{\mathsf{h}' \preceq \mathsf{g}}\\{\Pi(\mathsf{h'})\vee\Pi(\mathsf{h}) = 1_q}}} \varkappa_{\Pi(\mathsf{h}'),\mathsf{h}'} (\vec h) ,
$$
where  $\Pi(\mathsf{g})$,  $\Pi(\mathsf{h})$ are the partitions of $\{1, \ldots, q\}$ induced by the connected components of   $\mathsf{g}$, $\mathsf{h}$. 
\end{proposition}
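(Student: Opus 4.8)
The plan is to mirror the proof of Thm.~11.12 of \cite{NicaSpeicher}, working throughout in the product of non-crossing partition lattices attached to a graph of paired tensors. For a connected melonic graph of paired tensors $\mathsf{g}$ with cycles $\{\gamma_{c,b}\}$, I write $L_{\mathsf{g}}=\bigtimes_{c,b}S_{\mathrm{NC}}(\gamma_{c,b})\cong\bigtimes_{c,b}\mathrm{NC}(q_{c,b})$ for the lattice whose elements are the $\mathsf{h}'\preceq\mathsf{g}$; its top element is $\mathsf{g}$ itself and $\mathsf{M}(\mathsf{h}',\mathsf{g})$ is the associated M\"obius function. The inputs I would use are the two moment-cumulant relations \eqref{eq:paired-free-cumulant-moment}--\eqref{eq:paired-free-moment-cumulant} (valid on both $L_{\mathsf{g}}$ and the analogous $L_{\mathsf{k}}$), the grouping identity \eqref{eq:asymptotic-relation-grouping} together with its refinement \eqref{eq:trace-of-one-paired-vs-trace-invariant}, and the canonical factorization of intervals in $\mathrm{NC}$.

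The first step is to set up a lattice isomorphism $L_{\mathsf{k}}\xrightarrow{\sim}[\mathsf{h},\mathsf{g}]$, $\mathsf{l}\mapsto\mathsf{h}_{\mathsf{l}}$. Since $\mathsf{h}\preceq\mathsf{g}$, the interval $[\mathsf{h},\mathsf{g}]$ factors as $\bigtimes_{c,b}[\tau^{\mathsf{h}}_{c,b},\gamma_{c,b}]$, and each factor is, by the standard Kreweras interval structure of $\mathrm{NC}$ (Prop.~9.38 of \cite{NicaSpeicher}), isomorphic to a product of smaller $\mathrm{NC}$ lattices indexed by the blocks of $\tau^{\mathsf{h}}_{c,b}$; these products reassemble precisely into the cycles of $\mathsf{k}$, giving $[\mathsf{h},\mathsf{g}]\cong L_{\mathsf{k}}$. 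Under this map the bottom element goes to $\mathsf{h}$, the top $\mathsf{k}$ to $\mathsf{g}$, M\"obius functions are preserved so that $\mathsf{M}(\mathsf{l},\mathsf{k})=\mathsf{M}(\mathsf{h}_{\mathsf{l}},\mathsf{g})$, and---crucially---the grouping relation yields the componentwise identity $\phi_{\Pi(\mathsf{l}),\mathsf{l}}(\vec p)=\phi_{\Pi(\mathsf{h}_{\mathsf{l}}),\mathsf{h}_{\mathsf{l}}}(\vec h)$, because each connected component of $\mathsf{l}$ ungroups, via \eqref{eq:trace-of-one-paired-vs-trace-invariant}, to the corresponding component of $\mathsf{h}_{\mathsf{l}}$ and $\phi_{\Pi,\cdot}$ is multiplicative over components.

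The computation then proceeds exactly as in the classical case. Starting from $\varkappa_{\mathsf{k}}(\vec p)=\sum_{\mathsf{l}\preceq\mathsf{k}}\phi_{\Pi(\mathsf{l}),\mathsf{l}}(\vec p)\,\mathsf{M}(\mathsf{l},\mathsf{k})$, I substitute the grouping identity and then expand $\phi_{\Pi(\mathsf{h}_{\mathsf{l}}),\mathsf{h}_{\mathsf{l}}}(\vec h)=\sum_{\mathsf{h}'\preceq\mathsf{h}_{\mathsf{l}}}\varkappa_{\Pi(\mathsf{h}'),\mathsf{h}'}(\vec h)$ via \eqref{eq:paired-free-moment-cumulant}. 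Exchanging the two sums and transporting everything to $[\mathsf{h},\mathsf{g}]$ through the isomorphism, the inner sum becomes $\sum_{\mathsf{h}''\in[\mathsf{h}'\vee\mathsf{h},\,\mathsf{g}]}\mathsf{M}(\mathsf{h}'',\mathsf{g})=\delta_{\mathsf{h}'\vee\mathsf{h},\,\mathsf{g}}$, by the defining property of the M\"obius function summed over an interval. This leaves $\varkappa_{\mathsf{k}}(\vec p)=\sum_{\mathsf{h}'\preceq\mathsf{g},\ \mathsf{h}'\vee\mathsf{h}=\mathsf{g}}\varkappa_{\Pi(\mathsf{h}'),\mathsf{h}'}(\vec h)$, where the join is taken in $L_{\mathsf{g}}$.

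The last step, and the one I expect to be the main obstacle, is to identify the lattice-theoretic condition $\mathsf{h}'\vee\mathsf{h}=\mathsf{g}$ with the connectivity condition $\Pi(\mathsf{h}')\vee\Pi(\mathsf{h})=1_q$ of the statement. One implication is immediate: passing to induced partitions on $\{1,\dots,q\}$ commutes with joins and sends $\mathsf{g}$ to $1_q$ (as $\mathsf{g}$ is connected), so $\mathsf{h}'\vee\mathsf{h}=\mathsf{g}$ forces $\Pi(\mathsf{h}')\vee\Pi(\mathsf{h})=1_q$. The converse is where the melonic (tree) structure of $\mathsf{g}$ must be used: one must show that any element of $[\mathsf{h},\mathsf{g}]$ whose graph connects all the paired tensors is necessarily the top $\mathsf{g}$, equivalently that the only $\mathsf{l}\preceq\mathsf{k}$ with $\Pi(\mathsf{l})=1$ is $\mathsf{k}$ itself. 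I would establish this by the separating-cycle analysis underlying the melonic recursion of Sec.~\ref{sub:Melo}: in a connected melonic graph every non-trivial cyclic factor alternating canonical pairs and colored edges is separating, so replacing some $\gamma_{c,b}$ by a strictly finer $\tau_{c,b}\preceq\gamma_{c,b}$ opens such a cycle and cannot be compensated by the remaining factors, leaving the union graph disconnected; hence full connectivity forces $\tau_{c,b}=\gamma_{c,b}$ in every factor. With this equivalence in hand the displayed identity is exactly the claim, and I would close by recording that both sides are multiplicative in $\vec h$ (Sec.~\ref{sub:asympt-moments-of-paired}), so the formula extends to the non-connected $\mathsf{g}$ via the convention $\varkappa_{\Pi(\mathsf g),\mathsf g}=\prod_i\varkappa_{\mathsf g_i}$.
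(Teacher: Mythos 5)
Your proposal is correct and follows essentially the same route as the paper: transport the sum defining $\varkappa_{\mathsf{k}}(\vec p)$ to the interval $[\mathsf{h},\mathsf{g}]$ via the grouping identity, check that the M\"obius weights match, arrive at $\varkappa_{\mathsf{k}}(\vec p)=\sum_{\mathsf{h}\preceq\mathsf{h}'\preceq\mathsf{g}}\phi_{\Pi(\mathsf{h}'),\mathsf{h}'}(\vec h)\,\mathsf{M}(\mathsf{h}',\mathsf{g})$, and conclude by partial M\"obius inversion (the paper cites Prop.~10.11 of \cite{NicaSpeicher} where you re-derive it via the $\delta_{\mathsf{h}'\vee\mathsf{h},\mathsf{g}}$ computation, and it verifies $\mathsf{M}(\mathsf{h}',\mathsf{g})=\mathsf{M}(\mathsf{k}',\mathsf{k})$ by tracking cycles where you invoke the Kreweras interval factorization — both fine). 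The one step you mislabel is the ``immediate'' direction of the final equivalence: passing to induced partitions does \emph{not} commute with joins in $L_{\mathsf{g}}$ in general (the $\mathrm{NC}$-join can create connectivity absent from the union graph, as for $\{1,3\}$ and $\{2,4\}$ in $\mathrm{NC}(4)$); the implication $\mathsf{h}'\vee\mathsf{h}=\mathsf{g}\Rightarrow\Pi(\mathsf{h}')\vee\Pi(\mathsf{h})=1_q$ holds here because each restriction of $\mathsf{h}$ to a cycle of $\mathsf{g}$ is an \emph{interval} partition of that cycle (it comes from cutting the edge set $E$), for which the $\mathrm{NC}$-join and the partition-lattice join coincide — the same lemma underlying Thm.~11.12 of \cite{NicaSpeicher}. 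Your separating-cycle argument for the converse direction is the right use of the melonic structure and matches what the paper leaves implicit.
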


We may apply the proposition to a first order $\bsig\in S_n^D$ instead of the more general $\mathsf{g}$: in the mixed case, if $\eta\in S_n$ is the  canonical pairing  of $(\bsig, \mathrm{id})$, $\mathsf{k}$ is a connected melonic graph of  $q$ paired tensors $P_1, \ldots P_q$ generated by $A,B\ldots$,   $\vec M$ is such that $\Tr_\mathsf{k}(\vec P)=\Tr_\bsig(\vec M)$ \eqref{eq:paired-invariant-vs-usual-invariant}, $\btau_\ell$ with support $R_\ell\subset \{1,\ldots n\}$ is such that $\Tr(P_\ell) = \Tr_{\btau_\ell}(\vec M_{\lvert B_\ell})$,  and $\btau\in S_n^D$ has connected components $\{\btau_\ell\}$:
\be
\label{eq:kappa-paired-vs-not-for-h}
\varkappa_{\mathsf{k}} (\vec p) = \sum_{\substack{{\brho\eta^{-1}\preceq \bsig\eta^{-1}}\\{\Pi(\brho\eta^{-1})\vee\Pi(\btau\eta^{-1}) = 1_n}}} \kappa^\mathrm{m}_{\Pi(\brho),\brho} (\vec m). 
\ee

In the pure case,  the $P_1, \ldots P_q$ are generated by  $T_a, T_b \ldots$ and $\bar T_a, \bar T_b \ldots$,  one has $M_i=X_i\otimes X_{\bar i}'$ and  $\Tr_\mathsf{k}(\vec P)=\Tr_\bsig(\vec X, \vec X')$, $\btau_\ell$ is such that $\Tr(P_\ell) =\Tr_{\btau_\ell}\bigl(\overrightarrow{X\otimes X'}_{\lvert B_\ell}\bigr)$, and:
\be
\label{eq:kappa-paired-vs-not-for-h-pure}
\varkappa_{\mathsf{k}} (\vec p) = \sum_{\substack{{\brho\preceq \bsig}\\{\Pi(\brho)\vee\Pi(\btau) = 1_n}}} \kappa_{\Pi_\mathrm{p}(\brho),\brho} (\vec x, \vec{x'}).
\ee

 The proof of Prop.~\ref{prop:cumul-of-paired-vs-non-paired} in Sec.~\ref{sub:proof-of-sec-cum-paired} relies on the useful relation \eqref{eq:restricted-sum-group-freecum}, which here simplifies to:
\begin{align}
\label{eq:kappa-paired-vs-not-constrained}
\varkappa_{\mathsf{k}} (\vec p) &= \sum_{\substack{{\brho\in S_n^D}\\{\btau\eta^{-1}\preceq\brho\eta^{-1}\preceq \bsig\eta^{-1}}}} \varphi^\mathrm{m}_{\Pi(\brho),\brho} (\vec m)& \textrm{(mixed)}\\[+0.1cm]
\varkappa_{\mathsf{k}} (\vec p) &= \sum_{\substack{{\brho\in S_n^D}\\{\btau\preceq\brho\preceq \bsig}}} \varphi_{\Pi(\brho),\brho} (\vec x, \vec{x'})&\textrm{(pure)}
\end{align}

Consider now the  situation where, in the mixed case, $\mathsf{k}$ is a connected melonic graph of  $P_1, \ldots P_q$ with $P_\ell$ of the form \eqref{eq:paired-tensors-for-general-melo}, and $\vec M$ such that $\Tr_\mathsf{k}(\vec P)=\Tr_\bsig(\vec M)$, so that $\bsig\in S_n^D$ is connected and $(\bsig, \mathrm{id})$ is melonic. \emph{Then $\btau$ coincides with $(\eta, \eta, \ldots, \eta)$}, so that in this particular situation:
\be
\label{eq:kappa-paired-vs-not-for-p-particular-melo}
\varkappa_{\mathsf{k}} (\vec p)  = \kappa^\mathrm{m}_{\bsig} (\vec m). 
\ee

\subsection{Asymptotic tensor freeness at the level of moments}
\label{sub:different-formulations-of-asymptotic-tensor-freeness}

\subsubsection{Mixed case}

We consider some mixed random tensors $A,B\ldots$ with $D\ge 2$ inputs which scale as \eqref{eq:scaling-wishart-D1}, and we consider the corresponding asymptotic moments and first order free cumulants $\varphi_\bsig(\vec m)$, $\kappa_\bsig(\vec m)$, where $\vec m \in \{a,b,\ldots\}^n$, as well as the asymptotic moments $\phi$ and cumulants $\varkappa$ associated to connected melonic invariants of paired tensors. The following is proved in Sec.~\ref{sub:proof-equiv-tensor-freeness-cumulants-moments-mixed}.

\begin{theorem}[Asymptotic mixed tensor freeness]
\label{thm:equiv-tensor-freeness-cumulants-moments-mixed}
The following statements are equivalent:
\begin{enumerate}
\item For any $n\ge 2$, any $\bsig\in S_n^D$ connected and with $\omega(\bsig, \mathrm{id})=0$, and any $\vec m=(m_1, \ldots, m_n)\in \{a,b,\ldots\}^n$, $\kappa^\mathrm{m}_\bsig(\vec m)=0$ whenever there exists  $1\le i<j\le n$ such that $m_i\neq m_j$.  
\item The two following conditions are satisfied:
\begin{enumerate}[label=$-$]
\item For any $n\ge 2$, any $\bsig\in S_n^D$ connected such that $(\bsig, \mathrm{id})$ melonic with canonical pairing $\eta\neq \mathrm{id}$, and any $\vec m=(m_1, \ldots, m_n)\in \{a,b,\ldots\}^n$, $\kappa^\mathrm{m}_\bsig(\vec m)=0$ whenever there exists $i\in \{1,\ldots n\}$ such that $m_i\neq m_{\eta(i)}$. 
\item For any $q\ge 2$, any paired tensors $H_1, \ldots H_q$ such that $\forall\; 1\le \ell \le q$, $H_\ell\in \mathcal{G}^\mathrm{m}_{D}[Q_\ell]$ with $Q_\ell\in \{A,B\ldots\}$, and any connected melonic graph $\mathsf{g}$  of the paired tensors $\vec H$, $\varkappa_\mathsf{g}(\vec h)=0$ whenever there exists $1\le \ell<\ell'\le q$ such that $Q_\ell\neq Q_{\ell'}$. 
\end{enumerate}
\item The two following conditions are satisfied:
\begin{enumerate}[label=$-$]
\item For any $n\ge 2$, any $\bsig\in S_n^D$ connected and such that $(\bsig, \mathrm{id})$ is melonic with canonical pairing $\eta\neq \mathrm{id}$, and any $\vec m=(m_1, \ldots, m_n)\in \{a,b,\ldots\}^n$, $\varphi^\mathrm{m}_\bsig(\vec m)=0$ whenever there exists  $i\in \{1,\ldots n\}$ such that $m_i\neq m_{\eta(i)}$. 
\item For any $q\ge 2$, any $\mathcal{D}_1, \ldots \mathcal{D}_q\ge 1$, any paired tensors $H_1, \ldots H_q$ such that for $1\le \ell \le q$, $H_\ell\in \mathcal{G}^\mathrm{m}_{D, \mathcal{D}_\ell} [Q_\ell]$ with $Q_\ell\in \{A,B\ldots\}$, and any connected melonic graph $\mathsf{g}$  of $H_1, \ldots H_q$,  letting for each $\ell$:  
$h'_\ell= h_\ell - \phi(h_\ell) 1_{\mathcal{D}_\ell}$, then $\phi_\mathsf{g}(\vec h')=0$ whenever $(\mathsf{g}, \vec h)$ is almost alternating. 
\end{enumerate}
\end{enumerate}
\end{theorem}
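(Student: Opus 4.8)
<br>

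The plan is to prove the three-way equivalence by establishing a cycle of implications, following the classical strategy used in the matrix case (Lemma 5.18 and surrounding material in \cite{NicaSpeicher}), but adapted to the paired-tensor formalism developed in Sections~\ref{sub:paired-tensors}--\ref{sec:free-cumulants-paired}. The three statements live at increasingly ``moment-like'' levels: statement 1 is the raw vanishing of mixed first-order free cumulants $\kappa^\mathrm{m}_\bsig(\vec m)$; statement 2 decomposes this into a \emph{non-canonical} part (cumulants of a single tensor type but with a nontrivial canonical pairing $\eta\neq\mathrm{id}$) and a \emph{grouped} part expressed through paired-tensor cumulants $\varkappa_\mathsf{g}$; statement 3 translates the grouped part from cumulants into centered asymptotic moments $\phi_\mathsf{g}(\vec h')$ over almost-alternating invariants.

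First I would prove $1\Leftrightarrow 2$. The non-canonical condition is literally a restriction of statement~1 to those connected $\bsig$ with $(\bsig,\mathrm{id})$ melonic and $\eta\neq\mathrm{id}$, observing that $m_i\neq m_{\eta(i)}$ for some $i$ is exactly the specialization of ``$m_i\neq m_j$ for some $i,j$'' when the only surviving constraint is along the canonical pairing; here Lemma~\ref{lem:facto-mixed-on-pure} and the structure of first-order invariants in Section~\ref{sub:true-Wishart-tensor} control which index coincidences matter. For the paired-tensor condition, the key tool is \eqref{eq:kappa-paired-vs-not-for-p-particular-melo}, which identifies $\varkappa_\mathsf{k}(\vec p)=\kappa^\mathrm{m}_\bsig(\vec m)$ when the paired tensors $P_\ell$ have the grouped form \eqref{eq:paired-tensors-for-general-melo}, i.e.\ each $P_\ell$ is generated by a single $Q_\ell$. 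Thus $\varkappa_\mathsf{g}(\vec h)$ for $\vec H$ generated by tensors $Q_\ell$ not all equal corresponds precisely to a mixed cumulant $\kappa^\mathrm{m}_\bsig(\vec m)$ with $\vec m$ not constant, so statement~1 forces its vanishing, and conversely any non-constant $\vec m$ in statement~1 can be organized, via the grouping/ungrouping correspondence of Section~\ref{sub:grouping}, into either a pure-type-with-$\eta\neq\mathrm{id}$ contribution or an alternating paired configuration. I would make this organization explicit using Lemma~\ref{lem:melopaired-vs-first-order} and Prop.~\ref{prop:cumul-of-paired-vs-non-paired}.

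Next I would prove $2\Leftrightarrow 3$. This is the moment-cumulant translation and is where I expect to invoke Prop.~\ref{prop:cumul-of-1} (cumulants with an identity argument vanish) together with the free moment-cumulant formula \eqref{eq:paired-free-moment-cumulant} and its inverse \eqref{eq:paired-free-cumulant-moment}. The centering $h'_\ell=h_\ell-\phi(h_\ell)1_{\mathcal{D}_\ell}$ is designed, via the multilinearity established in Section~\ref{sub:centering-paired} and the relation \eqref{eq:varphi-of-one-and-contraction}, so that $\phi_\mathsf{g}(\vec h')$ expands through \eqref{eq:paired-free-moment-cumulant} into a sum of $\varkappa_{\Pi(\mathsf h),\mathsf h}(\vec h')$; the centered arguments kill all terms in which some block of $\mathsf{h}$ reduces to an identity, leaving only the fully connected contribution $\varkappa_\mathsf{g}(\vec h)$ up to lower terms. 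The standard induction on $q$ (the number of paired tensors) then shows that the vanishing of all $\varkappa_\mathsf{g}$ on alternating configurations is equivalent to the vanishing of all centered moments $\phi_\mathsf{g}(\vec h')$ on almost-alternating configurations, exactly as in the passage between cumulant-freeness and moment-freeness for matrices (Section~\ref{sub:matrix-freeness}). The non-canonical clauses in 2 and 3 are matched directly via the moment-cumulant inversion \eqref{eq:free-com-cum-gen-mixed} of Thm.~\ref{thm:limit-for-wishart-tensor-first-order}, which relates $\varphi^\mathrm{m}_\bsig$ and $\kappa^\mathrm{m}_{\Pi(\btau),\btau}$ along $\btau\eta^{-1}\preceq\bsig\eta^{-1}$, so that the $\eta$-pairing constraint is preserved under inversion.

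The main obstacle I anticipate is bookkeeping the precise correspondence between the almost-alternating condition on $(\mathsf g,\vec h)$ and the combinatorial constraint $\Pi(\mathsf h')\vee\Pi(\mathsf h)=1_q$ appearing in Prop.~\ref{prop:cumul-of-paired-vs-non-paired}: the subtlety is that ``at most one edge links two paired tensors generated by the same $Q_\ell$'' is exactly what guarantees that, after grouping along $E^{\neq}$, the resulting $\mathsf{k}^\neq$ is \emph{strictly} alternating and that the induction step peels off exactly one connected block at a time. Making this induction clean requires carefully tracking how the edge set $E^{\mathrm{int}}$ and the grouped tensors $P^\neq_\jmath$ interact with the lattice structure $\bigtimes_{c,b}\mathrm{NC}(q_{c,b})$, and ensuring that the Möbius inversion \eqref{eq:paired-free-cumulant-moment} respects the grouping. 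I would isolate this as a single combinatorial lemma, proven by induction on $q$, and treat the rest of the equivalence as formal manipulation of the already-established moment-cumulant relations.
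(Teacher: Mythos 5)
Your overall architecture coincides with the paper's: the equivalence $1\Leftrightarrow 2$ is handled by grouping tensors along the cycles alternating thick edges and canonical pairs and invoking \eqref{eq:kappa-paired-vs-not-for-p-particular-melo} in one direction, and \eqref{eq:kappa-paired-vs-not-for-h} together with the connectivity constraint $\Pi(\brho\eta^{-1})\vee\Pi(\btau\eta^{-1})=1_n$ in the other; the first clauses of points $2$ and $3$ are matched via the inversion \eqref{eq:free-com-cum-gen-mixed} and the separating-cycle argument; and the second clauses are matched by the moment-cumulant relations for paired tensors, Prop.~\ref{prop:cumul-of-1}, and an induction on $q$ using the grouping along $E^{\neq}$ and Prop.~\ref{prop:cumul-of-paired-vs-non-paired}. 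This is exactly the route of Sec.~\ref{sub:proof-equiv-tensor-freeness-cumulants-moments-mixed}.

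There is, however, one genuine gap. For the direction ``vanishing of alternating cumulants $\Rightarrow$ vanishing of centered moments on almost alternating $(\mathsf{g},\vec h)$'' you assert that the expansion of $\phi_\mathsf{g}(\vec h')$ via \eqref{eq:paired-free-moment-cumulant} leaves ``only the fully connected contribution $\varkappa_\mathsf{g}(\vec h)$ up to lower terms.'' That is not how the argument can go: one must show that \emph{every} term $\varkappa_{\Pi(\mathsf{h}),\mathsf{h}}(\vec h')$ with $\mathsf{h}\preceq\mathsf{g}$ vanishes individually, and centering alone does not achieve this. The missing ingredient is a structural fact about refinements: if $(\mathsf{g},\vec h)$ is almost alternating, then every $\mathsf{h}\preceq\mathsf{g}$ has either a connected component consisting of a single paired tensor (killed by $\varkappa_{\mathbf{id}_1}(h'_\ell)=\phi(h'_\ell)=0$) or an almost alternating connected component (killed, after expanding the centering by multilinearity, by Prop.~\ref{prop:cumul-of-1} together with the cumulant hypothesis). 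This is Lemma~\ref{lem:almost-alternating-subgraphs}, and it is the hardest combinatorial step of the whole proof: because the almost-alternating condition tolerates one same-generator edge, one has to track \emph{two} candidate connected components through the sequence of edge flips taking $\mathsf{g}$ to $\mathsf{h}$, so that the single bad edge can spoil at most one of them. The combinatorial lemma you propose to isolate (the interaction of $E^{\neq}$, $E^{\mathrm{int}}$ and the join condition $\Pi(\mathsf{h}')\vee\Pi(\mathsf{h})=1_q$) only serves the opposite direction of the induction on $q$; without the refinement lemma the implication from point $2$ to point $3$ does not close.
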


Notice the dissymmetry: the first statements of point 2 and 3 do not require the notion of paired tensors nor introducing $ \mathcal{G}^\mathrm{m}_{D}[Q_\ell]$. These notions are needed for the second statements of 2 and 3, because of the necessity  to center the $h_\ell$. For  point 3, for $A,B, C$ asymptotically free,  the example on the right of Fig.~\ref{fig:paired-invariant-ungrouping} with $H_1, \ldots H_4$ respectively   generated by $A$, $B$, $C$ and $C$  is almost alternating (Fig.~\ref{fig:paired-invariant-alternating}) and therefore vanishes when the $H_i$ are asymptotically centered.  

For points 2 and 3, one can replace $\mathcal{G}^\mathrm{m}_{D, \mathcal{D}_\ell} [Q_\ell]$ by $\mathcal{A}^\mathrm{m}_{D, \mathcal{D}_\ell} [Q_\ell; \un_D]$ etc, see Rk.~\ref{remark:replace-by-algebr} and Rk.~\ref{remark:replace-by-algebr-second-part}.

\subsubsection{Pure case}
\label{subsub:tens-freeness-pure}
We now consider some pure random tensors $T_a, \bar T_a$, $T_b, \bar T_b\ldots$ for $D\ge 3$, which scale as \eqref{eq:scaling-hypothesis-pure} and \eqref{eq:scaling-hypothesis-pure-multitensors}, and we consider the corresponding asymptotic moments and first order free cumulants $\varphi_\bsig(\vec x, \vec{x'})$, $\kappa_\bsig(\vec x, \vec{x'})$, where for each $i$, $x_i \in \{t_a, t_b  \ldots\}$ and $x_{\bar i}'\in \{ \bar t_a, \bar t_b \ldots\}$, as well as the $\phi$ and $\varkappa$.

\begin{theorem}[Asymptotic pure tensor freeness]
\label{thm:equiv-tensor-freeness-cumulants-moments-pure}
The following  statements are equivalent:
\begin{enumerate}
\item For any $n\ge 2$, any $\bsig\in S_n^D$ purely connected and melonic with canonical pairing the identity, and any $\vec x\in \{t_a,t_b\ldots\}^n$ and $\vec {x'}\in \{\bar t_a,\bar t_b\ldots\}^n$, $\kappa_\bsig(\vec x, \vec{x'})=0$ whenever  $x_i\neq x_j$ or $x_{\bar i}'\neq x_{\bar j}'$, or  $\overline{x_i}\neq x_{\bar j}'$ for some $i, j$.  
\item The two following conditions are satisfied:
\begin{enumerate}[label=$-$]
\item For any $n\ge 2$, any $\bsig\in S_n^D$ purely connected and melonic with canonical pairing the identity, and any $\vec x\in \{t_a,t_b\ldots\}^n$ and $\vec{x'}\in \{\bar t_a,\bar t_b\ldots\}^n$, $\kappa_\bsig(\vec x, \vec{x'})=0$ whenever there exists $i\in \{1,\ldots n\}$ such that $\overline{x_i}\neq x_{\bar i}'$. 
\item For any $q\ge 2$, any paired tensors $H_1, \ldots H_q$ such that $\forall\; 1\le \ell \le q$, $H_\ell\in \mathcal{G}^\mathrm{p}_{D}[X_\ell, \bar X_\ell]$ where $(X_\ell, \bar X_\ell) \in \{(T_a, \bar T_a),(T_b, \bar T_b)\ldots\}$, and any connected melonic graph $\mathsf{g}$  of the paired tensors $\vec H$, $\varkappa_\mathsf{g}(\vec h)=0$ whenever $\exists$ $1\le \ell<\ell'\le q$ such that $(X_\ell, \bar X_\ell)\neq(X_{\ell'}, \bar X_{\ell'})$. 
\end{enumerate}
\item The two following conditions are satisfied:
\begin{enumerate}[label=$-$]
\item For any $n\ge 2$, any $\bsig\in S_n^D$ purely connected and melonic with canonical pairing the identity, and any $\vec x\in \{t_a,t_b\ldots\}^n$ and $\vec{x'}\in \{\bar t_a,\bar t_b\ldots\}^n$, $\varphi_\bsig(\vec x, \vec{x'})=0$ whenever there exists $i\in \{1,\ldots n\}$ such that $\overline{x_i}\neq x_{\vec i}'$. 
\item For any $q\ge 2$, any $\mathcal{D}_1, \ldots \mathcal{D}_q\ge 1$, any $H_1, \ldots H_q$ such that $\forall$ $1\le \ell \le q$, $H_\ell\in\mathcal{G}^\mathrm{p}_{D, \mathcal{D}_\ell}   [X_\ell, \bar X_\ell]$ where $(X_\ell, \bar X_\ell) \in \{(T_a, \bar T_a),(T_b, \bar T_b)\ldots\}$, and any connected melonic graph $\mathsf{g}$  of $H_1, \ldots H_q$,  letting for each $\ell$:  
$h'_\ell= h_\ell - \phi(h_\ell) 1_{\mathcal{D}_\ell}$, then $\phi_\mathsf{g}(\vec h')=0$ whenever $(\mathsf{g}, \vec h)$ is almost alternating. 
\end{enumerate}
\end{enumerate}
\end{theorem}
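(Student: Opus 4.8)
The plan is to prove the cycle of equivalences $(1)\Leftrightarrow(2)\Leftrightarrow(3)$ by mirroring, in the paired-tensor language of Sections \ref{sub:paired-tensors}--\ref{sec:free-cumulants-paired}, the classical equivalence between freeness stated via vanishing mixed cumulants and freeness stated via vanishing alternating centered moments (as in \cite{NicaSpeicher}, Props.~11.4 and 11.15 and Thm.~11.16). The key technical inputs are already assembled: Prop.~\ref{prop:cumul-of-1} (a paired cumulant $\varkappa_{\mathsf{g}}$ vanishes if one argument is an identity $1_{\mathcal{D}_\ell}$), Prop.~\ref{prop:cumul-of-paired-vs-non-paired} together with \eqref{eq:kappa-paired-vs-not-for-h-pure} (grouping formula relating $\varkappa_{\mathsf{k}}$ of grouped paired tensors to a restricted sum of the genuine free cumulants $\kappa_{\Pi_\mathrm{p}(\brho),\brho}$), the moment-cumulant inversion \eqref{eq:paired-free-cumulant-moment}--\eqref{eq:paired-free-moment-cumulant} in the lattice $\bigtimes_{c,b}\mathrm{NC}(q_{c,b})$, and the multilinearity and centering machinery of Sec.~\ref{sub:asympt-moments-of-paired}--\ref{sub:centering-paired}.

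\textbf{Step $(1)\Leftrightarrow$ first bullets of (2),(3).} First I would observe that statement (1) decomposes into two logically independent vanishing conditions: the ``canonical'' one, where $\overline{x_i}=x_{\bar i}'$ for every canonical pair but the tensors still differ across pairs, and the ``non-canonical'' one, where $\overline{x_i}\neq x_{\bar i}'$ for some $i$. The first bullet of (2) is precisely the non-canonical condition at the level of cumulants, so that half is tautological once (1) is split. The equivalence between the first bullet of (2) (cumulants) and the first bullet of (3) (moments) is the genuine Gaussian-scaling statement already recorded in Thm.~\ref{thm:sum-and-mixed-on-pure}, eq.~\eqref{eq:non-canonical-vanish}, combined with the moment-cumulant inversion \eqref{eq:inversion-free-cum-melonic}: since $\varphi_{\Pi_\mathrm{p}(\bsig),\bsig}=\sum_{\btau\preceq\bsig}\kappa_{\Pi_\mathrm{p}(\btau),\btau}$ and each $\btau\preceq\bsig$ keeps the same canonical pairing (the identity), the presence of a non-canonical index $i$ propagates through every term of the triangular sum, so the two families of conditions are equivalent by induction on $n$.

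\textbf{Step: reduce the ``canonical'' part of (1) to the paired statements (second bullets).} This is the heart of the proof. Given asymptotic freeness as in (1) restricted to the canonical case, I would use the grouping formula \eqref{eq:kappa-paired-vs-not-for-h-pure}: for a connected melonic graph $\mathsf{g}$ of paired tensors $\vec H$ with each $H_\ell\in\mathcal{G}^\mathrm{p}_{D}[X_\ell,\bar X_\ell]$, after ungrouping to a first-order $\bsig\in S_n^D$ with $\Tr_\mathsf{g}(\vec H)=\Tr_\bsig(\vec X,\vec X')$, the paired cumulant $\varkappa_\mathsf{g}(\vec h)$ is a sum of $\kappa_{\Pi_\mathrm{p}(\brho),\brho}(\vec x,\vec{x'})$ over $\brho\preceq\bsig$ with a connectivity constraint $\Pi(\brho)\vee\Pi(\btau)=1_n$. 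If some two paired tensors $H_\ell,H_{\ell'}$ are generated by different $(X_\ell,\bar X_\ell)\neq(X_{\ell'},\bar X_{\ell'})$, the connectivity constraint forces every surviving $\brho$ to couple regular tensors $X_s$ of genuinely different types within a connected component, so each $\kappa_{\Pi_\mathrm{p}(\brho),\brho}$ vanishes by (1); hence $\varkappa_\mathsf{g}(\vec h)=0$, giving the second bullet of (2). Conversely, to recover (1) from (2), I would take $\bsig$ itself purely connected melonic (so that $\mathsf{g}$ has a single paired tensor group per type) and specialize the grouping identity so that $\varkappa_{\mathsf{k}}(\vec p)=\kappa_\bsig(\vec x,\vec{x'})$ up to the multiplicative corrections controlled by Prop.~\ref{prop:cumul-of-paired-vs-non-paired}; M\"obius inversion in the non-crossing lattice then extracts the single cumulant.

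\textbf{Step $(2)\Leftrightarrow(3)$ for the second bullets (cumulants $\leftrightarrow$ centered moments).} Here I would run the standard Nica--Speicher argument in the paired setting. Given the cumulant statement, write $\phi_\mathsf{g}(\vec h')$ for centered arguments $h'_\ell=h_\ell-\phi(h_\ell)1_{\mathcal{D}_\ell}$ via the moment-cumulant formula \eqref{eq:paired-free-moment-cumulant}, $\phi_\mathsf{g}(\vec h')=\sum_{\mathsf{h}\preceq\mathsf{g}}\varkappa_{\Pi(\mathsf{h}),\mathsf{h}}(\vec h')$. For an \emph{almost alternating} $(\mathsf{g},\vec h)$, I claim every non-crossing $\mathsf{h}\preceq\mathsf{g}$ produces at least one connected component $\mathsf{h}_j$ which is either a singleton (whose centered argument forces $\phi(h'_\ell)=0$ through the linearity relation \eqref{eq:varphi-of-one-and-contraction}, since $\phi(h'_\ell)=\phi(h_\ell)-\phi(h_\ell)\cdot 1=0$) or an alternating block of $\geq 2$ differently-generated tensors (whose cumulant vanishes by the second bullet of (2)); the combinatorial core is that in an almost-alternating configuration no non-crossing partition can simultaneously absorb all the ``type changes'' into blocks that are not themselves alternating, which is exactly the non-crossing ``no singleton can be avoided'' lemma underlying Nica--Speicher Thm.~11.16. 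The converse direction reconstructs the vanishing of $\varkappa_\mathsf{g}$ from vanishing centered almost-alternating moments by an induction on $q$, peeling off the top block of the non-crossing lattice and using Prop.~\ref{prop:cumul-of-1} to kill identity insertions.

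The main obstacle I expect is the purely combinatorial heart of $(2)\Leftrightarrow(3)$: verifying that ``$(\mathsf{g},\vec h)$ almost alternating'' is the exact condition under which \emph{every} term in the non-crossing expansion $\sum_{\mathsf{h}\preceq\mathsf{g}}\varkappa_{\Pi(\mathsf{h}),\mathsf{h}}$ contains a forced singleton or a forced differently-generated alternating block. In the matrix case this is the clean statement that a non-crossing partition compatible with an alternating word must have a singleton block; in the tensor case one must track this separately inside \emph{each} of the $D$ colored non-crossing lattices $\mathrm{NC}(q_{c,b})$ and show that the product structure $\mathsf{M}(\mathsf{h},\mathsf{g})=\prod_{c,b}\mathsf{M}(\tau_{c,b}\gamma_{c,b}^{-1})$ does not allow the different colors to conspire to avoid a singleton. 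I would isolate this as a separate lemma about non-crossing permutations on the cycles $\{\gamma_{c,b}\}$ and prove it by the usual ``interval/irreducible block'' decomposition of non-crossing partitions, handling the ``at most one same-type edge'' clause of almost-alternation as the analogue of the single allowed adjacency in the almost-alternating matrix definition of Sec.~\ref{sub:matrix-freeness}.
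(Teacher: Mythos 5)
Your overall architecture coincides with the paper's: the paper likewise splits statement 1 into the ``non\nobreakdash-canonical'' part (which is exactly the first bullets of 2 and 3, related to each other by propagating the condition $\overline{x_i}\neq x'_{\bar i}$ through every $\btau\preceq\bsig$ in the triangular moment--cumulant sums, since all such $\btau$ keep the identity as canonical pairing), passes between statement 1 and the second bullet of 2 via the grouping identity \eqref{eq:kappa-paired-vs-not-for-h-pure} together with the connectivity constraint $\Pi(\brho)\vee\Pi(\btau)=1_n$, and runs a Nica--Speicher-type induction on $q$ with Prop.~\ref{prop:cumul-of-1} for the equivalence of the second bullets of 2 and 3. (For the converse of the first equivalence the paper is more direct than you: it takes each $H_\ell$ to be the single regular tensor $X_\ell\otimes\bar X_\ell$, so that $\varkappa_{\mathsf g}(\vec h)=\kappa_{\bsig}(\vec x,\vec{x'})$ on the nose and no M\"obius inversion is needed. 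Also, \eqref{eq:non-canonical-vanish} is not the right citation for the first bullets: it asserts that a \emph{sum} of non-canonical moments vanishes as a consequence of the scaling hypotheses alone, whereas here each term must vanish; your actual propagation argument is the correct one.)

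The one genuine gap is exactly where you locate it: the combinatorial lemma asserting that for an almost alternating $(\mathsf g,\vec h)$, every $\mathsf h\preceq\mathsf g$ has a connected component that is either a single paired tensor or itself almost alternating (hence containing two differently generated tensors). Your proposed proof via an interval/irreducible-block decomposition carried out separately in each lattice $\mathrm{NC}(q_{c,b})$ does not obviously close, because the connected components of $\mathsf h$ are determined \emph{jointly} by all $D$ colors: a block that is an interval or a singleton for one color can be glued to arbitrary other paired tensors through the edges of the remaining colors, so a per-color singleton need not survive as a component of $\mathsf h$ --- this is precisely the ``colors conspiring'' scenario you worry about. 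The paper's Lemma~\ref{lem:almost-alternating-subgraphs} resolves it differently: it writes the passage from $\mathsf g$ to $\mathsf h$ as a sequence of elementary edge flips and maintains, by induction along the sequence, the invariant that there exist \emph{two} connected components each differing from $\mathsf g$ by at most one colored edge; the dichotomy then follows by inspecting where the single allowed same-generator edge of $\mathsf g$ can sit. Until a lemma of this kind is established, the equivalence of the second bullets of 2 and 3 remains open in your plan; the rest of the argument is sound.
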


The proof can be found in Sec.~\ref{sec:equiv-tensor-freeness-cumulants-moments-pure}. For points 2 and 3, one can replace $H_\ell\in \mathcal{G}^\mathrm{p}_{D, \mathcal{D}_\ell} [X_\ell, \bar X_\ell]$ by $H_\ell\in \mathcal{A}^\mathrm{m}_{D, \mathcal{D}_\ell} [X_\ell, \bar X_\ell; \un_D]$, see Rk.~\ref{remark:replace-by-algebr} and Rk.~\ref{remark:replace-by-algebr-second-part} and Thm.~\ref{thm:equiv-tensor-freeness-cumulants-moments-pure-algebr}.

For the second statement of point 3, for  $T_a, \bar T_a$, $T_b, \bar T_b$, $T_c, \bar T_c$ asymptotically free, the example on the left of Fig.~\ref{fig:paired-invariant-ungrouping}, with $H_1$, $H_2$, $H_3$, $H_4$   respectively generated by  $T_a, \bar T_a$, $T_b, \bar T_b$, $T_c, \bar T_c$ and $T_c, \bar T_c$, vanishes when the $H_i$ are asymptotically  centered.

\newpage
\section{Limiting spaces and tensor freeness}
\label{sec:limiting-spaces}

Here we provide an answer to the question: can we give a more precise meaning to the letters $a,b,h, t, \bar t$ used in the asymptotic moments and free cumulants $\varphi^\mathrm{m}_\bsig(\vec a)$, $\varphi_\bsig(t, \bar t)$, $\phi_\mathsf{g}(\vec h)$, etc, in the same was as for matrices they can be seen as non-commutative random variables (Sec.~\ref{sub:matrix-freeness}). 
Our approach is to introduce spaces whose elements are constructed algebraically from a set of generators in the same way as the paired tensors that are centered in the formulation of asymptotic tensor freeness (Sec.~\ref{sub:different-formulations-of-asymptotic-tensor-freeness}), so that asymptotic tensor freeness of the generators $A,B,C\ldots$ can be reformulated as tensor freeness of the algebraic spaces generated by $a,b,c\ldots$ The trace of these elements - which define the moments - correspond to melonic graphs in the generators, and therefore the random tensors considered converge in distribution to such elements.

\subsection{Tensorial probability spaces}
\label{sub:general-construction-algebra}

\paragraph{General idea.}Elements $x,y$ of a non-commutative algebra can be multiplied on the left or right: $xy$ and $yx$  are a priori different. In this sense they have one input (right) and one output (left), and if the input of $x$ is multiplied with the output of $y$, then $x$ has no more available input, and $xy$ still has an input (that of $y$, by associativity) and an output (that of $x$).

 The main objects in the present case are elements $x,y$, which can be multiplied  on the left and right in a number of ways. They have a number of inputs and outputs of different kinds, and $x$ and $y$ can be multiplied on the left or right if they have an input and output of the same kind. After multiplication, it may still be possible to multiply $x$ on the right in $xy$ by some $z$, by using other inputs of $x$, different from the one  used to multiply $x$ with $y$. 
 
 Starting from a set of generators, we will build by multiplication some elements with more inputs and outputs. The space is organized in sets of  elements with the same number of  inputs and outputs of the same kind. Some special elements generalize the role of the identity. A unital linear functional plays the same role as for non-commutative probability spaces.

 \paragraph{Elements.}For $D\ge1$ and $\vec k=(k_1, \ldots k_D)$ where $k_c\in \mathbb{N}$ and $\sum_c k_c >0$, we consider complex vector spaces $\mathcal{A}_{D, \vec k}$ (for the addition) of elements which have $k_c$ inputs of color $c$ on the right $k_c$ outputs of color $c$ on the left,  which are distinguishable and paired: if $x\in\mathcal{A}_{D, \vec k}$  and $k_c\ge 1$, we respectively label the different inputs and outputs of color $c$ of $x$ from $1$ to $k_c$. Inputs and outputs with the same label have the same shade. We define the graded space $\mathcal{A}_{D} = \bigcup_{\vec k}\mathcal{A}_{D, \vec k}$.

  \paragraph{Multiplication.}If $c\in\{1,\ldots D\}$, $x\in \mathcal{A}_{D, \vec k^{(1)}}$ and $y\in \mathcal{A}_{D, \vec k ^{(2)}}$, and $s^{(1)},s^{(2)}\in \mathbb{N}^\star$ with  $k_c^{(1)}\ge s^{(1)}$ and $k_c^{(2)}\ge s^{(2)}$, we may consider the product: 
 \be
 \label{eq:multiplication-xy}
 x\cdot_{(c\;;\;s^{(1)},s^{(2)})}y\  \in\  \mathcal{A}_{D, \vec k} \quad \textrm{ where }\ \left\{ \begin{array}{l}
    k_c=k_c^{(1)} + k_c^{(2)} - 1     \\
   k_{c'}=k_{c'}^{(1)} + k^{(2)}_{c'} \qquad  \textrm{ if }   c'\neq c
    \end{array}
    \right. ,
 \ee
obtained by multiplying the input of $x$ of color $c$ and shade  $s^{(1)}$,  with the output of $y$ of color $c$ and shade $s^{(2)}$. Inputs and outputs of different colors cannot be multiplied, and any inputs and outputs of the same color may be multiplied. If $k_c>0$ in \eqref{eq:multiplication-xy}, the result $x\cdot_{(c\;;\;s^{(1)},s^{(2)})}y$ itself has distinguishable inputs and outputs of color $c$,  which are labeled from 1 to $k_c$. The output $(c,s^{(1)})$ of $x$ and the input $(c,s^{(2)})$ of $y$ are paired in $x\cdot_{(c\;;\;s^{(1)},s^{(2)})}y$, and the input and outputs of $x,y$ that are not involved in the multiplication remain paired after the multiplication. A difference with non-commutative algebras is that if $x\cdot_{(c\;;\;s^{(1)},s^{(2)})}y$ is multiplied on the right by another element $z\in \mathcal{A}_{D, \vec k^{(3)}}$, then the input of $x\cdot_{(c\;;\;s^{(1)},s^{(2)})}y$ multiplied with the  output of $z$ may correspond (by associativity) either to an input of $y$ \emph{or to an input of} $x$, and similarly for multiplication from the  left.

  \paragraph{Graphical representation.}An expression representing the multiplication of a number $n$ of elements $x,y,z\ldots$ gets rapidly messy when $n$ grows whereas it is easily represented graphically. We use the same representation as for paired tensors: an element $x\in \mathcal{A}_{D, \vec k}$ is represented by a thick edge linking a white vertex (left) and a black vertex (right). We represent the inputs and outputs by some half-edges, respectively connected to the black vertex and the white vertices, and which are labeled by a pair $(c,s)$, where $1\le c\le D$  is the color and $1\le s\le k_c$ is the shade (if $k_c=0$ no half-edge is represented).  For each color, around the white and black vertices incident to a given thick edge, the shade is represented as growing from top to bottom (Fig.~\ref{fig:multiplication-pure}). Connecting the half-edge representing the input $(c,s^{(1)})$ of $x$ with the half-edge representing the output $(c,s^{(2)})$ of $y$ to form an edge labeled $(c\;;s^{(1)}, s^{(2)})$ represents the multiplication  $x\cdot_{(c\;;\;s^{(1)},s^{(2)})}y$, see Fig.~\ref{fig:multiplication-pure}.

\begin{figure}[!h]
\centering
\includegraphics[scale=1.3]{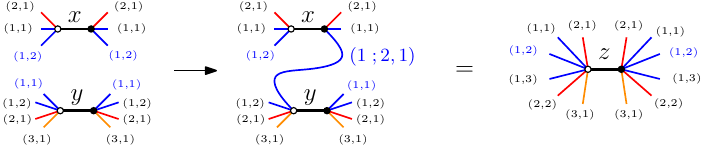}
\caption{Here $x\in \mathcal{A}_{3, \vec k^{(1)}}$ and $y\in \mathcal{A}_{3, \vec k ^{(2)}}$ with $\vec k^{(1)} =(2,1,0)$ and $\vec k^{(2)} =(2,1,1)$. Performing the multiplication  $x\cdot_{(1\;;\;1,2)}y$, we obtain a new element $z\in \mathcal{A}_{3, \vec k}$ with $\vec k=(3,2,1)$.  The labels of the edges and half-edges involved in the multiplication are in blue. For a given color, the shade grows from top to bottom around each vertex. The half-edges are relabeled on the right, but the vertical ordering of the half-edges is coherent for the three  graphs (so for instance the half-edges $(1,2)$ of $y$ are relabeled $(1,3)$ in $z$). }
\label{fig:multiplication-pure}
\end{figure}

The result of multiplying a number of elements is a \emph{tree}, see Fig.~\ref{fig:multiplication-tree-pure}. In such a tree $\mathcal{T}$ that represents an element $z$, consider an half-edge  $e$ corresponding to an output $(c,s_1)$ of an element $x_1$ involved in the multiplication.  It is the extremity of a path in $\mathcal{T}$ which follows the thick-edge $x_1$ attached to $e$, then either an half-edge $(c,s_1)$, or an edge labeled $(c\;;s_1,s_2)$, then a thick edge $x_2$, and so on, until finally an half-edge labeled $(c,s_r)$  is met for some $r\ge 1$. If now instead of seing these two half-edges as an output of $x_1$ and an input of $x_r$ respectively, we see them as an input and output of $z$, then they are paired and will carry the same shade.  In Fig.~\ref{fig:multiplication-tree-pure}, the labels are those of the half-edges of $z$ and not of the elements $x,y,w,u,v$ (the shades of these elements is encoded in the top to bottom ordering of half-edges around the thick edges).

\begin{figure}[!h]
\centering
\includegraphics[scale=1.3]{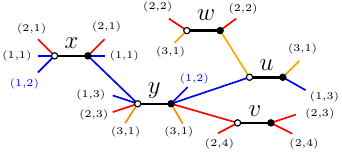}
\caption{Example of multiplication of some elements $x,y,w,u,v$.  We keep the convention of Fig.~\ref{fig:multiplication-pure}: for each thick edge, the shade of half-edges of the same color grows from top to bottom, allowing the identification of the shade locally for each thick edge. The labels represented are those of the element $z$ resulting from the multiplication (and not those of the elements $x,y,w,u,v$ themselves): the indices which are paired are the extremities of paths that alternate colored edges and paired inputs and outputs.   
}
\label{fig:multiplication-tree-pure}
\end{figure}

Associativity in this context is the fact that the result of multiplying several elements does not depend on the order in which the elements are multiplied, as long as it leads to the same tree (Fig.~\ref{fig:multiplication-tree-pure}). 

  \paragraph{Generated spaces.}Consider a set $\mathsf{s}=\{a,b,c\ldots\}$ of elements (called generators), which may be infinite (see the mixed case in Sec.~\ref{sub:generators-mixed}). We let $\mathcal{G}_{D, \vec k}[\mathsf{s}]$ be the set of  elements constructed by multiplication of a finite number of elements of $\mathsf{s}$ and which have $k_c$ inputs of color $c$.   We then let $\mathcal{A}_{D, \vec k}[\mathsf{s}]$ be the set of linear combinations of elements of $\mathcal{G}_{D, \vec k}[\mathsf{s}]$ with complex coefficients, and $\mathcal{A}_{D} [\mathsf{s}]= \bigcup_{\vec k}\mathcal{A}_{D, \vec k} [\mathsf{s}] $ (many $\vec k$ correspond to an empty set). The generators will differ in the pure (Sec.~\ref{sub:generators-pure}) and mixed (Sec.~\ref{sub:generators-mixed}) cases.

        \paragraph{The case $D=1$.}For $D=1$,  $\mathcal{A}_{1,1}$ (i.e.~the vector $\vec k$ only has a single component which is equal to one) is a non-commutative algebra. The graph representing the multiplication $xyz\ldots$   of a number of some elements $x,y,z\cdots$  is a line starting from an half-edge of color 1 representing the output of $x$, a white vertex, a thick edge, a black vertex representing the input of $x$, an edge of color 1 representing the multiplication of $x$ and $y$, a white vertex representing the output of $y$, and so on. Then $\mathcal{G}_{1, 1}[\mathsf{s}]$ corresponds to the set of words in the generators, and $\mathcal{A}_{1, 1} [\mathsf{s}]$ is the algebra generated by $\mathsf{s}$.

        \paragraph{Identities.} For any $D\ge1$ and any $\vec k$, one can define a particular element $1_{\vec k}\in \mathcal{A}_{D, \vec k}$ such that for any   $x\in \mathcal{A}_{D, \vec k'}$:
      \be
      \label{eq:mult-id-alg}
      1_{\vec k} \cdot_{(c_1\;;\;s_1,s'_1)}x = x \cdot_{(c_2\;;\;s_2,s'_2)}1_{\vec k}  
      \ee
      gives the same result regardless of the values of $c_1, s_1, s'_1, c_2, s_2,s'_2$,   whenever the products are well defined\footnote{\label{footnote:shadewise-tensor-product}The result differs from $x$ (and it is not  an element of  $\mathcal{A}_{D, \vec k'}$), but  it coincides with $x\otimes 1_{\vec k''}$ for the appropriate notion of shadewise tensor product, where $k_{\tilde c}'' = k_{\tilde c}' $ for $\tilde c \neq c$ and $k_{c}'' = k_{c}' -1$. Similarly,  one may then argue that for any $\vec k$, $1_{\vec k}= 1^{\otimes \sum_c k_c}$, where $1$ is the usual identity of an algebra.}.
 Any multiplication of a number of elements $1_{\vec k^{(1)}}, \ldots 1_{\vec k^{(n)}}$ gives the same elements $1_{\vec k}$ in the destination space. Choosing $ \mathsf{s}=\{1_{\vec k_0}\}$, then for any $\vec k$ which corresponds to a tree representing a multiplication of copies of $1_{\vec k_0}$ (Fig.~\ref{fig:multiplication-tree-pure}), $\mathcal{G}_{D, \vec k} [1_{\vec k_0}]=\{1_{\vec k}\}$ and  $\mathcal{A}_{D, \vec k} [1_D]=\mathbb{C}1_{\vec k}$,  and so on.

   \paragraph{Trace.}The trace $\phi$ is a linear functional from $\mathcal{A}_{D}$  to  $\mathbb{C}$, which satisfies the following property for $x\in \mathcal{A}_{D, \vec k^{(1)}}$ and $y\in \mathcal{A}_{D, \vec k ^{(2)}}$:
          \be
          \label{eq:trace-property}
          \phi\Bigl( x\cdot_{(c\;;\;s^{(1)},s^{(2)})}y\Bigr) =  \phi\Bigl( y\cdot_{(c\;;\;s^{(2)},s^{(1)})}x\Bigr) , 
          \ee
whenever the parameters $\vec k^{(1)}, \vec k^{(2)}, c, s^{(1)},s^{(2)}$ are such that one of the products is well defined. 

The trace acts as follows with respect to the elements $1_{\vec k}$, 
         \begin{align}
         \label{eq:identity-alg}
         \nonumber
          \phi\Bigl( 1_{\vec k} \cdot_{(c^L\;;\;l,l')}x_1 \cdot_{(c_1^L\;;\;l_1,l'_1)}x_2\cdots \cdot_{(c_p^L\;;\;l_p,l'_p)}x_p  \Bigr) & =  \phi\Bigl( x_1 \cdot_{(c_1^R\;;\;r_1,r'_1)}x_2 \cdots \cdot_{(c_p^R\;;\;l_p,l'_p)} x_p\cdot_{(c^R\;;\;r,r')}  1_{\vec k}  \Bigr)\\ 
          &=\phi(x_1)\phi(x_2)\cdots \phi(x_p), 
          \end{align}
whenever the values of the colors and shades involved in the multiplication are such that all elements $x_1,\ldots x_p$ are multiplied with $1_{\vec k}$. Graphically, it means that the thick edge $e$ corresponding to $1_{\vec k}$ is connected to the rest of the graph only through its inputs (left of \eqref{eq:identity-alg}) or only through its outputs (upper right in \eqref{eq:identity-alg}), and every $x_i$ is connected to  $e$ via an edge, see Fig.~\ref{fig:property-identity}.
   \begin{figure}[!h]
\centering
\includegraphics[scale=1.2]{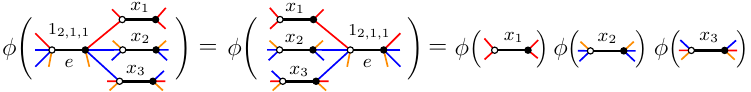}
\caption{Graphical representation of the property \eqref{eq:identity-alg}.
}
\label{fig:property-identity}
\end{figure}       

In particular, for any   $x\in \mathcal{A}_{D, \vec k'}$:
         \be
          \phi\Bigl( 1_{\vec k} \cdot_{(c_1\;;\;s_1,s'_1)}x\Bigr) =  \phi\Bigl( x \cdot_{(c_2\;;\;s_2,s'_2)}1_{\vec k}  \Bigr)=\phi(x) , 
          \ee
          regardless of the values of $c_1, s_1, s'_1, c_2, s_2,s'_2$,   whenever it is well defined, and for any $\vec k$: 
          \be
          \label{eq:phi-of-identity}
          \phi( 1_{\vec k}) =  1.
          \ee

    \paragraph{The trace and melonic graphs.}
For non-commutative probability spaces, \eqref{eq:trace-property} corresponds to the property $\phi(xy)=\phi(yx)$: 
\begin{enumerate}
\item For any word of elements, one can take the rightmost one and multiply it on the left instead of the right and this will give the same result: the trace is cyclic. 
\item Equivalently, one can see $\phi$ as adding the missing multiplication between the input and output of the word (the missing edge in the line graph representing the word), and $\phi$ applied to  any word obtained by removing an edge from this graph gives the same  value. 
\end{enumerate}

These two points of view generalize in the present case as follows: consider a tree $\mathcal{T}$ as in Fig.~\ref{fig:multiplication-tree-pure}, which represents an element $z$ obtained multiplying $q$ other elements $h_1, \ldots h_q$. 
\begin{enumerate}
\item  Take any element $x$ formed multiplying some of the $h_i$ and which in $z$ has a single multiplied  input  (in $\mathcal{T}$, the corresponding subtree is incident to a single edge $e$ labeled $(c;s_1,s_2)$, the rest being half-edges).
Consider the half-edge which is paired in $z$ with the output $(c;s_1)$ of $x$: it corresponds to the input $(c,s_3)$ of some element $y$ involved in the multiplication. The element $z'$ obtained by splitting $e$ open an reconnecting $x$ to the graph by joining its output $(c;s_1)$ with the input $(c,s_3)$ of  $y$ is such that $\phi(z)=\phi(z')$. See Fig.~\ref{fig:local-operations-same-phi}.

\begin{figure}[!h]
\centering
\includegraphics[scale=1.15]{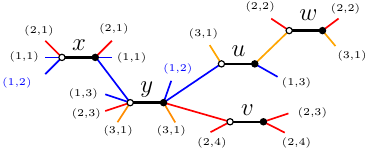}\hspace{0.5cm}\includegraphics[scale=1.15]{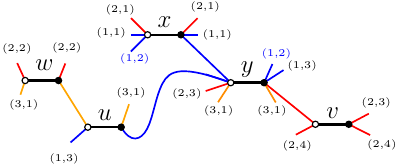}
\caption{   On the left, an element $z'$ obtained from $z$ of Fig.~\ref{fig:multiplication-tree-pure} performing the operation \eqref{eq:trace-property} for $w$ versus the rest, and on the right $z''$ obtained from $z$ through this operation, this time for the subtree containing $u,w$ versus the subtree containing $x, y, v$. From \eqref{eq:trace-property}, $\phi(z)=\phi(z')=\phi(z'')$.}
\label{fig:local-operations-same-phi}
\end{figure}

\item Equivalently, one can see $\phi$ as adding the missing multiplications  between the paired inputs and outputs of $z$. Graphically,  each pair of half-edges (the extremities of paths that alternate colored edges and paired inputs and outputs) is replaced by an edge. The result does not have any pending half-edges and \emph{it is a melonic graph $\mathsf{g}$ of paired tensors} (Sec.~\ref{sub:paired-tensors}), see Fig.~\ref{fig:melonic-same-phi}. 
 One may therefore express   $\phi(z)$ as a quantity of $h_1, \ldots h_q$, labeled by $\mathsf{g}$: 
\be
\label{eq:phi-an-melo-paired}
\phi(z) = \phi_\mathsf{g}(\vec h).
\ee 
 Reciprocally, starting from a melonic graph $\mathsf{g}$ of elements $h_1, \ldots h_q$ and removing a set of edges $E$ containing precisely one edge per cycle alternating colored edges and paired inputs and outputs, one obtains a tree $\mathsf{g}_{\setminus E}$ corresponding to an element $z_{\mathsf{g}_{\setminus E}}$ obtained by multiplication of $h_1, \ldots h_q$. 
 Doing this for all  sets $E$ satisfying this condition, one generates all the different trees obtained by doing the procedure of point 1 above in all possible ways. Applying $\phi$ to any of these elements $z_{\mathsf{g}_{\setminus E}}$ gives the same value $\phi(z_{\mathsf{g}_{\setminus E}}) = \phi_\mathsf{g}(\vec h)$. The element $z_{\mathsf{g}_{\setminus E}}$ plays a role analogous to ``the asymptotics of $\Tr_{\mathsf{g}_{\setminus E}}(\vec H)$'', see below.

\begin{figure}[!h]
\centering
\includegraphics[scale=1.3]{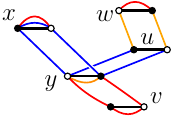}
\caption{  The three elements $z$ of Fig.~\ref{fig:multiplication-tree-pure} and $z'$, $z''$ of Fig.~\ref{fig:local-operations-same-phi} give the same melonic graph $\mathsf{g}$ of $x$, $y$, $w$, $u$ and $v$ when the missing edges are completed.  }
\label{fig:melonic-same-phi}
\end{figure}
\end{enumerate}

     \paragraph{Properties for melonic graphs.}Due to the properties of the trace $\phi$ and definition of multiplication between elements, the properties found asymptotically for the moments of paired tensors in      Sec.~\ref{sub:asympt-moments-of-paired}, Sec.~\ref{sub:centering-paired} and Sec.~\ref{sec:free-cumulants-paired} are satisfied here. The quantities $\phi(z) = \phi_\mathsf{g}(\vec h)$ of \eqref{eq:phi-an-melo-paired} are linear in each $h_i$ because $z$ is, and $\phi$ is linear. The property below \eqref{eq:mult-id-alg} should be compared with Lemma~\ref{lem:preimage-of-identity}. \eqref{eq:varphi-of-one} should then be compared here with \eqref{eq:phi-of-identity}. 
\eqref{eq:varphi-of-one-and-contraction} is satisfied here due to \eqref{eq:trace-property} and \eqref{eq:identity-alg}. 
\eqref{eq:asymptotic-relation-grouping} is also satisfied here due to the definition of multiplication of elements.  Free cumulants $\varkappa_\mathsf{g}(\vec h)$ associated to the quantities $\phi_\mathsf{g}(\vec h)$ can be defined through the formula \eqref{eq:paired-free-cumulant-moment},  and the inverse formula \eqref{eq:paired-free-moment-cumulant} holds. 
Prop.~\ref{prop:cumul-of-1} and Prop.~\ref{prop:cumul-of-paired-vs-non-paired} are also satisfied for the quantities of this section.

\paragraph{Tensorial probability space.}The generalization of a non-commutative probability space - called tensorial probability space - is then defined as the data $(\mathcal{A}_{D}, \phi)$ of a space and trace as defined above, where if $\mathcal{A}_{D}= \bigcup_{\vec k}\mathcal{A}_{D, \vec k} $, each non-empty $\mathcal{A}_{D, \vec k}$ is required  to contain $1_{\vec k}$.  Elements of $\mathcal{A}_{D}$ generalize non-commutative random variables, and the moments of $x\in \mathcal{A}_{D}$  are the $\phi_\mathsf{g}(x)$ for all melonic\footnote{In the sense of melonic graphs of paired tensors, Sec.~\ref{sub:paired-tensors}. } graphs $\mathsf{g}$ in copies of $x$. From \eqref{eq:phi-an-melo-paired} they correspond to the $\phi(z)$ for $z\in \mathcal{G}_{D} [x]$: $\phi(z) = \phi_\mathsf{g}(x)$, but  there are in general different elements $z\in\mathcal{G}_{D} [x]$ which correspond to the same moment $\phi_\mathsf{g}(x)$. Similarly, the joint moments of a set of random variables $\mathsf{s}\subset \mathcal{A}_{D}$ are given by melonic graphs in copies of the random variables in $\mathsf{s}$, namely the $\phi_\mathsf{g}(\vec h)$  for all $q\ge 1$, all $\vec h \in \mathsf{s}^q$, and all melonic graphs $\mathsf{g}$ of $\vec h$ (non-labeled). They correspond to the $\phi(z)$ for $z\in \mathcal{G}_{D} [\mathsf{s}]$, again with repetition\footnote{For non-commutative probability spaces, this repetition occurs when considering more than one random variable: different words in the random variables $h_1h_2h_3\ldots h_r$ or $h_rh_1h_2\ldots h_{r-1}$ or  $h_{r-1}h_rh_1\ldots h_{r-2}$ etc  correspond with this convention to the same moment. This convention is different from that of \cite{NicaSpeicher}, where joint moments are defined as the $\phi(z)$ for words $z$ in the random variables.}.

\ 

As for non-commutative random variables, we define convergence in distribution as follows: let $(\mathcal{A}^{(N)}_{D}, \phi^{(N)})$ for $N\in \mathbb{N}$ and $(\mathcal{A}_{D}, \phi)$ be as introduced above.
Consider $x\in \mathcal{A}_{D}$ and for each $N$, $x_N\in \mathcal{A}^{(N)}_{D}$. Then $x_N$ is said to converge in distribution to $x$ (written  $x_N\substack{{\mathrm{dist}}\\{\longrightarrow}\\{\ }} x$)
if all moments converge: 
\be
\lim_{N\rightarrow \infty} \phi_\mathsf{g}(x_N) = \phi_\mathsf{g}(x), 
\ee
for any melonic graph $\mathsf{g}$ of any number of copies of $x$. In the same way, $x_N^{(1)}, \ldots x_N^{(p)}\in \mathcal{A}^{(N)}_{D}$ converge in distribution to  $x^{(1)}, \ldots x^{(p)}\in \mathcal{A}_{D}$ if all the joint moments of the former converge towards the joint moments of the latter, that is:
\be
\lim_{N\rightarrow \infty} \phi_\mathsf{g}(\vec h_N) = \phi_\mathsf{g}(\vec h), 
\ee
for all $q\ge 1$,  all $\vec h_N\in \mathsf{s}_N^q$ where  $\mathsf{s}_N=\{x_N^{(1)}, \ldots x_N^{(p)}\}$, and all melonic graph $\mathsf{g}$ of $\vec h_N$. Furthermore, in the equation above, $\vec h\in \mathsf{s}^q$ where $\mathsf{s}=\{x^{(1)}, \ldots x^{(p)}\}$ is such that if $\vec h_N=(x_N^{(i_1)}, \ldots, x_N^{(i_q)})$, where $i_1, \ldots i_q\in \{1,\ldots p\}$, then $\vec h=(x^{(i_1)}, \ldots, x^{(i_q)})$.

\subsection{Generators in the pure case}
\label{sub:generators-pure}

We now specify the generators in the pure case. We assume the existence of some sets of elements $\theta=\{t_a, t_b  \ldots\}$ and $\bar \theta=\{\bar t_a, \bar t_b \ldots\}$. Elements of $\theta$ (resp.~$\bar \theta$) can only be multiplied on the left (resp.~right): they have $D$ distinguishable outputs  (resp.~inputs) labeled by their color $1,\ldots D$. We define the set of  tensor products $\mathsf{s}[\theta,\bar \theta]=\{x \otimes x'\}_{x\in \theta, x' \in \bar \theta}$, whose elements  have a pair of inputs and outputs of color $c$ for each $1\le c\le D$. We represent graphically an element $x \otimes x'$ of $\mathsf{s}[\theta,\bar \theta]$ as a thick edge with $x$ labeling the white vertex and $x'$ labeling the black vertex. 

We consider the set  $\mathcal{G}^\mathrm{p}_{D, \vec k}[\theta, \bar \theta]=\mathcal{G}_{D, \vec k}\bigl[\mathsf{s}[\theta,\bar \theta]\bigr]$ generated using this set. Its elements correspond to trees as in Fig.~\ref{fig:multiplication-tree-generators-pure}, but for which the black and white vertices have exactly one incident edge or half-edge of each color $1\le c \le D$ (so the shade can be omitted): see Fig.~\ref{fig:multiplication-tree-generators-pure}. We also let $\mathcal{A}^\mathrm{p}_{D, \vec k}[\theta,\bar \theta]=\mathcal{A}_{D, \vec k}\bigl[\mathsf{s}[\theta,\bar \theta]\bigr]$ and $\mathcal{A}^\mathrm{p}_{D} [\theta,\bar \theta]=\mathcal{A}_{D} \bigl[\mathsf{s}[\theta,\bar \theta]\bigr]$.

\begin{figure}[!h]
\centering
\includegraphics[scale=1.3]{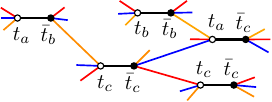}
\caption{Example of element of $\mathcal{G}^\mathrm{p}_{3, \vec k}[\theta, \bar \theta]$, with  $\vec k=(4,4,3)$,  $\theta=\{t_a, t_b,t_c\}$, and $\bar \theta=\{\bar t_a, \bar t_b,\bar t_c\}$.  }
\label{fig:multiplication-tree-generators-pure}
\end{figure}
   
An element of $\mathcal{A}^\mathrm{p}_{D, \vec k}[\theta,\bar \theta]$ built from $n$ generators has $k_c\ge 1$ for $1\le c \le D$, and a total of $\mathcal{D} =  \sum_{c=1}^D k_c = n(D-1)+1$ inputs and the same number of outputs (see \eqref{eq:mathcalD-of-generated}). All the elements in  $\mathcal{G}^\mathrm{p}_{D, \vec k}[\theta, \bar \theta]$ and in $\mathcal{A}^\mathrm{p}_{D, \vec k}[\theta, \bar \theta]$  for $\vec k$ satisfying ${\sum k_c =\mathcal{D} }$ with $\mathcal{D}$ fixed  involve the same number $n=(\mathcal{D}-1)/(D-1)$ of generators. In particular, letting $ \vec h =(1,1,1,\ldots 1)$, one has $\mathcal{G}^\mathrm{p}_{D, \vec h}[\theta, \bar \theta] = \mathsf{s}[\theta, \bar \theta]$.

      \paragraph{Generators of identities.} Let $1_D$ be the particular element $1_{1,\ldots, 1}$ satisfying the properties listed in the previous section (so $1_D$ has a single input and output of color $c$ for each $1\le c \le D)$. Choosing as only generator $ \mathsf{s}=\{1_D\}$, then for any $\vec k$ which correspond to a tree representing a multiplication of pure generators (Fig.~\ref{fig:multiplication-tree-generators-pure}), one has $\mathcal{G}_{D, \vec k} [1_D]=\{1_{\vec k}\}$, $\mathcal{A}_{D, \vec k} [1_D]=\mathbb{C}1_{\vec k}$,  etc. 
      
     Now considering the generating set $\mathsf{s}=\mathsf{s}[\theta, \bar \theta]\cup \{1_D\}$, each $\mathcal{G}^\mathrm{p}_{D, \vec k}[\theta, \bar \theta]$ is in particular supplemented with the element $1_{\vec k}$ (but also all elements generated by some copies of $1_D$ and some elements of $ \mathsf{s}[\theta, \bar \theta]$).  We let  $\mathcal{G}^\mathrm{p}_{D, \vec k}[\theta, \bar \theta;1_D]=\mathcal{G}_{D, \vec k}[\mathsf{s}]$ and similarly for  $\mathcal{A}^\mathrm{p}_{D, \vec k}[\theta,\bar \theta;1_D]$ and $\mathcal{A}^\mathrm{p}_{D} [\theta,\bar \theta;1_D]$.

        \begin{remark}
        \label{rk:color-0-pure}In this construction the $t$ and $\bar t$ are not the generators themselves, as they do not belong to the generated space: they can be seen as ``pre-generators''. One could instead add a color $0$ which would play a special role, and consider directly the space generated by multiplication of the $t$ and $\bar t$ themselves,  but then the trace would only be defined on the subspace for which each cycle alternating colored edges and thick edges has the same number of black and white vertices. The advantage of our approach is that one can treat the identity on the same level as the generators, and the constructions for the mixed and pure cases only differ by the choice of generators. 
   \end{remark}

   \paragraph{The trace and melonic graphs.}With this choice of generating set, the points 1 and 2 of the homonymous paragraph of Sec.~\ref{sub:general-construction-algebra} are modified as follows:
   \begin{enumerate}
\item The shade is now trivial and one can simply follow the paths which alternate thick edges and edges of color $c$.      
\item Here, replacing the paired half-edges by full edges, one obtains a melonic graph $\bsig$ in the usual sense\footnote{That is, a purely connected melonic graph as in Sec.~\ref{sub:Melo} whose canonical pairing is the identity, and not a more general melonic graph of paired tensors as in Sec.~\ref{sub:paired-tensors}. }, whose canonical pairing is given by the thick edges, and one may express  $\phi(z)$ in terms of the $\vec x$, $\vec {x'}$ which compose the generators $x_1\otimes x_1', \ldots, x_n\otimes x_n'$ multiplied to form $z$, as:
\be
\phi(z)  =\varphi_\bsig(\vec x, \vec {x'}).
\ee
 Reciprocally, starting from a melonic graph  $\bsig$  with canonical pairing given by the thick edges, and removing a set of edges $E$ containing precisely one edge per cycle alternating colored edges and canonical pairs, one obtains a tree $\bsig_{\setminus E}$ of the kind of Fig.~\ref{fig:multiplication-tree-generators-pure}. 
\end{enumerate}

\paragraph{Pure random tensors and convergence.}Consider the trace $\phi^{(N)} = \frac 1 N \mathbb{E}[\Tr(\cdot)]$, where $\Tr=\Tr_{\mathbf{id}_1}$ has been defined in \eqref{eq:trace-of-paired-tensor}. A pure tensor $(T,\bar T)$, together with the rescaled identity $\un_D=\frac {\un^{\otimes D}} {N^{D-1}}$, generates a tensorial probability space: 
\be
\Bigl(\mathcal{A}^\mathrm{p}_{D} \bigl[T,\bar T; \un_D\bigr], \phi^{(N)}\Bigr).
\ee 
The moments of the generator $T\otimes \bar T$ correspond to the $\phi^{(N)}(z)$ for $z\in\mathcal{G}^\mathrm{p}_{D} [T,\bar T]$, that is, the  $\frac 1 N  \Phi_\bsig(T,\bar T) = \frac 1 N \mathbb{E}[ \Tr_\bsig(T, \bar T)]$ for $\bsig$ purely connected and melonic, and the family of moments converges\footnote{Both for finite $N$ and at the limit, only one representative per equivalence classe under $\sim_\mathsf{p}$ must be considered in the family of moments. }: 
\be
\lim_{N\rightarrow \infty} \frac 1 N \mathbb{E}\bigl[ \Tr_\bsig(T, \bar T) \bigr]= \varphi_\bsig(t, \bar t), 
\ee
for all $n\ge 1$ and all $\bsig\in S_n^D$ purely connected and melonic. The $\varphi_\bsig(t, \bar t)$ can be seen as the moments of the generator $t\otimes \bar t$ of a tensorial probability space $(\mathcal{A}^\mathrm{p}_{D} [t,\bar t; 1_D], \phi)$: for $z\in\mathcal{G}^\mathrm{p}_{D} [t,\bar t]$, we set $\phi(z)=\varphi_\bsig(t, \bar t)$ if joining the paired half-edges of the tree corresponding to $z$, one obtains the graph $\bsig$. We conclude that $(T, \bar T)$ converges in distribution to $(t, \bar t)$ (in the sense that $T\otimes \bar T$ converges in distribution to $t\otimes \bar t$). This generalizes for a collection of pure random tensors and their joint convergence to a collection of pregenerators of a tensorial probability space. As an example, a complex pure Gaussian tensor converges in distribution in this sense, to a $(t, \bar t)$ with moments $\varphi_\bsig(t, \bar t)=1$  (Thm.~\ref{thm:melonic-gaussian}).

\paragraph{Tensor freeness in the pure case.}As specified in Sec.~\ref{sub:general-construction-algebra}, given some sets $\theta= \{t_a, t_b  \ldots\}$ and $\bar \theta=\{ \bar t_a, \bar t_b \ldots\}$ and the tensorial probability space $(\mathcal{A}^\mathrm{p}_{D}[\theta,\bar \theta;1_D], \phi)$ that they generate together with $1_D$, all the quantities $\varphi_\bsig$, $\phi$, $\phi_\mathsf{g}$,  $\kappa_\bsig$, $\varkappa_\mathsf{g}$ obtained considering tensor asymptotics of pure tensors $T_a, \bar T_a$, $T_b, \bar T_b\ldots$ are now understood in terms of traces of elements of $(\mathcal{A}^\mathrm{p}_{D}[\theta,\bar \theta;1_D], \phi)$, and they satisfy the same properties as those derived in Sec.~\ref{sub:asympt-moments-of-paired}, Sec.~\ref{sub:centering-paired} and Sec.~\ref{sec:free-cumulants-paired}.  As a consequence, Thm.~\ref{thm:equiv-tensor-freeness-cumulants-moments-pure} can be reformulated as an equivalence of conditions regarding the quantities $\varphi_\bsig$, $\phi$, $\phi_\mathsf{g}$,  $\kappa_\bsig$, $\varkappa_\mathsf{g}$ defined  for elements of $(\mathcal{A}^\mathrm{p}_{D}[\theta,\bar \theta;1_D], \phi)$. Asymptotic tensor freeness of pure tensors $T_a, \bar T_a$, $T_b, \bar T_b\ldots$ can then be understood as tensor freeness of the limiting random variables (or the subspaces they generate). 

 We do not state again the equivalence for the case where  there exists $1\le i\le n$ such that $\overline{x_i}\neq x_{\bar i}'$, as for this case things remain unchanged.

\begin{theorem}[Pure tensor freeness]
\label{thm:equiv-tensor-freeness-cumulants-moments-pure-algebr}
The following  statements are equivalent:
\begin{enumerate}
\item For any $n\ge 2$, any $\bsig\in S_n^D$ purely connected and melonic with canonical pairing the identity, and any $\vec x\in \{t_a,t_b\ldots\}^n$ and $\vec {x'}\in \{\bar t_a,\bar t_b\ldots\}^n$, $\kappa_\bsig(\vec x, \vec{x'})=0$ whenever  $x_i\neq x_j$ or $x_{\bar i}'\neq x_{\bar j}'$, or  $\overline{x_i}\neq x_{\bar j}'$ for some $i\neq j$.  
\item  For any $q\ge 2$, any $h_1, \ldots h_q$ such that $\forall\; 1\le \ell \le q$, $h_\ell\in \mathcal{A}^\mathrm{p}_{D}   [x_\ell, \bar x_\ell; 1_D]$ where $(x_\ell, \bar x_\ell) \in \{(t_a, \bar t_a),(t_b, \bar t_b)\ldots\}$, and any connected melonic graph $\mathsf{g}$  of $\vec h$, $\varkappa_\mathsf{g}(\vec h)=0$ whenever there exists $1\le \ell<\ell'\le q$ such that $(x_\ell, \bar x_\ell)\neq(x_{\ell'}, \bar x_{\ell'})$. 
\item 
For any $q\ge 2$, any $h_1, \ldots h_q$ such that for $1\le \ell \le q$, $h_\ell\in \mathcal{A}^\mathrm{p}_{D}   [x_\ell, \bar x_\ell; 1_D]$ where $(x_\ell, \bar x_\ell) \in \{(t_a, \bar t_a),(t_b, \bar t_b)\ldots\}$, and any connected melonic graph $\mathsf{g}$  of $h_1, \ldots h_q$,  $\phi_\mathsf{g}(\vec h)=0$ whenever $(\mathsf{g}, \vec h)$ is almost alternating \emph{and} for all $\ell$, $\phi(h_\ell)=0$. 
\end{enumerate}
\end{theorem}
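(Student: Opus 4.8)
The plan is to obtain this statement as the transcription, into the abstract tensorial probability space $(\mathcal{A}^\mathrm{p}_{D}[\theta,\bar\theta;1_D],\phi)$, of the random–tensor equivalence Thm.~\ref{thm:equiv-tensor-freeness-cumulants-moments-pure}. As emphasised at the end of Sec.~\ref{sub:general-construction-algebra}, the abstract quantities $\phi_\mathsf{g}(\vec h)$ and $\varkappa_\mathsf{g}(\vec h)$ obey exactly the same formal identities as their asymptotic counterparts for random paired tensors: multilinearity, the grouping relation \eqref{eq:asymptotic-relation-grouping}, the contraction identities \eqref{eq:varphi-of-one} and \eqref{eq:varphi-of-one-and-contraction}, the moment–cumulant inversion \eqref{eq:paired-free-cumulant-moment}–\eqref{eq:paired-free-moment-cumulant}, and Propositions~\ref{prop:cumul-of-1} and \ref{prop:cumul-of-paired-vs-non-paired}. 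First I would record the two bookkeeping reductions behind the passage from the $\mathcal{G}$-formulation of Thm.~\ref{thm:equiv-tensor-freeness-cumulants-moments-pure} to the $\mathcal{A}$-formulation here (Rk.~\ref{remark:replace-by-algebr}, Rk.~\ref{remark:replace-by-algebr-second-part}): since every element of $\mathcal{A}^\mathrm{p}_{D}[x_\ell,\bar x_\ell;1_D]$ is a complex combination of elements of $\mathcal{G}^\mathrm{p}_{D}[x_\ell,\bar x_\ell]$ together with the $1_{\vec k}$, and both $\phi_\mathsf{g}$ and $\varkappa_\mathsf{g}$ are multilinear, the statements on $\mathcal{A}$ follow from those on $\mathcal{G}$ by linearity; and since $1_{\mathcal{D}_\ell}\in\mathcal{A}^\mathrm{p}_{D}[x_\ell,\bar x_\ell;1_D]$ with $\phi(1_{\mathcal{D}_\ell})=1$ by \eqref{eq:phi-of-identity}, the hypothesis ``$\phi(h_\ell)=0$'' in point~3 is interchangeable with the centred formulation $h'_\ell=h_\ell-\phi(h_\ell)1_{\mathcal{D}_\ell}$ used in Thm.~\ref{thm:equiv-tensor-freeness-cumulants-moments-pure}. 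As there, the ``only pure'' vanishing conditions involving $\overline{x_i}\neq x_{\bar i}'$ are verbatim the same and need not be re-examined, so I restrict throughout to $x_{\bar i}'=\overline{x_i}$, i.e.\ to genuine pairs $x_i\otimes\overline{x_i}$.

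Next I would treat $1\Leftrightarrow 2$ through the grouping identity \eqref{eq:kappa-paired-vs-not-for-h-pure},
\[
\varkappa_{\mathsf{k}}(\vec p)=\sum_{\substack{\brho\preceq\bsig\\ \Pi(\brho)\vee\Pi(\btau)=1_n}}\kappa_{\Pi_\mathrm{p}(\brho),\brho}(\vec x,\vec{x'}),
\]
where the $P_\ell$ are the (grouped) algebra elements of point~2, $\mathsf{k}$ their connected melonic graph, $\bsig$ its purely connected melonic ungrouping, and $\btau$ the invariant whose blocks $B_\ell$ carry the internal structure of the $P_\ell$. For $1\Rightarrow 2$: when $(\mathsf{k},\vec p)$ is alternating, the constraint $\Pi(\brho)\vee\Pi(\btau)=1_n$ with $q\ge 2$ forces every $\brho$ in the sum to possess a cycle straddling two blocks $B_\ell,B_{\ell'}$, hence crossing an inter-tensor edge of $\mathsf{k}$, which by alternation joins different generators; each $\kappa_{\Pi_\mathrm{p}(\brho),\brho}$ is then a mixed cumulant and vanishes by point~1, so $\varkappa_{\mathsf{k}}(\vec p)=0$. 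For $2\Rightarrow 1$: taking the $P_\ell$ to be single generators and grouping maximal runs of equal-generator adjacent factors yields an alternating $(\mathsf{k},\vec p)$; the same identity with point~2 gives a relation whose leading term ($\brho=\bsig$) is $\kappa_\bsig(\vec x,\vec{x'})$ and whose remaining terms involve strictly coarser $\brho$, so a Möbius induction on the number of blocks in the product lattice $\bigtimes_{c,b}\mathrm{NC}(q_{c,b})$ isolates $\kappa_\bsig=0$.

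The equivalence $2\Leftrightarrow 3$ is the genuinely free-probabilistic step, the tensorial analogue of the Nica–Speicher characterisation of freeness by vanishing of alternating centred moments. I would run it by induction on the number $q$ of factors, using $\phi_\mathsf{g}(\vec h)=\sum_{\mathsf{h}\preceq\mathsf{g}}\varkappa_{\Pi(\mathsf{h}),\mathsf{h}}(\vec h)$ to split off the top cumulant $\varkappa_\mathsf{g}(\vec h)$ from the strictly coarser terms. Centredness $\phi(h_\ell)=0$ together with Prop.~\ref{prop:cumul-of-1} annihilates every cumulant with a singleton component, while the surviving coarser terms regroup, via \eqref{eq:asymptotic-relation-grouping} and multilinearity, into moments of strictly fewer almost-alternating centred elements, to which the induction hypothesis applies. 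The converse expands $\varkappa_\mathsf{g}$ through \eqref{eq:paired-free-cumulant-moment} into an alternating combination of $\phi_{\Pi(\mathsf{h}),\mathsf{h}}$, each vanishing after centring by point~3.

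The main obstacle I anticipate lies in the $2\Leftrightarrow 3$ induction: one must verify that grouping maximal same-generator runs of adjacent factors in a melonic graph of paired tensors produces blocks again lying in a single algebra $\mathcal{A}^\mathrm{p}_{D}[x_\ell,\bar x_\ell;1_D]$ and remaining centred, so that Prop.~\ref{prop:cumul-of-paired-vs-non-paired} applies and the factor count strictly drops. Unlike the matrix case, ``alternating'' is here a condition on the colour-$c$ edges of $\mathsf{g}$ rather than on a linear word, and the relevant non-crossing lattice is the product $\bigtimes_{c,b}\mathrm{NC}(q_{c,b})$ indexed by the colour cycles; the combinatorial content is precisely to check that the classical Nica–Speicher induction survives this replacement. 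Once the structural identities of Sec.~\ref{sub:general-construction-algebra}, Prop.~\ref{prop:cumul-of-1} and Prop.~\ref{prop:cumul-of-paired-vs-non-paired} are granted — which the abstract construction was designed to guarantee — I expect the argument to parallel the proof of Thm.~\ref{thm:equiv-tensor-freeness-cumulants-moments-pure} essentially line by line.
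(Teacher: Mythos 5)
Your overall route coincides with the paper's: Thm.~\ref{thm:equiv-tensor-freeness-cumulants-moments-pure-algebr} is proved by transcribing the proof of Thm.~\ref{thm:equiv-tensor-freeness-cumulants-moments-pure} into the abstract space, using exactly the two reductions you record (multilinearity for passing from $\mathcal{G}$ to $\mathcal{A}$, and the interchangeability of the hypothesis $\phi(h_\ell)=0$ with centring), and your treatment of $1\Leftrightarrow 2$ via the grouping identity \eqref{eq:kappa-paired-vs-not-for-h-pure} is the pure-case adaptation of Lemma~\ref{lem:mixed-algebra-formulation-cumulants}. Two corrections there: for $1\Rightarrow 2$ you should not assume $(\mathsf{k},\vec p)$ alternating — point 2 only requires that two generators differ, and the right argument is purely partition-theoretic: since the blocks of $\btau$ fall into at least two generator classes, the condition $\Pi(\brho)\vee\Pi(\btau)=1_n$ forces some component of $\brho$ to meet two blocks with different generators, with no reference to edges of $\mathsf{k}$. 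And $2\Rightarrow 1$ needs no grouping or M\"obius induction: taking each $P_\ell$ to be a single generator gives $\varkappa_{\mathsf{g}}(\vec h)=\kappa_{\bsig}(\vec x,\vec{x'})$ outright, and the hypotheses match.

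The genuine gap is in $2\Leftrightarrow 3$. The step your sketch omits is the combinatorial Lemma~\ref{lem:almost-alternating-subgraphs}: if $(\mathsf{g},\vec h)$ is almost alternating, every $\mathsf{h}\preceq\mathsf{g}$ has either an almost alternating connected component or a component consisting of a single paired tensor. This is what closes the expansion $\phi_{\mathsf{g}}(\vec h')=\sum_{\mathsf{h}\preceq\mathsf{g}}\varkappa_{\Pi(\mathsf{h}),\mathsf{h}}(\vec h')$: singleton components die by centring, and almost alternating components contain two distinct generators, so their cumulants die by point 2 after expanding the centring with Prop.~\ref{prop:cumul-of-1}. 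Without it, a refinement $\mathsf{h}$ could a priori consist entirely of multi-tensor single-generator components, to which neither mechanism applies; your phrase ``the surviving coarser terms regroup into moments of strictly fewer almost-alternating centred elements'' asserts exactly what this lemma proves, and its proof (an induction on the sequence of edge flips taking $\mathsf{g}$ to $\mathsf{h}$, maintaining two components that differ from $\mathsf{g}$ by a single edge) is not a formal consequence of the structural identities of Sec.~\ref{sub:general-construction-algebra}, so it cannot be ``granted'' along with them. Separately, your converse $3\Rightarrow 2$ as stated — expand $\varkappa_{\mathsf{g}}$ into the $\phi_{\Pi(\mathsf{h}),\mathsf{h}}$ and kill each term by point 3 — fails when $\mathsf{g}$ is not almost alternating, which point 2 permits, since point 3 says nothing about such moments. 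The paper first groups the maximal same-generator pieces along $E^{\neq}$ into a strictly alternating invariant $\mathsf{k}^{\neq}$ of the grouped elements, shows the corresponding $\varkappa_{\mathsf{k}^{\neq}}$ vanishes there, and then descends to $\varkappa_{\mathsf{g}}(\vec h)$ via Prop.~\ref{prop:cumul-of-paired-vs-non-paired} together with an induction on $q$ that disposes of the strictly finer terms. You have the grouping idea, but you deploy it in $2\Rightarrow 1$, where it is unnecessary, rather than here, where it is essential.
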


In analogy with usual freeness, by definition, the  $\mathcal{A}^\mathrm{p}_{D}   [t_a, \bar t_a; 1_D]$, $\mathcal{A}^\mathrm{p}_{D}   [t_b, \bar t_b; 1_D] \ldots$ are said to be tensorially free if  2 and 3 are satisfied. The equivalence between 2 and 3 of this theorem should then be compared to Thm.~11.16 of \cite{NicaSpeicher}, and that between  1 and 2 with Thm.~11.20 of \cite{NicaSpeicher}. 
      
The proof of Thm.~\ref{thm:equiv-tensor-freeness-cumulants-moments-pure-algebr} is similar to that of Thm.~\ref{thm:equiv-tensor-freeness-cumulants-moments-pure}, with the modifications pointed out in  Rk.~\ref{remark:replace-by-algebr} and Rk.~\ref{remark:replace-by-algebr-second-part}.  There is a formulation analogous to Thm.~\ref{thm:equiv-tensor-freeness-cumulants-moments-pure}, which involves the $\mathcal{G}^\mathrm{p}_{D}   [x_\ell, \bar x_\ell]$ instead of the $\mathcal{A}^\mathrm{p}_{D}   [x_\ell, \bar x_\ell; 1_D]$, but we do not state it here.

\subsection{Generators in the mixed case}
\label{sub:generators-mixed}

We assume the existence of some set $\mathsf{s}_0=\{a_1,a_2,\ldots, a_p\}$ of elements (which we may also call ``pregenerators'') with  $D$ distinguishable outputs  (resp.~inputs) labeled by their color $1,\ldots D$. We define the set of ordered  tensor products $\mathsf{s}[\mathsf{s}_0]=\{a_{i_1} \otimes a_{i_2} \otimes \cdots a_{i_r}\}_{r\ge 1, i_s\in \{1,\ldots p\}}$, whose elements  have by convention  a pairing of the inputs and outputs of color $c$ for each $1\le c\le D$, defined in the same way as \eqref{eq:paired-tensors-for-general-melo}: the output of $a_{i_1} \otimes a_{i_2} \otimes \cdots a_{i_r}$ labeled $(c,s)$ corresponds to the output of color $c$ of $a_{i_s}$, and the paired input labeled  $(c,s)$ corresponds to the input of color $c$ of $a_{i_{s-1}}$. 
 We represent graphically such an element $a_{i_1} \otimes a_{i_2} \otimes \cdots a_{i_r}$ of $\mathsf{s}[a_1,a_2,\ldots, a_p]$ as a cycle consisting alternatively of thick edges labeled cyclically from 1 to $r$ and blue blobs (Fig.~\ref{fig:generators-mixed-graphical})  and with $a_{i_k}$ labeling the thick edge number $k$, for $1\le k \le r$. 
 
\begin{figure}[!h]
\centering
\includegraphics[scale=1.2]{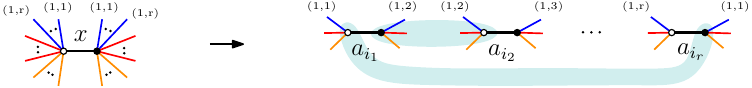}
\caption{Graphical representation of the generators in the mixed case.}
\label{fig:generators-mixed-graphical}
\end{figure}

We let $\mathcal{G}^\mathrm{m}_{D, \vec k}[\mathsf{s}_0]=\mathcal{G}_{D, \vec k}\bigl[\mathsf{s}[\mathsf{s}_0]\bigr]$ generated from the set $\mathsf{s}[\mathsf{s}_0]$. Its elements correspond to trees as in Fig.~\ref{fig:multiplication-tree-pure}, but for which the black and white vertices incident to a given thick edge have the same number of  incident edges or half-edges of each color $1\le c \le D$: see Fig.~\ref{fig:multiplication-tree-generators-mixed}. Each thick edge may be replaced by a cycle as in Fig.~\ref{fig:generators-mixed-graphical}. We define $\mathcal{A}^\mathrm{m}_{D, \vec k}[\mathsf{s}_0]$ and $\mathcal{A}^\mathrm{m}_{D} [\mathsf{s}_0]$ in  the same way.

\begin{figure}[!h]
\centering
\includegraphics[scale=1.2]{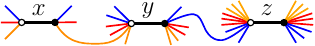}\\[+2ex]
\includegraphics[scale=1.2]{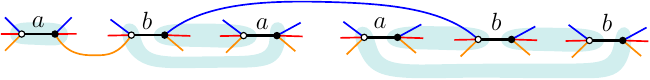}
\caption{Multiplication of generators in the mixed case: Example of element of $\mathcal{G}^\mathrm{m}_{3, \vec k}[\mathsf{s}_0]$, with  $\vec k=(5,6,5)$ and  obtained multiplying some generators $x,y,z$ for $\mathsf{s}_0=\{a,b\}$. The generators are $x=a$, $y=b\otimes a$ and $z=a\otimes b \otimes b$. For the top diagram,  around every vertex, the shade of each color grows from top to bottom. The pregenerators $a,b$ are shown explicitly on the bottom diagram, using the representation of Fig.~\ref{fig:generators-mixed-graphical}.  }
\label{fig:multiplication-tree-generators-mixed}
\end{figure}

      \paragraph{Identities.} We define $\mathcal{G}^\mathrm{m}_{D, \vec k}[\mathsf{s}_0;1_D]=\mathcal{G}_{D, \vec k}\bigl[\mathsf{s}[\mathsf{s}_0]\cup \{1_D\}\bigr]$ and similarly for  $\mathcal{A}^\mathrm{m}_{D, \vec k}[\mathsf{s}_0;1_D]$ and $\mathcal{A}^\mathrm{m}_{D} [\mathsf{s}_0;1_D]$.       Note that for every $r\ge 1$, $\mathcal{G}_{D}[1_D]$ contains $1_{r, r, \ldots r}$, which has the same number of inputs of each color as  an element  $a_{i_1} \otimes a_{i_2} \otimes \cdots a_{i_r}\in \mathsf{s}[\mathsf{s}_0]$. Note also that due to the convention for the pairing of indices, an element of $\mathsf{s}[1_D]$ formed of $r$ copies of $1_D$ differs from $1_{r,r,\ldots r}$.

        \begin{remark}In this construction, the set of generators is $\mathsf{s}[\mathsf{s}_0]$ and not $\mathsf{s}_0$ (which is only a strict subset of $\mathsf{s}[\mathsf{s}_0]$). It is not entirely satisfying to use an infinite family of generators for a finite set of pregenerators (leading also to issues for instance when considering moments, see below). As for the pure case, one could instead add a color $0$ which plays the role of the blue blobs. This allows generating the elements in $\mathsf{s}[\mathsf{s}_0]$ by multiplication of elements of $\mathsf{s}_0$ themselves, using the new color 0,  but then the same issue as for the pure case (Rk.~\ref{rk:color-0-pure}) occurs regarding the trace, and there is an additional issue due to the identities. 
   \end{remark}

   \paragraph{The trace and melonic graphs.}With this choice of generating set, point 2 of the homonymous paragraph of Sec.~\ref{sub:general-construction-algebra} is modified as follows:
   \begin{enumerate}
     \setcounter{enumi}{1}  
\item Here, if one replaces paired half-edges by full edges and replaces the thick edges by the cycles alternating thick edges and blue blobs (Figs.~\ref{fig:generators-mixed-graphical} and~\ref{fig:multiplication-tree-generators-mixed}), one obtains\footnote{Of course a labeling has to be chosen, but the quantities are ultimately considered up to relabeling.} a connected  graph $\bsig$ with  $\omega(\bsig,\mathrm{id})=0$, and the canonical pairing of $(\bsig,\mathrm{id})$ is given by the \emph{blue blobs}, see Fig.~\ref{fig:melonic-trace-mixed}. One may then express  $\phi(z)$ in terms of the $n\ge q$ elements of $\mathsf{s}_0=\{a_1, \ldots, a_p\}$ which compose the $q$ generators involved in the multiplication leading to $z$. There is a $\vec m\in \mathsf{s}_0^n$ such that:
\be
\phi(z)  =\varphi^\mathrm{m}_\bsig(\vec m).
\ee
 Reciprocally, starting from $\bsig$ connected with $(\bsig,\mathrm{id})$ melonic with  canonical pairing represented by blue blobs and removing a set of edges $E$ containing precisely one edge per cycle alternating colored edges and canonical pairs, one obtains a tree $\bsig_{\setminus E}$ of the kind of Fig.~\ref{fig:multiplication-tree-generators-mixed}. 
\end{enumerate}

\begin{figure}[!h]
\centering
\includegraphics[scale=1.2]{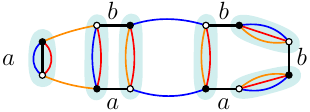}
\caption{In the mixed case, when the graphical representation of the generators is made explicit (bottom of Fig.~\ref{fig:multiplication-tree-generators-mixed}), the trace can be seen as connecting the paired half-edges  (extremities of paths alternating canonical pairs and edges of a given color)  to obtain a connected graph $\bsig$ such that $(\bsig,\mathrm{id})$ melonic, with canonical pairing given by the blue blobs. }
\label{fig:melonic-trace-mixed}
\end{figure}

\paragraph{Mixed random tensors and convergence.}Consider the trace $\phi^{(N)} = \frac 1 N \mathbb{E}[\Tr(\cdot)]$, where $\Tr=\Tr_{\mathbf{id}_1}$ has been defined in \eqref{eq:trace-of-paired-tensor}.  A mixed tensor $A$, together with the rescaled identity $\un_D=\frac {\un^{\otimes D}} {N^{D-1}}$, generates a tensorial probability space: 
\be
\Bigl(\mathcal{A}^\mathrm{m}_{D} \bigl[A; \un_D\bigr], \phi^{(N)}\Bigr).
\ee 
In the mixed case, we modify slightly the notion of moments: we define the moments of $A$ \emph{as a pregenerator} to be the family of joint moments of the elements of $\mathsf{s}[A]$, whose elements are of the form \eqref{eq:paired-tensors-for-general-melo}. They consist in the  $\phi^{(N)}(z)$ for $z\in\mathcal{G}^\mathrm{m}_{D} [A] = \mathcal{G}_{D} \bigl[ \mathsf{s}[A]\bigr]$, that is, the  $\frac 1 N \Phi^\mathrm{m}_\bsig(A) = \frac 1 N \mathbb{E}[ \Tr_\bsig(A)]$ for $\bsig$ connected satisfying $\omega(\bsig,\mathrm{id})=0$, and the family of joint moments converges\footnote{Both for finite $N$ and at the limit, only one representative per equivalence classe under $\sim_\mathsf{m}$ must be considered in the family of moments. }: 
\be
\lim_{N\rightarrow \infty} \frac 1 N \mathbb{E}\bigl[ \Tr_\bsig(A) \bigr]= \varphi^\mathrm{m}_\bsig(a), 
\ee
for all $n\ge 1$ and all $\bsig\in S_n^D$ connected satisfying $\omega(\bsig,\mathrm{id})=0$. The $\varphi^\mathrm{m}_\bsig(a)$ can be seen as the moments of the pregenerator $a$ of a tensorial probability space $(\mathcal{A}^\mathrm{m}_{D} [a; 1_D], \phi)$: for $z\in\mathcal{G}^\mathrm{m}_{D} [a]$, we set $\phi(z)=\varphi^\mathrm{m}_\bsig(a)$ if joining the paired half-edges of the tree corresponding to $z$ and making the cyclic structure of the generators explicit, one obtains the graph $\bsig$. We conclude that $\mathsf{s}[A]$ converges in distribution to $\mathsf{s}[a]$, which we call convergence in distribution of $A$ to $a$, \emph{as pregenerators}. This generalizes for a collection of mixed random tensors and their joint convergence to a collection of pregenerators of a tensorial probability space. As an example, the mixed Wishart tensor $W$ of Sec.~\ref{sub:true-Wishart-tensor} converges in distribution in this sense, to a pregenerator $w$ with moments $\varphi^\mathrm{m}_\bsig(w)=1$  \eqref{eq:wishart-higher-D-as-mum}.

\paragraph{Another point of view on non-commutative probability spaces.}For $D=1$, the trace of an element $z\in\mathcal{G}^\mathrm{m}_{1} [a]$ is of the form  $\phi(z)  =\varphi^\mathrm{m}_\sigma(a)$, where $\sigma$ is a connected cycle alternating $n$ thick edges representing copies of $a$ and $n$ edges of color 1. 
If $z$ is obtained by multiplying only copies of the generator $a$, $z\in \mathcal{G}_{1} [a]$, the blue blobs coincide with the thick edges:
\be
\label{eq:another-perspective-ncps}
\phi(a^n)=\varphi^\mathrm{m}_\gamma(a),
\ee 
where $\gamma=(12\ldots n)\in S_n$. They correspond to the moments of $a$ as an element of $(\mathcal{A}^\mathrm{m}_{1} [a; 1], \phi)$. On the other hand, if $z$ involves other generators $a^{\otimes r}$, one still gets a connected cycle alternating thick edges and edges of color 1, but with blue blobs representing a non-crossing pairing of the black and white vertices. Since one retains only one representative per graph (i.e.~per equivalence class under $\sim_\mathrm{m}$), any multiplication of generators involving a total of $n$ copies of $a$ corresponds to the same moment $\phi(a^n)=\varphi^\mathrm{m}_\gamma(a)$ \eqref{eq:another-perspective-ncps}.  This additional degeneracy is not present for $D>1$ as then the  canonical pairing of $(\bsig, \mathrm{id})$ is unique and  introducing tensor products of copies of $a$ is required because the canonical pairs can differ from the thick edges. Therefore, the moments of $a$ (both as a generator and as a pregenerator) in the tensorial probability space $(\mathcal{A}^\mathrm{m}_{1} [a; 1], \phi)$ are the same as those of the  random variable $a$ in the non-commutative probability space $(\mathcal{A} [a, 1], \phi)$.

\paragraph{Tensor freeness in the mixed case.}As specified in Sec.~\ref{sub:general-construction-algebra}, given some set $\mathsf{s}_0= \{a, b  \ldots\}$ and the tensorial probability space $(\mathcal{A}^\mathrm{m}_{D}[\mathsf{s}_0;1_D], \phi)$, all the quantities $\varphi^\mathrm{m}_\bsig$, $\phi$, $\phi_\mathsf{g}$,  $\kappa^\mathrm{m}_\bsig$, $\varkappa_\mathsf{g}$ obtained considering tensor asymptotics of mixed tensors $A,B\ldots$ are now understood in terms of traces of elements of $(\mathcal{A}^\mathrm{m}_{D}[\mathsf{s}_0;1_D], \phi)$, and they satisfy the same properties as those derived in Sec.~\ref{sub:asympt-moments-of-paired}, Sec.~\ref{sub:centering-paired} and Sec.~\ref{sec:free-cumulants-paired}.  Thm.~\ref{thm:equiv-tensor-freeness-cumulants-moments-mixed} can therefore be reformulated as an equivalence of conditions regarding the quantities $\varphi^\mathrm{m}_\bsig$, $\phi$, $\phi_\mathsf{g}$,  $\kappa^\mathrm{m}_\bsig$, $\varkappa_\mathsf{g}$ for elements of $(\mathcal{A}^\mathrm{m}_{D}[\mathsf{s}_0;1_D], \phi)$. Asymptotic tensor freeness of mixed tensors $A,B\ldots$ can then be understood as tensor freeness of the limiting random variables (or the subspaces they generate). 

 We do not state again the equivalence for the case where  there exists $1\le i\le n$ such that $m_i\neq m_{\eta(i)}$, $\eta$ being the canonical pairing of $(\bsig,\mathrm{id})$ in the first statement, as for this case things remain unchanged.

\begin{theorem}[Mixed tensor freeness]
\label{thm:equiv-tensor-freeness-cumulants-moments-mixed-algebr}
The following  statements are equivalent:
\begin{enumerate}
\item For any $n\ge 2$, any $\bsig\in S_n^D$ connected and with $\omega(\bsig, \mathrm{id})=0$, and any $\vec m=(m_1, \ldots, m_n)\in \{a,b\ldots\}^n$ such that for all $i$, $m_i=m_{\eta(i)}$, where $\eta$ is the canonical pairing of $(\bsig,\mathrm{id})$, $\kappa^\mathrm{m}_\bsig(\vec m)=0$ whenever there exists  $1\le i<j\le n$ such that $m_i\neq m_j$.  
\item  For any $q\ge 2$, any $h_1, \ldots h_q$ such that $\forall\; 1\le \ell \le q$, $h_\ell\in \mathcal{A}^\mathrm{m}_{D}[q_\ell; 1_D]$ with $q_\ell\in \{a,b\ldots\}$, and any connected melonic graph $\mathsf{g}$  of $\vec h$, $\varkappa_\mathsf{g}(\vec h)=0$ whenever $\exists$ $\ell<\ell'$ such that $q_\ell\neq q_{\ell'}$. 
\item 
For any $q\ge 2$,  any $h_1, \ldots h_q$ such that for $1\le \ell \le q$, $h_\ell\in \mathcal{A}^\mathrm{m}_{D}   [q_\ell;1_D]$ with $q_\ell\in \{a,b\ldots\}$, and any connected melonic graph $\mathsf{g}$  of $\vec h$,   $\phi_\mathsf{g}(\vec h)=0$ whenever $(\mathsf{g}, \vec h)$ is almost alternating \emph{and} for every $\ell$, $\phi(h_\ell)=0$. 
\end{enumerate}
\end{theorem}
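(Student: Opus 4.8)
The plan is to prove Theorem~\ref{thm:equiv-tensor-freeness-cumulants-moments-mixed-algebr} as a near-verbatim transcription of Theorem~\ref{thm:equiv-tensor-freeness-cumulants-moments-mixed}, exploiting the dictionary built in Sec.~\ref{sub:general-construction-algebra} and Sec.~\ref{sub:generators-mixed}. The central observation is that all the combinatorial objects appearing in the statement of the asymptotic theorem have exact algebraic counterparts: a connected $\bsig$ with $\omega(\bsig,\mathrm{id})=0$ and canonical pairing $\eta$ corresponds to a connected melonic graph $\mathsf{g}$ of elements of $(\mathcal{A}^\mathrm{m}_{D}[\mathsf{s}_0;1_D], \phi)$; the asymptotic moments $\varphi^\mathrm{m}_\bsig(\vec m)$, the free cumulants $\kappa^\mathrm{m}_\bsig(\vec m)$, and the paired-tensor quantities $\phi_\mathsf{g}$, $\varkappa_\mathsf{g}$ all reappear as bona fide functions of the algebraic elements $h_\ell\in \mathcal{A}^\mathrm{m}_{D}[q_\ell;1_D]$. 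As emphasized at the end of Sec.~\ref{sub:generators-mixed}, the quantities $\varphi^\mathrm{m}_\bsig$, $\phi$, $\phi_\mathsf{g}$, $\kappa^\mathrm{m}_\bsig$, $\varkappa_\mathsf{g}$ for elements of the tensorial probability space satisfy \emph{the same} structural identities as their asymptotic avatars: multilinearity \eqref{eq:varphi-multilinear}, the centering relations of Sec.~\ref{sub:centering-paired}, the moment-cumulant inversion \eqref{eq:paired-free-cumulant-moment}--\eqref{eq:paired-free-moment-cumulant}, and Prop.~\ref{prop:cumul-of-1} and Prop.~\ref{prop:cumul-of-paired-vs-non-paired}. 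Since the proof of Theorem~\ref{thm:equiv-tensor-freeness-cumulants-moments-mixed} in Sec.~\ref{sub:proof-equiv-tensor-freeness-cumulants-moments-mixed} uses \emph{only} these structural identities and never the probabilistic origin of the $\varphi$'s, the same chain of implications goes through unchanged.

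Concretely, I would organize the argument around the three implications $1\Rightarrow 2\Rightarrow 3\Rightarrow 1$, each mirroring the corresponding step of Theorem~\ref{thm:equiv-tensor-freeness-cumulants-moments-mixed}. For $1\Rightarrow 2$, I would use Prop.~\ref{prop:cumul-of-paired-vs-non-paired}, in the form \eqref{eq:kappa-paired-vs-not-for-h}, to express the paired-tensor cumulant $\varkappa_\mathsf{g}(\vec h)=\varkappa_{\mathsf{k}}(\vec p)$ as a sum of cumulants $\kappa^\mathrm{m}_{\Pi(\brho),\brho}(\vec m)$ over $\brho$ with $\btau\eta^{-1}\preceq\brho\eta^{-1}\preceq\bsig\eta^{-1}$ and a connectivity constraint; if $(\mathsf{g},\vec H)$ is strictly alternating with two generators $q_\ell\neq q_{\ell'}$, every such $\brho$ produces a connected melonic block mixing two distinct pregenerators, so each $\kappa^\mathrm{m}$ vanishes by hypothesis~1, whence $\varkappa_\mathsf{g}(\vec h)=0$. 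For $2\Leftrightarrow 3$, the equivalence between the cumulant formulation and the centered-moment formulation is exactly the algebraic analogue of Thm.~11.16 of \cite{NicaSpeicher}: one passes from $\varkappa_\mathsf{g}$ to $\phi_\mathsf{g}$ via the moment-cumulant formula \eqref{eq:paired-free-moment-cumulant}, using Prop.~\ref{prop:cumul-of-1} to kill the cumulants in which a centered argument $h'_\ell=h_\ell-\phi(h_\ell)1_{\mathcal{D}_\ell}$ forces a factor $1_{\mathcal{D}_\ell}$, and using the grouping/ungrouping of Sec.~\ref{sub:grouping} together with the multilinearity of Sec.~\ref{sub:asympt-moments-of-paired} to reduce an almost-alternating centered moment to a single surviving alternating cumulant. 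For $3\Rightarrow 1$, I would run the induction on $n$ in reverse: assuming the centered-moment vanishing and inverting the moment-cumulant relation, a mixed cumulant $\kappa^\mathrm{m}_\bsig(\vec m)$ with some $m_i\neq m_j$ is rewritten as a combination of $\phi_\mathsf{g}(\vec h')$ for almost-alternating centered configurations, each of which vanishes.

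The main obstacle I anticipate is \emph{bookkeeping the identity elements and the centering} cleanly at the algebraic level, because points~2 and~3 require replacing the generated sets $\mathcal{G}^\mathrm{m}_{D}[q_\ell]$ used in Theorem~\ref{thm:equiv-tensor-freeness-cumulants-moments-mixed} by the full algebras $\mathcal{A}^\mathrm{m}_{D}[q_\ell;1_D]$ allowing $1_D$ as a generator. This is precisely the content flagged in Rk.~\ref{remark:replace-by-algebr} and Rk.~\ref{remark:replace-by-algebr-second-part}: I must verify that adjoining $1_D$ does not create new connected melonic graphs beyond those already controlled, which hinges on Lemma~\ref{lem:preimage-of-identity} (any element of $\mathcal{G}^\mathrm{m}_{D,\mathcal{D}}[\un_D]$ originating from a purely connected $\bsig$ equals $\un_{\mathcal{D}}$) and on the fact that in the mixed case only thick edges coinciding with a canonical pair may be replaced by $1_D$. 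Once this compatibility is established, linearity of $\phi$ and $\varkappa$ in each $h_\ell$ reduces the statements over $\mathcal{A}^\mathrm{m}_{D}[q_\ell;1_D]$ to statements over the spanning sets $\mathcal{G}^\mathrm{m}_{D}[q_\ell;1_D]$, and the centering step \eqref{eq:Gell} together with \eqref{eq:varphi-of-one-and-contraction} handles the subtraction of $\phi(h_\ell)1_{\mathcal{D}_\ell}$. I therefore expect the proof to amount to invoking the already-proven Theorem~\ref{thm:equiv-tensor-freeness-cumulants-moments-mixed} and translating it through the identification of $\phi^{(N)}=\tfrac1N\mathbb{E}[\Tr(\cdot)]$ with the abstract trace $\phi$, the genuine new work being the verification that the algebraic identity elements behave as required in the two remarks cited above.
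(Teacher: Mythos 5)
Your proposal is correct and matches the paper's own route: the paper proves this theorem precisely by observing that the algebraic quantities $\varphi^\mathrm{m}_\bsig$, $\phi_\mathsf{g}$, $\kappa^\mathrm{m}_\bsig$, $\varkappa_\mathsf{g}$ on $(\mathcal{A}^\mathrm{m}_{D}[\mathsf{s}_0;1_D],\phi)$ satisfy the same structural identities as their asymptotic counterparts, so that the proof of Thm.~\ref{thm:equiv-tensor-freeness-cumulants-moments-mixed} carries over mutatis mutandis, with the identity-element bookkeeping handled exactly by Rk.~\ref{remark:replace-by-algebr} and Rk.~\ref{remark:replace-by-algebr-second-part} as you anticipate. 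Your cyclic organization $1\Rightarrow 2\Rightarrow 3\Rightarrow 1$ versus the paper's two separate equivalences is an immaterial difference in logical packaging.
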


See the comments below Thm.~\ref{thm:equiv-tensor-freeness-cumulants-moments-pure-algebr}.

\subsubsection{Composition and graph operad}

One may alternatively describe the limiting spaces using composition of diagrams instead of multiplication. The properties of composition at the limit are the same as those for the finite $N$ quantities $\Tr_\bsig$, $\Tr_\mathsf{g}$,  $\Tr_{\bsig_{\setminus E}}$, $\Tr_\mathsf{g_{\setminus E}}$, described in Sec.~\ref{sub:asympt-moments-of-paired} and Sec.~\ref{sub:centering-paired}. The construction should be analogous to the operad of graph operations described by Male in \cite{Male}, with differences such as: only the thick edges are replaces by elements from the space, the colored edges represent multiplication and instead of one input and output (a source and a sink in \cite{Male}), there are now a certain number of inputs and outputs of the same shade. We do not expect these differences to affect the  mathematical properties described in \cite{Male}.  
\newpage

\newpage

\appendix

\addtocontents{toc}{\protect\setcounter{tocdepth}{1}}

\section{Proofs of Sec.~\ref{sec:char-lu-inv-tensor-distributions}}
\label{app:proofs-of-section-Invariance}

We gather here the proofs of statements in 
Sec.~\ref{sec:char-lu-inv-tensor-distributions}.

\subsection{Proofs of Sec.~\ref{sec:basis}}
\label{app:proofs-of-section-Invariance-itself}

\paragraph{Proof of Thm.~\ref{thm:independence}:}
We start by proving the linear independence in the mixed case. We consider $n_\mathrm{max}$ and the family $\Tr_{[\bsig]_{\mathrm{m}} }$ of trace-invariants for $ [\bsig]_{\mathrm{m}}  \in S_n^D/{\sim_\mathrm{m}}$ 
with $n\le n_{\rm max}$, a set of corresponding complex numbers $\lambda_{ [\bsig]_{\mathrm{m}}  }$, and the linear combination:
\[\sum_{[\bsig]_{\mathrm{m}} \in \bigcup_{n\le N}S_n^D/{\sim_\mathrm{m}}}\lambda_{[\bsig]_{\mathrm{m}}} 
\Tr_{[\bsig]_{\mathrm{m}}} (A)=0 \; , \qquad \forall A\in M_N(\mathbb{C})^{\otimes D}\;.
\]
But then trivially, if $A$ is a complex random tensor with finite moments, this implies:
\be
\label{eq:kernel}
 \sum_{[\bsig]_{\mathrm{m} } \in \bigcup_{n\le N} S_n^D/{\sim_\mathrm{m}} }\lambda_{ [\bsig]_{\mathrm{m} } } 
 \; \mathbb{E} \bigl[ \Tr_{[\bsig]_{\mathrm{m}}} (A) \; \Tr_{[\btau]_{\mathrm{m}}} (\bar A)\bigr]  = 0  \; ,
\ee
and in order to conclude it is enough to exhibit a particular random tensor such that the covariance matrix:
\[
G= \Bigl\{ \mathbb{E} \bigl[ \Tr_{[\bsig]_{\mathrm{m}}} (A) \; \Tr_{[\btau]_{\mathrm{m}}} (\bar A)\bigr] \Bigr\}_{[\bsig]_{\mathrm{m} },  [\btau]_{\mathrm{m} } \in \bigcup_{n\le N} S_n^D/{\sim_\mathrm{m}} 
}\,, 
\]
is invertible for $N$ large enough.
Let us  take $A\in M_N(\mathbb{C})^{\otimes D}$ distributed according to the Ginibre ensemble, that is the $N^{2D}$ components $A_{i^1 \ldots i^D; j^1 \ldots j^D}$ are independent complex Gaussians with covariance $\mathbb{E}[\bar A_{\vec{i} ; \vec j}A_{\vec k ; \vec l}]=\prod_{c=1}^D \delta_{i^c, k^c}\delta_{j^c, l^c} / N^D$. The only non zero elements of the covariance matrix $G$ are obtained from the Wick theorem as:
\be
\mathbb{E} \bigl[ \Tr_{[\bsig]_{\mathrm{m}}} (A) \; \Tr_{[\btau]_{\mathrm{m}}} (\bar A)\bigr] = \sum_{\eta\,\in S_n} N^{\sum_{c=1}^D \#( \sigma_c \eta \tau_c^{-1}\eta^{-1}) - nD} \;,
\ee
where the permutation $\eta$ encodes the Wick pairings between $A$s and $\bar A$s. In terms of the distance function $d_\mathrm{m}( [ \bsig ]_{\mathrm{m}}, [ \btau]_{\mathrm{m} } )$  \eqref{eq:def-distance-orbits-mixte}. It becomes:
\[
G_{[\bsig]_{\mathrm{m}}, [\btau]_{\mathrm{m}}} = \mathbb{E} \bigl[ \Tr_{[\bsig]_{\mathrm{m}}} (A) \; \Tr_{[\btau]_{\mathrm{m}}} (\bar A)\bigr]   = C_{[\bsig]_{\mathrm{m}}, [\btau]_{\mathrm{ m }} } N^{-d_\mathrm{m}( [ \bsig ]_{\mathrm{m}}, [ \btau]_{\mathrm{m} } )}(1+O(N^{-1}))  \; , 
\]
where $C_{[\bsig]_{\mathrm{m}}, [\btau]_{\mathrm{ m }} }>0$ is the number of $\eta\in S_n$ for which the distance is attained.  Since $d_\mathrm{m}([\bsig]_{\mathrm{m}}, [ \bsig]_{\mathrm{m}})=0$, the diagonal terms are strictly positive $ C_{[\bsig]_{\mathrm{m}}, [\bsig]_{\mathrm{ m }} } >0$ and of order $1$, while the off-diagonal terms are all suppressed in $1/N$, as $d_\mathrm{m}( [ \bsig ]_{\mathrm{m}}, [ \btau]_{\mathrm{m} } ) \ge 1$ for any two different classes.  The following lemma completes the proof, as for $N$ larger enough, $G$ satisfies the inequality in the lemma.

\begin{lemma}
\label{lem:positive-mat}
Let $A=\{a_{ij}\}$ be a $N\times N$ Hermitian matrix with positive diagonal entries $a_{ii}>0$. If all off-diagonal entries satisfy $|a_{ij}|\le \sqrt{a_{ii}a_{jj}}/N$, then $A$ is positive definite and invertible
\end{lemma}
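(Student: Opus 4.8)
The plan is to establish positive-definiteness directly from the Gershgorin-type diagonal dominance that the hypotheses encode. First I would observe that the assumption $|a_{ij}|\le \sqrt{a_{ii}a_{jj}}/N$ can be rephrased by rescaling: set $D=\mathrm{diag}(a_{11},\dots,a_{NN})$ with all entries strictly positive, and write $B = D^{-1/2} A D^{-1/2}$. Then $B$ is Hermitian with unit diagonal, $b_{ii}=1$, and off-diagonal entries $b_{ij}=a_{ij}/\sqrt{a_{ii}a_{jj}}$ satisfying $|b_{ij}|\le 1/N$. Since $A$ is congruent to $B$ via the invertible matrix $D^{-1/2}$, by Sylvester's law of inertia $A$ is positive definite if and only if $B$ is, so it suffices to prove $B$ is positive definite.

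Next I would apply the standard diagonal-dominance criterion to $B$. For each row $i$, the sum of the moduli of the off-diagonal entries is
\be
\sum_{j\ne i} |b_{ij}| \le (N-1)\cdot \frac{1}{N} = \frac{N-1}{N} < 1 = b_{ii}\;.
\ee
Thus $B$ is strictly diagonally dominant with positive diagonal entries. By the Gershgorin circle theorem, every eigenvalue $\lambda$ of $B$ lies in some disc centered at $b_{ii}=1$ of radius $\sum_{j\ne i}|b_{ij}| \le (N-1)/N$; since $B$ is Hermitian its eigenvalues are real, hence $\lambda \ge 1 - (N-1)/N = 1/N > 0$. Therefore $B$ is positive definite, and in particular invertible. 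Congruence then transfers both conclusions back to $A$.

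I do not anticipate a genuine obstacle here, since the result is essentially a rescaled restatement of strict diagonal dominance; the only point requiring a little care is making the rescaling step clean. The hypotheses guarantee $a_{ii}>0$ for all $i$, so $D^{-1/2}$ is well-defined and invertible, which is precisely what is needed to invoke congruence and Sylvester's law of inertia. One could equally avoid the congruence step and argue directly on $A$ by verifying, for any nonzero vector $x$, that $x^\dagger A x \ge \sum_i a_{ii}|x_i|^2 - \sum_{i\ne j}|a_{ij}||x_i||x_j|$ and bounding the cross-terms using $|a_{ij}|\le\sqrt{a_{ii}a_{jj}}/N$ together with the arithmetic–geometric mean inequality $|x_i||x_j|\le \tfrac12(|x_i|^2+|x_j|^2)$; this yields $x^\dagger A x \ge \tfrac{1}{N}\sum_i a_{ii}|x_i|^2 > 0$. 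Either route is short, and I would likely present the rescaling argument as it makes the $1/N$ spectral gap transparent and connects cleanly to the asymptotic orthogonality statements in Propositions \ref{prop:asympt-scalar-product-mixed} and \ref{eq:asympt-scalar-product-pure}.
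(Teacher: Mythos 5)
Your proof is correct. It shares the paper's first step — conjugating by $\mathrm{diag}(a_{ii}^{-1/2})$ to reduce to a Hermitian matrix $B$ with unit diagonal and off-diagonal entries bounded by $1/N$ — but then diverges: the paper writes $B=\un+\tilde B$ and controls the off-diagonal part by its Frobenius norm, $\|\tilde B\|_{\rm op}^2\le \Tr(\tilde B^2)\le N(N-1)/N^2<1$, so that $\un+\tilde B$ is positive definite with inverse given by a convergent Neumann series, whereas you invoke strict diagonal dominance and the Gershgorin circle theorem. Both arguments are elementary and equally short. Your route has the small advantage of producing the sharper and more transparent spectral gap $\lambda_{\min}(B)\ge 1/N$ (the Frobenius bound only yields $\lambda_{\min}(B)\ge 1-\sqrt{(N-1)/N}\approx 1/(2N)$), and your alternative quadratic-form argument avoids the rescaling entirely; the paper's route has the advantage of exhibiting the inverse explicitly as a norm-convergent series, which is in the same spirit as the perturbative $1/N$ expansions used elsewhere in the text. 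Either is a complete proof of the lemma.
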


\proof Without loss of generality, we may assume that $A$ has diagonal entries equal to one (otherwise we multiply $A$ on the left and  right by the matrix $\{a_{ii}^{-1/2} \delta_{ij}\}$) and off-diagonal entries bounded as $|a_{ij}| \le 1/N, \, i\neq j$. We split $A=\un + \tilde A$, where $\tilde A$ has null diagonal and operator-norm less than $1$, as it is bounded by the Frobenius norm,
$\|\tilde A\|_{\rm op}^2 \le \Tr( \tilde A^2) < 1$. Therefore,  $\un + \tilde A $ is positive-definite and  $(\un + \tilde A )^{-1} = \sum_{k\ge 0} A^k$, where the series converges in operator norm.
\qed 

\ 

The proof is similar in the pure case, choosing $T_1$ and $T_2$ to be independent, distributed  complex Gaussians with covariances 
$\mathbb{E}[\bar T_{\vec{i}}\ T_{\vec k }]=\prod_{c=1}^D \delta_{i^c, k^c} / N^{D/2}$ 
and evaluating the covariance matrix: 
\be
H_{[\bsig]_{\mathrm{p}}, [\btau]_{\mathrm{p}}} = \mathbb E_{T_1, T_2}\bigl[\Tr_{[\bsig]_{\mathrm{p}}} ( T_2,  T_1)  \Tr_{[\btau]_{\mathrm{p}}} (\bar T_1, \bar T_2)  \bigr] = D_{[\bsig]_{\mathrm{p}}, [\btau]_{\mathrm{p}}} N^{- d_\mathrm{p}([\bsig]_{\mathrm{p}}, [\btau]_{\mathrm{p}})}(1+O(N^{-1})) , 
\ee
where $D_{[\bsig]_{\mathrm{p}}, [\btau]_{\mathrm{p}}}>0$ is the number of $\eta, \nu\in S_n$ for which the distance is attained. 
\qed

\subsection{Proof of Thm.~\ref{thm:finite-free-cumulants}}
\label{app:proofs-of-finite-free}

In the mixed case, due to the $\mathsf{LU}$-invariance of $B$, we can write $Z_A(B) $ in terms of the tensor HCIZ integral \cite{CGL, CGL2}. Denoting
with $U=U^1\otimes \cdots \otimes U^D$ and $dU = dU^1\dots dU^D$, where $dU^c$ denotes the Haar measure on $U(N)$, from the moment-cumulant relation \eqref{eq:classical-cumulant} we have:
\[
 \frac{\partial^{n}}{\partial z^n} \log \bE_A \bE_{U}[e^{z\Tr(B^T U A U^\dagger)}] \bigr \rvert_{z=0}   =  \sum_{\pi \in \mathcal{P}(n)}\lambda_\pi \, \prod_{G\in \pi}  \bE_A \left[\int dU  \; \left[  \sum_{\vec i, \vec j, \vec a, \vec b} B_{\vec i ; \vec j} U_{\vec i, \vec a} A_{\vec a;\vec b}  \overline{U_{\vec j, \vec b }} \right]^{|G|} \right] \; .
\]
The integral over the unitary group is computed from  \eqref{eq:WeinDef}:
\[
 \begin{split}
  & \frac{\partial^{n}}{\partial z^n}\log Z_A(zB) \bigr \rvert_{z=0}    \crcr
  & =  \sum_{\pi \in \mathcal{P}(n)}\lambda_\pi \prod_{G\in \pi}
  \sum_{\bsig_{|_G},\btau_{|_G} \in S^D_{|G|}} 
\sum_{i,j,a,b}\big(     \prod_{c=1}^D W^{(N)}(\sigma_{c|_G} \tau_{c|_G}^{-1} )  \; \delta_{i^c_s j^c_{ \sigma_{c|_G} (s) } } \delta_{a^c_s b^c_{\tau_{c|_G} (s)}}
 \big)  \bE    \Big[ \prod_{s\ge 1}^{|G| } A_{\vec a_s;\vec b_s}B_{\vec i_s;\vec j_s } \Big]  \crcr
 & = \sum_{\pi \in \mathcal{P}(n)}\lambda_\pi \prod_{G\in \pi}  \sum_{\bsig_{|_G},\btau_{|_G} \in S^D_{|G|}}
  \big(\prod_{c=1}^D W^{(N)} (\sigma_{ c|_G} \tau_{c|_G}^{-1}) \big) \; 
  \Tr_{\bsig_{|_G}} (B) \;   \bE \left[\Tr_{\btau_{|_G} } (A)\right]\; .
 \end{split}
\]
As $ \Tr_{\bsig}(B) $ factors over the mixed connected  components in $\Pi(\bsig)$ and the summand is invariant under relabelings, we can 
exchange the sum over partitions with the ones over permutations to obtain:
\[
\frac{\partial^{n}}{\partial z^n}\log Z_A(zB) \bigr \rvert_{z=0} = \sum_{\bsig, \btau \in S^D_{n}}  \Tr_{\bsig}(B) \sum_{\substack{{\pi \in \mathcal{P}(n)}\\{\pi \ge \Pi( \bsig, \btau)}}} \lambda_\pi  \prod_{G\in \pi}   \bE \left[\Tr_{\btau_{|_G} } (A)\right]\; \prod_{c=1}^D W^{(N)} (\sigma_{ c|_G} \tau_{c|_G}^{-1}) \; .
\]

In the pure case, similar manipulations lead to the similar formulae, but with the ``mixed'' replaced by ``pure''. In detail, we start from:
\[
\begin{split}
& \partial^n_{z}\partial^n_{\bar z}
 \log \bE_{T ,\bar T}[e^{z J\cdot T  + \bar z \bar J \cdot \bar T }]\bigr\rvert_{z=\bar z=0} \crcr
 & \qquad = 
 \sum_{\Pi \in \mathcal{P}(n,\bar n) }
\lambda_{\Pi}\prod_{G = B\cup \bar B\in \Pi} 
  \bE_{T, \bar T} \left[ \int dU \;
  \left[  \sum_{\vec i \vec a} J_{\vec i} U_{\vec i, \vec a} T_{\vec a} \right]^{|B| }
\left[  \sum_{ \vec j \vec b} \bar J_{\vec j} \overline{ U_{\vec j, \vec b} }   \bar T_{\vec b} \right]^{|\bar B|}   
  \right] \;,
\end{split}
\]
and we integrate over the unitary group.
In this context, it is convenient to regard the permutations in  \eqref{eq:WeinDef} as bijective mappings from the white elements to the black ones, that is $s\to \overline{\sigma_c}(s)$. Denoting $S_{B,\bar B }$ the set of bijections from the elements of $B$ to the ones of $\bar B$, we have:
\[
 \begin{split}
 & \partial^n_{z}\partial^n_{\bar z}
 \log \bE_{T ,\bar T}[e^{z J\cdot T  + \bar z \bar J \cdot \bar T }]\bigr\rvert_{z=\bar z=0}
 =  \sum_{\Pi \in \mathcal{P}(n,\bar n)} \lambda_\Pi \prod_{G=B\cup \bar B\in \Pi}
 \;  \sum_{\bsig_{|_B},\btau_{|_B} \in S^D_{B, \bar B}}  \crcr
& \qquad \qquad \qquad \sum_{i,j,a,b}
\big(     \prod_{c=1}^D W^{(N)}(\sigma_{c|_B} \tau_{c|_B}^{-1} )  \; \delta_{i^c_s  j^c_{ \overline{ \sigma_{c|_B} (s) } } } \delta_{a^c_s b^c_{ \overline{ \tau_{c|_B} (s)} } }
 \big)  \bE    \Big[ \prod_{s\ge 1}^{|B| } 
  T_{\vec a_s} \bar T_{\vec b_{\bar s}}  J_{\vec i_s} \bar J_{\vec j_{\bar s}} \Big]  \crcr
& = \sum_{\Pi \in \mathcal{P}(n,\bar n)} \lambda_\Pi \prod_{G=B\cup \bar B\in \Pi}
 \;  \sum_{\bsig_{|_B},\btau_{|_B} \in S^D_{B,\bar B}} \big(     \prod_{c=1}^D W^{(N)}(\sigma_{c|_B} \tau_{c|_B}^{-1} ) \;  \Tr_{\bsig_{|_B}}(J,\bar J) \; \bE \left[  \Tr_{\btau_{|_B}}(T,\bar T)  \right]\; ,
 \end{split}
\]
and we observe that $ \sigma_{c|_B} \tau_{c|_B}^{-1}$ is a permutation of the black elements having the same cycle-type as the permutation $ \tau_{c|_B}^{-1} \sigma_{c|_B}$ of the white ones. The invariants now factor over the pure connected components, and when commuting the sum over $\Pi$ with the ones over $\bsig$ and $\btau$, the compatibility condition that $\bsig_{|_B}, \btau_{|_B} $ correspond to the parts of $\Pi$ implies that $\Pi$ must be coarser that the partitions in pure connected components $\Pi_{\mathrm{p}}(\bsig)$ 
and $\Pi_{\mathrm{p}}(\btau)$, leading to:
\[
\begin{split}
& \partial^n_{z}\partial^n_{\bar z}
 \log \bE_{T ,\bar T}[e^{z J\cdot T  + \bar z \bar J \cdot \bar T }]\bigr\rvert_{z=\bar z=0} \crcr
 &\qquad  = \sum_{\bsig,\btau\in S_{n,\bar n}^D}
 \Tr_{\bsig}(J,\bar J) 
 \sum_{\substack{{\Pi \in \mathcal{P}(n,\bar n)}\\{\Pi \ge \Pi_\mathrm{p}( \bsig) \vee \Pi_{\mathrm{p}}( \btau) } }} \lambda_\Pi
 \prod_{G=B\cup \bar B\in \Pi} 
\bE \left[\Tr_{\btau_{|_B}} (T,\bar T)\right]\; \prod_{c=1}^D W^{(N)} (\sigma_{ c|_B} \tau_{c|_B}^{-1}) \;.
\end{split}
\]
In order to invert the relations we start by averaging the expectations over the unitary group:
\be
\label{eq:unit-inv-lu-inv-0}
\bE \left[\Tr_{\bsig}(A)\right] = \bE_A \left[ \int dU \, \Tr_{\bsig}(UAU^\dagger)\right] \;, 
\qquad \bE \left[\Tr_{\bsig}(T,\bar T)\right] = \bE_{T,\bar T} \left[ \int dU \, \Tr_{\bsig} (UT, \bar T U^\dagger)\right] \;,  
\ee
which is, using the definition \eqref{def:trace-invariants} of the trace-invariants and  \eqref{eq:WeinDef}:
\begin{align}
\label{eq:unit-inv-lu-inv}
 \bE \left[\Tr_{\bsig}(A)\right] 
= & \sum_{\btau, \bnu \in S^D_n}  
 \sum_{ i ,  j , a ,  b}  \bE \biggl[\prod_{s=1}^n A_{\vec  a_s ;  
\vec b_s }\biggr] 
 \prod_{c,s}\delta_{i_s^{c}, j_{\sigma_c(s)}^{c}} 
\,  \delta_{i_s^{c}, j_{\tau_c(s)}^{c}} \, \delta_{a_s^{c}, b_{\nu_c(s)}^{c}} 
W^{(N)} ( \tau_{c} \nu_{c}^{-1})  \\
 \bE \left[\Tr_{\bsig}(T,\bar T)\right] 
= & \sum_{\btau, \bnu \in S^D_{n,\bar n}}  
 \sum_{  i , j , a , b}  \bE \biggl[\prod_{s=1}^n T_{\vec  a_s } \bar T_{\vec b_s }\biggr] 
 \prod_{c,s}\delta_{i_s^{c}, j_{\sigma_c(s)}^{c}} 
\,  \delta_{i_s^{c}, j_{\tau_c(s)}^{c}} \, \delta_{a_s^{c}, b_{\nu_c(s)}^{c}} 
W^{(N)} ( \tau_{c} \nu_{c}^{-1})  
\; ,\nonumber
\end{align}
where again, in the pure case, the sum should be considered as a sum over bijections between the white and the black vertices. Recalling that 
$ \sum_{c=1}^D\#(\sigma_c\tau_c^{-1}) =  nD - d(\bsig,\btau)$, this is:
\[
 \bE \left[\Tr_{\bsig}(A)\right]  =  \sum_{\btau \in S^D_n} N^{     nD - d(\bsig,\btau)    } \mathcal{G}_{\btau} [A] \;,  \qquad 
 \mathcal{G}_{\btau} [A]  = \sum_{\bnu \in S^D_n}\mathbb{E} \left[\Tr_{\bnu}(A)\right]\,  \prod_{c=1}^D W^{(N)} ( \nu_{c} \tau_{c}^{-1}) \;, 
\]
and similarly for the pure case.
We define the multiplicative extension in the mixed case:
\be
\begin{split}
\label{eq:int-proof-thm-discr}
\prod_{G\in \pi} \bE\left[\Tr_{\bsig_{\lvert_G}}(A)\right] & =  \sum_{\substack{{\btau \in S^D_n}\\{\Pi(\btau)\le \pi}}} N^{   nD - d(\bsig,\btau)    }  \mathcal G _{\pi, \btau} [A] \;,
\crcr
\mathcal G_{\pi, \btau}[A] &= \sum_{\substack{{\bnu \in S^D_{n}}\\{\pi \ge \Pi(\bnu)}}}
\prod_{G\in \pi} \bE\left[\Tr_{\bnu_{\lvert_G}}(A)\right]  \, \prod_{c=1}^D W^{(N)} ( \nu_{c|_G}\tau_{c|_G}^{-1}) \; ,
\end{split}
\ee
while in the pure case we have $\Pi\in  \mathcal{P}(n,\bar n)$ and $\Pi \ge \Pi_{\mathrm{p}}(\bsig)$. Comapring with \eqref{eq:cum-finN-int} leads to: 
\be
\label{eq:int-proof-thm-discr2}
\mathcal{K}^{\mathrm{m}}_\bsig[A] =\sum_{\substack{{\pi \in \mathcal{P}(n)}\\{\pi \ge \Pi(\bsig)}}} \lambda_\pi \  \mathcal G_{\pi, \bsig }  [A] \;, \qquad  \mathcal G_{\pi, \bsig}[A] =\sum_{\substack{{\pi' \in \mathcal{P}(n)}\\{\pi \ge \pi' \ge \Pi(\bsig)}}} \mathcal{K}^{\mathrm{m}}_{\pi',\bsig}[A] \; ,
\ee
where the last equality follows by M\"oebius inversion. The pure case is similar.
\qed

\subsection{Proof of  Prop.~\ref{prop:finite-free-prop}}
\label{app:proofs-of-finite-free-prop}

In the mixed case, using the $\mathsf{LU}$ invariance and following the steps \eqref{eq:unit-inv-lu-inv-0},  \eqref{eq:unit-inv-lu-inv} and \eqref{eq:int-proof-thm-discr} but without carrying the summation over $i$s and $j$s, one computes for $\pi\in\mathcal{P}(n)$: 
$$
\prod_{G\in \pi}\mathbb{E}\left[\prod_{s\in G} A_{i_s^1,\ldots i_s^D\;;\;j_s^1,\ldots j_s^D}\right]=\sum_{\btau\in S_n^D\textrm{ s.t. } \Pi(\btau)\le \pi} \left(\prod_{c=1}^D\prod_{s=1}^n \delta_{i_s^c, j^c_{\tau_c(s)}}\right) \mathcal{G}_{\pi, \btau}[A]\;, 
$$
whch leads to:
\be
 k_n\Bigl(\bigl\{ A_{i_s^1,\ldots i_s^D\;;\;j_s^1,\ldots j_s^D}  \bigr\}_{1\le s \le n}\Bigr)=\sum_{\btau\in S_n^D} \left(\prod_{c=1}^D\prod_{s=1}^n \delta_{i_s^c, j^c_{\tau_c(s)}}\right) \mathcal{K}_\btau^\mathrm{m}[ A] \;.
\ee
We choose $\sigma_c\in S_n$ and impose  $i_s^c=i_c(\sigma_c(s))$ and $j_{s'}^c=i_c(s')$ for all $1\le s \le n$: the non-vanishing terms in the sum are such that $i_s^c=i_c(\sigma_c(s))=j^c_{\tau_c(s)}=i_c(\tau_c(s))$, and since the $i_c(s')$ are all distinct, this imposes $\tau_c=\sigma_c$. Therefore:
$$
k_n\Bigl(\bigl\{A_{i_1(\sigma_1(s)), \ldots, i_D(\sigma_D(s))\; ;\;   i_1(s), \ldots, i_D(s)  } \bigr\}_{1\le s \le n}\Bigr)=\mathcal{K}_\bsig^\mathrm{m}[ A]\;.
$$
The proof is similar in the pure case.\qed
\section{Proofs of Sec.~\ref{sec:free-cum}}

\subsection{Proof of Thm.~\ref{thm:limit-of-finite-cumulants}}
\label{sec:proof-free-cumulants-pure-melonic}

\paragraph{General asymptotic.}
The finite $N$ free cumulants in Thm.~\ref{thm:limit-of-finite-cumulants} simplify
for a purely connected $\bsig$, as the sum over partitions reduces to one term corresponding to the one set partition. Expressing it in terms of classical cumulants we have:
\be
\label{eq:cum-finN-pure-conn-phi}
\mathcal{K}_{\bsig}[T, \bar T] _{\bigl\lvert_{K_\mathrm{p}(\bsig)=1}} = \sum_{\btau \in S^D_{n,\bar n}}  \sum_{ \pi \ge \Pi_\mathrm{p}(\btau)}  \Phi _{\pi, \btau} [T, \bar T]  \prod_{c=1}^D W^{(N)} ( \sigma_{ c} \tau_{c}^{-1}) \;,
\ee
where all the partitions encountered in this proof are bipartite, and all the permutations should be seen as mapping white to black vertices.

The Gaussian scaling hypothesis \eqref{eq:scaling-hypothesis-pure} yields:
\be
\lim_{N\rightarrow \infty} \frac 1 {N^{n- \sum_{B\in \pi} \min d(\btau_{\lvert_B}, \eta_B)}}\, 
\Phi _{\pi, \btau } [T, \bar T] = \varphi _{\pi, \btau} (t, \bar t)  \; ,  
\ee
where each minimum is taken over the $\eta_B\in S_{\lvert B \vert}$ with  $K_\mathrm{p}(\btau_{\lvert_B}, \eta_B) = 1$. We let: 
\[
H_{\btau,\pi}=\bigl\{\eta \in S_n \mid \Pi_\mathrm{p}(\eta) \le \pi \quad \mathrm{and}\quad \forall B\cup \bar B\in \pi,\ K_\mathrm{p}(\btau_{\lvert_B}, \eta_{\lvert_B})=1\bigr\} \;,
\]
the sets of permutations over which the minimum in the Gaussian scaling is taken, that is:
\be
\sum_{B\in \pi}\  \ \min_{\eta_B\in S_{\lvert B \rvert},\, K_\mathrm{p}(\btau_{\lvert_B}, \eta_B)=1} d(\btau_{\lvert_B}  , \eta_B ) = \min_{\eta \in H_{\btau,\pi}}\sum_{B\in \pi} d(\btau_{\lvert_B}  , \eta_{\lvert_B} ) =  \min_{\eta \in H_{\btau,\pi}} d(\btau  ; \eta )  \; .
\ee
Taking into account that asymptotically the Weingarten functions behave as:
\[
\prod_{c=1}^D W^{(N)} ( \sigma_c \tau_c^{-1}) =  N^{- n D - d(\bsig,\btau) }  \mathsf{M}(\bsig\btau^{-1}) (1+O(N^{-2})) \; ,
\]
we obtain the $N\rightarrow \infty$ asymptotic behaviour of the finite $N$ free cumulant:
\begin{align}
\label{eq:cum-finN-pure-conn-phi-first-asympt}
\mathcal{K}_{\bsig}[T, \bar T]  = \frac1{N^{nD}}\sum_{\btau \in S^D_{n}}  \sum_{ \pi\ge \Pi_\mathrm{p}(\btau)}  \left [ \varphi _{\pi, \btau} (t, \bar t)\,  \mathsf{M}(\bsig\btau^{-1})  N^{r(\bsig) - \Delta_\pi(\bsig; \btau)} +o\left(N^{r(\bsig) - \Delta_\pi(\bsig; \btau)}\right) \right]\;, 
\end{align}
where $\Delta_\pi(\bsig;\btau)=d(\bsig, \btau) + \min_{\eta \in H_{\btau,\pi}} d(\btau  , \eta ) - \min_{\eta_0\in S_n} d(\bsig, \eta_0)$. Now, for any $\eta\in S_n$, one has the $D$-fold triangular inequality:
\be
\label{eq:Dfold-triangular}
d(\bsig, \btau) + d(\btau  , \eta ) \ge  d(\bsig, \eta), 
\ee
with equality if and only if $\btau\eta^{-1}\preceq\bsig\eta^{-1}$. It follows that  
$\Delta_\pi(\bsig;\btau)\ge 0$ and at 
large $N$, the dominant contribution is given by $\Delta_\pi(\bsig;\btau) =  0$, and only the couples $(\pi,\btau)$ belonging to the set:
 \be
\bfS(\bsig) = \Bigl\{(\pi, \btau) \ \bigl\lvert \  \pi \ge \Pi_\mathrm{p}(\btau)\ \  \mathrm{and}\ \ d(\bsig, \btau) + \min_{\eta \in H_{\btau,\pi}} d(\btau  , \eta ) =\min_{\eta_0\in S_n} d(\bsig, \eta_0)\Bigr\} \;,
\ee
contribute to the cumulant. Note that this set is nonempty, $(\Pi_\mathrm{p}(\bsig), \bsig )\in \bfS(\bsig)$, as for $K_\mathrm{p}(\bsig)=1$ we have $H_{\bsig, \Pi_p(\bsig)}=S_n$.
 
\paragraph{Expressions in terms of degrees.}
In order to use our results on the degree of graphs, it is convenient to parametrize the set $\bfS(\bsig)$ using the degree $\bar \omega$ (Thm.~\ref{thm:second-degree}). For $\btau\in S_n^D$, $\pi \ge \Pi_\mathrm{p}(\bsig)$ and $\eta\in H_{\btau, \pi}$, as the restriction
of $\bsig, \eta$ to each block of $\pi$ is connected, we have $ \#(\pi) = K_p(\bsig,\eta)$ 
and:
\be
\label{eq:bar-omega-annex}
\bar \omega(\bsig;\eta) = D \#(\pi) - (D-1) K_\mathrm{p}(\bsig) - n  +  d(\bsig, \eta) \;. 
\ee
At the same time, we also have the following.
\begin{lemma}
\label{lem:introduce-C}
For any $\bsig, \btau \in S_n^D$ we have:
\be
\label{eq:C}
\mathcal{C}(\bsig ; \btau) = K_\mathrm{p}(\bsig, \btau) - K_\mathrm{p}(\btau) + d(\bsig, \btau) \ge 0 \; ,  \qquad \mathcal{C}(\bsig ; \bsig)=0 \;.
\ee
\end{lemma}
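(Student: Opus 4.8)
The statement to prove is Lemma~\ref{lem:introduce-C}, namely the inequality $\mathcal{C}(\bsig ; \btau) = K_\mathrm{p}(\bsig, \btau) - K_\mathrm{p}(\btau) + d(\bsig, \btau) \ge 0$ for all $\bsig, \btau \in S_n^D$, with equality when $\btau=\bsig$. The plan is to interpret the quantity $\mathcal{C}(\bsig;\btau)$ as counting a non-negative topological or combinatorial defect, built from the Euler-characteristic formula \eqref{eq:Euler-charact-bipartite-map} applied color by color. First I would record the vanishing case: for $\btau=\bsig$ we have $d(\bsig,\bsig)=\sum_c|\sigma_c\sigma_c^{-1}|=0$ and $K_\mathrm{p}(\bsig,\bsig)=K_\mathrm{p}(\bsig)$, since adjoining a second copy of each $\sigma_c$ adds no new pure-connectivity, so $\mathcal{C}(\bsig;\bsig)=0$ immediately.

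For the inequality itself, I would proceed by relating $K_\mathrm{p}(\bsig,\btau)$ to $K_\mathrm{p}(\btau)$ through the effect of adjoining the extra $D$ permutations $\sigma_1,\dots,\sigma_D$ to the collection defining $\btau$. The key is that $\Pi_\mathrm{p}(\bsig,\btau)$ is the join of $\Pi_\mathrm{p}(\btau)$ with the partitions $\Pi_\mathrm{p}(\sigma_c)$ seen together with the $\Pi_\mathrm{p}(\tau_c)$; each time we go from $\tau_c$ to the pair $(\sigma_c,\tau_c)$ on a given color, the number of pure connected components can only decrease, and by how much is controlled by the distance $|\sigma_c\tau_c^{-1}|$. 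Concretely, I expect the clean route to be an induction on $d(\bsig,\btau)=\sum_c|\sigma_c\tau_c^{-1}|$: when this distance is zero we have $\bsig=\btau$ and the result reduces to the equality case just handled. For the inductive step, one transposition reduces $d(\bsig,\btau)$ by one while changing $K_\mathrm{p}(\bsig,\btau)$ by at most one, so the sum $K_\mathrm{p}(\bsig,\btau)+d(\bsig,\btau)$ decreases by at most one move toward $K_\mathrm{p}(\btau)$; tracking the inequality along the geodesic from $\btau$ to $\bsig$ yields $\mathcal{C}(\bsig;\btau)\ge 0$.

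More precisely, I would factor the distance as a product of transpositions $\sigma_c\tau_c^{-1}=t_{k_c}\cdots t_1$ realizing a geodesic, and interpolate between $\btau$ and $\bsig$ one color and one transposition at a time, writing $\bsig^{(0)}=\btau,\ \bsig^{(1)},\dots,\bsig^{(m)}=\bsig$ with $m=d(\bsig,\btau)$ and $|{\bsig^{(i+1)}}(\bsig^{(i)})^{-1}|=1$ at each step. The elementary inequality to establish is that $K_\mathrm{p}(\bsig^{(i+1)},\btau)-K_\mathrm{p}(\bsig^{(i)},\btau)\ge -1$: multiplying one of the permutations by a transposition either merges two pure connected components into one (decrease by one) or splits one into two or leaves the count unchanged (increase or stay). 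This is precisely the standard fact that a single transposition changes the number of cycles of a permutation by $\pm 1$, lifted to the level of the join partition $\Pi_\mathrm{p}$. Summing these $m$ inequalities gives $K_\mathrm{p}(\bsig,\btau)\ge K_\mathrm{p}(\btau)-d(\bsig,\btau)$, which is exactly $\mathcal{C}(\bsig;\btau)\ge 0$.

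The main obstacle I anticipate is the bookkeeping at the level of the \emph{bipartite} pure-component partitions $\Pi_\mathrm{p}$ rather than ordinary cycle partitions: multiplying $\sigma_c$ by a transposition acts on the pure join $\Pi_\mathrm{p}(\bsig,\btau)$ in a way that is not simply the cycle count of a single permutation, since the components are determined jointly by all $D$ colors together with the $\btau$ edges. I would handle this by appealing to the graphical/topological description established in Section~\ref{sec:notations-prerequ}, where $(\bsig,\btau)$ is encoded as a $2D$-edge-colored bipartite graph (or equivalently via the one-to-one correspondence between $\Pi$ and $\Pi_\mathrm{p}$ noted there); a single transposition corresponds to a local surgery on one edge, which can change connectivity by at most one component, exactly as in the genus formula \eqref{eq:Euler-charact-bipartite-map1}. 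If a fully self-contained argument is wanted, the inequality $\mathcal{C}(\bsig;\btau)\ge 0$ can alternatively be derived directly from subadditivity of $K_\mathrm{p}$ under joins combined with the triangle-type bound $K_\mathrm{p}(\bsig,\btau)\ge K_\mathrm{p}(\btau)-d(\bsig,\btau)$, which is the cleanest formulation to check.
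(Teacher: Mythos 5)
Your proposal is correct, and both the equality case and the main inequality are handled soundly, but your mechanism differs from the paper's in a way worth noting. The paper works directly on the $2D$-colored graph $(\bsig,\btau)$ and \emph{deletes} the $\bsig$-edges in two stages: first one $\sigma_c$-edge per cycle of $\sigma_c\tau_c^{-1}$, which provably cannot disconnect anything because the remainder of that bicolored cycle supplies an alternative path between the two endpoints; this leaves exactly $\sum_c\lvert\sigma_c\tau_c^{-1}\rvert=d(\bsig,\btau)$ edges of $\bsig$, and deleting each of those raises the component count by at most one, landing on the graph of $\btau$. That gives $K_\mathrm{p}(\btau)\le K_\mathrm{p}(\bsig,\btau)+d(\bsig,\btau)$ in three lines with no case analysis. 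You instead interpolate from $\btau$ to $\bsig$ by $d(\bsig,\btau)$ transpositions and bound the change of $K_\mathrm{p}(\bsig^{(i)},\btau)$ per step. Your key claim --- that a single edge flip changes the number of connected components of the join by at most one --- is true, but it is \emph{not} simply the statement about cycle counts of a single permutation ``lifted'' to $\Pi_\mathrm{p}$: a flip removes two edges and adds two others, so a priori the count could move by two, and ruling that out requires a short case analysis on how the four affected vertices are distributed among components of the graph with both edges removed (the point being that the old edges already tie $a$ to $\sigma(a)$ and $b$ to $\sigma(b)$, leaving at most two components among the four vertices, so the new edges can perform at most one genuine merge). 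You correctly flag this as the delicate point; once that verification is supplied your argument closes, and summing the per-step inequalities gives exactly $\mathcal{C}(\bsig;\btau)\ge 0$. The deletion argument buys you freedom from that case analysis; your flip argument buys a statement (component count changes by at most one per transposition) that is reused implicitly elsewhere in the paper, e.g.\ in the discussion of $\btau\preceq\bsig$.
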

\proof Consider the $2D$ colored graph $(\bsig,\btau)$. Deleting an edge corresponding to $\sigma_c$ in each cycle of $\sigma_c\tau_c^{-1}$ does not disconnect the graph.
Deleting the remaining $d(\bsig,\btau)$ edges in $\bsig$ leads to the graph $\btau$, and the number of connected components cannot increase by more than $1$ for each of these edges.
\qed

\

The $D$-fold triangular inequality \eqref{eq:Dfold-triangular} is expressed equivalently for any $\bsig, \btau \in S_n^D$, $K_\mathrm{p}(\bsig)=1$, $\pi\ge \Pi_\mathrm{p}(\btau)$ and $\eta\in H_{\btau, \pi}$ as:
\be
\label{eq:D-fold-triangular-degree-purelyconnected}
\mathcal{C}(\bsig ; \btau) + D\bigl(K_\mathrm{p}(\btau) - \#(\pi) \bigr) +  \bar \omega (\btau  ; \eta )  \ge  \bar \omega (\bsig  ; \eta ) \;,  
\ee
hence the condition defining the set $\bfS(\bsig)$ becomes in terms of the degree:
\be
\label{eq:lo-generic-purely-conn-degree}
\mathcal{C}(\bsig ; \btau) + D\bigl(K_\mathrm{p}(\btau) - \#(\pi) \bigr) +  \min_{\eta \in H_{\btau,\pi}} \bar \omega (\btau  ; \eta )  =  \min_{\eta_0 \in S_n} \bar \omega (\bsig  ; \eta_0 ).
\ee

 \paragraph{The melonic case.} 
 If $\bsig$ is purely connected and melonic, $\omega(\bsig)=0$ and $K_\mathrm{p}(\bsig)=1$, then from Thm.~\ref{thm:second-degree}, for $D\ge 3$ there exists a unique $\eta_0 \in S_n$ such that $\bar \omega (\bsig  ; \eta_0 )=0$, and \eqref{eq:lo-generic-purely-conn-degree} simplifies to:
\be
\mathcal{C}(\bsig ; \btau) + D\bigl(K_\mathrm{p}(\btau) - \#(\pi) \bigr) +  \min_{\eta \in H_{\btau,\pi}} \bar \omega (\btau  ; \eta )  =  0 \;,
\ee
and since all the three terms are non-negative, this implies $\mathcal{C}(\bsig ; \btau)=0$,  $\pi=\Pi_\mathrm{p}(\btau)$, and   $\min_{\eta \in H_{\btau,\Pi_\mathrm{p}(\btau)}} \bar \omega (\btau  ; \eta )  =  0$. The last condition tells us that $\btau$ must be melonic, but not necessarily purely connected.

From \eqref{eq:D-fold-triangular-degree-purelyconnected}, for any $\btau$ with $\mathcal{C}(\bsig ; \btau)=0$ we have that for any $\eta \in H_{\btau,\Pi_\mathrm{p}(\btau)}$:
\be
 \bar \omega (\btau  ; \eta )  \ge  \bar \omega (\bsig  ; \eta ), 
\ee
with equality if and only if $\btau\eta^{-1}\preceq\bsig\eta^{-1}$. Let $\eta\in  H_{\btau,\Pi_\mathrm{p}(\btau)}$ be such that $\bar \omega (\btau  ; \eta )  =  0$: then $\bar \omega (\bsig  ; \eta )=0$, and as $\bsig$ is purely connected, $\eta$ is the canonical pairing of $\bsig$, and $\btau \eta^{-1}\preceq \bsig \eta^{-1}$. 

Reciprocally, if $\bsig$ is melonic purely connected and $\eta$ is its canonical pairing, and if $\btau$ is such that $\btau  \eta^{-1}\preceq \bsig \eta^{-1}$, then for $\pi=\Pi_\mathrm{p}(\btau)\vee\Pi_\mathrm{p}( \eta)$ one has $\eta \in H_{\btau, \pi}$ and from \eqref{eq:D-fold-triangular-degree-purelyconnected}:  
\be
\mathcal{C}(\bsig ; \btau) + D\bigl(K_\mathrm{p}(\btau) - \#(\pi) \bigr) +  \bar \omega (\btau  ;  \eta )  =  \bar \omega (\bsig  ;  \eta ) = 0 \;.  
\ee
It follows that $\pi=\Pi_\mathrm{p}(\btau)$, that is, $\Pi_\mathrm{p}(\btau)\ge\Pi_\mathrm{p}(\eta)$ or said otherwise, $\eta \in H_{\btau, \Pi_\mathrm{p}(\btau)}$, and on the other hand $\mathcal{C}(\bsig ; \btau)=\bar \omega (\btau  ; \eta )=0$, hence $\min_{\eta \in H_{\btau,\Pi_\mathrm{p}(\btau)}} \bar \omega (\btau  ; \eta )  =  0$ and in particular $\btau$ is melonic.

\

If $\omega(\bsig)=0$ and $K_\mathrm{p}(\bsig)=1$,  and if $\eta$ corresponds to the canonical pairing of $\bsig$, then the asymptotics of \eqref{eq:cum-finN-pure-conn-phi-first-asympt} is:
 \be
\kappa_{\bsig}(t, \bar t):= \lim_{N\rightarrow \infty} N^{nD-r(\bsig)}\mathcal{K}_{\bsig}[T, \bar T]  = \sum_{\substack{{\btau \in S^D_{n}}\\{\btau \eta^{-1}\preceq \bsig \eta^{-1} } } }   \varphi _{\Pi_\mathrm{p}(\btau), \btau } (t, \bar t)\,  \mathsf{M}(\bsig\btau^{-1}) \; ,  
\ee
where, as $\bsig$ is melonic $r(\bsig)=1$ (see Thm.~\ref{thm:melonic-gaussian}).

\subsection{Proof of Thm.~\ref{thm:free-cumulants-melonic}}
\label{sec:proof-free-cumulants-pure-melonic-multinv}

 The first assertion has been proven in Thm.~\ref{thm:limit-of-finite-cumulants}. The asymptotic additivity derives from \eqref{eq:additivity-pure} as well as the  fact that $(T_1+T_2, \bar T_1+\bar T_2)$ scales as \eqref{eq:scaling-hypothesis-pure}, which is itself a consequence of the multilinearity of the classical cumulants and of \eqref{eq:scaling-hypothesis-pure-multitensors}.

 \paragraph{Multiplicativity.} \eqref{eq:multiplicativty-melonic-free-cumulants} is the analogue of Prop.~10.21 in \cite{NicaSpeicher}: the mapping  $\bsig, \btau\in S_n \rightarrow \mathsf{M}(\bsig\btau^{-1})$ is multiplicative because $\sigma, \tau\in S_n \rightarrow \mathsf{M}(\sigma\tau^{-1})$ is. For $\bsig\in S_n^D$ with pure connected components $\bsig_1^\mathrm{p}, \ldots, \bsig_q^\mathrm{p}$, since $\Pi_\mathrm{p}(\eta)\le\Pi_\mathrm{p}(\bsig)$, letting $G_i = B_i\cup \bar B_i$ be the support of $\bsig_i^\mathrm{p}$:
\[
\kappa_{\bsig}  = \sum_{\substack{{\btau \in S^D_{n}}\\{\btau_{|_{B_i}} \preceq \bsig_{|_{B_i}} }}}   \varphi _{\Pi_\mathrm{p}(\btau), \btau} \,  \mathsf{M}(\bsig\btau^{-1}) 
 =  \sum_{\{\btau_{|_{B_i} } \preceq \bsig_i^\mathrm{p} \} }  \prod_{i=1}^q  \varphi_{ \Pi_\mathrm{p}(\btau_{|_{B_i} }), \btau_{|_{B_i}} } \,  \mathsf{M}(\bsig_i^\mathrm{p}\btau_{|_{B_i}}^{-1})
 = \prod_{i=1}^q \kappa_{\bsig_i^\mathrm{p}} \;.
\]

 \paragraph{Inversion.} If all the tensors are identical, the inversion is an example of M\"{o}bius inversion. The direct product of $D$ copies of the lattice of non crossing partitions $\mathrm{NC}(n)$ is the lattice $\left(\mathrm{NC}(n)\right)^{\times D}$ whose M\"{o}bius function is $ \mathsf{M}(\bnu)= \prod_{c=1}^D \mathsf{M}(\nu_c)$ (Prop.~10.14 of \cite{NicaSpeicher}), and we recall that $\mathrm{NC}(n)$ is isomorphic to the lattice of non crossing permutations on $\mu$, $S_{\mathrm{NC}}(\mu)$, for any cycle $\mu$ with $n$ elements. 
 
 Here for each $c$ we fix some $\mu_c$ with $n$ elements for which $\sigma_c\preceq \mu_c$, and we work in the lattice product $\bigtimes_{c=1}^D S_{\mathrm{NC}}(\mu_c)$, which contains $\bsig$. Any interval $[\btau,\bsig]$ where $\btau \preceq \bsig$ is itself a lattice (Remark 9.26 in \cite{NicaSpeicher}). In particular, letting $\bsig$ be melonic and with canonical pairing the identity, and $\mathbf{id}{_n}\in S_n^D$ be the $D$-tuple of permutations $(\mathrm{id}, \ldots, \mathrm{id})$, that is, the melonic invariant consisting in $n$ disjoint two-vertex graphs with canonical pairing the identity, the M\"{o}bius inversion (Proposition 10.6 in \cite{NicaSpeicher}) ensures that \eqref{eq:def-mult-ext-free-cumulants-melonic} can be inverted on the interval $[\mathbf{id}{_n}, \bsig]$, hence for any $\bnu \in [\mathbf{id}{_n}, \bsig]$:
\be
\kappa_{\Pi_\mathrm{p}(\bnu), \bnu} (\vec x, \vec{x'})= \sum_{\btau \preceq \bnu} \varphi_{\Pi_\mathrm{p}(\btau), \btau}(\vec x, \vec{x'}) \,\mathsf{M}(\bnu\btau^{-1}) \;, \qquad 
\varphi_{\Pi_\mathrm{p}(\bnu), \bnu} (\vec x, \vec{x'})= \sum_{\btau \preceq \bnu} \kappa_{\Pi_\mathrm{p}(\btau), \btau}(\vec x, \vec{x'}),  
\ee
and in particular this holds for $\bnu=\bsig$.

\subsection{Proof of Lemma.~\ref{lem:facto-mixed-on-pure} }
\label{sec:proof-facto-mixed-on-pure}
The only difference with the proof of Thm.~\ref{thm:limit-of-finite-cumulants} is that the exponent of $N$ for the terms in he sum is now expressed as 
 $$-nD+ K_\mathrm{p}(\bsig) -\bigl[\mathcal{C}(\bsig ; \btau) + D\bigl(K_\mathrm{p}(\btau) - \#(\pi) \bigr) +  \min_{\eta \in H_{\btau,\pi}} \bar \omega (\btau  ; \eta ) + (K_\mathrm{p}(\bsig) - K_\mathrm{p}(\bsig, \btau))\bigr],$$ where the quantity between bracket is still non-negative. For it to vanish, one has the additional condition  $K_\mathrm{p}(\bsig) = K_\mathrm{p}(\bsig, \btau)$, that is, $\Pi_\mathrm{p}(\btau)\le \Pi_\mathrm{p}(\bsig)$. The $D$-fold triangular inequality \eqref{eq:Dfold-triangular} is in this case seen to be equivalent to: 
 $$
\mathcal{C}(\bsig ; \btau) + D\bigl(K_\mathrm{p}(\btau) - \#(\pi) \bigr) + \bigl(K_\mathrm{p}(\bsig) -K_\mathrm{p}(\bsig, \btau) \bigr) + \bar \omega (\btau  ; \eta )  \ge  \bar \omega (\bsig  ; \eta ) + D\bigl(K_\mathrm{p}(\bsig) -K_\mathrm{p}(\bsig, \eta) \bigr).  
$$
One  then proceedes as in Sec.~\ref{sec:proof-free-cumulants-pure-melonic} to show that the set of $\btau$ such that  $\mathcal{C}(\bsig ; \btau) + D\bigl(K_\mathrm{p}(\btau) - \#(\pi) \bigr) +  \min_{\eta \in H_{\btau,\pi}} \bar \omega (\btau  ; \eta ) + (K_\mathrm{p}(\bsig) - K_\mathrm{p}(\bsig, \btau))=0$ coincides with the set of $\btau$ such that $\btau\eta^{-1}\preceq\bsig\eta^{-1}$, where $\eta$ is the canonical pairing of $\bsig$ (and $\btau$). One obtains the same limit as in Thm.~\ref{thm:limit-of-finite-cumulants}, and it factorizes over the pure connected components.

\subsection{Proof of Thm.~\ref{thm:sum-and-mixed-on-pure} }
\label{sec:sum-and-mixed-on-pure}

Even though $A=T_1\otimes \bar T_1 + T_2\otimes \bar T_2$ is mixed, we  express $\mathcal{K}^\mathrm{m}_{\bsig}[A] $ in terms of  pure quantities by multilinearity:
$$
 \mathcal{K}^\mathrm{m}_\bsig[A] = \sum_{\btau \in S^D_{n}}\ \sum_{\{A_i\in \{T_1\otimes \bar T_1, T_2\otimes \bar T_2\}\}}\ \bE\left[\Tr_{\btau} (A_1, \ldots, A_n)\right]\prod_{c=1}^D W^{(N)} ( \sigma_{ c} \tau_{c}^{-1})  \; .
$$
For each assignment of the $A_i$s, $\Tr_{\btau} (A_1, \ldots, A_n)$ is the same as a pure invariant $\Tr_{\btau} (\vec X, \vec {\bar X})$, where $\vec X$ (resp.~$\vec {\bar X}$) assigns some tensors $T_1$ or $T_2$ to the white (resp.~black) vertices of $\btau$, with the condition that the components $\bar X_i$ of $\vec {\bar X}$ and the components $X_i$ of $\vec X$ satisfy $\bar X_i = \overline{(X_i)}$.
We may therefore use the classical moment cumulant formula over the \emph{pure} connected components of $\btau$:
$$
 \mathcal{K}^\mathrm{m}_\bsig[A] = \sum_{\btau \in S^D_{n}}\ \sum_{\{\vec X\}}\sum_{\pi\ge \Pi_\mathrm{p}(\btau)} \Phi_{\pi, \btau} \bigl[\vec X, \vec{\bar X}\bigr]\prod_{c=1}^D W^{(N)} ( \sigma_{ c} \tau_{c}^{-1})  \; .
$$ 
Since by assumption the tensors satisfy \eqref{eq:scaling-hypothesis-pure-multitensors}, the proof of Thm.~\ref{thm:limit-of-finite-cumulants}  goes through and one obtains: 
 \be
 \label{eq:intermediate-in-proof-mixed-on-pure}
\kappa_{\bsig}^\mathrm{m}(a) = \lim_{N\rightarrow \infty} N^{nD-1} \mathcal{K}^\mathrm{m}_\bsig[A]  = \sum_{\{\vec x\}}\sum_{\substack{{\btau \in S^D_{n}}\\{\btau \preceq \bsig}}}   \varphi _{\Pi_\mathrm{p}(\btau), \btau} (\vec x, \vec{\bar x})\,  \mathsf{M}(\bsig\btau^{-1}) \;.
\ee
In particular, since the labeling is such that the canonical pairing is the identity, for all the terms in the sum the connected and purely connected components of $\btau$ coincide. For any $\bnu$ satisfying this property, one has: 
\be
\Phi_\bnu^\mathrm{m}[A] = \sum_{\{A_i\in \{T_1\otimes \bar T_1, T_2\otimes \bar T_2\}\}} \Phi_\bnu^\mathrm{m}[A_1, \ldots, A_n] = \sum_{\{\vec X\}} \Phi_\bnu[\vec X, \vec{\bar X}] = N^{r(\bnu)}\sum_{\vec x} \varphi_\bnu(\vec x, \vec{\bar x}) + o(N^{r(\bnu)}), 
\ee
where the assumption is necessary for the second equality and he third uses \eqref{eq:scaling-hypothesis-pure-multitensors}. For such  $\bnu$:
\be
\label{eq:varphim-mixedonpure}
\varphi^\mathrm{m}_{\bnu} (a)=\lim_{N\rightarrow \infty} \frac1{N^{r(\bnu)}} \Phi_\bnu^\mathrm{m}[A]  = \sum_{\vec x} \varphi_\bnu(\vec x, \vec{\bar x}), 
\ee
so that \eqref{eq:first-momcum-in-mixed-on-pure} is proven. The inversion \eqref{eq:second-momcum-in-mixed-on-pure} has already been proven (Sec.~\ref{sec:proof-free-cumulants-pure-melonic-multinv}). The fact that $\kappa^\mathrm{m}_{\bsig}(a) = \kappa_\bsig(t_1+t_2, \bar t_1+\bar t_2)$ comes from \eqref{eq:asympt-equality-cumulants-mixed-on-pure}, and $\varphi^\mathrm{m}_{\bsig}(a) = \varphi_\bsig(t_1+t_2, \bar t_1+\bar t_2)$ from the inverse relations. However: 
\be
\label{eq:varphip-mixedonpure}
\varphi_{\bnu} (t_1+t_2, \bar t_1+\bar t_2)=\lim_{N\rightarrow \infty} \frac1{N^{r(\bnu)}} \Phi_\bnu[T_1+T_2, \bar T_1+\bar T_2]  = \sum_{\vec x, \vec{x'}} \varphi_\bnu(\vec x, \vec{\bar x}), 
\ee
where $\vec x=(x_1,\ldots x_n)$, $x_i\in \{t_1, t_2\}$, $\vec{x'}=(x'_{\bar 1},\ldots x'_{\bar n})$, $x'_{\bar i}\in \{\bar t_1, \bar t_2\}$. 
For \eqref{eq:varphim-mixedonpure} and \eqref{eq:varphip-mixedonpure} to be equal,  \eqref{eq:non-canonical-vanish} must hold.

\subsection{Proof of Thm.~\ref{thm:limit-for-wishart-tensor-first-order}.}
\label{subsub:free-cumulants-wishart}

The proof follows the one in the pure case presented in
Section~\ref{sec:proof-free-cumulants-pure-melonic} by replacing $\btau$ with
$(\btau, \mathrm{id})$ and observing that the pure connected components of $(\btau, \mathrm{id})$ are one to one with the mixed connected components of $\btau$, thus:
$$
\mathcal{K}^\mathrm{m}_{\bsig}[A]  = \frac1{N^{nD}}\sum_{\btau \in S^D_{n}}  \sum_{ \pi\ge \Pi(\btau)}\Bigl(  \varphi^\mathrm{m} _{\pi, \btau} (a)\,  \mathsf{M}(\bsig\btau^{-1}) N^{ n  - d(\bsig, \btau) - \min_{\eta \in H_{ (\btau, \mathrm{id} ) ,\pi}} d( ( \btau,\mathrm{id} ) , \eta ) }+\ldots\Bigr), 
$$
where for $\pi \ge \Pi(\btau)$, 
$H_{( \btau,\mathrm{id} ),\pi}=\bigl\{\eta \in S_n \mid \Pi(\eta) \le \pi \quad \mathrm{and}\quad \forall B\in \pi,\ K_\mathrm{m}(\btau_{\lvert_B}, \eta_{\lvert_B})=1\bigr\}$. 

\paragraph{General asymptotics.}The $D$-fold triangular inequality \eqref{eq:Dfold-triangular} is now replaced with: 
\be
\label{eq:new-triangular-wishart}
d(\bsig, \btau) + d(( \btau,\mathrm{id} ) , \eta ) \ge  d\bigl(( \bsig,\mathrm{id} ) , \eta\bigr), 
\ee
with equality if and only if $( \btau,\mathrm{id} ) \eta^{-1}\preceq 
(\bsig,\mathrm{id} )\eta^{-1}$, that is $\btau \eta^{-1}\preceq \bsig\eta^{-1}$, and the dominant contributions satisfy:
 \be
\label{eq:general-formula-asymptotics-Wishart}
d(\bsig, \btau) + \min_{\eta \in H_{( \btau,\mathrm{id} ),\pi}} d\bigl( ( \btau,\mathrm{id} ), \eta \bigr) =\min_{\eta_0\in S_n} d\bigl(( \bsig,\mathrm{id} ), \eta_0\bigr) \; , 
\ee
which is saturated at least for $\btau=\bsig$, hence for a connected $\bsig$, $\mathcal{K}^\mathrm{m}_{\bsig}[A]$ scales as $N^{r_W(\bsig)-nD}$. 

\paragraph{First order.} Using the degree, $\bar \omega(( \btau,\mathrm{id} ), \eta)$,  \eqref{eq:bar-omega-annex} gives rise 
to a term $K_\mathrm{m}(\btau)$ instead of $K_\mathrm{p}(\btau)$, and applying Lemma~\ref{lem:introduce-C} to $( \bsig,\mathrm{id} )$ and $( \btau,\mathrm{id} )$ we get:
 \be
 \label{eq:introduce-tildeC}
\mathcal{C}_\mathrm{m}(\bsig, \btau) = d(\bsig, \btau) -K_\mathrm{m}(\btau) + K_\mathrm{m}(\bsig, \btau)\ge 0 \; , 
\ee
where for $\bsig$ connected $K_\mathrm{m}(\bsig, \btau)=1$, and \eqref{eq:new-triangular-wishart} is equivalent to: 
\be
\mathcal{C}_\mathrm{m}(\bsig ; \btau) + (D+1)\bigl(K_\mathrm{m}(\btau) - \#(\pi) \bigr) +  \bar \omega (( \btau,\mathrm{id} ) ; \eta )  \ge  \bar \omega (( \bsig,\mathrm{id} ) ; \eta ).  
\ee

The rest follows as in Sec.~\ref{sec:proof-free-cumulants-pure-melonic}, identifying for connected $\bsig$ such that $( \bsig,\mathrm{id} )$ is melonic, the set of $(\pi, \btau)$ satisfying \eqref{eq:general-formula-asymptotics-Wishart} as $\pi=\Pi(\btau)$ and $\btau \eta^{-1}\preceq \bsig\eta^{-1}$, where $\eta$ is the canonical pairing of $( \bsig,\mathrm{id} )$. The additivity follows from the additivity of the finite $N$ free cumulants.

\paragraph{Inversion.} The inversion is slightly more involved. We start from the multiplicative extension to any non-necessarily connected $\bsig$ with $(\bsig,\mathrm{id} )$ melonic with canonical pairing $\eta$:
\be
\kappa^\mathrm{m}_{\Pi(\bsig), \bsig} (\vec a)= \sum_{\btau\eta^{-1}\preceq \bsig\eta^{-1}} \varphi^\mathrm{m}_{\Pi(\btau), \btau}(\vec a) \mathsf{M}(\bsig\btau^{-1}) = \sum_{\substack{{\btau \in S^D_{n}}\\{( \btau,\mathrm{id} )\eta^{-1} \preceq ( \bsig,\mathrm{id} )\eta^{-1}}}}   \varphi^\mathrm{m} _{\Pi(\btau), \btau} (\vec a)\,  \mathsf{M}(\bsig\btau^{-1}),  
\ee
where $\Pi(\btau)=\Pi_\mathrm{p}( \btau,\mathrm{id} )$. Changing variables to  $\bsig'=\bsig\eta^{-1}$ and $\btau'=\btau\eta^{-1}$, we rewrite this as:
\be
\label{eq:partial-in-proof-inv-mixed}
\kappa^\mathrm{m}_{\Pi_\mathrm{p}\left( \bsig' ,\eta^{-1}\right), \bsig'\eta }(\vec a_\eta) = \sum_{\substack{{\btau' \in S^D_{n}}\\{ ( \btau', \eta^{-1} )\preceq\; ( \bsig',  \eta^{-1} ) } } }     
\varphi ^\mathrm{m}_{\Pi_\mathrm{p}(\btau', \eta^{-1}), \btau'\eta } (\vec a_\eta)\,  \mathsf{M}(\bsig'\btau'^{-1}),
\ee
where the notation $\vec a_\eta$ indicates that after the change of variable, the variable $a_s$ is now associated to the thick edge going from the white vertex $s$ to the black vertex 
$\overline{\eta^{-1}(s)}$. After this change of variable, which corresponds to a relabeling of the white  vertices to set the canonical pairing to the identity, one has $\bar \omega( (\bsig',\eta^{-1} ) ;\mathrm{id})=0$, 
$\Pi_\mathrm{p}(\bsig,\mathrm{id} )=\Pi_\mathrm{p}(\bsig'\eta, \mathrm{id})= \Pi_\mathrm{p}(\bsig', \eta^{-1})$,  and similarly for $\btau$. 
For $\tilde \bsig=( \bsig', \sigma'_{D+1} )\in S_n^{D+1}$ we denote:
\be
g(\tilde \bsig) = \kappa^\mathrm{m}_{\Pi_\mathrm{p}\left(\bsig' , \sigma_{D+1}'\right), \bsig' \sigma_{D+1}'^{-1} }(\vec a_{\sigma_{D+1}'})
\; ,\qquad  f(\tilde \bsig)= \varphi^\mathrm{m}_{\Pi_\mathrm{p}\left(\bsig' , \sigma_{D+1}'\right),\bsig' \sigma_{D+1}'^{-1}  }(\vec a_{\sigma_{D+1}'}).  
\ee 

We fix a connected $\bnu'$ with $\bar \omega( ( \bnu', \eta^{-1} ); \mathrm{id})=0$ and  consider the relations above for all the $\bsig'$ such that $\bsig'\preceq \bnu'$: the $\bsig'$ are not necessarily connected and satisfy $\bar \omega( ( \bsig', \eta^{-1}); \mathrm{id})=0$. We rewrite \eqref{eq:partial-in-proof-inv-mixed} for any $\tilde \bsig= ( \bsig' , \sigma_{D+1}' )\in \mathcal{L}=[ ( \mathbf{id}_n , \eta^{-1} ) , 
( \bnu' ,  \eta^{-1} ) ]$  as:
\be
g(\tilde \bsig) = \sum_{\substack{{\tilde \btau \in \mathcal{L}}\\[+0.5ex]{ \tilde \btau\preceq \tilde \bsig}}}  f(\tilde \btau)\,  \mathsf{M}(\tilde\bsig\tilde \btau^{-1}) \; ,
\ee
which holds as in the lattice interval $\mathcal L$  we have $\sigma_{D+1}' = \eta^{-1}$ and $\tau_{D+1}'=\eta^{-1}$.
Note that $\mathcal{L}$ is a sublattice of $\bigtimes_{c=1}^{D} S_{\mathrm{NC}}( \nu'_c)
\times S_{\mathrm{NC}}( \eta^{-1} ) $, and we can invert this formula (Proposition 10.6 in \cite{NicaSpeicher}) to obtain for all
 $\tilde \bsig= ( \bsig' , \sigma_{D+1}' )\in \mathcal{L}=[ ( \mathbf{id}_n , \eta^{-1} ) , 
( \bnu' ,  \eta^{-1} ) ]$:
\be
 f(\tilde \bsig) = \sum_{\substack{{\tilde \btau \in \mathcal{L}}\\[+0.5ex]{ \tilde \btau\preceq \tilde \bsig}}}  g(\tilde \btau) \; , 
\ee
and again $ (\btau', \tau'_{D+1} ) \in \mathcal{L}$ forces $\tau_{D+1}'=\eta^{-1}$. Retracing our steps we conclude.

\subsection{Proof of Prop.~\ref{prop:mixed-persp-on-pure}}
  \label{sub:proof-mixed-persp-on-pure}

Starting from the free moment-cumulant formula for the pure case:
\be
\kappa_{\Pi_\mathrm{p} ( \bsig,\mathrm{id} ),  ( \bsig, \mathrm{id} ) }(t,\bar t)=  \sum_{\substack{{ ( \btau , \tau_{D+1} ) \in S^{D+1}_{n}}\\{( \btau, \tau_{D+1} )\eta^{-1} \preceq\; ( \bsig,\mathrm{id} )\eta^{-1}}}}     \varphi_{\Pi_\mathrm{p}( \btau, \tau_{D+1}), (\btau, \tau_{D+1}) } (t, \bar t) \,  \mathsf{M}(\bsig\btau^{-1})\mathsf{M}(\tau_{D+1}),  
\ee
we use the invariance $ \varphi_{\Pi_\mathrm{p}(\btau, \tau_{D+1}), ( \btau, \tau_{D+1} ) }(t, \bar t) = \varphi_{\Pi_\mathrm{p}( \btau\tau_{D+1}^{-1},  \mathrm{id}), ( \btau\tau_{D+1}^{-1} ,  \mathrm{id} ) }(t, \bar  t) $ and observe that:
\be
\varphi_{\Pi_\mathrm{p}(\btau\tau_{D+1}^{-1}, \mathrm{id}), (\btau\tau_{D+1}^{-1} ,  \mathrm{id}) }(t, \bar t)  = \varphi^\mathrm{m}_{\Pi(\btau\tau_{D+1}^{-1}),\btau\tau_{D+1}^{-1}}(a).
\ee
The invariant $\btau'=\btau\tau_{D+1}^{-1}$ is such that $ ( \btau' , \mathrm{id})$ is melonic, with canonical pairing $\eta\tau_{D+1}^{-1}$, so that:
\be
\varphi^\mathrm{m}_{\Pi(\btau'),\btau'}(a) =\sum_{\substack{{ \bnu' \tau_{D+1}\eta^{-1}\preceq\btau'\tau_{D+1}\eta^{-1}}}} \kappa^\mathrm{m}_{\Pi\left(\bnu'\right), \bnu' }( a). 
\ee
Changing back to $\btau' \tau_{D+1} =\btau$ and $\bnu' \tau_{D+1} =\bnu$ and putting the two relations together, we have:
\[
\kappa_{\Pi_\mathrm{p} ( \bsig,\mathrm{id} ), ( \bsig, \mathrm{id}) }(t,\bar t)=  \sum_{\substack{{ ( \btau , \tau_{D+1} ) \in S^{D+1}_{n}}\\{(\btau , \tau_{D+1})\eta^{-1} \preceq\; ( \bsig,\mathrm{id} )\eta^{-1}}}}      \mathsf{M}(\bsig\btau^{-1})\mathsf{M}(\tau_{D+1}) \sum_{\substack{{ \bnu\eta^{-1}\preceq\btau\eta^{-1}}}} \kappa^\mathrm{m}_{\Pi\left(\bnu \tau_{D+1}^{-1}\right), \bnu \tau_{D+1}^{-1} }( a) \; . 
\]
One has to be careful that for $\btau'$ and $\bnu'$, the identity is the permutation encoding the thick edges and the canonical pairing is $\eta\tau_{D+1}^{-1}$, but after the change of labeling the permutation encoding the thick edges is $\tau_{D+1}$, and the canonical pairing is $\eta$.
As $\kappa^{\mathrm{m}}_{\Pi(\bnu \tau_{D+1}^{-1}), \bnu \tau_{D+1}^{-1} }( a)$, factors over the connected components of $\Pi(\bnu \tau_{D+1}^{-1})=\Pi_\mathrm{p}(\bnu \tau_{D+1}^{-1} , \mathrm{id})=\Pi_\mathrm{p}(\bnu, \tau_{D+1})$, exchanging the sums over $\btau$ and $\bnu$, we get:
\[
\begin{split}
\kappa_{\Pi_\mathrm{p}( \bsig,\mathrm{id} ), ( \bsig, \mathrm{id} ) }(t,\bar t)& =  \sum_{\tau_{D+1}\eta^{-1}\preceq \eta^{-1}}\mathsf{M}(\tau_{D+1})\sum_{\bnu\eta^{-1}\preceq\bsig\eta^{-1}} \kappa^{\mathrm{m}}_{\Pi\left(\bnu \tau_{D+1}^{-1}\right), \bnu \tau_{D+1}^{-1} }( a) \crcr
& \qquad \times 
\sum_{\substack{{\btau \in S^{D}_{n}}\\{\bnu\eta^{-1}\preceq\btau\eta^{-1} \preceq \bsig\eta^{-1}}}}      \mathsf{M}(\bsig\btau^{-1}) \; ,
\end{split}
\]
and the last sum is $\delta(\bnu,\bsig)$, so that setting $\nu=\tau_{D+1}$, we get:
\be
\kappa_{\Pi_\mathrm{p} ( \bsig,\mathrm{id} ), (\bsig, \mathrm{id} ) }(t,\bar t)= \sum_{ \substack{
{\nu \in S_n} \\
{ \nu\eta^{-1}\preceq \eta^{-1}} } }\kappa^\mathrm{m}_{\Pi\left(\bsig \nu^{-1}\right), \bsig \nu^{-1} }( a)\; \mathsf{M}(\nu)  \; . 
\ee

Reciprocally, starting from the mixed moments in terms of the pure ones which we express in terms of pure cumulants, and substituting into the mixed cumulant moment formula, we get: 
\be
\kappa^\mathrm{m}_{\Pi\left(\bsig\right), \bsig }(a)= \sum_{\substack{{\btau \in S^D_{n}}\\{\btau\eta^{-1} \preceq \bsig\eta^{-1}}}}   \,  \mathsf{M}(\bsig\btau^{-1})\sum_{\substack{{ ( \bnu , \nu_{D+1} ) \in S^{D+1}_{n}}\\{(\bnu ,  \nu_{D+1}) \eta^{-1}\preceq ( \btau,\mathrm{id} )\eta^{-1}}}} \kappa_{\Pi_\mathrm{p}(\bnu,  \nu_{D+1} ), (\bnu,  \nu_{D+1} )}(t,\bar t) \;,
\ee
and exchanging the summations and summing the M\"oebius functions we find:
\be
\kappa^{\mathrm{m}}_{\Pi\left(\bsig\right), \bsig }(a)= \sum_{\substack{{\nu\in S_{n}}\\{ \nu \eta^{-1}\preceq \eta^{-1}}}} \kappa_{\Pi_\mathrm{p} (\bsig ,  \nu ), ( \bsig,  \nu ) }(t,\bar t) \;.
\ee

\section{Proofs of Sec.~\ref{sec:tensor-freeness}}

\subsection{Proofs of Sec.~\ref{sec:free-cumulants-paired}}
\label{sub:proof-of-sec-cum-paired}

\paragraph{Proof of Prop.~\ref{prop:cumul-of-1}.}We directly generalize the proof by induction of \cite{NicaSpeicher}. If $q=2$, the connected melonic graphs of two paired tensors are such that the two thick edges form a square with e.g. the edges of color 1, while each one of the other colored edges links the same two vertices as one of the thick edges. For such a graph, one has if one of the tensors is the identity on the appropriate space:
$\varkappa_{\mathsf{g}} (h_1, 1_{\mathbf{id}_1})= \phi_{\mathsf{g}} (h_1, 1_{\mathcal{D}_2}) -  \phi_{\mathbf{id}_1} (h_1) \phi_{\mathbf{id}_1} (1_{\mathcal{D}_2}) $, 
since $\mathsf{M}((1)(2))=1$ and $\mathsf{M}((12))=-1$. Using \eqref{eq:varphi-of-one} as well as \eqref{eq:varphi-of-one-and-contraction} and since after removing any of the thick edges (Fig.~\ref{fig:thick-edge-removal}), the invariant is $\mathbf{id}_1$, we have $\varkappa_{\mathsf{g}} (h_1, 1_{\mathcal{D}_2})= \phi_{\mathbf{id}_1} (h_1) -  \phi_{\mathbf{id}_1} (h_1)=0. $

We now assume $q>2$. Up to a relabeling, we may assume that $h_q=1_{\mathcal{D}_q}$.  From \eqref{eq:paired-free-moment-cumulant}:
\be
\phi_{\mathsf{g}}(h_1, \ldots, h_{q-1}, 1_{\mathcal{D}_q}) =  \varkappa_{\mathsf{g}} (h_1, \ldots, h_{q-1}, 1_{\mathcal{D}_q})  + \sum_{\substack{{\mathsf{h} \preceq \mathsf{g}}\;\mid\;{\mathsf{h} \neq \mathsf{g}}}} \varkappa_{\Pi(\mathsf{h}), \mathsf{h}}(\vec h)
\ee
From \eqref{eq:varphi-of-one-and-contraction}, the left-hand-side is $\phi_{\Pi(\mathsf{g}_{\setminus q})\mathsf{g}_{\setminus q}}(h_1, \ldots, h_{q-1})$. For the rightmost term, the condition $\mathsf{h} \neq \mathsf{g}$ forces $\mathsf{h}$ to have more than one connected component, and  from the induction hypothesis, $h_q$ must be alone in a component $\mathbf{id}_1$, whose contribution is $\varkappa_{\mathbf{id}_1}(1_{\mathcal{D}_q})=\phi_{\mathbf{id}_1}(1_{\mathcal{D}_q})=1$. This means that this sum can be rewritten as a sum over $\mathsf{h'} \preceq \mathsf{g}_{\setminus q}$ of  $\varkappa_{\Pi(\mathsf{h'}), \mathsf{h'}}(\vec p)$, that is, $\phi_{\Pi(\mathsf{g}_{\setminus q})\mathsf{g}_{\setminus q}}$ (since $\mathsf{g}_{\setminus q}$ might not be connected, we obtain the multiplicative extension of \eqref{eq:paired-free-moment-cumulant}). To summarize:
$$
\phi_{\Pi(\mathsf{g}_{\setminus q})\mathsf{g}_{\setminus q}}(h_1, \ldots, h_{q-1}) = \varkappa_{\mathsf{g}} (h_1, \ldots, h_{q-1}, 1_{\mathcal{D}_q})  + \phi_{\Pi(\mathsf{g}_{\setminus q})\mathsf{g}_{\setminus q}}(h_1, \ldots, h_{q-1}), 
$$
so that $\varkappa_{\mathsf{g}} (h_1, \ldots, h_{q-1}, 1)=0$.  \qed

\paragraph{Proof of Prop.~\ref{prop:cumul-of-paired-vs-non-paired}.}Let us recall the notations of the proposition:   $\mathsf{g}$ is a connected melonic graph of  $q$ paired tensors $H_1, \ldots H_q$, $H_\ell \in  \mathcal{G}^\mathrm{m}_{D}[S_\ell]$, where $S_\ell\subset \{A,B\ldots\}$, or $H_\ell \in \mathcal{G}^\mathrm{p}_{D}[\Theta_\ell, \bar \Theta_\ell]$, where $\Theta_\ell\subset \{T_a,T_b\ldots\}$ and  $\bar \Theta_\ell\subset \{\bar T_a,\bar T_b\ldots\}$,  $E$ is a set of edges of $\mathsf{g}$, the $P_\jmath$  are obtained as connected components of $\Tr_{\mathsf{g}_{\setminus E}}(\vec H)$, 
the $\mathsf{h}_\jmath$ with support $R_\jmath\subset\{1, \ldots q\}$ such that 
$\Tr_{\mathsf{h}_\jmath}(\vec H_\jmath) =  \Tr(P_\jmath)$ \eqref{eq:trace-of-one-paired-vs-trace-invariant}, and which constitute the connected components of $\mathsf{h}$, which is a melonic invariant of $\vec H$, and finally, $\mathsf{k}$ is such that  $\Tr_\mathsf{g}(\vec H) = \Tr_{\mathsf{k}}(\vec P)$  \eqref{eq:trace-of-n-paired-vs-trace-invariant}. 
One develops using \eqref{eq:paired-free-cumulant-moment}:
$$
\varkappa_{\mathsf{k}} (\vec p)= \sum_{\mathsf{k}' \preceq \mathsf{k}} \phi_{\Pi(\mathsf{k'}), \mathsf{k'}}(\vec p) \; \mathsf{M}(\mathsf{k'}, \mathsf{k}).
$$
For each $\mathsf{k'}$, from \eqref{eq:paired-invariant-vs-usual-invariant} we may gather the colored edges of $\mathsf{k'}$ together with the colored edges that link the $H_\ell$ inside each  $P_\jmath$, to form a colored graph $\mathsf{h'}$ of $H_1, \ldots H_q$. From \eqref{eq:asymptotic-relation-grouping}, one has $ \phi_{\Pi(\mathsf{k'}), \mathsf{k'}}(\vec p) =  \phi_{\Pi(\mathsf{h'}), \mathsf{h'}}(\vec h)$.

The summation over $\mathsf{k'} \preceq \mathsf{k}$ can be exchanged for a summation over  $\mathsf{h'} \preceq \mathsf{g}$, but with the condition that the edges of color in $1, \ldots D$ that are  \emph{internal} to the $P_\jmath$ are the same in $\mathsf{h'}$ and $\mathsf{g}$, or said otherwise, $\mathsf{h} \preceq \mathsf{h'}$, where we have recalled the definition of $\mathsf{h}$ at the beginning of the proof. This is because $\mathsf{h}$  is the smallest of the $\mathsf{h'} \preceq \mathsf{g}$ leaving these edges unchanged.
We will have shown that: 
\be
\label{eq:restricted-sum-group-freecum}
\varkappa_{\mathsf{k}} (\vec p)= \sum_{\mathsf{h} \preceq \mathsf{h'} \preceq \mathsf{g}} \phi_{\Pi(\mathsf{h'}), \mathsf{h'}}(\vec h) \; \mathsf{M}(\mathsf{h'},\mathsf{g}), 
\ee
if we justify that $\mathsf{M}(\mathsf{h'},\mathsf{g})=\mathsf{M}(\mathsf{k'},\mathsf{k})$. We let $\{\gamma_{c,b}\}$ be the cycles encoding $\mathsf{k}$, $\{\eta_{c,b}\}$ be the non-crossing permutations of the $\{\gamma_{c,b}\}$  whose cycles encode $\mathsf{k'}$, and similarly, 
$\{\epsilon_{c,b}\}$ the cycles encoding $\mathsf{g}$, and $\{\zeta_{c,b}\}$ the non-crossing permutations of the $\{\epsilon_{c,b}\}$  whose cycles encode $\mathsf{h'}$. We need to show that 
\be
\label{eq:moebius-is-moebius}
\prod_{c,b} \mathsf{M}(\eta_{c,b}\gamma_{c,b}^{-1})=\prod_{c,b} \mathsf{M}(\zeta_{c,b}\epsilon_{c,b}^{-1}). 
\ee

Each cycle $\epsilon_{c,b}$ entirely included in one of the $P_\jmath$ is left unchanged: $\epsilon_{c,b}=\zeta_{c,b}$,  contributing with a factor 1  to the right-hand side, so for these cycles there is nothing to prove. 
We now consider a cycle of $\gamma_{c,b}$ which is completed into a cycle of $\epsilon_{c,b}$ by the portions that are internal to some of the $\{P_\jmath\}$ and alternate edges of color $c$ and thick edges.
For each thick edge labeled $i$ for which the edge of color $c$ of $\mathsf{g}$ which is connected to its  white vertex (the input of the corresponding paired tensor) is internal to one of the  $\{P_\jmath\}$, since this edge of color $c$ is in both $\mathsf{g}$ and $\mathsf{h'}$, applying $\epsilon_{c,b}^{-1}$ and then $\zeta_{c,b}$, one returns to $i$, so the contribution to $\zeta_{c,b}\epsilon_{c,b}^{-1}$ is the singlet $(i)$.
  For the thick edges for which the edge of color $c$ connected to its  white vertex is an edge of $\mathsf{k}$ (not internal to one of the $\{P_\jmath\}$), then the cycles $\{\zeta_{c,b}\epsilon_{c,b}^{-1}\}_b$ coincide exactly with  the cycles of $\{\eta_{c,b}\gamma_{c,b}^{-1}\}_b$. Since the singlets contribute as factors 1 to the  $\mathsf{M}$ (see the definition \eqref{eq:Moebius-on-NC}), \eqref{eq:moebius-is-moebius} indeed holds.

If we also denote $\{\theta_{c,b}\}$ the non-crossing permutations of the $\{\epsilon_{c,b}\}$  whose cycles encode $\mathsf{h}$, the condition in the sum \eqref{eq:restricted-sum-group-freecum} can be expressed as summations for all  $c,b$ over $\zeta_{c,b}$ satisfying $\theta_{c,b}\preceq \zeta_{c,b}\preceq\epsilon_{c,b}$. Recalling  that each $\epsilon_{c,b}$ is a single cycle, the direct product $\bigtimes_{c,b} S_\mathrm{NC}(\epsilon_{c,b})$ is a lattice. We therefore want to apply the partial inversion formula which can be found in e.g.~Prop.~10.11 of \cite{NicaSpeicher}. Indeed, for any $\{\vartheta_{c,b}\}_{c,b}$ in $\bigtimes_{c,b} S_\mathrm{NC}(\epsilon_{c,b})$, $\mathsf{h''}$ encoded by the cycles of $\vartheta_{c,b}$ is a (not necessarily connected) melonic graph of paired tensors $\vec h$  and so taking the product of \eqref{eq:paired-free-moment-cumulant} for the connected components of $\mathsf{h''}$, one has that  $ \phi_{\Pi(\mathsf{h''}), \mathsf{h''}}(\vec h)= \sum_{\mathsf{h'}\preceq \mathsf{h''}}\varkappa_{\Pi(\mathsf{h'}), \mathsf{h'}}(\vec h)$, where by definition the sum is to be understood as a sum over  $\zeta_{c,b}$ whose cycles encode $\mathsf{h'}$. 
One can therefore apply the aforementioned proposition, obtaining:
\be
 \sum_{\mathsf{h} \preceq \mathsf{h'} \preceq \mathsf{g}} \phi_{\Pi(\mathsf{h'}), \mathsf{h'}}(\vec h) \; \mathsf{M}(\mathsf{h'},\mathsf{g}) =  \sum_{\substack{{\mathsf{h}' \preceq \mathsf{g}}\\{\Pi(\mathsf{h'})\vee\Pi(\mathsf{h}) = 1_q}}} \varkappa_{\Pi(\mathsf{h}'),\mathsf{h}'} (\vec h) .
\ee
This concludes the proof. \qed

\subsection{Proofs of Sec.~\ref{sub:different-formulations-of-asymptotic-tensor-freeness}
}

\subsubsection{Proof of Thm.~\ref{thm:equiv-tensor-freeness-cumulants-moments-mixed}
}
\label{sub:proof-equiv-tensor-freeness-cumulants-moments-mixed}

We first show that points 1 and 2 are equivalent:
\begin{lemma}
\label{lem:mixed-algebra-formulation-cumulants}
The two following conditions are equivalent:
\begin{enumerate}[label=(\alph{enumi})]
 \item For any $n\ge 2$, any $\bsig\in S_n^D$ connected and with $\omega(\bsig, \mathrm{id})=0$, and any $\vec m=(m_1, \ldots, m_n)\in \{a,b\ldots\}^n$, $\kappa^\mathrm{m}_\bsig(\vec m)=0$ whenever there exists  $1\le i<j\le n$ such that $m_i\neq m_j$.  
 
\item The two following conditions are satisfied:\begin{itemize}

\item For any $n\ge 2$, any $\bsig\in S_n^D$ connected such that $(\bsig, \mathrm{id})$ melonic with canonical pairing $\eta\neq \mathrm{id}$, and any $\vec m=(m_1, \ldots, m_n)\in \{a,b\ldots\}^n$, $\kappa^\mathrm{m}_\bsig(\vec m)=0$ whenever there exists $i\in \{1,\ldots n\}$ such that $m_i\neq m_{\eta(i)}$. 

\item For any $q\ge 2$, any paired tensors $H_1, \ldots H_q$ such that $\forall\; 1\le \ell \le q$, $H_\ell\in  \mathcal{G}^\mathrm{m}_{D}[Q_\ell]$ with $Q_\ell\in \{A,B\ldots\}$, and any connected melonic graph $\mathsf{g}$  of $H_1, \ldots H_q$, $\varkappa_\mathsf{g}(\vec h)=0$ whenever there exists $1\le \ell<\ell'\le q$ such that $Q_\ell\neq Q_{\ell'}$. 
\end{itemize}
\end{enumerate}
\end{lemma}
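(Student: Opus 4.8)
The plan is to prove the equivalence of conditions (a) and (b) by reorganizing the sum that defines $\kappa^\mathrm{m}_\bsig(\vec m)$ into two pieces and showing that the vanishing of mixed cumulants over all first-order invariants is logically equivalent to the conjunction of the two displayed sub-conditions. The key structural fact is the classification of first-order invariants in the mixed (Wishart-like) case from Sec.~\ref{sub:true-Wishart-tensor}: a connected $\bsig$ with $\omega(\bsig,\mathrm{id})=0$ has $(\bsig,\mathrm{id})$ melonic, and its canonical pairing $\eta$ either coincides with the identity (in which case $\bsig$ is purely connected) or differs from it (in which case $\bsig$ is connected but not purely connected, and the cycles alternating thick edges and canonical pairs that involve more than one thick edge are separating). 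This dichotomy is exactly what distinguishes the two sub-conditions in (b): the first sub-condition addresses the case $\eta\neq\mathrm{id}$ directly, while the second, through the grouping machinery of Sec.~\ref{sub:grouping} and the cumulant relation \eqref{eq:kappa-paired-vs-not-for-p-particular-melo}, repackages the $\eta=\mathrm{id}$ constraint in terms of paired tensors generated by a single regular tensor.

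First I would establish the direction (a)$\Rightarrow$(b). The first sub-condition of (b) is an immediate special case of (a): if $m_i\neq m_{\eta(i)}$ for some $i$, then in particular there exist indices carrying different generators, so $\kappa^\mathrm{m}_\bsig(\vec m)=0$ by (a). For the second sub-condition, I would use the relation \eqref{eq:kappa-paired-vs-not-for-p-particular-melo}, namely that for a connected melonic graph $\mathsf{g}$ of paired tensors $\vec H$ of the special form \eqref{eq:paired-tensors-for-general-melo}, one has $\varkappa_\mathsf{g}(\vec h)=\kappa^\mathrm{m}_\bsig(\vec m)$ where $\bsig\in S_n^D$ is connected with $(\bsig,\mathrm{id})$ melonic and $\btau=(\eta,\ldots,\eta)$ is the canonical pairing data. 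If the $H_\ell$ are generated by tensors $Q_\ell$ with $Q_\ell\neq Q_{\ell'}$ for some $\ell\neq\ell'$, then the corresponding $\vec m$ must contain two distinct generators, so $\kappa^\mathrm{m}_\bsig(\vec m)=0$ by (a), giving $\varkappa_\mathsf{g}(\vec h)=0$.

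The converse (b)$\Rightarrow$(a) is where the real content lies. Given a connected $\bsig$ with $\omega(\bsig,\mathrm{id})=0$ and some $\vec m$ with $m_i\neq m_j$ for some $i,j$, I would argue by cases on the canonical pairing $\eta$ of $(\bsig,\mathrm{id})$. If $\eta\neq\mathrm{id}$, I would first check whether there is an index $i$ with $m_i\neq m_{\eta(i)}$; if so, the first sub-condition of (b) gives the result directly. The delicate point is when $m_i=m_{\eta(i)}$ for all $i$ but the $\vec m$ still involves more than one generator: here I would regard $\bsig$ as a melonic graph of paired tensors via the grouping procedure, bundling along the canonical-pair structure so that each maximal block of identical generators joined through separating cycles forms a paired tensor $H_\ell\in\mathcal{G}^\mathrm{m}_D[Q_\ell]$, and then invoke \eqref{eq:kappa-paired-vs-not-for-p-particular-melo} together with the second sub-condition of (b) to conclude $\kappa^\mathrm{m}_\bsig(\vec m)=\varkappa_\mathsf{g}(\vec h)=0$. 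If $\eta=\mathrm{id}$, so $\bsig$ is purely connected, then $m_i=m_{\eta(i)}$ is vacuous and the grouping/paired-tensor argument applies in the same way.

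The main obstacle I anticipate is bookkeeping the correspondence between the constraint ``$m_i\neq m_j$ for some $i,j$'' at the level of regular tensors and the constraint ``$Q_\ell\neq Q_{\ell'}$ for some $\ell,\ell'$'' at the level of the grouped paired tensors: I must verify that the grouping induced by identical generators connected through non-separating structure produces paired tensors each genuinely generated by a \emph{single} regular tensor, so that two distinct generators in $\vec m$ necessarily yield two paired tensors with $Q_\ell\neq Q_{\ell'}$. This hinges on checking that the cycles along which one groups are precisely the separating ones (so that within a group the generator is forced to be constant when $m_i=m_{\eta(i)}$), which is exactly the characterization of first-order mixed invariants recalled above and the compatibility of grouping with \eqref{eq:kappa-paired-vs-not-for-p-particular-melo}. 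Once this correspondence is pinned down, the equivalence follows formally from the cumulant identities already established.
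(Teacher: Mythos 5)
Your \emph{(b)}$\Rightarrow$\emph{(a)} direction matches the paper's argument: reduce via the first sub-condition to the case $m_i=m_{\eta(i)}$ for all $i$, group the thick edges along the cycles alternating canonical pairs and thick edges (along which the generator is then constant) into paired tensors of the form \eqref{eq:paired-tensors-for-general-melo}, and conclude with \eqref{eq:kappa-paired-vs-not-for-p-particular-melo} and the second sub-condition. One caveat: you should group one paired tensor per cycle of the canonical pairing, not per ``maximal block of identical generators'', since \eqref{eq:kappa-paired-vs-not-for-p-particular-melo} is only established when each $P^\ell$ is a single cycle of the form \eqref{eq:paired-tensors-for-general-melo}.

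The gap is in \emph{(a)}$\Rightarrow$\emph{(b)}. The second sub-condition of (b) quantifies over \emph{arbitrary} paired tensors $H_\ell\in\mathcal{G}^\mathrm{m}_{D}[Q_\ell]$ — e.g.\ those obtained by splitting open edges of a large first-order invariant built from many copies of $Q_\ell$ — whereas the identity $\varkappa_\mathsf{g}(\vec h)=\kappa^\mathrm{m}_\bsig(\vec m)$ that you invoke, namely \eqref{eq:kappa-paired-vs-not-for-p-particular-melo}, holds only in the particular situation where each $P_\ell$ has the form \eqref{eq:paired-tensors-for-general-melo}, i.e.\ where the associated $\btau$ coincides with $(\eta,\ldots,\eta)$. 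For general $H_\ell$ the correct statement is \eqref{eq:kappa-paired-vs-not-for-h} (a consequence of Prop.~\ref{prop:cumul-of-paired-vs-non-paired}): $\varkappa_\mathsf{g}(\vec h)$ is a \emph{sum} of mixed free cumulants $\kappa^\mathrm{m}_{\Pi(\brho),\brho}(\vec m)$ over $\brho$ with $\brho\eta^{-1}\preceq\bsig\eta^{-1}$ subject to the connectivity constraint $\Pi(\brho\eta^{-1})\vee\Pi(\btau\eta^{-1})=1_n$. To conclude you must argue that, since each connected component of $\btau$ carries a single generator while $\vec m$ contains two distinct ones, the connectivity constraint forces every contributing $\brho$ to have a connected component mixing two distinct generators, whence each term vanishes by (a). Without this step your argument only establishes the second sub-condition of (b) for the special paired tensors \eqref{eq:paired-tensors-for-general-melo}, which is strictly weaker than what (b) asserts and is not enough to close the equivalence.
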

\proof - {\it From (b) to (a):}
By the first statement of \emph{(b)}, we only need to prove  \emph{(a)} for $\vec m$ such that for every $1\le i\le n$, $m_i=m_{\eta(i)}$. Consider $\vec m$ satisfying this property, and $\bsig\in S_n^D$ connected and with $\omega(\bsig, \mathrm{id})=0$. In the colored graph of $\bsig$, all the cycles alternating canonical pairs and thick edges correspond to the same element of $\{a,b\ldots\}$. Label each one of these cycles with $\ell\in\{1, \ldots q\}$ and for each cycle form the paired tensor  \eqref{eq:paired-tensors-for-general-melo} with $k_\ell$ the number of thick edges in the cycle number $\ell$, and $M_1^\ell=\ldots=M_{k_\ell}^\ell$ all equal to the element of $\{A,B\ldots\}$ represented by each thick edge in the cycle. Applying \emph{(b)} to the connected melonic graph $\mathsf{k}$  obtained grouping the tensors to form $H_1, \ldots H_q$, we have by \eqref{eq:kappa-paired-vs-not-for-p-particular-melo}:
$$
0=\varkappa_{\mathsf{k}} (\vec h)  = \kappa_{\bsig} ^\mathrm{m}(\vec m). 
$$

\noindent{\it From (a) to (b):} Again, we only need to prove that the property \emph{(a)} for $\vec m$ satisfying $m_i=m_{\eta(i)}$ for every $1\le i\le n$ implies the second statement of   \emph{(b)}.  Assuming   \emph{(a)}  and with $\mathsf{g}$, $\vec H$, $\vec Q$ as in the statement \emph{(b)}, we prove that $\varkappa_\mathsf{g}(\vec h) =0$. There exist $\bsig$  connected and with $\omega(\bsig, \mathrm{id})=0$ and  $\vec m$ with at least $i,j\in \{1,\ldots n\}$ such that $m_i\neq m_j$ and such that $\Tr_\mathsf{g}(\vec H)=\Tr_\bsig(\vec m)$. By \eqref{eq:kappa-paired-vs-not-for-h}:
\be
\label{eq:atobinproof}
\varkappa_\mathsf{g}(\vec h) = \sum_{\substack{{\brho\eta^{-1}\preceq \bsig\eta^{-1}}\\{\Pi(\brho\eta^{-1})\vee\Pi(\btau\eta^{-1}) = 1_n}}} \kappa^\mathrm{m}_{\Pi(\brho),\brho} (\vec m), 
\ee
where $\btau$ is defined above \eqref{eq:kappa-paired-vs-not-for-h}. We know that there exist $i,j\in \{1, \ldots n\}$ such that $m_i\neq m_j$. By construction, $\btau$ has no connected component  containing some $i,j$ for which $m_i\neq m_j$. In order for the condition $\Pi(\brho\eta^{-1})\vee\Pi(\btau\eta^{-1}) = 1_n$ to be satisfied,  $\brho$ must therefore have a connected component containing some $i,j$  for which $m_i\neq m_j$. Therefore, by \emph{(a)},  every $\brho$ in the sum \eqref{eq:atobinproof} must be such that  $\kappa^\mathrm{m}_{\Pi(\brho),\brho} (\vec m)=0$, and so $\varkappa_\mathsf{g}(\vec h) =0$. 
\qed

\

\begin{remark}
\label{remark:replace-by-algebr}
In Lemma~\ref{lem:mixed-algebra-formulation-cumulants}, one may replace $H_\ell\in  \mathcal{G}^\mathrm{m}_{D}[Q_\ell]$ by $H_\ell\in  \mathcal{G}^\mathrm{m}_{D}[Q_\ell, \un_D]$ in the second statement of  \emph{(b)}: then in order for the condition $\Pi(\brho\eta^{-1})\vee\Pi(\btau\eta^{-1}) = 1_n$ to be satisfied, $\brho$ must have a connected component containing either $1_D$, or some $i,j$  for which $m_i\neq m_j$, and the same conclusions hold. By multilinearity, one may also replace $\mathcal{G}^\mathrm{m}_{D}[Q_\ell]$ by $ \mathcal{A}^\mathrm{m}_{D}[Q_\ell]$ or $\mathcal{A}^\mathrm{m}_{D}[Q_\ell; \un_D]$.
\end{remark}

We then show the equivalence between point 2 and point  3. We start with the easiest part. 
\begin{lemma}
\label{lem:proof-of-freeness-easy-part}
The two following assertions are equivalent: 
\begin{enumerate}[label=\emph{(a-\normalfont{\roman*})}]
\item For any $n\ge 2$,  $\bsig\in S_n^D$ connected and with $(\bsig, \mathrm{id})$ melonic with canonical pairing $\eta\neq \mathrm{id}$, and any $\vec m=(m_1, \ldots, m_n)\in \{a,b\ldots\}^n$, $\kappa^\mathrm{m}_\bsig(\vec m)=0$ whenever there exists  $i\in \{1,\ldots n\}$ such that $m_i\neq m_{\eta(i)}$. 
\item For any $n\ge 2$,  $\bsig\in S_n^D$ connected and $(\bsig, \mathrm{id})$ melonic with canonical pairing $\eta\neq \mathrm{id}$, and any $\vec m=(m_1, \ldots, m_n)\in \{a,b\ldots\}^n$, $\varphi^\mathrm{m}_\bsig(\vec m)=0$ whenever there exists $i\in \{1,\ldots n\}$ such that $m_i\neq m_{\eta(i)}$. 
\end{enumerate}
\end{lemma}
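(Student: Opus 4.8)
The plan is to deduce both implications directly from the two first-order moment--cumulant formulas of Thm.~\ref{thm:limit-for-wishart-tensor-first-order}, namely
\[
\varphi^\mathrm{m}_{\bsig}(\vec m) = \sum_{\btau\eta^{-1}\preceq\bsig\eta^{-1}} \kappa^\mathrm{m}_{\Pi(\btau),\btau}(\vec m), \qquad \kappa^\mathrm{m}_{\bsig}(\vec m) = \sum_{\btau\eta^{-1}\preceq\bsig\eta^{-1}} \varphi^\mathrm{m}_{\Pi(\btau),\btau}(\vec m)\,\mathsf{M}(\bsig\btau^{-1}),
\]
combined with the multiplicativity of the extensions, $\varphi^\mathrm{m}_{\Pi(\btau),\btau}=\prod_{B\in\Pi(\btau)}\varphi^\mathrm{m}_{\btau_{\lvert_B}}$ and $\kappa^\mathrm{m}_{\Pi(\btau),\btau}=\prod_{B\in\Pi(\btau)}\kappa^\mathrm{m}_{\btau_{\lvert_B}}$. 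The point I would stress first is that, by Thm.~\ref{thm:limit-for-wishart-tensor-first-order}, every $\btau$ occurring in these sums carries the \emph{same} canonical pairing $\eta$ as $\bsig$, so the hypothesis ``there exists $i$ with $m_i\neq m_{\eta(i)}$'' is a condition on $\vec m$ and $\eta$ alone and is inherited by all the summands.

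Next I would record the structural fact on which everything rests. Fix a defect index $i$, so that $m_i\neq m_{\eta(i)}$ and in particular $\eta(i)\neq i$. For any $\btau$ with $\btau\eta^{-1}\preceq\bsig\eta^{-1}$, the pair $\{i,\eta(i)\}$ is a canonical pair of $(\btau,\mathrm{id})$; since a canonical pair joins a white and a black vertex sharing $D$ edges, these two vertices lie in a single pure connected component of $(\btau,\mathrm{id})$, and under the identification $s=\bar s$ of Sec.~\ref{sec:notations-prerequ} this means $i$ and $\eta(i)$ belong to the same block $B_0$ of $\Pi(\btau)$. Consequently $\btau_{\lvert_{B_0}}$ is connected, $(\btau_{\lvert_{B_0}},\mathrm{id})$ is melonic with canonical pairing $\eta_{\lvert_{B_0}}$, and $\eta_{\lvert_{B_0}}\neq\mathrm{id}$ because $i,\eta(i)\in B_0$ with $\eta(i)\neq i$. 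Thus $\vec m_{\lvert_{B_0}}$ still exhibits the defect $m_i\neq m_{\eta(i)}$, and $\btau_{\lvert_{B_0}}$ is precisely an invariant of the type to which both \emph{(a-i)} and \emph{(a-ii)} apply.

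With this observation both implications are immediate and perfectly symmetric, and no induction is required. For \emph{(a-i)}$\Rightarrow$\emph{(a-ii)} I would use the first formula: each summand factors through $\kappa^\mathrm{m}_{\btau_{\lvert_{B_0}}}(\vec m_{\lvert_{B_0}})$, which vanishes by \emph{(a-i)}, whence $\varphi^\mathrm{m}_{\bsig}(\vec m)=0$. For \emph{(a-ii)}$\Rightarrow$\emph{(a-i)} I would use the second formula: each summand factors through $\varphi^\mathrm{m}_{\btau_{\lvert_{B_0}}}(\vec m_{\lvert_{B_0}})$, which vanishes by \emph{(a-ii)}, whence $\kappa^\mathrm{m}_{\bsig}(\vec m)=0$. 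Since \emph{(a-i)} and \emph{(a-ii)} are universally quantified over \emph{all} connected first-order invariants with non-identity pairing and a defect, they apply verbatim to every $\btau_{\lvert_{B_0}}$, including the extreme term $\btau=\bsig$ (where $B_0$ is the whole vertex set), so the argument is not circular.

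The only genuinely nontrivial step, and hence the main obstacle, is the structural claim that the canonical pair $\{i,\eta(i)\}$ cannot be split across connected components and that melonicity together with the canonical pairing restrict compatibly to each block of $\Pi(\btau)$. I would justify this from the recursive definition of melonic graphs in Sec.~\ref{sub:Melo} and from the one-to-one correspondence, recalled in Sec.~\ref{sec:notations-prerequ}, between the blocks of $\Pi(\btau)$ and the pure connected components of $(\btau,\mathrm{id})$; everything else is a direct substitution into the moment--cumulant relations.
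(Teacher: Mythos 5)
Your proof is correct and follows essentially the same route as the paper's: both directions are read off from the two moment--cumulant formulas of Thm.~\ref{thm:limit-for-wishart-tensor-first-order}, using that every $\btau$ with $\btau\eta^{-1}\preceq\bsig\eta^{-1}$ inherits the canonical pairing $\eta$, and that the defect index $i$ with $m_i\neq m_{\eta(i)}$ lands in a single connected component $\btau_{\lvert_{B_0}}$ (with pairing $\eta_{\lvert_{B_0}}\neq\mathrm{id}$) to which the hypothesis applies factor by factor. The paper phrases the key structural point via the separating cycle through the thick edge $i$ rather than via the canonical pair $\{i,\eta(i)\}$ lying in one block of $\Pi(\btau)$, but these are the same observation.
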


\proof We fix $\bsig\in S_n^D$ , $\eta\in S_n$,  $\vec m=(m_1, \ldots, m_n)$ as in the lemma and $i$ such that $m_i\neq m_{\eta(i)}$. 

\

\noindent{\it From \emph{(a-ii)} to \emph{(a-i)}:}   From Thm.~\ref{thm:limit-for-wishart-tensor-first-order}:
\be
\kappa_{\bsig}^\mathrm{m}(\vec m)= \sum_{\substack{{\btau\eta^{-1} \preceq \bsig\eta^{-1}}}}   \varphi^\mathrm{m} _{\Pi(\btau), \btau} (\vec m)\,  \mathsf{M}(\bsig\btau^{-1}),
\ee
 Since  $m_i\neq m_{\eta(i)}$, then the cycle in the graph which alternates thick edges and canonical pairs and  contains the thick edge labeled $i$ is a separating cycle.  Since any $\btau$ satisfying $\btau\eta^{-1} \preceq \bsig\eta^{-1}$ is such that $(\btau, \mathrm{id})$ has the same canonical pairing $\eta$, and since the permutation defining the thick edges of $\btau$ is still the identity, there is for every $\btau$ in the sum a connected component $\btau_j$ with support $B_j$ with the same separating cycle, which satisfies  $\omega(\btau_j, \mathrm{id}_{\lvert B_j})=0$, has canonical pairing $\eta_{\lvert B_j}$, and contains the same thick edge $i$ for which $m_i\neq m_{\eta_{\lvert B_j}(i)}$, for which $\varphi^\mathrm{m}_{\btau_j}(\vec m_{\lvert B_j})=0$. From (a-ii) every term in the sum vanishes, and so does $\kappa_{\bsig}^\mathrm{m}(\vec m)$. 
 
 \

\noindent{\it From \emph{(a-i)} to \emph{(a-ii)}:} Reciprocally, one has from Thm.~\ref{thm:limit-for-wishart-tensor-first-order}: 
\be
\varphi_{\bsig}^\mathrm{m}(\vec m) = \sum_{\substack{{\btau\eta^{-1}\preceq\bsig\eta^{-1}}}} \kappa^\mathrm{m}_{\Pi\left(\btau\right), \btau}(\vec m). 
\ee
The discussion is exactly the same as above, and assuming (a-i) we find that $\varphi^\mathrm{m}_{\bsig}(\vec m) =0$ for any $\bsig$ connected with $(\bsig, \mathrm{id})$ melonic, and therefore the same holds for $\varphi^\mathrm{m}_{\bsig}(\vec m)$.  \qed

\ 

 We recall that for $\mathsf{g}$ a melonic graph of paired tensors $H_1, \ldots H_p$ with $H_\ell \in  \mathcal{G}^\mathrm{m}_{D}[R_\ell]$ with $R_\ell\in \{A,B\ldots\}$,  $(\mathsf{g}, \vec H)$ is \emph{almost alternating} if it has at least one edge linking paired tensors $H_\ell$, $H_{\ell'}$ generated by different tensors $R_\ell\neq R_{\ell'}$, and \emph{at most one} edge of $\mathsf{g}$ which links two paired tensors $H_\ell$, $H_{\ell'}$ generated by the same $R_\ell=R_{\ell'}$. It is strictly alternating if different any two paired tensors linked by an edge are generated by different elements of $\{A,B\ldots\}$. In order to prove the equivalence of the remaining statements, we will need the following key lemma.

\begin{lemma}
\label{lem:almost-alternating-subgraphs}
Let $\mathsf{g}$ be a connected melonic graph of the paired tensors $H_1, \ldots H_q$, $H_\ell\in \mathcal{G}^\mathrm{m}_D[Q_\ell]$ with $Q_\ell\in \{A,B\ldots\}$, or $H_\ell\in \mathcal{G}^\mathrm{p}_D[X_\ell, \bar X_\ell]$ with   $ (X_\ell, \bar X_\ell) \in \{(T_a, \bar T_a),(T_b, \bar T_b)\ldots\}$, such that $(\mathsf{g}, \vec H)$ is almost alternating, and consider $\mathsf{h}\preceq\mathsf{g}$. Then either $(\mathsf{h}, \vec H)$ has an almost alternating connected component, or it has a connected component consisting of a single paired tensor. 
\end{lemma}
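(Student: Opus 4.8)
The plan is to analyze the structure of $\mathsf{h}\preceq\mathsf{g}$ combinatorially, exploiting the fact that ``almost alternating'' is defined entirely in terms of how many edges link tensors generated by the same versus different generators. First I would recall the setup from Sec.~\ref{sub:grouping}: since $\mathsf{g}$ is a connected melonic graph of paired tensors, there is an associated first order invariant $\bsig$ (connected with $(\bsig,\mathrm{id})$ melonic in the mixed case, or purely connected melonic with canonical pairing the identity in the pure case), and $\mathsf{h}\preceq\mathsf{g}$ corresponds to choosing, for each cycle $\gamma_{c,b}$ alternating colored edges and canonical pairs, a non-crossing permutation $\tau_{c,b}\preceq\gamma_{c,b}$. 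The key observation is that passing from $\mathsf{g}$ to $\mathsf{h}$ only \emph{splits and reconnects} edges within each such cycle; it never creates a new incidence between two paired tensors that were not already in a common cycle, and it never changes the generator $Q_\ell$ (or pair $(X_\ell,\bar X_\ell)$) attached to any thick edge.

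The main combinatorial step is to track the edges of $\mathsf{h}$ that link distinct paired tensors. Let $E^{\neq}$ denote the edges of $\mathsf{g}$ linking tensors with different generators and $E^{=}$ those linking tensors with the same generator. I would argue that in any connected component $\mathsf{h}_\jmath$ of $\mathsf{h}$ that contains more than one paired tensor, the edges between distinct tensors arise from the non-crossing refinements of the cycles of $\mathsf{g}$, and the almost-alternating hypothesis on $\mathsf{g}$ (at least one edge in $E^{\neq}$, at most one in $E^{=}$) constrains these. Concretely: since $\mathsf{g}$ has at most one ``same-generator'' edge between distinct tensors, each connected component $\mathsf{h}_\jmath$ inherits at most one such edge; and I would show that a multi-tensor component $\mathsf{h}_\jmath$ must contain at least one ``different-generator'' edge, because otherwise the tensors it groups would all carry the same generator, and connectivity of $\mathsf{h}_\jmath$ through same-generator edges alone would force more than one same-generator edge in $\mathsf{g}$ (contradicting almost-alternation) unless the component is a single tensor. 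Hence any multi-tensor $\mathsf{h}_\jmath$ has at least one different-generator edge and at most one same-generator edge, i.e.~it is itself almost alternating. The remaining alternative is precisely the single-tensor case, yielding the dichotomy of the lemma.

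The step I expect to be the main obstacle is the careful verification that refining a cycle $\gamma_{c,b}$ by a non-crossing permutation cannot increase the count of same-generator edges between distinct tensors within a single component beyond one. The subtlety is that a non-crossing permutation can introduce new pairings between elements of the cycle that were not adjacent in $\gamma_{c,b}$, so one must check that any newly created edge either links tensors of different generators, or, if it links tensors of the same generator, it is the \emph{unique} such edge in its component — which relies on the interplay between the non-crossing (planarity) condition and the at-most-one $E^{=}$ hypothesis inherited from $\mathsf{g}$. I would handle this by a planarity argument: within a cycle, the non-crossing structure organizes the thick edges into a nested (tree-like) pattern, and I would show that if two distinct same-generator edges appeared in one component of $\mathsf{h}_\jmath$, one could trace them back to two distinct same-generator adjacencies in $\mathsf{g}$, contradicting almost-alternation. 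Once this counting is pinned down, the rest of the argument is bookkeeping on connected components, and the dichotomy follows directly.
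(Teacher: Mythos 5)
Your proposal aims at a conclusion that is strictly stronger than the lemma and that is in fact false: you try to show that \emph{every} connected component of $\mathsf{h}$ with more than one paired tensor is almost alternating, whereas the lemma only claims that \emph{some} component is almost alternating or consists of a single tensor. The weaker disjunction is all that can be salvaged. Concretely, two of your counting claims break down. First, the claim that each component of $\mathsf{h}$ ``inherits at most one same-generator edge'' from $\mathsf{g}$ ignores that the passage from $\gamma_{c,b}$ to a non-crossing $\tau_{c,b}$ \emph{creates new edges} that are not edges of $\mathsf{g}$; such a new edge joins two tensors that merely sat on the same cycle of $\mathsf{g}$, and nothing prevents both from carrying the same generator. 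A component can therefore contain the distinguished same-generator edge $e$ of $\mathsf{g}$ \emph{and} a newly created same-generator edge, hence is not almost alternating. Your proposed fix --- tracing a new same-generator edge back to a same-generator adjacency of $\mathsf{g}$ --- does not work, because the $\mathsf{g}$-edges it ``comes from'' need not link same-generator tensors at all. Second, the claim that every multi-tensor component contains a different-generator edge also fails: a component can be held together entirely by new edges between same-generator tensors, which contradicts nothing about $\mathsf{g}$ since the almost-alternation hypothesis constrains only the edges of $\mathsf{g}$.

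The paper's proof is built precisely to avoid these traps. It decomposes the refinement $\mathsf{h}\preceq\mathsf{g}$ into a sequence of elementary flips (each splitting one cycle into two) and maintains by induction not one but \emph{two} connected components $\mathcal{C}_1(i),\mathcal{C}_2(i)$, each of which differs from $\mathsf{g}$ by a single edge. At the end, the single offending same-generator edge $e$ of $\mathsf{g}$ can lie in at most one of the two; the other then has at most one same-generator edge (the unique new one) and either contains a surviving $\mathsf{g}$-edge between distinct tensors (hence is almost alternating) or collapses to a single paired tensor. If you want to repair your argument, you need some mechanism of this kind that produces \emph{two} candidate components and discards the one contaminated by $e$; a per-component counting argument cannot succeed.
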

\proof We consider a system of cycles $\{\gamma_{c,b}\}$ encoding  $\mathsf{g}$ and 
the non-crossing permutations $\{\tau_{c,b}\}$ with $\tau_{c,b}\preceq \gamma_{c,b}$, whose cycles encode $\mathsf{h}$.

The statement of the lemma holds if $\mathsf{h}=\mathsf{g}$, because $(\mathsf{g}, \vec H)$ is almost alternating. Otherwise, $\tau_{c,b}\neq \gamma_{c,b}$ for some $c,b$. We can go from  $\{\gamma_{c,b}\}$ to $\{\tau_{c,b}\}$ by a sequence $\{\rho_{c,b}{(i)}\}$, $i=1, \ldots, k$,  where $\rho_{c,b}{(1)}=\gamma_{c,b}$, $\rho_{c,b}{(k)}=\tau_{c,b}$, and there is a single $c,b$ for which $\rho_{c,b}{(i)}\neq \rho_{c,b}{(i+1)}$ (all the others are equal), and for this $c,b$ one has: $\rho_{c,b}{(i+1)}\preceq \rho_{c,b}{(i)}$ and $\#( \rho_{c,b}{(i+1)}) = \#( \rho_{c,b}{(i)}) + 1$. At this step, a pair of edges of color $c$ \emph{that both belong to the same cycle} $\rho_{c,b}{(i)}$ has been \emph{flipped}: they remain connected to the same white vertices, but the black vertices to which they are connected have been exchanged, dividing the cycle of $\rho_{c,b}{(i)}$ to which they belonged in two. Note that this corresponds to a multiplication of $\rho_{c,b}{(i)}$ by an appropriate transposition. 

We want to show that at each step $i$ from 2 to $k$, the following property is true:  \emph{there exist \emph{two} connected components of $\{\rho_{c,b}^{(i)}\}$ whose colored edges are all the same as in  $\mathsf{g}$,  with the exception of a single edge.} We call this property $P(i)$. We will update a pair $(\mathcal{C}_1(i), \mathcal{C}_2(i))$ of connected components of this kind (which ``validate'' $P(i)$).

$P(2)$ holds because a single pair of edges has been flipped to go from $\{\gamma_{c,b}\}$ to $\{\rho_{c,b}^{(2)}\}$, creating two connected components  $(\mathcal{C}_1(2), \mathcal{C}_2(2))$ which validate $P(2)$. 

At the next step, the pair of edges to be flipped belong to the same connected component of  $\{\rho_{c,b}^{(2)}\}$, say, $\mathcal{C}_1(2)$. Flipping these edges creates two connected components: one of them has two edges that were not in $\mathsf{g}$, while the other, which we call $\mathcal{C}_1(3)$, has a single edge that is not in $\mathsf{g}$. Setting $\mathcal{C}_2(3)=\mathcal{C}_2(2)$, we have a pair which validates $P(3)$. 
At every step that follows, either the two edges are flipped in a connected component which is neither $\mathcal{C}_1(i)$ nor  $\mathcal{C}_2(i))$, in which case $\mathcal{C}_1(i+1):=\mathcal{C}_1(i)$ and  $\mathcal{C}_2(i+1):=\mathcal{C}_2(i)$, or it is performed in say $\mathcal{C}_1(i)$, in which case as for step 2 to 3 a component $\mathcal{C}_1(i+1)$ is created with a single edge not in $\mathsf{g}$, and we update   $\mathcal{C}_2(i+1):=\mathcal{C}_2(i)$. This proves that $P(i)$ holds at every step.

\

Having proven that $P(k)$ is true, we consider the following situations. 
\begin{enumerate}[label=$-$] 
\item If there is no edge of $\mathcal{C}_1(k)$ or $\mathcal{C}_2(k)$ that is also in $\mathsf{g}$, or if all the edges of $\mathcal{C}_1(k)$ or $\mathcal{C}_2(k)$ that are also in $\mathsf{g}$ link the inputs and outputs of a paired tensor, then this connected component of~$\mathsf{h}$ consists of a single paired tensor with a number of edges linking its inputs and outputs, that is, a melonic graph of a single paired tensor, and the statement of the lemma holds. 
\end{enumerate}
Otherwise both $\mathcal{C}_1(k)$ and $\mathcal{C}_2(k)$ contain edges that also belong to $\mathsf{g}$ or  link paired tensors $H_\ell$, $H_{\ell'}$ with $\ell\neq \ell'$. 
Since $\mathsf{g}$ is almost alternating,  it  has at most one edge $e$ which links two paired tensors generated by the same $Q_\ell\in \{A,B\ldots\}$ (or the pure equivalent, by $ (X_\ell, \bar X_\ell) \in \{(T_a, \bar T_a),(T_b, \bar T_b)\ldots\}$). 
\begin{enumerate}[label=$-$] 
\item  If $\mathsf{g}$ is \emph{strictly} alternating, or if $e$ exists but has been flipped at some step and is therefore not as in $\mathsf{g}$, or if $e$ remains in  $\mathsf{h}$ as it was in $\mathsf{g}$  but does not belong to $\mathcal{C}_1(k)$ or $\mathcal{C}_2(k)$, then  there is at least one edge in  $\mathcal{C}_1(k)$ and one in $\mathcal{C}_2(k)$ that are also in $\mathsf{g}$ and the edges of $\mathcal{C}_1(k)$ and $\mathcal{C}_2(k)$ that are also in $\mathsf{g}$   still all link in $\mathsf{h}$ paired tensors generated by different elements of $\{A,B\ldots\}$.
 There is only one edge of $\mathcal{C}_1(k)$ which is not in $\mathsf{g}$, so it might link two paired tensors $H_\ell$, $H_{\ell'}$  of $\mathcal{C}_1(k)$ with $\ell\neq \ell'$ and which are generated by the same element of $\{A,B\ldots\}$, but no other edge of $\mathcal{C}_1(k)$ can have this property, so $\mathcal{C}_1(k)$ is almost alternating (and $\mathcal{C}_2(k)$ as well). 
 \end{enumerate}
Otherwise, $e$ exists in $\mathsf{g}$ and remains in  $\mathsf{h}$, and it belongs to say $\mathcal{C}_1(k)$. 
\begin{enumerate}[label=$-$] 
\item  We then consider $\mathcal{C}_2(k)$ (this is why we needed  \emph{two} connected components  to validate $P(i)$): it has at most one edge linking paired tensors generated by the same element of $\{A,B\ldots\}$, and it has at least one edge linking  paired tensors generated by different elements of $\{A,B\ldots\}$: it is almost alternating. \qed
 \end{enumerate}

\begin{lemma}
\label{lem:hard-part-of-proof-freeness-mixed}
The two following assertions are equivalent: 
\begin{enumerate}[label=\emph{(b-\normalfont{\roman*})}]
\item For any $q\ge 2$, any paired tensors $H_1, \ldots H_q$ such that $\forall 1\le \ell \le q$, $H_\ell\in  \mathcal{G}^\mathrm{m}_{D}[Q_\ell]$ with $Q_\ell\in \{A,B\ldots\}$, and any connected melonic graph $\mathsf{g}$  of $H_1, \ldots H_q$, $\varkappa_\mathsf{g}(\vec h)=0$ whenever there exists $1\le \ell<\ell'\le q$ such that $Q_\ell\neq Q_{\ell'}$. 
\item For any $q\ge 2$, any $\mathcal{D}_1, \ldots \mathcal{D}_q\ge 1$, any paired tensors $H_1, \ldots H_q$ such that for $1\le \ell \le q$, $H_\ell\in \mathcal{G}^\mathrm{m}_{D, \mathcal{D}_\ell} [Q_\ell]$ with $Q_\ell\in \{A,B\ldots\}$, and any connected melonic graph $\mathsf{g}$  of $H_1, \ldots H_q$,  letting for each $\ell$:  $h'_\ell= h_\ell - \phi(h_\ell) 1_{\mathcal{D}_\ell}$, then $\phi_\mathsf{g}(\vec h')=0$ whenever $(\mathsf{g}, \vec h)$ is almost alternating. 
\end{enumerate}
\end{lemma}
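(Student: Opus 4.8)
The plan is to prove the two implications separately, after a reduction to centered arguments that I would carry out once for both. Since $q\ge 2$, multilinearity of $\varkappa$ and $\phi$ together with Prop.~\ref{prop:cumul-of-1} lets me replace each $h_\ell$ by its centering $h'_\ell = h_\ell - \phi(h_\ell)\,1_{\mathcal{D}_\ell}$: expanding $h_\ell = h'_\ell + \phi(h_\ell)\,1_{\mathcal{D}_\ell}$ inside $\varkappa_\mathsf{g}$, every resulting term that contains at least one identity entry $1_{\mathcal{D}_\ell}$ is a cumulant of $q\ge 2$ paired tensors with an identity entry, hence vanishes by Prop.~\ref{prop:cumul-of-1}. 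Thus $\varkappa_\mathsf{g}(\vec h)=\varkappa_\mathsf{g}(\vec h')$, and it suffices to treat centered generators $h'_\ell\in\mathcal{A}^\mathrm{m}_D[Q_\ell;\un_D]$ (invoking Rk.~\ref{remark:replace-by-algebr} to pass from $\mathcal{G}^\mathrm{m}_D$ to the generated algebra). The same device will be used repeatedly to discard identity entries.

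For the implication $\text{(b-i)}\Rightarrow\text{(b-ii)}$ I would expand the centered moment through the moment--cumulant relation \eqref{eq:paired-free-moment-cumulant}, $\phi_\mathsf{g}(\vec h') = \sum_{\mathsf{h}\preceq\mathsf{g}}\varkappa_{\Pi(\mathsf{h}),\mathsf{h}}(\vec h')$, and show that each summand vanishes. Because $(\mathsf{g},\vec h)$ is almost alternating, Lemma~\ref{lem:almost-alternating-subgraphs} ensures that every $\mathsf{h}\preceq\mathsf{g}$ has a connected component that is either a single paired tensor or itself almost alternating. In the first case the corresponding factor is $\varkappa_{\mathbf{id}_1}(h'_\ell)=\phi(h'_\ell)=0$ by centering; in the second case that component involves two distinct generators, so expanding its entries again by multilinearity (identity terms killed by Prop.~\ref{prop:cumul-of-1}) leaves only the uncentered term, which vanishes by hypothesis (b-i). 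Either way the product $\varkappa_{\Pi(\mathsf{h}),\mathsf{h}}(\vec h')$ is zero, hence $\phi_\mathsf{g}(\vec h')=0$.

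For the converse $\text{(b-ii)}\Rightarrow\text{(b-i)}$ I would use a double induction: an outer induction on the number $q$ of paired tensors, and, for fixed $q$, an inner induction over the non-crossing poset $\preceq$ restricted to \emph{connected} melonic graphs of $q$ tensors. Fix a connected $\mathsf{g}$ of $q$ tensors whose generators are not all equal, with centered entries. If $(\mathsf{g},\vec h)$ is strictly alternating I would use \eqref{eq:paired-free-moment-cumulant} in the form $\varkappa_\mathsf{g}(\vec h) = \phi_\mathsf{g}(\vec h) - \sum_{\mathsf{h}\prec\mathsf{g}}\varkappa_{\Pi(\mathsf{h}),\mathsf{h}}(\vec h)$; here $\phi_\mathsf{g}(\vec h)=0$ by (b-ii) (strictly alternating implies almost alternating, and the entries are centered), while each proper $\mathsf{h}$, by Lemma~\ref{lem:almost-alternating-subgraphs}, either has a single-tensor component (factor $\phi=0$), or an almost alternating component on fewer than $q$ tensors (outer induction), or is connected with $\mathsf{h}\prec\mathsf{g}$ (inner induction). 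If instead $\mathsf{g}$ has at least one edge joining two tensors with the same generator, I would group the same-generator blocks as in Sec.~\ref{sub:grouping} to obtain a strictly alternating graph $\mathsf{k}^\neq$ of $q'<q$ grouped tensors $P^\neq_\jmath\in\mathcal{G}^\mathrm{m}_D[Q(\jmath)]$ with not all $Q(\jmath)$ equal, and apply Prop.~\ref{prop:cumul-of-paired-vs-non-paired}, $\varkappa_{\mathsf{k}^\neq}(\vec p^\neq)=\sum_{\mathsf{h}'\preceq\mathsf{g},\,\Pi(\mathsf{h}')\vee\Pi(\mathsf{h}^\neq)=1_q}\varkappa_{\Pi(\mathsf{h}'),\mathsf{h}'}(\vec h)$, where $\mathsf{h}^\neq$ is the grouping invariant. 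The left-hand side vanishes by the outer induction ($q'<q$, distinct generators), the term $\mathsf{h}'=\mathsf{g}$ on the right is $\varkappa_\mathsf{g}(\vec h)$, and any other $\mathsf{h}'$ meeting the join condition must contain a connected component spanning two different generators (otherwise $\Pi(\mathsf{h}')$ and $\Pi(\mathsf{h}^\neq)$ would both refine the partition of labels by generator, whose join cannot reach $1_q$ when the generators differ); that component is either $\mathsf{h}'\prec\mathsf{g}$ when $\mathsf{h}'$ is connected (inner induction) or has fewer than $q$ tensors when $\mathsf{h}'$ is disconnected (outer induction), so the summand vanishes. Rearranging yields $\varkappa_\mathsf{g}(\vec h)=0$.

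The main obstacle is this converse direction. In contrast with the scalar $(D=1)$ moment--cumulant expansion over $NC(n)$, the tensor formula \eqref{eq:paired-free-moment-cumulant} contains, besides $\mathsf{g}$ itself, connected subgraphs $\mathsf{h}\prec\mathsf{g}$ carrying the full set of $q$ tensors, over which the outer induction on $q$ gives no control whatsoever; this is precisely what forces the inner induction on the poset $\preceq$, and it is also why the grouping identity of Prop.~\ref{prop:cumul-of-paired-vs-non-paired} is indispensable for disposing of graphs that fail to be alternating. Checking that every correction term produced by these two relations genuinely falls under one of the two induction hypotheses — in particular the generator-partition/join argument for the grouping identity — is the delicate step, and it is the analogue in the present setting of Thm.~11.16 of \cite{NicaSpeicher}.
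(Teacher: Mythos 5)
Your proof is correct and follows the same skeleton as the paper's: the reduction to centered entries via multilinearity and Prop.~\ref{prop:cumul-of-1}, the use of Lemma~\ref{lem:almost-alternating-subgraphs} together with the moment--cumulant relation \eqref{eq:paired-free-moment-cumulant} for the direction (b-i)$\Rightarrow$(b-ii), and the grouping of same-generator blocks combined with Prop.~\ref{prop:cumul-of-paired-vs-non-paired} and the generator-partition/join argument for the converse. The one genuine difference is your induction scheme, and it is an improvement in precision: the paper runs a single induction on the number $q$ of paired tensors and, for the correction terms $\mathsf{h}'\prec\mathsf{g}$ in the grouping identity, simply asserts that ``since $\mathsf{h}'\prec\mathsf{g}$ we can apply the induction hypothesis''; but unlike the $D=1$ situation in $NC(n)$, a strict refinement $\mathsf{h}'\prec\mathsf{g}$ can remain \emph{connected} (connectivity is the join over all colors and the thick edges, so refining one color's non-crossing data need not disconnect the graph), in which case the relevant component still carries all $q$ tensors and an induction on $q$ alone does not reach it. Your inner induction on the poset $\preceq$ restricted to connected graphs of $q$ tensors is exactly what is needed to dispose of those terms, and your case split (strictly alternating handled directly via \eqref{eq:paired-free-moment-cumulant} and (b-ii), non-alternating handled via grouping) is a clean way to guarantee that the grouped invariant $\mathsf{k}^{\neq}$ has strictly fewer tensors whenever the outer induction is invoked on it; the paper instead obtains $\varkappa_{\mathsf{k}^{\neq}}(\vec f^{\neq})=0$ directly from (b-ii) by the cumulant--moment expansion \eqref{eq:recall-paired-free-cumom}, which avoids the case split but not the issue with connected proper sub-invariants. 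In short: same route, but your double induction makes rigorous a step the paper leaves implicit.
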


\proof We generalize the proof of Thm.~11.16 of \cite{NicaSpeicher}. We fix some paired tensors $H_1, \ldots H_q$, $H_\ell\in \mathcal{G}^\mathrm{m}_{D, \mathcal{D}_\ell} [Q_\ell]$ with $Q_\ell\in \{A,B\ldots\}$,   and a connected melonic graph $\mathsf{g}$  of these paired tensors.

\

\noindent{\it From \emph{(b-i)} to \emph{(b-ii)}:} We assume that (b-i) holds and  that  $(\mathsf{g}, \vec h)$ is almost alternating. Recalling the free moment-cumulant relation for melonic invariants of paired tensors  \eqref{eq:paired-free-moment-cumulant}: 
\be
\label{eq:mom-cum-centered}
\phi_{\mathsf{g}} (\vec h')= \sum_{\mathsf{h} \preceq \mathsf{g}} \varkappa_{\Pi(\mathsf{h}), \mathsf{h}}(\vec h'). 
\ee
From Lemma~\ref{lem:almost-alternating-subgraphs}, since $(\mathsf{g}, \vec h)$ is almost alternating, every $\mathsf{h} \preceq \mathsf{g}$ is such that:\\[+1ex]
$-$ Either  $(\mathsf{h}, \vec h)$ has an almost alternating connected component with support $B\subset\{1,\ldots, q\}$, which in particular has at least two paired tensors $H_\ell$, $H_{\ell'}$  with $\ell\neq \ell'\in B$ and generated by different elements $Q_\ell\neq Q_{\ell'}$. We focus on the contribution to the right-hand-side of \eqref{eq:mom-cum-centered} for such a $\mathsf{h}$, and using the multilinearity of $\varkappa$, we expand all the entries of the factor $\varkappa_{\mathsf{h}_{\lvert_B}}(\vec h'_{\lvert_B})$. From Prop.~\ref{prop:cumul-of-1}, the terms involving $1_D$ vanish,  and by (b-i), the term involving $\vec H_{\lvert B}$ also vanishes. \\[+1ex]
$-$ Or  $(\mathsf{h}, \vec h)$ has a connected component which is a melonic graph of a single paired tensor, contributing to a factor $\varkappa_{\mathbf{id}_1}(h'_\ell)=\phi_{\mathbf{id}_1}(h'_\ell)=\phi(h'_\ell)=0$ by linearity of $\phi$. 

\

\noindent{\it From \emph{(b-ii)} to \emph{(b-i)}:} We assume that (b-ii) holds  and that there exists $1\le \ell<\ell'\le q$ such that $Q_\ell\neq Q_{\ell'}$. We recall the free cumulant-moment relation for melonic invariants of paired tensors \eqref{eq:paired-free-cumulant-moment}:
\be
\label{eq:recall-paired-free-cumom}
\varkappa_{\mathsf{g}} (\vec h)= \sum_{\mathsf{h} \preceq \mathsf{g}} \phi_{\Pi(\mathsf{h}), \mathsf{h}}(\vec h) \; \mathsf{M}(\mathsf{h}, \mathsf{g}). 
\ee

In order to prove (b-i), we proceed by induction on the number  $q$ of paired tensors $H_1, \ldots, H_q$ considered. If $q=2$, as in the proof of Prop.~\ref{prop:cumul-of-1} in Sec.~\ref{sub:proof-of-sec-cum-paired}, the connected melonic graphs of two paired tensors are such that the two thick edges form a square with e.g.~the edges of color 1, while each one of the other colored edges links the same two vertices as one of the thick edges.
 From \eqref{eq:recall-paired-free-cumom} and Prop.~\ref{prop:cumul-of-1}, letting $h_\ell' =  h_\ell -\phi(h_\ell)1_{\mathcal{D}_\ell}$, we have for any such invariant
$\varkappa_{\mathsf{g}} (h_1, h_2)=\varkappa_{\mathsf{g}} (h'_1, h'_2)= \phi_{\mathsf{g}} (h'_1, h'_2)$, 
due to the fact that $\mathsf{M}((1)(2))=1$ and $\mathsf{M}((12))=-1$. Since by assumption $Q_1\neq Q_2$, $(\mathsf{g}, \vec h)$ is alternating,  so by (b-ii) one has $\phi_{\mathsf{g}} (h'_1, h'_2)=0$.

We now assume that $q>2$, that (b-ii) holds, and that (b-i) holds for any $q'<q$, and we prove that (b-i) holds for $q$ paired tensors. In order to apply (b-ii), we need to group the paired tensors $H_1, \ldots, H_q$ so that the invariant of paired tensors for the different groups is \emph{alternating}. This is described at the end of Sec.~\ref{sub:grouping}: removing  $E^\mathrm{tot}=E^{\neq}\cup E^{\mathrm{int}}$, where $E^{\neq}$ is the set of edges linking paired tensors generated by different  $Q_\ell$ and $E^{\mathrm{int}}$  contains one edge per cycle of $\mathsf{g}_{\setminus E^{\neq}}$ that alternates edges of color $c$ and paired inputs and outputs, we consider the paired tensors  $P^\neq_\jmath$ corresponding to the connected components of $\mathsf{g}_{\setminus E^{\mathrm{tot}}}$  and the colored graph $\mathsf{k}^\neq$ whose colored edges are the edges of  $E^{\mathrm{tot}}$ and whose thick edges represent the paired tensors $\vec P^\neq$. The invariant encoded by $\mathsf{k}^\neq$ is melonic for the paired tensors $\vec P^\neq$, and $(\mathsf{k}^\neq, \vec F^\neq)$ is alternating. Applying \eqref{eq:recall-paired-free-cumom} to $\varkappa_{\mathsf{k}^\neq} (\vec f^\neq)=0$, centering the $\vec f^\neq$ and repeating the argument of the first part of this proof (from (b-i) to (b-ii)) with the application of (b-ii) instead of (b-i), one finds that $\varkappa_{\mathsf{k}^\neq} (\vec f^\neq)=0$.

We can apply Prop.~\ref{prop:cumul-of-paired-vs-non-paired} with $\mathsf{h}$ replaced by $\mathsf{h}^\neq$ (notation introduced in Sec.~\ref{sub:grouping}):
\begin{align}
0=\varkappa_{\mathsf{k}^\neq} (\vec f^\neq)  = \sum_{\substack{{\mathsf{h}' \preceq \mathsf{g}}\\{\Pi(\mathsf{h'})\vee\Pi(\mathsf{h}^\neq) = 1_q}}} \varkappa_{\Pi(\mathsf{h}'),\mathsf{h}'} (\vec h)  =\; \varkappa_{\mathsf{g}}(\vec h)\  +  \sum_{\substack{{\mathsf{h}' \prec \mathsf{g}}\\{\Pi(\mathsf{h'})\vee\Pi(\mathsf{h}^\neq) = 1_q}}} \varkappa_{\Pi(\mathsf{h}'),\mathsf{h}'} (\vec h),
\end{align}
where in the rightmost sum, $\preceq$ has been replaced by $\prec$ to indicate that the case $\mathsf{h}' = \mathsf{g}$ is excluded. As in the proof of Lemma~\ref{lem:mixed-algebra-formulation-cumulants}, since by assumption there exist  $1\le \ell<\ell'\le q$ such that $Q_\ell\neq Q_{\ell'}$, and since by definition of $\mathsf{h}^\neq$, the connected components of $\mathsf{h}^\neq$ only involve $H_\ell, H_{\ell'}$ generated by the same element $Q_\ell =Q_{\ell'}$, the condition $\Pi(\mathsf{h'})\vee\Pi(\mathsf{h}^\neq) = 1_q$ imposes  every $\mathsf{h'}$ in the rightmost sum to have at least one connected component involving both a $H_\ell$ and a  $H_{\ell'}$ with $Q_\ell\neq Q_{\ell'}$. Since $\mathsf{h}' \prec \mathsf{g}$ we can apply the induction hypothesis to $\mathsf{h'}$, which is also a melonic invariant of paired tensors, thus concluding that the rightmost term is zero. We have therefore shown that $\varkappa_{\mathsf{g}}(\vec h)=0$, which concludes the proof. \qed

\begin{remark}
\label{remark:replace-by-algebr-second-part}
In Lemma~\ref{lem:hard-part-of-proof-freeness-mixed}, one may replace in (b-i) $\mathcal{G}^\mathrm{m}_{D}[Q_\ell]$ by $\mathcal{A}^\mathrm{m}_{D}[Q_\ell; \un_D]$ and in (b-ii) $H_\ell\in  \mathcal{G}^\mathrm{m}_{D, \mathcal{D}_\ell}[Q_\ell]$ and $\phi_\mathsf{g}(\vec h')=0$ by  $H_\ell\in  \mathcal{A}^\mathrm{m}_{D}[Q_\ell; \un_D]$ satisfying $\phi(h_\ell)=0$ and $\phi_\mathsf{g}(\vec h)=0$. The proof is \emph{mutatis mutandis} the same. 
\end{remark}

\subsubsection{Proof of Thm.~\ref{thm:equiv-tensor-freeness-cumulants-moments-pure}}
\label{sec:equiv-tensor-freeness-cumulants-moments-pure}
Since the proof is very close to the mixed case, we only specify what differs. We start with the equivalence between the first and second point (analogous to Lemma~\ref{lem:mixed-algebra-formulation-cumulants}).

\ 

In the pure case, the fact that the second point implies the first point is trivial. We can assume that for all $i$, $\overline{x_i}=x_{\bar i}'$. In the second statement, one can just choose each $H_\ell$ to be $Q_\ell=X_\ell\otimes \bar X_\ell$ (and not a more general element of $\mathcal{G}^\mathrm{m}_{D}[Q_\ell]$). Then a melonic graph $\mathsf{g}$ of $\vec H$ is a purely connected and melonic $\bsig$ with canonical pairing the identity and $1\le \ell\le q$ identifies with $1\le i \le n$. Since $Q_\ell = X_\ell\otimes \bar X_\ell$ with $(X_\ell, \bar X_\ell) \in \{(T_a, \bar T_a),(T_b, \bar T_b)\ldots\}$, one  has $Q_i\neq Q_{j}$ for some $i\neq j$  if and only if $x_i\neq x_j$  (which  is equivalent to  $x_{\bar i}'\neq x_{\bar j}'$ and $\overline{ x_i} \neq x_{\bar j}'$ due to the fact that $\overline{x_i}=x_{\bar i}'$).

Conversely, the fact that the first point implies the second point is proved exactly in the same manner as for the mixed case: since it is obvious if $\overline{x_i}\neq x_{\bar i}'$ for some $i$, one only needs to prove it under the assumptions  that $\overline{x_i}=x_{\bar i}'$ for all $i$, and that there exists $i,j$ such that $x_i\neq x_j$. In the pure case, $\bsig$ first order is purely connected melonic and with canonical pairing the identity. One has to consider \eqref{eq:kappa-paired-vs-not-for-h-pure} instead of \eqref{eq:atobinproof}. It is still true that  $\overline{x_i}=x_{\bar i}'$ for all $i$ both in $\btau$ and $\brho$, that $\btau$ has no connected component involving some $x_i\neq x_j$,  and that if $\brho$ was to also satisfy the same property, then it would not be possible to satisfy $\Pi(\brho)\vee\Pi(\btau)=1_n$, so all $\brho$ in the sum \eqref{eq:kappa-paired-vs-not-for-h-pure} have a connected component involving $x_i\neq x_j$, and $\varkappa_\mathsf{g}(\vec h)=0$.

\ 

We pursue with the equivalence between the first statement of the second point and the first statement of the third point (analogous to Lemma~\ref{lem:proof-of-freeness-easy-part}). Here nothing changes with the mixed case, due to the fact that if there exists $i$ such that $\overline{x_i}\neq x_{\bar i}'$, then it is also true for one connected component of every $\btau$ such that  $\btau\preceq \bsig$ (here $\eta=\mathrm{id}$). 

\ 

Remains the equivalence between the second statements of the second and third points of the theorem (analogous to Lemma~\ref{lem:hard-part-of-proof-freeness-mixed}). For this part, the proof is as for the mixed case. \qed

\end{document}